\documentclass[12pt,a4paper]{article}

\usepackage[T1]{fontenc}
\usepackage{lmodern}[lmr]

\usepackage{amsfonts, amsmath, amssymb, amsthm, enumitem, float, mathrsfs, mathtools, multirow, nccmath, rotating, tocloft, setspace, subfig, booktabs, caption}

\usepackage[mathcal]{euscript}
\usepackage[dvipsnames]{xcolor}
\usepackage{hyperref}
\hypersetup{colorlinks,
linkcolor={MidnightBlue},
citecolor={RoyalBlue},
urlcolor={Blue}}
\usepackage[title]{appendix}

\usepackage[nameinlink, noabbrev, capitalize]{cleveref}
\usepackage[hmargin=0.6in, vmargin=0.6in]{geometry}
\usepackage[round]{natbib}
\def\be{\begin{equation}}
\def\ee{\end{equation}}
\def\bea{\begin{eqnarray}}
\def\eea{\end{eqnarray}}

\author{}
\title{}
 
\DeclareMathOperator*{\argmin}{\arg\!\min}

\DeclareMathOperator*{\sgn}{\normalfont\textrm{sgn}}
\DeclareMathOperator*{\diag}{\normalfont\textrm{diag}}

\DeclareMathOperator*{\VEC}{\normalfont\textrm{vec}}

\newtheorem{assumption}{Assumption}
\newtheorem{corollary}{Corollary}

\newtheorem{example}{Example}
\newtheorem{lemma}{Lemma}
\newtheorem{proposition}{Proposition}

\newtheorem{theorem}{Theorem}

\numberwithin{equation}{section}

\renewcommand{\arraystretch}{1.2}

\allowdisplaybreaks[4]

\begin{document}

\setlength{\abovedisplayskip}{8pt}
\setlength{\belowdisplayskip}{8pt}

\begin{titlepage}

\begin{center}

\begin{spacing}{1.5}
{\Large Inference for High-Dimensional Local Projection}
\end{spacing}

\medskip

{\sc Jiti Gao$^{\dag}$,  Fei Liu$^\sharp$ and Bin Peng$^{\dag}$}

\medskip

$^\dag$Monash University and $^\sharp$University of Bath

\bigskip

\today
\end{center}

\begin{abstract}
    This paper rigorously analyzes the properties of the local projection (LP) methodology within a high-dimensional (HD) framework, with a central focus on achieving robust long-horizon inference.  We integrate a general dependence structure into $h$-step ahead forecasting models via a flexible specification of the residual terms. Additionally, we study the corresponding HD covariance matrix estimation, explicitly addressing the complexity arising from the long-horizon setting. Extensive Monte Carlo simulations are conducted to substantiate the derived theoretical findings. In the empirical study, we utilize the proposed HD LP framework to study the impact of business news attention on U.S. industry-level stock volatility.

    \medskip

    \noindent \textbf{Keywords}: high-dimensional local projection; long-horizon analysis; $h$-step ahead forecasting models; covariance matrix estimation
    
    \medskip

    \noindent \textbf{JEL Classification}: C32, C53, C55
\end{abstract}

\end{titlepage}

\section{Introduction}\label{Sec1}

The local projection (LP) method, introduced in the seminal work of \cite{Oscar_2005}, has garnered significant attention in macroeconomics and econometrics. We refer interested readers to \cite{JT2025} for the latest comprehensive literature review.

A key advantage of the LP method, which has led to its widespread adoption and numerous extensions, is its simplicity: impulse responses can be recovered via a set of simple least squares regressions, bypassing traditional short-run, long-run, or sign restrictions. Consequently, researchers have intuitively applied different generalizations of the least squares regression to the LP method. The literature has thus expanded to include extensions concerning nonlinear structures, robustness estimation, and such models associated with high-dimensional (HD) regressors (covariates), just to name a few. Given the vastness of this literature, we acknowledge the limitation of the following literature review, and summarize only the key articles that directly motivate our research, after which we situate our specific contribution within this field. 

The first stream we review concerns the robustness and inference of the LP method. For example, \cite{montiel2021local} establish long-horizon inference and provide a practical bootstrap method for conducting it. \cite{xu2023local} extends this work, demonstrating the efficiency of the LP method within a vector autoregression of infinite order (VAR($\infty$)) framework. Furthermore, both \cite{herbst2024bias} and \cite{mei2023nickell} investigate the biases associated with the LP method under different model specifications, drawing attention to its application in panel data studies. Given the inherent connection between panel and HD data, a group of studies focuses on the application of the LP method to HD settings.  Recently, \cite{ASW2024} apply the desparsified least absolute shrinkage and selection operator (LASSO) to carry on the HD LP approach, while leaving the impulse response parameter of interest unpenalized; \cite{cha2024} considers a similar setting without specifically relying on sparsity assumptions, and establishes some useful concentration inequalities. Methodologically, \cite{basu2015regularized} and \cite{MIAO2023155} provide useful insights by analyzing specific HD--VAR models; these serve as essential benchmarks for investigating the LP approach in certain HD settings.

Up to this point, we emphasize the necessity of integrating the LP approach within a HD framework, recognizing the intrinsic relationship between LP and VAR models. To illustrate this, consider $N$ time series observed over $T$ time periods. A critical challenge arises in the HD--LP setup: the effective sample size is only of the order $T$, yet the number of unknown parameters can scale as $N$ or even $N^2$ (e.g., as shown in \eqref{def.xth_2} of Example \ref{Ex3} and \eqref{def.xth} of Section \ref{Sec2} respectively).   In the latter case, this means even a modest cross-sectional dimension, such as $N=10$, yields at least $100\cdot p$ unknown parameters in which $p$ is the number of lags to be specified. It then calls for a toolkit to understand and implement the LP approach robustly in HD settings, thereby enabling accurate estimation of impulse response functions when the cross-sectional dimension is large -- a common scenario in empirical macroeconomics and finance. To the best of our knowledge, very limited research has investigated long-horizon inference for the LP approach within the HD setting. In this regard, we believe that we make a significant contribution by deriving a set of basic results (e.g., concentration inequality and HD covariance estimation) for HD $h$-step ahead forecasting models in the context of long-horizon analysis.
 
The second main stream that we review applies the LP method to nonlinear models. For instance, \cite{BB2019} consider the LP approach based on a varying coefficient framework. \cite{GONCALVES2021107} introduce a semi-parametric model and examine the validity of the LP method, while \cite{INOUE2024105726} further introduce a nonparametric time-varying framework in order to model local instabilities. In sharp contrast to this existing literature, we motivate and model the nonlinearity via the residual terms as proposed in the relevant literature, such as in Assumption 2 of \cite{WP2009}, and Assumption A.3 of \cite{DG2018}. This approach has direct and critical impacts on the long-horizon inference established in our paper, thereby offering a distinct path to further generalize the aforementioned nonlinear models.

The third stream that is relevant to our work considers simultaneous equations and IV approaches. For a comprehensive survey of the simultaneous equations, we refer interested readers to \cite{GONCALVES2024105702} for details, wherein they also refer to this line of research as ``state-dependent local projection''.  \cite{PW2021} investigate the IV based approach, and further bridge the LP method and the VAR literature.  Although we do not have any specific IV included, our framework can be considered as a set of simultaneous equations. While one could impose additional identification restrictions, similar to those often applied in the VAR literature, these are generally application-driven; hence, we do not pursue this avenue further in the current paper.

\medskip

Having reviewed the relevant literature, we summarize our contributions below:

\begin{enumerate}[leftmargin=24pt, parsep=2pt, topsep=2pt] 
    \item We establish long-horizon inference specifically for the LP approach in a class of HD moving average infinity--HDMA($\infty$) models, which is considerably different from such models associated with high--dimensional regressors (covariants) discussed in the relevant literature. Our approach relies only on the underlying DGP, circumventing the need for restrictive assumptions like mixing conditions or near-epoch dependence imposed in the relevant literature.
    
    \item While \cite{BL2008} lay the foundation for HD covariance matrix estimation, our study offers the first application of this approach to long--horizon inference within the literature. Based on this result, we establish some asymptotic normality extending the long-horizon inference analysis (\citealp{montiel2021local}) to our class of HD--MA($\infty$) models. 
    
    \item Motivated by Assumption 2 of \cite{WP2009} and Assumption A.3 of \cite{DG2018}, we model the nonlinearity via the residual terms, and offer a distinct path to further generalize the existing literature of nonlinear models. The main challenge is how to integrate this general structure into the $h$-step ahead forecasting models and work out its theoretical properties for such a class of HDMA($\infty$) models.
    
    \item In addition, we conduct extensive simulations to examine the theoretical results. In the empirical study, we utilize the proposed HD--LP framework to revisit the impact of business news attention on U.S. industry-level stock volatility. Though the structural VAR models, as a powerful toolkit, have been widely used in the relevant literature \citep[e.g.,][]{diebold2009measuring, diebold2014network} to trace how volatility shocks propagate across markets or industries, such methods impose rigid assumptions on system dynamics. Our method, on the other hand, allows for more flexibility in capturing the effect of news shocks and the volatility spillover effects among industries. Finally, we compute impulse-response functions at different horizons, and summarize how news-driven volatility shocks evolve over time.
\end{enumerate}

The remainder of the paper is organized as follows. Section \ref{Sec2} introduces the model setup and methodology, and establishes the asymptotic results. In Section \ref{Sec3}, we conduct extensive simulations to corroborate the theoretical findings. Section \ref{Sec4} presents the empirical study on the impact of business news attention on U.S. industry-level stock volatility. Section \ref{Sec5} concludes. Appendix \ref{AP.A} provides supplementary results, while Appendix \ref{AP.B} contains the technical proofs and preliminary lemmas.

\medskip

Before proceeding further, we introduce some notations which will be repeatedly used in the paper. For a positive integer $L$, we let $[L]\coloneqq \{1,\ldots,N\}$. Let $\mathbf{e}_i $ be a $N\times 1$ selection vector with the $i^{th}$ element being 1 and the others being 0, and $i\in [N]$. For a vector $\mathbf{v} =(v_1,\ldots, v_N)^\top$, we let $\|\mathbf{v} \|_1\coloneqq \sum_{i=1}^N |v_i|$, and $\|\mathbf{v} \|_2\coloneqq \{\sum_{i=1}^N |v_i|^2\}^{1/2}$. For a matrix $\mathbf{B}\coloneqq \{b_{ij}\}$, we let $\|\mathbf{B}\|$, $\|\mathbf{B}\|_2$, $\|\mathbf{B}\|_{\max}$, $\|\mathbf{B}\|_1$, and $\|\mathbf{B}\|_\infty$ define its Frobenius norm, spectral norm, max norm, column norm, and row norm respectively. Additionally, we define the following operator for a symmetric square matrix $\mathbf{B}$:

\begin{eqnarray}\label{Def.GM}
    \mathscr{G}_\eta(\mathbf{B}) \coloneqq \{ b_{ij}I(|b_{ij}| \ge \eta) \quad \text{for}\quad i\ne j\},
\end{eqnarray}
where $\eta$ is a user-specified threshold parameter. For a random variable $z$, let $\|z\|_p\coloneqq (E|z|_{p}^p )^{1/p}$. For $a\in \mathbb{R}$, let $\lfloor a \rfloor$ be the largest positive integer less than or equal to $a$. For two numbers, we let $a\lesssim b$ stand for $a=O(b)$; let $a\asymp b$ stands for $a\lesssim b$ and $b\lesssim a$;  $a\wedge b$ and $a\vee b$ stand for $\min(a,b)$ and $\max(a,b)$ respectively. For two random variables (say, $a$ and $b$ again for simplicity), we let $a\lesssim b$ stand for $a=O_P(b)$, and accordingly define $a\asymp b$.

\section{Setup and Asymptotic Properties}\label{Sec2}

In this section, we first present the DGP and justify its generality (Section \ref{Sec2.0}). Next, Section \ref{Sec2.1} introduces the setup for $h$-step ahead forecasting models and their useful properties. We then detail the estimation procedure and establish its corresponding asymptotics under minimal conditions in Section \ref{Sec2.2}. Finally, Section \ref{Sec2.3} imposes more structural assumptions, specifically considering a HD covariance matrix estimation, and establishes the asymptotic distribution for practical analysis. Secondary results are relegated to Appendices \ref{AP.A.1}-\ref{AP.A.3} for the sake of space.

\subsection{Data Generating Process}\label{Sec2.0}

Suppose that the following panel dataset is observable.

\begin{eqnarray}\label{def.xit}
    \{x_{it}\mid i\in [N], \, t =2-p,3-p, \ldots, T\},
\end{eqnarray}
where $i$ indexes the number of individuals, $t$ indexes the time periods, and $p \, (\ge 1)$ is introduced to accommodate the lag terms in the dynamic structure to be specified and estimated later. In an AR(1) model, $p=1$ ensures the observable dataset is $\{x_{it}\mid i\in [N], \, t \in [T]\}$. The parameter $p$ is preserved in \eqref{def.xit} solely to simplify the notation during theoretical derivations.

Additionally, suppose the data generating process (DGP) of $\{x_{it}\}$ admits the following HDMA($\infty$) process:

\begin{eqnarray}\label{def.xt}
    \mathbf{x}_t =\sum_{\ell = 0}^{\infty}\mathbf{B}_\ell \pmb{\varepsilon}_{t-\ell},
\end{eqnarray}
where, for $\forall \ell\ge 0$,  

\begin{eqnarray*}
    \mathbf{B}_\ell =(\mathbf{B}_{\ell,1},\ldots, \mathbf{B}_{\ell,N}) =(\pmb{\beta}_{\ell,1},\ldots,\pmb{\beta}_{\ell,N})^\top =\{b_{\ell,ij} \}_{N\times N}
\end{eqnarray*}
is an $N\times N$ matrix, $\mathbf{x}_t=(x_{1t},\ldots, x_{Nt})^\top$, and $\pmb{\varepsilon}_{t} =(\varepsilon_{1t},\ldots, \varepsilon_{Nt})^\top$ with $\varepsilon_{it}$ being independent and identically distributed (i.i.d.) over both dimensions. 

The popularity and importance of MA($\infty$) processes is governed by the Wold decomposition (\citealp[p. 15]{fan2003nonlinear}). Its recent extensions to fixed multi--dimensional settings include \cite{gpy2024a} and \cite{ygp2025}. Equation \eqref{def.xt} is a substantial extension of such MA($\infty$) processes to a HDMA($\infty$) setting. 

Throughout, we suppose that $E[\pmb{\varepsilon}_{t}]=\mathbf{0}$ and $E[\pmb{\varepsilon}_{t}\pmb{\varepsilon}_{t}^\top]=\mathbf{I}_N$ without loss of generality. Otherwise, one always encounters the identification issue due to the fact that 

\begin{eqnarray*}
   \mathbf{B}_\ell \pmb{\varepsilon}_{t-\ell} \equiv \mathbf{B}_\ell\mathbf{W} \mathbf{W}^{-1}\pmb{\varepsilon}_{t-\ell}
\end{eqnarray*}
for any conformable matrix $\mathbf{W}$ with invertibility. 

Our goal is to infer the impulse responses defined below via the LP approach.

\begin{eqnarray}\label{def.IR}
    \textbf{IR}_{h,j} = E[\mathbf{x}_{t+h}\mid \pmb{\varepsilon}_{ t\mid j}(1), \pmb{\varepsilon}_{t-1},\ldots   ]  -E[\mathbf{x}_{t+h}\mid \pmb{\varepsilon}_{ t\mid j}(0), \pmb{\varepsilon}_{t-1},\ldots],
\end{eqnarray}
where $h\ge 1$, $j\in [N]$, and $\pmb{\varepsilon}_{t\mid j}(a) \coloneqq (\varepsilon_{1t},\ldots,\varepsilon_{j-1,t}, a, \varepsilon_{j+1,t},\ldots, \varepsilon_{Nt})^\top$. Using \eqref{def.xt} and \eqref{def.IR}, simple algebra shows that

\begin{eqnarray}\label{def.Bh}
    \textbf{IR}_{h,j} =\mathbf{B}_{h,j},
\end{eqnarray}
so the key question is how to recover $\mathbf{B}_h$ for $h\ge 1$. 

Before proceeding further, we show the flexibility of the above setup. 

\begin{example}\label{Ex1}
\normalfont
If the DGP of \eqref{def.xt} follows a VAR(1) process $\mathbf{x}_t = \mathbf{a}_1\mathbf{x}_{t-1}+\pmb{\varepsilon}_t$, it then yields the following $h$-step ahead forecasting model for $h\ge 1$

\begin{eqnarray}\label{eq.Var1}
\mathbf{x}_{t+h} =\mathbf{a}_1^{h} \mathbf{x}_t+\mathbf{u}_{t,h},
\end{eqnarray}
where $\mathbf{u}_{t,h} \coloneqq \mathbf{a}_1^{h-1}\pmb{\varepsilon}_{t+1}+ \cdots+\mathbf{a}_1\pmb{\varepsilon}_{t+h-1}+\pmb{\varepsilon}_{t+h}.$ Notably, the following two facts hold: (i). $\mathbf{u}_{t,h}$ is $(h-1)$-dependent along the time dimension (i.e., $\mathbf{u}_{t,h}$ and $\mathbf{u}_{s,h}$ are independent for $|t-s|\ge h$); (ii). for $\forall t$, $\{\mathbf{u}_{t,h}\mid h\ge 1 \}$ and $\{\mathbf{x}_s \mid s\le t\}$ are independent of each other. 

One can further extend the above DGP to include (conditional) heteroskedasticity. For example, consider an autoregressive conditional heteroskedasticity (ARCH) structure:

\begin{eqnarray*}
\mathbf{x}_t = \mathbf{a}_1\mathbf{x}_{t-1}+\tilde{\pmb{\varepsilon}}_t,\quad \tilde{\pmb{\varepsilon}}_t =\pmb{\sigma}_t^{1/2}\pmb{\varepsilon}_t, \quad\text{and}\quad \pmb{\sigma}_t  = \mathbf{b}_0+ \mathbf{b}_1^\top \pmb{\sigma}_{t-1} \mathbf{b}_1,
\end{eqnarray*}
where  $\mathbf{b}_0$ and $\mathbf{b}_1$ need to fulfil certain conditions. In this case, the $h$-step ahead forecasting model is almost the same as \eqref{eq.Var1} with minor modification:

\begin{eqnarray}\label{eq.Var12}
\mathbf{x}_{t+h} =\mathbf{a}_1^{h} \mathbf{x}_t+\tilde{\mathbf{u}}_{t,h},
\end{eqnarray}
where $\tilde{\mathbf{u}}_{t,h}  \coloneqq   (\mathbf{a}_1^{h-1}\pmb{\sigma}_{t+1}^{1/2},\ldots, \mathbf{a}_1 \pmb{\sigma}_{t+h-1}^{1/2}, \pmb{\sigma}_{t+h}^{1/2})  \mathbf{v}_{t,h}$ with $\mathbf{v}_{t,h}\coloneqq (\pmb{\varepsilon}_{t+1}^\top,\ldots,\pmb{\varepsilon}_{t+h}^\top)^\top$. In \eqref{eq.Var12}, $\mathbf{v}_{t,h}$ is $(h-1)$-dependent, and for $\forall t$, $\{\mathbf{v}_{t,h}\mid h\ge 1 \}$ and $\{\mathbf{x}_s \mid s\le t\}$ are still independent of each other. 
\end{example}

\begin{example}\label{Ex2}
\normalfont

If the DGP of \eqref{def.xt} follows a VAR($p$) process, we have

\begin{eqnarray}\label{eq.Varp}
    \begin{pmatrix}
    \mathbf{x}_{t} \\ \vdots \\
    \mathbf{x}_{t-p+1}
    \end{pmatrix}  =\begin{pmatrix}
    \mathbf{a}_{1:(p-1)}&\mathbf{a}_p\\
    \begin{matrix}
        \mathbf{I}
    \end{matrix}& \mathbf{0}
    \end{pmatrix}
    \begin{pmatrix}
    \mathbf{x}_{t-1}\\ \vdots \\
    \mathbf{x}_{t-p}
    \end{pmatrix} + \begin{pmatrix}
    \pmb{\varepsilon}_t \\
    \mathbf{0} 
    \end{pmatrix}\eqqcolon  \tilde{\mathbf{A}}_p \mathbf{X}_{t-1} + \pmb{\mathcal{E}}_t,
\end{eqnarray}
where $\mathbf{a}_{1:(p-1)}=(\mathbf{a}_1 , \ldots , \mathbf{a}_{p-1})$, $\mathbf{I}$ and $\mathbf{0}$ are conformable matrix and vector respectively, and the definitions of $\tilde{\mathbf{A}}_p$, $\mathbf{X}_t$ and $\pmb{\mathcal{E}}_t$ are evident. According to Appendix \ref{AP.A}, we can further obtain that 

\begin{eqnarray*}
    \mathbf{x}_{t+h}=\mathbf{S}_{p}\tilde{\mathbf{A}}_p^{h}\mathbf{X}_t + \mathbf{u}_{t,h},
\end{eqnarray*}
where $\mathbf{S}_{p} =(\mathbf{I}_N,  \mathbf{0}_{N\times N(p-1)} )$, $\mathbf{u}_{t,h} \coloneqq \tilde{\mathbf{a}}_p^{h-1}\pmb{\varepsilon}_{t+1}+ \cdots+ \tilde{\mathbf{a}}_p\pmb{\varepsilon}_{t+h-1}+\pmb{\varepsilon}_{t+h}$, and $\tilde{\mathbf{a}}_p^\ell\coloneqq\mathbf{S}_{p} \tilde{\mathbf{A}}_p^{\ell}\mathbf{S}_{p}^\top$. Still, $\mathbf{u}_{t,h}$ is $(h-1)$-dependent, and for $\forall t$, $\{\mathbf{u}_{t,h}\mid h\ge 1 \}$ and $\{\mathbf{x}_s \mid s\le t\}$ are independent of each other.

Again, one can extend \eqref{eq.Varp} to capture (conditional) hetroskedasticity, and we will no longer discuss it due to similarity.
\end{example}

\begin{example}\label{Ex3}
\normalfont

We now provide an example to connect our study with some popular examples of the literature. For notational simplicity  we focus on the case with one lag only, i.e., Example \ref{Ex1} again. The extension with multiple lags such as Example \ref{Ex2} should be straightforward in view of the following justification.

Firstly, we partition $\mathbf{x}_t$ of Example \ref{Ex1} into two parts ($\mathbf{x}_t^*$ and $\mathbf{x}_t^\dag$) as follows:

\begin{eqnarray}\label{eq.ex3}
    \begin{pmatrix}
    \mathbf{x}_t^* \\
    \mathbf{x}_t^\dag
    \end{pmatrix} = \begin{pmatrix}
    \mathbf{a}_1^* & \mathbf{a}_1^\dag\\
    \mathbf{a}_2^* & \mathbf{a}_2^\dag
    \end{pmatrix}\begin{pmatrix}
    \mathbf{x}_{t-1}^* \\
    \mathbf{x}_{t-1}^\dag
    \end{pmatrix}+\begin{pmatrix}
    \pmb{\varepsilon}_t^* \\
    \pmb{\varepsilon}_t^\dag
    \end{pmatrix},
\end{eqnarray}
wherein  $\{\pmb{\varepsilon}_t^*\}$ and $\{\pmb{\varepsilon}_t^\dag\}$ are independent of each other. It yields that 

\begin{eqnarray*}
    \left\{ \begin{array}{l}
        \mathbf{x}_t^* = \mathbf{a}_1^* \mathbf{x}_{t-1}^* + \mathbf{a}_1^\dag \mathbf{x}_{t-1}^\dag +\pmb{\varepsilon}_t^*\\
        \mathbf{x}_t^\dag =\mathbf{a}_2^* \mathbf{x}_{t-1}^* + \mathbf{a}_2^\dag \mathbf{x}_{t-1}^\dag +\pmb{\varepsilon}_t^\dag
    \end{array} \right. ,
\end{eqnarray*}
which is then similar to Eq. (2) of \cite{GONCALVES2024105702} with constant parameters.

One may further regulate $\{\mathbf{x}_t^\dag \}$ to be exogenous regressors (i.e., $\mathbf{a}_2^*\equiv \mathbf{0}$), so \eqref{eq.ex3} reduces to

\begin{eqnarray*}
    \begin{pmatrix}
    \mathbf{x}_t^* \\
    \mathbf{x}_t^\dag
    \end{pmatrix} = \begin{pmatrix}
    \mathbf{a}_1^* & \mathbf{a}_1^\dag\\
    \mathbf{0} & \mathbf{a}_2^\dag
    \end{pmatrix}\begin{pmatrix}
    \mathbf{x}_{t-1}^* \\
    \mathbf{x}_{t-1}^\dag
    \end{pmatrix}+\begin{pmatrix}
    \pmb{\varepsilon}_t^* \\
    \pmb{\varepsilon}_t^\dag
    \end{pmatrix},
\end{eqnarray*}
In this case, if we let $\mathbf{a}_2^\dag \equiv \mathbf{0}$, $\{\mathbf{x}_t^\dag \ (\equiv \pmb{\varepsilon}_t^\dag) \}$ become purely white noises. The setting therefore is in the same sprit as in \citet[Eq. (3)]{INOUE2024105726} and \citet[Eq. (3)]{GONCALVES2024105702}. 

Additionally, we may focus exclusively on modeling $\mathbf{x}_t^*$. The first equation of \eqref{eq.ex3} then represents a standard dynamic model incorporating a HD set of exogenous regressors: 

\begin{eqnarray}\label{def.xth_2}
    \mathbf{x}_t^* = \mathbf{a}_1^* \mathbf{x}_{t-1}^* +\mathbf{a}_1^\dag\mathbf{x}_{t-1}^{\dag} + \pmb{\varepsilon}_t^*.
\end{eqnarray}
In the limiting case where $\mathbf{x}_t^*$ is a scalar, $\mathbf{a}_1^\dag$ becomes a HD row vector. In this configuration, \eqref{def.xth_2} aligns with the frameworks established by \citet[Eq. (2.4)]{ASW2024} and \citet[Eq. (2.2)]{cha2024}. 
\end{example}

\subsection{The Setup}\label{Sec2.1}

Up to this point, we have discussed our goal and the flexibility of the DGP process. However, \eqref{def.xt} has infinite parameters, so we have to further impose certain structure in order to get some meaningful results practically while maintaining the assumptions as flexible as possible. In Appendix \ref{AP.A.1}, we provide a Proposition \ref{AP.Prop1} to briefly answer the question that how far we can go in approximating \eqref{def.xt} without knowing the underlying DGP, which however is not the main focus of this paper.

Having said that, we suppose that $\{\mathbf{x}_t\}$ also admit the following set of regression models:

\begin{eqnarray}\label{def.xth}
    \mathbf{x}_{t+h}=\mathbf{A}_1\mathbf{x}_{t} + \cdots + \mathbf{A}_{p}\mathbf{x}_{t-p+1} +\mathbf{u}_{t,h}= (\mathbf{X}_t^\top \otimes \mathbf{I}_N) \VEC(\tilde{\mathbf{A}})+\mathbf{u}_{t,h},
\end{eqnarray}
where  $p$ is a fixed positive constant, $h \ (\ge 1)$ may diverge, $\tilde{\mathbf{A}}=\{\tilde{A}_{ij} \}_{N\times Np}\coloneqq (\mathbf{A}_1,\ldots, \mathbf{A}_p)$, and $\mathbf{u}_{t,h}=(u_{1t,h},\ldots, u_{Nt,h})^\top$ has an $(h-1)$-dependent structure to be specified below. It should be understood that $(\mathbf{A}_1,\ldots, \mathbf{A}_p)$ vary with respect to $h$, which is suppressed in the sub-indices for notational simplicity when no misunderstanding arises.  The set of regression models given in \eqref{def.xth} is similar to Eq. (2) of \cite{Oscar_2005} with a focus on long-horizon inference under the HD setting. 

To facilitate development, we impose the following assumptions.

\begin{assumption}\label{AS1}
$\{\varepsilon_{it}\}$ are  i.i.d.  over both $i$ and $t$, and satisfy that $E[\varepsilon_{11}]=0$, $E[\varepsilon_{11}^2]=1$, and $E|\varepsilon_{11}|^J<\infty$ with a fixed $J \, (\ge 4)$.
\end{assumption}

Assumption \ref{AS1} only regulates the random components of \eqref{def.xt}. We will impose more restrictions on the deterministic matrices involved whenever necessary.
 
\begin{assumption}\label{AS2}
\item
\begin{enumerate}[leftmargin=24pt, parsep=2pt, topsep=2pt]
    \item Assume that $g(\cdot)$ is a smooth function such that $\mathbf{u}_{t,h}=\mathbf{g}(\pmb{\varepsilon}_{t+h},\ldots, \pmb{\varepsilon}_{t+1})$ satisfies that  $E[\mathbf{u}_{t,h}]=\mathbf{0}$, $E[\mathbf{u}_{t,h}\mathbf{u}_{t,h}^\top]=\pmb{\Sigma}_h$, and $\max_{i,h}\|\mathbf{e}_i^\top\mathbf{u}_{t,h} \|_J<\infty$, where $J$ is given in Assumption \ref{AS1}.
    
    
    \item Suppose that (i). $\max_{j}\sum_{\ell = 0}^{\infty} \ell^{\frac{J}{2(J+1)}} \|\pmb{\beta}_{\ell, j} \|_2^{\frac{J}{J+1}}<\infty$ with $\{\pmb{\beta}_{\ell, j}\}$ being defined below (\ref{def.xt}); (ii). $\frac{d_{\pmb{\beta}}^4\log N}{T}\to 0$ with $d_{\pmb{\beta}}\coloneqq  \sum_{\ell=0}^\infty \|\mathbf{B}_\ell\|_{\infty} $; (iii). $h/T\to 0$ as $(h, T)\to (\infty,\infty)$.
\end{enumerate}
\end{assumption}

Assumption \ref{AS2} is readily justified in light of Examples \ref{Ex1} and \ref{Ex2}. Specifically, Assumption \ref{AS2}.2.(i) requires a reasonably slow decay rate for the norm $\|\pmb{\beta}_{\ell, j} \|_2$. At this stage, the sample size $N$ can be large in view of the condition in Assumption \ref{AS2}.2.(ii). Assumption \ref{AS2}.2.(ii) allows that $N$ and $T$ can be proportional to each other. Assumption \ref{AS2}.2.(iii) permits that $h$ can go to $\infty$ as long as $\frac{h}{T}\rightarrow 0$.

With the above setup, we present the first result of this paper.

\begin{proposition}\label{PP.Main1}
    Suppose that $\mathbf{x}_t$ admits both representations \eqref{def.IR} and \eqref{def.xth}, Assumptions \ref{AS1} and \ref{AS2}.1 hold, and $\mathbf{B}_0 =\mathbf{I}_N$. For $\forall h\ge 1$, we obtain that $\mathbf{A}_{1} = \mathbf{B}_{h}  = ({\normalfont \textbf{IR}_{h,1}},\ldots, {\normalfont \textbf{IR}_{h,N}})$.
\end{proposition}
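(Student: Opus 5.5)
The plan is to compare the two representations of $\mathbf{x}_{t+h}$ and identify coefficients through their dependence on the shock $\pmb{\varepsilon}_t$. The key inputs are that $\mathbf{u}_{t,h}$ depends only on future shocks, and that $\mathbf{x}_{t-1},\ldots,\mathbf{x}_{t-p+1}$ depend only on shocks dated $t-1$ and earlier.

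\textbf{Step 1: conditional means from the LP regression.} Let $\mathcal{F}_t \coloneqq \sigma(\pmb{\varepsilon}_t,\pmb{\varepsilon}_{t-1},\ldots)$. By Assumption \ref{AS2}.1, $\mathbf{u}_{t,h}=\mathbf{g}(\pmb{\varepsilon}_{t+h},\ldots,\pmb{\varepsilon}_{t+1})$. By Assumption \ref{AS1}, $\{\pmb{\varepsilon}_s\}$ are independent across $s$, so $\mathbf{u}_{t,h}$ is independent of $\mathcal{F}_t$. Hence $E[\mathbf{u}_{t,h}\mid\mathcal{F}_t]=E[\mathbf{u}_{t,h}]=\mathbf{0}$. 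Since each $\mathbf{x}_{t-k}$, $k=0,\ldots,p-1$, is $\mathcal{F}_t$-measurable by \eqref{def.xt}, taking conditional expectations in \eqref{def.xth} gives
\begin{eqnarray*}
E[\mathbf{x}_{t+h}\mid\mathcal{F}_t]=\mathbf{A}_1\mathbf{x}_t+\mathbf{A}_2\mathbf{x}_{t-1}+\cdots+\mathbf{A}_p\mathbf{x}_{t-p+1}.
\end{eqnarray*}

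\textbf{Step 2: the impulse response as a difference of conditional means.} Evaluate the expression from Step 1 at $\pmb{\varepsilon}_t=\pmb{\varepsilon}_{t\mid j}(1)$ and at $\pmb{\varepsilon}_t=\pmb{\varepsilon}_{t\mid j}(0)$, with all lagged shocks held fixed. The terms $\mathbf{x}_{t-1},\ldots,\mathbf{x}_{t-p+1}$ do not involve $\pmb{\varepsilon}_t$, so they cancel in the difference. From \eqref{def.xt} with $\mathbf{B}_0=\mathbf{I}_N$, $\mathbf{x}_t$ changes by $\mathbf{B}_0\mathbf{e}_j=\mathbf{e}_j$. Therefore \eqref{def.IR} yields $\textbf{IR}_{h,j}=\mathbf{A}_1\mathbf{e}_j$, the $j$-th column of $\mathbf{A}_1$.

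\textbf{Step 3: the impulse response from the MA representation.} Computing the same conditional mean directly from \eqref{def.xt} gives
\begin{eqnarray*}
E[\mathbf{x}_{t+h}\mid\mathcal{F}_t]=\sum_{\ell\ge h}\mathbf{B}_\ell\pmb{\varepsilon}_{t+h-\ell},
\end{eqnarray*}
because the future shocks have mean zero and are independent of $\mathcal{F}_t$. The same difference then equals $\mathbf{B}_h\mathbf{e}_j=\mathbf{B}_{h,j}$, which is \eqref{def.Bh}. Equating the two expressions for every $j\in[N]$ gives $\mathbf{A}_1=\mathbf{B}_h=(\textbf{IR}_{h,1},\ldots,\textbf{IR}_{h,N})$.

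\textbf{Main obstacle.} The delicate point is justifying the interchange of conditional expectation with the infinite sum in \eqref{def.xt}, and making the pointwise evaluation at $\pmb{\varepsilon}_{t\mid j}(a)$ well defined. The approach is to note that the series converges in $L^2$ and the conditional expectation is $L^2$-continuous. For the evaluation, realize the conditional expectations as measurable functions of $(\pmb{\varepsilon}_t,\pmb{\varepsilon}_{t-1},\ldots)$; both sides are then affine in $\varepsilon_{jt}$. Since they agree almost surely, their slopes in $\varepsilon_{jt}$ must coincide, which gives the column identification. This last step needs $\varepsilon_{jt}$ to be nondegenerate, and that holds because $E[\varepsilon_{11}^2]=1$.
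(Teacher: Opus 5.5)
Your proposal is correct and follows the same route as the paper. You take conditional expectations in \eqref{def.xth} so that only $\mathbf{A}_1\mathbf{x}_t$ reacts to the shock, use $\mathbf{B}_0=\mathbf{I}_N$ to get $\textbf{IR}_{h,j}=\mathbf{A}_1\mathbf{e}_j$, and match this with $\textbf{IR}_{h,j}=\mathbf{B}_{h,j}$ from \eqref{def.Bh}. You are in fact more explicit than the paper, which leaves unstated both that $\mathbf{u}_{t,h}$ is independent of $\sigma(\pmb{\varepsilon}_t,\pmb{\varepsilon}_{t-1},\ldots)$ and which (affine) version of the conditional expectation makes the pointwise evaluation at $\pmb{\varepsilon}_{t\mid j}(a)$ meaningful.
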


The condition $\mathbf{B}_0 =\mathbf{I}$ serves a standard identification purpose, aligning with  \citet[Eq. (3)]{Oscar_2005}. Without this normalization, we can only determine the relationship $\mathbf{A}_1 \mathbf{B}_0 = \mathbf{B}_h$. Given the definition of the impulse responses in \eqref{def.IR}, this lack of identification means the LP  method would only recover the impulse responses up to a rotation matrix. The matrices $\mathbf{A}_j$ for $j\ge 2$, while being essential components of the underlying model \eqref{def.xth}, are not directly connected to the impulse responses.

To conclude our model setup, we establish the following concentration inequality for the $h$-step ahead forecasting models based on \eqref{def.xt} and \eqref{def.xth}, which will facilitate the estimation of the parameters of interest in the next subsection.

\begin{theorem}\label{TM.Main1}
Under Assumptions \ref{AS1}-\ref{AS2}, there exit positive constants $c_1,c_2,c_3, c_4$ such that for $\forall i,j\in [N]$, $\forall\ell \in 0\cup [p-1]$, and $\delta >c_4\sqrt{T}\mu_{h}^{1+1/J}$

\begin{eqnarray*}
    \Pr\left(\max_{t\in [T-h]}\left|\sum_{s=1}^{t}u_{is,h} x_{j,s-\ell}\right|\ge \delta\right) \le c_1 \frac{T}{\delta^J} \mu_h^{J+1}+c_2\exp\left(-\frac{c_3\delta^2}{T \mu_h^{2+2/J}}\right),
\end{eqnarray*}
where $\displaystyle\mu_h= \max_{j\in [N]}\sum_{\ell=0}^\infty  [(\ell+h)^{\frac{J}{2}-1} \|\pmb{\beta}_{\ell, j} \|_2^J]^{\frac{1}{J+1}}\lesssim h^{\frac{J-2}{2(J+1)}}$.
\end{theorem}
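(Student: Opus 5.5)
The plan is to view $z_s\coloneqq u_{is,h}x_{j,s-\ell}$ as a stationary causal functional of the i.i.d.\ innovations, $z_s = G(\pmb{\varepsilon}_{s+h},\pmb{\varepsilon}_{s+h-1},\ldots)$, and to apply a Fuk--Nagaev type inequality for maximal partial sums of such processes, e.g.\ Theorem 2 of Liu, Xiao and Wu (2013, \emph{Statistica Sinica}), built on Wu's functional dependence measure. The first step is centering. By Assumption \ref{AS2}.1, $\mathbf{u}_{s,h}$ is a function of $\pmb{\varepsilon}_{s+1},\ldots,\pmb{\varepsilon}_{s+h}$ only, while by \eqref{def.xt} $x_{j,s-\ell}$ depends on $\pmb{\varepsilon}_{r}$ with $r\le s-\ell\le s$. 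Hence the two factors are independent, so $E[z_s]=E[u_{is,h}]\,E[x_{j,s-\ell}]=0$. Moreover, by H\"older's inequality, $\|z_s\|_{J/2}\le \|u_{is,h}\|_J\|x_{j,s-\ell}\|_J<\infty$. (If the exponent $J$ in the bound is to be kept, independence in fact gives $\|z_s\|_J=\|u_{is,h}\|_J\|x_{j,s-\ell}\|_J$.)

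The second step computes the functional dependence measure
\begin{equation*}
\delta_{k}\coloneqq\|z_s-z_s^{\{s+h-k\}}\|_J ,
\end{equation*}
where $z_s^{\{r\}}$ denotes $z_s$ with $\pmb{\varepsilon}_r$ replaced by an i.i.d.\ copy. There are two cases.
\begin{enumerate}[leftmargin=24pt, parsep=2pt, topsep=2pt]
\item For $0\le k\le h-1$, the perturbation hits only $u_{is,h}$. Independence then gives $\delta_k\le 2\max_{i,h}\|\mathbf{e}_i^\top\mathbf{u}_{t,h}\|_J\,\|x_{j,s-\ell}\|_J=O(1)$.
\item For $k=h+\ell+m$ with $m\ge0$, the perturbation hits only $x_{j,s-\ell}$, through $\pmb{\beta}_{m,j}^\top(\pmb{\varepsilon}_{s-\ell-m}-\pmb{\varepsilon}'_{s-\ell-m})$. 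Rosenthal's inequality gives $\|\pmb{\beta}_{m,j}^\top\pmb{\varepsilon}\|_J\lesssim\|\pmb{\beta}_{m,j}\|_2$ under Assumption \ref{AS1}, and hence $\delta_{h+\ell+m}\lesssim\|u_{is,h}\|_J\|\pmb{\beta}_{m,j}\|_2$.
\end{enumerate}
Assumption \ref{AS2}.2(i) guarantees $\sum_k\delta_k<\infty$ uniformly in $i,j$, so the dependence-adjusted norms appearing in the Liu--Xiao--Wu bound are finite.

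The third step, which I expect to be the main obstacle, is to match the constants in that inequality to $\mu_h$. Their bound contains a polynomial part of the form
\begin{equation*}
\frac{T}{\delta^J}\left(\sum_{k\ge0}(k^{J/2-1}\delta_k^J)^{1/(J+1)}\right)^{J+1}
\end{equation*}
and an exponential part governed by $\Theta=\sum_k\delta_k$. The exponent weights $(k^{J/2-1}\delta_k^J)^{1/(J+1)}$ explain exactly the form of $\mu_h$. Inserting $k=h+\ell+m$ and the case-2 bound produces the terms $[(m+h)^{J/2-1}\|\pmb{\beta}_{m,j}\|_2^J]^{1/(J+1)}$, up to the fixed shift $\ell\le p-1$. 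The first $h$ indices of case 1 have to be absorbed as well. My first attempt would be to note that $\|\pmb{\beta}_{0,j}\|_2=1$ because $\mathbf{B}_0=\mathbf{I}_N$, and to use the block of $h$ consecutive $(h-1)$-dependent terms to rewrite that contribution as an $h$-scaled weight on the $m=0$ term, consistent with $(0+h)^{(J/2-1)/(J+1)}$. If this is too lossy, the fallback is to split the sum into $h$ interleaved subsequences that are independent across blocks. For the Gaussian part, I would bound $\Theta$ by $\mu_h^{1+1/J}$ via H\"older's inequality on the weights; this yields the variance proxy $T\mu_h^{2+2/J}$ and the threshold $\delta>c_4\sqrt{T}\mu_h^{1+1/J}$.

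Finally, the claim $\mu_h\lesssim h^{(J-2)/(2(J+1))}$ follows from $(\ell+h)^{J/2-1}\le 2^{J/2}(\ell^{J/2-1}+h^{J/2-1})$, together with $\|\pmb{\beta}_{\ell,j}\|_2^J\le\|\pmb{\beta}_{\ell,j}\|_2^{J/2}\cdot O(1)$ and the summability condition in Assumption \ref{AS2}.2(i). Uniformity over $i,j\in[N]$ and $\ell\le p-1$ holds because all constants depend only on the uniform moment bounds.
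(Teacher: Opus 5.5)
Your black-box use of a Liu--Xiao--Wu type inequality fails at the step you yourself flag as the main obstacle, and neither of your two repairs closes it.

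Suppose you feed the coupling measures $\delta_k=\|z_s-z_s^{\{s+h-k\}}\|_J$ into that theorem. The lags $0\le k\le h-1$ perturb the innovations inside $u_{is,h}$. Assumption \ref{AS2}.1 only gives $\delta_k=O(1)$ for these lags: nothing controls how $\mathbf{g}$ depends on its individual arguments until Assumption \ref{AS4}, which Theorem \ref{TM.Main1} does not use. Consequences:
\begin{itemize}
\item Their contribution to the polynomial constant is $\sum_{k<h}(k^{J/2-1}\delta_k^J)^{1/(J+1)}\asymp h^{3J/(2(J+1))}$. This dominates $\mu_h\lesssim h^{(J-2)/(2(J+1))}$.
\item Likewise $\Theta=\sum_k\delta_k$ can be of order $h$, so your bound $\Theta\lesssim\mu_h^{1+1/J}$ for the Gaussian part fails. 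Your remark that $\sum_k\delta_k<\infty$ ``uniformly'' is also false uniformly in $h$.
\item Rewriting this block as an ``$h$-scaled weight on the $m=0$ term'' is not a legitimate operation on the theorem's constant. It also uses $\mathbf{B}_0=\mathbf{I}_N$, which is not assumed here.
\item Splitting into $h$ interleaved subsequences (each is in fact a martingale difference sequence) costs a union bound. With threshold $\delta/h$ per subsequence you get a polynomial term of order $Th^{J}/\delta^J$, not $T\mu_h^{J+1}/\delta^J\lesssim Th^{J/2-1}/\delta^J$.
\end{itemize}

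The missing idea is that the $u$-lags contribute nothing to the martingale increments that actually drive the blocking argument. Because $x_{j,s-\ell}$ has mean zero and is independent of $\pmb{\varepsilon}_{s+1},\ldots,\pmb{\varepsilon}_{s+h}$, the $m$-dependent projections $U(s,h,m)=E[z_s\mid\pmb{\varepsilon}_{s+h},\ldots,\pmb{\varepsilon}_{s+h-m}]$ vanish identically for $m<h$ (for $m<h+\ell$ at lag $\ell$). For $m\ge h$ the increment is exactly $U(s,h,m)-U(s,h,m-1)=u_{is,h}\pmb{\beta}_{m-h,j}^\top\pmb{\varepsilon}_{s-(m-h)}$, with $J$-norm $O(\|\pmb{\beta}_{m-h,j}\|_2)$. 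The Liu--Xiao--Wu proof only uses the generic bound $\|U(s,h,m)-U(s,h,m-1)\|_J\le\delta_m$, and that is exactly what throws this information away.

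The paper therefore reruns the blocking argument by hand with these exact increments:
\begin{itemize}
\item blocks of length $m\ge h$;
\item Burkholder for the within-block reverse martingale, giving $\|\nu_{m,l}\|_J\lesssim\sqrt{m}\,\|\pmb{\beta}_{m-h,j}\|_2$;
\item Nagaev's inequality on odd and even blocks, and Doob for the partial block;
\item weights $\lambda_m\propto(m^{J/2-1}\|\pmb{\beta}_{m-h,j}\|_2^J)^{1/(J+1)}$.
\end{itemize}
Since the $\ell$-th MA coefficient first enters at block length $m=\ell+h$, this is precisely where the factor $(\ell+h)^{J/2-1}$ in $\mu_h$ comes from. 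The threshold $\delta>c_4\sqrt{T}\mu_h^{1+1/J}$ is what lets the sum of per-$m$ exponential terms collapse to a single $\exp(-c_3\delta^2/(T\mu_h^{2+2/J}))$.

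To keep your citation, you must use the version of the inequality whose constants are the projection-increment norms $\|U(s,h,m)-U(s,h,m-1)\|_J$ rather than $\delta_m$. With that substitution, your centering step, your case-2 computation and your final bound on $\mu_h$ go through.
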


Theorem \ref{TM.Main1} establishes a concentration inequality for the $h$-step ahead forecasting models of \eqref{def.xth}, contributing to the literature on long-horizon inference in a HD setting. Notably, the bound contains the term $\mu_h$, which diverges as the horizon $h$ increases. The term $\mu_h$ is simply due to lack of structure in $\mathbf{u}_{t,h}$ and \eqref{def.xt}. Since $J\ge 4$ as stated in Assumption \ref{AS1}, we obtain that $h^{\frac{J-2}{2(J+1)}}\in [h^{1/5}, h^{1/2}]$. Therefore, in the worst case scenario, $\mu_h$ diverges at the rate $h^{1/2}$. From a modeling perspective, this result explains the inherent loss of accuracy in forecasting models as $h$ becomes large. 

Furthermore, the derivation of this theorem relies only on the underlying DGP, circumventing the need for restrictive assumptions like mixing conditions or near-epoch dependence. In this regard, it substantially improves upon the findings in Section 4 of \cite{cha2024}.

\subsection{Estimation}\label{Sec2.2}

We are now ready to consider the estimation in this subsection. Firstly, we assume $p$ is known, and discuss how to select the optimal lag later.

According to \eqref{def.xth}, we define the following objective function:

\begin{eqnarray}\label{def.obj_fun}
    Q_0(\tilde{\mathbf{a}}) =\frac{1}{T-h}\sum_{t=1}^{T-h}\|\mathbf{x}_{t+h}-(\mathbf{X}_t^\top \otimes \mathbf{I}_N) \VEC(\tilde{\mathbf{a}})\|_2^2,
\end{eqnarray}
where $\tilde{\mathbf{a}}$ is a generic $N\times Np$ matrix. However, due to the large amount of elements included in $\tilde{\mathbf{A}}=\{\tilde{A}_{ij} \}_{N\times Np}$, one cannot fully recover everything from \eqref{def.obj_fun} unless the effective sample size $T-h$ is extremely large practically. Consequently, it might not always be feasible for real data analysis. In order to find a balance between feasibility and generality, we impose sparsity on $\tilde{\mathbf{A}}$ of \eqref{def.xth}, and define some new notation to facilitate the investigation. Let

\begin{eqnarray*}
    &&\mathbf{S}_{\mathscr{A}} \coloneqq \{  I(|\tilde{A}_{ij}| >0)\}_{N\times Np}\quad \text{and} \quad \mathbf{S}_{\bar{\mathscr{A}}} \coloneqq \{ I(\tilde{A}_{ij} =0)\}_{N\times Np}.
\end{eqnarray*}
Apparently, $d_{\mathscr{A}} \coloneqq \|\mathbf{S}_{\mathscr{A}}\|^2$ gives the total number of nonzero elements in $\tilde{\mathbf{A}}$. It is noteworthy that we allow for the case that $\tilde{\mathbf{A}}=\mathbf{0}$, i.e., $d_{\mathscr{A}}=0$. For this extreme case, $\mathbf{x}_t$ reduces to a HD white noise. 

Using \eqref{def.obj_fun}, we then introduce the following estimation procedure to recover $\tilde{\mathbf{A}}$ and its sparsity.

\begin{enumerate}[leftmargin=48pt, parsep=2pt, topsep=2pt]
    \item[\it Step 1]  Initial estimation via LASSO:
\begin{eqnarray}\label{def.argmin0}
    \hat{\tilde{\mathbf{a}}} =\argmin_{\tilde{\mathbf{a}}} (Q_0(\tilde{\mathbf{a}})+\gamma\|\VEC(\tilde{\mathbf{a}})\|_1),
\end{eqnarray}
where $\gamma$ is a tuning parameter. 

\item[\it Step 2] Refined estimation via adaptive LASSO: 
\begin{eqnarray}\label{def.argmin}
     \hat{\tilde{\mathbf{a}}}_{\phi} =\argmin_{\tilde{\mathbf{a}}}( Q_0(\tilde{\mathbf{a}})+ \gamma\|\VEC(\tilde{\mathbf{a}})\circ \pmb{\phi}\|_1),
\end{eqnarray}
where $\pmb{\phi}=(\phi_1,\cdots, \phi_{N^2p})^\top$ is a vector of predetermined weights. 
\end{enumerate}

The above estimation procedure connects the adaptive LASSO developed in \cite{Zou2006} with the LP approach in a HD framework. For the adaptive LASSO, we refer interested readers to \cite{Zou2006} for extensive theoretical investigation and to \cite{mcilhagga2016penalized} for detailed numerical implementation. In the second step, we may let $\phi_\ell=|\hat{\tilde{a}}_{\ell}|^{-\zeta}$ with $\hat{\tilde{a}}_{\ell}$ being the $\ell^{th}$ element of $\hat{\tilde{\mathbf{a}}}$, in which $\zeta>0$ is an arbitrary positive constant, as long as certain conditions to be specified below are fulfilled. 

To proceed, we introduce some additional notation and assumptions. Let $\phi_{\mathscr{A},\ell}$ and $\phi_{\bar{\mathscr{A}},\ell}$ denote the $\ell^{th}$ elements of $\pmb{\phi}_{\mathscr{A}}$ and $\pmb{\phi}_{\bar{\mathscr{A}}}$ respectively, where $\pmb{\phi}_{\mathscr{A}}$ and $\pmb{\phi}_{\bar{\mathscr{A}}}$ contain the elements of $\pmb{\phi}$ that correspond to the non--zero elements of $\VEC(\mathbf{S}_{\mathscr{A}})$ and $\VEC(\mathbf{S}_{\bar{\mathscr{A}}})$. Let $\pmb{\Sigma}_{\mathbf{B}} \coloneqq \{\pmb{\Sigma}_{\mathbf{B}, ji}\}$ with $0\le i,j\le p-1$ and 

    \begin{eqnarray*}
        \pmb{\Sigma}_{\mathbf{B}, ji} =\pmb{\Sigma}_{\mathbf{B}, ij}^\top =\left\{\begin{array}{ll}
            \sum_{\ell = 0}^{\infty}  \mathbf{B}_\ell \mathbf{B}_\ell^\top & \text{if } i=j\\
            \sum_{\ell = 0}^{\infty}  \mathbf{B}_{\ell+j-i} \mathbf{B}_\ell^\top & \text{if } j>i
        \end{array} \right. .
    \end{eqnarray*}

\begin{assumption} \label{AS3}
\item 
\begin{enumerate}[leftmargin=24pt, parsep=2pt, topsep=2pt]
    \item Suppose that $\VEC(\mathbf{a})^\top (\pmb{\Sigma}_{\mathbf{B}} \otimes \mathbf{I}_N)\VEC(\mathbf{a})\ge \alpha \|\VEC(\mathbf{a})\|_2^2$ for $\forall\mathbf{a}\in \mathbb{A}(\mathbf{S}_{\mathscr{A}})$ and $\mathbf{a}\ne \mathbf{0}$ uniformly in $N$, where $\alpha>0$ is a fixed positive constant, and 
    \begin{eqnarray*}
    \mathbb{A}(\mathbf{S}_{\mathscr{A}} ) &\coloneqq &\{\mathbf{a}=\{a_{ij}\}_{N\times Np}\mid \|\VEC(\mathbf{S}_{\bar{\mathscr{A}}}\circ \mathbf{a} ) \|_1\le 3 \|\VEC(\mathbf{S}_{\mathscr{A}}\circ \mathbf{a})\|_1\}.
    \end{eqnarray*}

    \item Suppose that $d_{\mathscr{A}} d_{\pmb{\beta}}^2\frac{\sqrt{\log N}}{\sqrt{T}}\to 0$ and $\frac{N^2 }{T^{J-1}\log N}\to 0$, where $J$ is given in Assumption \ref{AS1}.
\end{enumerate}
\end{assumption}

The first condition of Assumption \ref{AS3}.1 is the so-called restricted eigenvalue condition, which has been fully discussed in the literature. See \citet[pp. 1709-1710]{BRT2009} and \citet[p. 2245]{raskutti10a} for example. In the definition of $\mathbb{A}(\mathbf{S}_{\mathscr{A}} )$, the targeted set $\|\VEC(\mathbf{S}_{\bar{\mathscr{A}}}\circ \mathbf{a} ) \|_1\le 3 \|\VEC(\mathbf{S}_{\mathscr{A}}\circ \mathbf{a})\|_1$ is obvious in view of the development given in \eqref{eq.lasso2_1} of the appendix. The second condition further regulates the sparsity and the sample size involved.

With these in hand, we are able to achieve the following results for the 2-step procedure.

\begin{theorem}\label{TM.Main2}
Let Assumptions \ref{AS1}-\ref{AS3} hold and $\gamma \asymp \frac{\mu_{h}^{1+1/J}\sqrt{\log (N)}}{\sqrt{T}}$, where $\mu_h$ is the same as in Theorem 1. 

\begin{enumerate}[leftmargin=24pt, parsep=2pt, topsep=2pt]
    \item For \eqref{def.argmin0}, the following results hold:
\begin{enumerate}[leftmargin=24pt, parsep=2pt, topsep=2pt]
    \item $\|\VEC(\tilde{\mathbf{A}}-\hat{\tilde{\mathbf{a}}}) \|_2=O_P \big( \frac{\mu_{h}^{1+1/J}\sqrt{d_{\mathscr{A}}\log N}}{\sqrt{T}} \big);$
    \item $\|\VEC(\tilde{\mathbf{A}}-\hat{\tilde{\mathbf{a}}}) \|_1 =O_P \big( \frac{\mu_{h}^{1+1/J}d_{\mathscr{A}}\sqrt{\log N}}{\sqrt{T}} \big)$.
\end{enumerate}
    \item For \eqref{def.argmin}, let the elements of $\pmb{\phi}$ satisfy 

\begin{enumerate}[leftmargin=24pt, parsep=2pt, topsep=2pt]
    \item[i.] $\min_{\ell \in [d_\mathscr{A}]}|\beta_{0,\mathscr{A},\ell}| \gtrsim   \frac{\mu_{h}^{1+1/J}\sqrt{d_{\mathscr{A}}\log N}}{\sqrt{T}} \cdot \max_{\ell\in[d_{\mathscr{A}}]}\phi_{\mathscr{A},\ell}$;
    
    \item[ii.] $ \min_{\ell\in[d_{\bar{\mathscr{A}}}]}\phi_{\bar{\mathscr{A}},\ell} \gtrsim d_{\mathscr{A}}   \max_{\ell\in[d_{\mathscr{A}}]}\phi_{\mathscr{A},\ell}$.
\end{enumerate}
Then $\sgn(\hat{\tilde{\mathbf{a}}}_{\phi}) =\sgn (\tilde{\mathbf{A}})$ with probability approaching one.
\end{enumerate}
\end{theorem}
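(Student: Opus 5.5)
The plan follows the standard LASSO oracle-inequality route, with Theorem \ref{TM.Main1} supplying the deviation bound. Write $\widehat{\pmb{\Sigma}}\coloneqq \frac{1}{T-h}\sum_{t=1}^{T-h}\mathbf{X}_t\mathbf{X}_t^\top$ and $\pmb{\Delta}\coloneqq \hat{\tilde{\mathbf{a}}}-\tilde{\mathbf{A}}$. Comparing the objective in \eqref{def.argmin0} at $\hat{\tilde{\mathbf{a}}}$ and at $\tilde{\mathbf{A}}$ gives the basic inequality
\begin{eqnarray*}
\VEC(\pmb{\Delta})^\top(\widehat{\pmb{\Sigma}}\otimes\mathbf{I}_N)\VEC(\pmb{\Delta}) \le \frac{2}{T-h}\Big|\sum_{t}\VEC(\pmb{\Delta})^\top(\mathbf{X}_t\otimes\mathbf{I}_N)\mathbf{u}_{t,h}\Big| + \gamma\big(\|\VEC(\tilde{\mathbf{A}})\|_1-\|\VEC(\hat{\tilde{\mathbf{a}}})\|_1\big).
\end{eqnarray*}
The cross term is at most $2\|\VEC(\pmb{\Delta})\|_1\max_{i,j,\ell}\big|\frac{1}{T-h}\sum_t u_{it,h}x_{j,t-\ell}\big|$. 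I would apply Theorem \ref{TM.Main1} with $\delta = C\sqrt{T\log N}\,\mu_h^{1+1/J}$, which satisfies the requirement $\delta>c_4\sqrt{T}\mu_h^{1+1/J}$, and take a union bound over the $N^2p$ index triples. The exponential term then contributes $N^2\exp(-c_3C^2\log N)\to 0$ once $C$ is large. The polynomial term contributes $N^2 T\mu_h^{J+1}/(T\log N)^{J/2}\mu_h^{J+1}\lesssim N^2/(T^{J/2-1}\log N)$. \textbf{This polynomial term is the main obstacle.} Assumption \ref{AS3}.2 only controls $N^2/(T^{J-1}\log N)$, so I would have to check the constants and powers in Theorem \ref{TM.Main1} carefully, possibly using the $\max_t$ version of the bound, to make the two match. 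With the max bounded by $\gamma/2$ w.p.a.1, the usual decomposition over $\mathscr{A}$ and $\bar{\mathscr{A}}$ places $\VEC(\pmb{\Delta})$ in the cone $\mathbb{A}(\mathbf{S}_{\mathscr{A}})$.

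The second step transfers the restricted eigenvalue condition from population to sample. First note that $E[\mathbf{X}_t\mathbf{X}_t^\top]=\pmb{\Sigma}_{\mathbf{B}}$ by \eqref{def.xt} and Assumption \ref{AS1}. I would then show $\|\widehat{\pmb{\Sigma}}-\pmb{\Sigma}_{\mathbf{B}}\|_{\max}=O_P(d_{\pmb{\beta}}^2\sqrt{\log N/T})$, using a Nagaev-type moment inequality together with a martingale approximation for the linear process; this is where Assumption \ref{AS2}.2(ii) enters. On the cone, $\|\VEC(\pmb{\Delta})\|_1\le 4\sqrt{d_{\mathscr{A}}}\|\VEC(\pmb{\Delta})\|_2$, so
\begin{eqnarray*}
\big|\VEC(\pmb{\Delta})^\top((\widehat{\pmb{\Sigma}}-\pmb{\Sigma}_{\mathbf{B}})\otimes\mathbf{I}_N)\VEC(\pmb{\Delta})\big| \le 16\, d_{\mathscr{A}}\|\widehat{\pmb{\Sigma}}-\pmb{\Sigma}_{\mathbf{B}}\|_{\max}\|\VEC(\pmb{\Delta})\|_2^2 .
\end{eqnarray*}
By the first part of Assumption \ref{AS3}.2 this is $o_P(1)\|\VEC(\pmb{\Delta})\|_2^2$, so the sample matrix satisfies the restricted eigenvalue bound with constant $\alpha/2$. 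Combining with the basic inequality yields
\begin{eqnarray*}
\frac{\alpha}{2}\|\VEC(\pmb{\Delta})\|_2^2\lesssim \gamma\sqrt{d_{\mathscr{A}}}\|\VEC(\pmb{\Delta})\|_2,
\end{eqnarray*}
which gives 1(a). Part 1(b) follows from 1(a) and the cone inequality.

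For part 2, I would use a primal–dual witness argument. Let $\check{\mathbf{a}}$ be the adaptive-LASSO solution restricted to the support $\mathscr{A}$. The same argument as in part 1, with weights entering the penalty, gives
\begin{eqnarray*}
\|\check{\mathbf{a}}-\tilde{\mathbf{A}}\|_2\lesssim \gamma\sqrt{d_{\mathscr{A}}}\max_{\ell}\phi_{\mathscr{A},\ell}.
\end{eqnarray*}
By condition i this is smaller than the minimal signal, so the signs on $\mathscr{A}$ are preserved. It then remains to verify strict dual feasibility on $\bar{\mathscr{A}}$. The gradient there equals the noise term, which is below $\gamma/2$, plus $\widehat{\pmb{\Sigma}}$-weighted errors of size $\lesssim \gamma d_{\mathscr{A}}\max_{\ell}\phi_{\mathscr{A},\ell}$, using the max-norm bound on $\widehat{\pmb{\Sigma}}$ and the $\ell_1$ error. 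This must be below $\gamma\min_{\ell}\phi_{\bar{\mathscr{A}},\ell}$, which is exactly condition ii. KKT uniqueness then shows that $\check{\mathbf{a}}$ solves \eqref{def.argmin}, hence $\sgn(\hat{\tilde{\mathbf{a}}}_{\phi})=\sgn(\tilde{\mathbf{A}})$ w.p.a.1.
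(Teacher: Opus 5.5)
Your route is the paper's. You use the basic inequality with Theorem~\ref{TM.Main1} bounding the score $\max_{i,j,\ell}|\frac{1}{T-h}\sum_t u_{it,h}x_{j,t-\ell}|$, then the cone $\mathbb{A}(\mathbf{S}_{\mathscr{A}})$. You transfer Assumption~\ref{AS3}.1 to the sample Gram matrix via a max-norm bound on $\widehat{\pmb{\Sigma}}-\pmb{\Sigma}_{\mathbf{B}}$ (the paper's Lemma~\ref{LMB6}) and $d_{\mathscr{A}}d_{\pmb{\beta}}^2\sqrt{\log N/T}\to 0$. Finally, the KKT conditions give sign recovery on $\mathscr{A}$ via i and dual feasibility on $\bar{\mathscr{A}}$ via ii. Your use of strict dual feasibility to get uniqueness is something the paper leaves implicit.

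There are two bookkeeping slips. First, because your cross term carries the factor $2$, the score must be below $\gamma/4$, as in \eqref{basis.ineq1}. At $\gamma/2$ the $\bar{\mathscr{A}}$ terms cancel exactly and no cone follows. Second, the restricted error is of order $\gamma\sqrt{d_{\mathscr{A}}}\,(1+\max_\ell\phi_{\mathscr{A},\ell})$, not $\gamma\sqrt{d_{\mathscr{A}}}\max_\ell\phi_{\mathscr{A},\ell}$, because the score enters whatever the weights. This is the factor $1+2\max_\ell\phi_{\mathscr{A},\ell}$ in \eqref{KKT7}.

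The step you flagged, however, is a genuine gap. Neither rechecking constants nor the $\max_t$ form will close it, since Theorem~\ref{TM.Main1} is already that form and carries the same polynomial term. With $\delta\asymp\mu_h^{1+1/J}\sqrt{T\log N}$ the factors $\mu_h^{J+1}$ cancel, and the union bound over the $N^2p$ entries leaves a term of order $N^2T^{1-J/2}(\log N)^{-J/2}$. The condition $N^2/(T^{J-1}\log N)\to 0$ does not control this; take $N^2=T^{J/2}$.

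This is not slack in the inequality: with only $J$ moments the $\delta^{-J}$ tail is attained. Take $\mathbf{x}_t=\pmb{\varepsilon}_t$ and $h=p=1$, so that $\tilde{\mathbf{A}}=\mathbf{0}$, $d_{\mathscr{A}}=0$ and $\pmb{\Sigma}_{\mathbf{B}}=\mathbf{I}_N$. Let $\Pr(|\varepsilon_{11}|>x)\asymp x^{-J}(\log x)^{-2}$ and $N^2=T^{J/2}$. The largest innovation $\varepsilon_{jt}$, of size about $(NT)^{1/J}$, multiplied by $\max_i|\varepsilon_{i,t+1}|\approx N^{1/J}$, makes one score entry of order $N^{2/J}T^{1/J-1}=T^{1/J-1/2}$ up to logarithms. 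This is far above $\gamma\asymp\sqrt{\log N/T}$. Every assumption of the theorem holds, yet with probability approaching one the LASSO does not return $\mathbf{0}$, whereas 1(a) with $d_{\mathscr{A}}=0$ says it must.

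The paper does not close this either. It asserts \eqref{basis.ineq1} from Theorem~\ref{TM.Main1}, Lemma~\ref{LMB6} and Assumption~\ref{AS3}.2 without computation. The exponent $T^{J-1}$ in the proof of Lemma~\ref{LMB6} comes from writing $\delta^{2J}$ where \eqref{eq.prob1} gives $\delta^{J}$.

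To finish, you must replace the second condition of Assumption~\ref{AS3}.2 by something like $N^2T^{1-J/2}(\log N)^{-J/2}\to 0$, or assume lighter tails. You must also redo the same count for your max-norm bound on $\widehat{\pmb{\Sigma}}-\pmb{\Sigma}_{\mathbf{B}}$, where the squares $\varepsilon_{it}^2$ have only $J/2$ moments.
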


The first result of Theorem \ref{TM.Main2} explains how the $h$-step ahead forecasting framework affects the estimation results when $h$ diverges. The second result proves the sign consistency, i.e., the identification of 0's of $\tilde{\mathbf{A}}$, wherein the additional restrictions can be easily fulfilled in view of the discussion under \eqref{def.argmin}.

\medskip

\noindent \textbf{Selection of the lags} \ With Theorem \ref{TM.Main2} in hand, we can then choose the lag terms involved. The following selection criterion is based on the first step of the estimation procedure (i.e., \eqref{def.argmin0}).

Define the following information criterion:

\begin{eqnarray*}
    \text{IC}(\mathfrak{p})=\frac{1}{T-h}\sum_{t=1}^{T-h} \|\mathbf{x}_{t+h}-(\mathbf{X}_t^\top \otimes \mathbf{I}_N) \VEC(\hat{\tilde{\mathbf{a}}}_{\mathfrak{p}})\|_2^2 + \mathfrak{p}\cdot \xi,
\end{eqnarray*}
where $\hat{\tilde{\mathbf{a}}}_{\mathfrak{p}}$ is obtained via \eqref{def.argmin0} using $\mathfrak{p}$ lags, and $\xi$ is a tuning parameter satisfying certain condition to be specified. The optimal lag is then estimated by 

\begin{eqnarray}\label{def.phat}
    \hat{p} =\argmin_{\mathfrak{p}\le \mathfrak{p}^*} \text{IC}(\mathfrak{p}),
\end{eqnarray}
where $\mathfrak{p}^*$ is a user-specified large and fixed integer.

\begin{theorem}\label{TM.Main3}
    Let Assumptions \ref{AS1}-\ref{AS3} hold, $\frac{\mu_h^{2+2/J} d_{\mathscr{A}} \log N}{T\xi}\to 0$, and $\xi\to 0$. Then $\Pr(\widehat{p} = p)\to 1$.
\end{theorem}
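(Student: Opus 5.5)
We argue that, with probability approaching one, $\text{IC}(\mathfrak{p})>\text{IC}(p)$ for every $\mathfrak{p}\neq p$ with $\mathfrak{p}\le \mathfrak{p}^*$. Since $\mathfrak{p}^*$ is fixed, a union bound over finitely many $\mathfrak{p}$ gives $\Pr(\hat p=p)\to1$. Write $\hat{\sigma}^2(\mathfrak{p})$ for the in-sample residual mean square in $\text{IC}(\mathfrak{p})$, so that $\text{IC}(\mathfrak{p})=\hat\sigma^2(\mathfrak{p})+\mathfrak{p}\xi$. We split into overfitting ($\mathfrak{p}>p$) and underfitting ($\mathfrak{p}<p$).

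\emph{Overfitting.} When $\mathfrak{p}>p$, model \eqref{def.xth} is still correctly specified with $\mathfrak{p}$ lags: we set $\mathbf{A}_{p+1}=\cdots=\mathbf{A}_{\mathfrak{p}}=\mathbf{0}$, and the sparsity level $d_{\mathscr{A}}$ is unchanged. We assume the restricted eigenvalue condition of Assumption \ref{AS3}.1 for the enlarged $\pmb{\Sigma}_{\mathbf{B}}$ of dimension $N\mathfrak{p}$; this is implied for block Toeplitz matrices built from the same $\mathbf{B}_\ell$, and is otherwise imposed up to $\mathfrak{p}^*$.

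Theorem \ref{TM.Main2}.1 then applies to both $\hat{\tilde{\mathbf{a}}}_{\mathfrak{p}}$ and $\hat{\tilde{\mathbf{a}}}_p$. Expanding the residual sums around the truth gives
\[
\hat\sigma^2(\mathfrak{p})=\frac{1}{T-h}\sum_t\|\mathbf{u}_{t,h}\|_2^2-\frac{2}{T-h}\sum_t \mathbf{u}_{t,h}^\top(\mathbf{X}_t^\top\otimes\mathbf{I}_N)\boldsymbol{\Delta}_{\mathfrak{p}}+\boldsymbol{\Delta}_{\mathfrak{p}}^\top\widehat{\mathbf{\Gamma}}\boldsymbol{\Delta}_{\mathfrak{p}},
\]
with $\boldsymbol{\Delta}_{\mathfrak{p}}=\VEC(\hat{\tilde{\mathbf{a}}}_{\mathfrak{p}}-\tilde{\mathbf{A}})$.

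The cross term is bounded by $\max_{i,j,\ell}|(T-h)^{-1}\sum_s u_{is,h}x_{j,s-\ell}|\cdot\|\boldsymbol{\Delta}_{\mathfrak{p}}\|_1$. Theorem \ref{TM.Main1}, together with a union bound over $N^2\mathfrak{p}^*$ indices and Assumption \ref{AS3}.2 to control the polynomial tail, gives
\[
\max_{i,j,\ell}\Bigl|(T-h)^{-1}\sum_s u_{is,h}x_{j,s-\ell}\Bigr|=O_P\bigl(\mu_h^{1+1/J}\sqrt{\log N/T}\bigr).
\]
Combined with Theorem \ref{TM.Main2}.1(b), the cross term is $O_P(\mu_h^{2+2/J}d_{\mathscr{A}}\log N/T)$.

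The quadratic term is nonnegative. Its upper bound comes from $\|\widehat{\mathbf{\Gamma}}\|_{\max}$ being bounded together with the $\ell_1$ and $\ell_2$ rates: the max-norm deviation of $\widehat{\mathbf{\Gamma}}$ from $\pmb{\Sigma}_{\mathbf{B}}\otimes\mathbf{I}_N$ is of order $d_{\pmb{\beta}}^2\sqrt{\log N/T}$, and Assumption \ref{AS3}.2 makes $d_{\mathscr{A}}$ times this vanish. So the quadratic term is also $O_P(\mu_h^{2+2/J}d_{\mathscr{A}}\log N/T)$.

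Hence $|\hat\sigma^2(\mathfrak{p})-\hat\sigma^2(p)|=O_P(\mu_h^{2+2/J}d_{\mathscr{A}}\log N/T)$, while
\[
\text{IC}(\mathfrak{p})-\text{IC}(p)=(\mathfrak{p}-p)\xi+\bigl(\hat\sigma^2(\mathfrak{p})-\hat\sigma^2(p)\bigr)\ge \xi-o_P(\xi)
\]
by the rate condition $\mu_h^{2+2/J}d_{\mathscr{A}}\log N/(T\xi)\to0$. So overfitting is ruled out.

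\emph{Underfitting.} When $\mathfrak{p}<p$, the model omits $\mathbf{A}_p\mathbf{x}_{t-p+1}$, which is nonzero by the definition of $p$. For any candidate $\mathbf{a}$ using only $\mathfrak{p}$ lags, embed it as an $N\times Np$ matrix with zero trailing blocks. Then
\[
\hat\sigma^2(\mathfrak{p})\ge \min_{\mathbf{a}}Q_0(\mathbf{a})\quad\text{over this restricted set.}
\]
Expanding as above, the restricted population criterion exceeds $E\|\mathbf{u}_{t,h}\|_2^2/\!\cdot$ by at least
\[
\min_{\mathbf{a}}\VEC(\tilde{\mathbf{A}}-\mathbf{a})^\top(\pmb{\Sigma}_{\mathbf{B}}\otimes\mathbf{I}_N)\VEC(\tilde{\mathbf{A}}-\mathbf{a})\ge \alpha\|\VEC(\mathbf{A}_p)\|_2^2\ge c>0.
\]
Here $c$ uses the restricted eigenvalue condition and a minimum-signal requirement on the nonzero entries of $\mathbf{A}_p$ bounded away from zero (as implicit in Theorem \ref{TM.Main2}.2.i).

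The main obstacle is making this population gap survive the sample: we need the sample quadratic form to stay bounded below over the relevant difference vectors uniformly. To handle it, we would restrict attention to the LASSO solution $\hat{\tilde{\mathbf{a}}}_{\mathfrak{p}}$. By the basic inequality this solution lies in a cone-type set up to an $\ell_1$ slack involving $\gamma\|\VEC(\mathbf{A}_p)\|_1$. We would then use the max-norm concentration of $\widehat{\mathbf{\Gamma}}$ to transfer the restricted eigenvalue condition to the sample version on that set, paying a term of order $d_{\mathscr{A}}d_{\pmb{\beta}}^2\sqrt{\log N/T}\to0$. The cross terms are $o_P(1)$ by the same Theorem \ref{TM.Main1} bound. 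This gives $\hat\sigma^2(\mathfrak{p})-\hat\sigma^2(p)\ge c/2$ with probability approaching one. Since $(p-\mathfrak{p})\xi\to0$, we get $\text{IC}(\mathfrak{p})>\text{IC}(p)$. Combining the two cases completes the argument.
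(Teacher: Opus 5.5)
Your architecture matches the paper's. You split into $\mathfrak{p}>p$ and $\mathfrak{p}<p$. In the overfit case you show the change in fit is $O_P(\gamma\|\boldsymbol\Delta_{\mathfrak p}\|_1)=O_P(\mu_h^{2+2/J}d_{\mathscr A}\log N/T)=o_P(\xi)$. In the underfit case you extract an order-one gap from the omitted $\mathbf{A}_p$.

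One step in your overfit case does not give the stated rate. Bounding $\boldsymbol\Delta_{\mathfrak p}^\top\widehat{\mathbf\Gamma}\boldsymbol\Delta_{\mathfrak p}$ through $\|\widehat{\mathbf\Gamma}\|_{\max}\|\boldsymbol\Delta_{\mathfrak p}\|_1^2$ gives $O_P(d_{\mathscr A}^2\gamma^2)$, which is a factor $d_{\mathscr A}$ too large. Using $\lambda_{\max}(\pmb\Sigma_{\mathbf B})\|\boldsymbol\Delta_{\mathfrak p}\|_2^2$ instead needs a bounded top eigenvalue, which is not assumed. Do what the paper does: the basic inequality \eqref{eq.lasso2} gives $\frac{1}{T-h}\sum_t\|\mathbf Z_t^\top\boldsymbol\Delta_{\mathfrak p}\|_2^2\le 2\gamma\|\boldsymbol\Delta_{\mathfrak p}\|_1=O_P(d_{\mathscr A}\gamma^2)$ by Theorem~\ref{TM.Main2}.1(b). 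Also, the enlarged restricted eigenvalue condition is not implied by the block-Toeplitz structure. RE for the larger matrix implies it for its leading block, not conversely, so it must simply be imposed, as the paper tacitly does.

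The underfit case has a genuine gap, and your proposed repair does not close it.

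\emph{The population bound.} Assumption~\ref{AS3}.1 only covers directions in the cone $\mathbb A(\mathbf S_{\mathscr A})$. The minimiser of $\VEC(\tilde{\mathbf A}-\mathbf a)^\top(\pmb\Sigma_{\mathbf B}\otimes\mathbf I_N)\VEC(\tilde{\mathbf A}-\mathbf a)$ over $\mathbf a$ with zero last block is the population projection, whose difference from $\tilde{\mathbf A}$ is generically outside that cone. The minimum is a Schur-complement form, bounded below by $\lambda_{\min}(\pmb\Sigma_{\mathbf B})\|\VEC(\mathbf A_p)\|_2^2$, not by $\alpha\|\VEC(\mathbf A_p)\|_2^2$. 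Separately, $\hat\sigma^2(\mathfrak p)\ge\min_{\mathbf a}Q_0(\mathbf a)$ is vacuous once $N\mathfrak p\ge T-h$, because restricted least squares then interpolates.

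\emph{The cone repair.} Compare $\hat{\pmb\beta}_{\mathfrak p}$ with the truncated truth, write $\hat{\boldsymbol\Delta}=\pmb\beta_0-\hat{\pmb\beta}_{\mathfrak p}$, and let $\hat{\boldsymbol\delta}$ be its retained-block part. The basic inequality is then
\[
\frac{1}{T-h}\sum_{t=1}^{T-h}\|\mathbf{Z}_t^\top\hat{\boldsymbol\Delta}\|_2^2+\frac{\gamma}{2}\|\hat{\boldsymbol\delta}_{\bar{\mathscr{A}}}\|_1\le\frac{1}{T-h}\sum_{t=1}^{T-h}\|\mathbf{A}_p\mathbf{x}_{t-p+1}\|_2^2+\frac{3\gamma}{2}\|\hat{\boldsymbol\delta}_{\mathscr{A}}\|_1+\gamma\|\VEC(\mathbf{A}_p)\|_1 .
\]
The first term on the right is of order one; it is the very signal you want as the gap. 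So the cone slack is $O_P(\gamma^{-1})$, not $O(\|\VEC(\mathbf A_p)\|_1)$. Two things then fail:
\begin{itemize}
\item The max-norm transfer costs $d_{\pmb\beta}^2\sqrt{\log N/T}\,\gamma^{-2}$, which for fixed $h$ is at least of order $\sqrt{T/\log N}$.
\item Your bound on the cross term, $\frac{\gamma}{2}\|\hat{\boldsymbol\Delta}\|_1$, is only $O_P(1)$, the same order as the gap.
\end{itemize}
The root cause is that the effective error $\mathbf u_{t,h}+\mathbf A_p\mathbf x_{t-p+1}$ is correlated with the retained lags, so Theorem~\ref{TM.Main1} no longer controls the score.

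\emph{A route that works.} Compare instead with the population projection coefficient $\pmb\beta^*_{\mathfrak p}$, with projection error $\mathbf v_t$. Suppose $\|\frac{1}{T-h}\sum_t\mathbf Z_{\mathfrak p,t}\mathbf v_t\|_\infty\le\gamma/4$ with probability approaching one. Then the basic inequality gives $\|\hat{\pmb\beta}_{\mathfrak p}\|_1\le 3\|\pmb\beta^*_{\mathfrak p}\|_1$ and $\hat\sigma^2(\mathfrak p)\ge\frac{1}{T-h}\sum_t\|\mathbf v_t\|_2^2-2\gamma\|\pmb\beta^*_{\mathfrak p}\|_1$. Moreover $\frac{1}{T-h}\sum_t(\|\mathbf v_t\|_2^2-\|\mathbf u_{t,h}\|_2^2)$ concentrates around the Schur-complement gap. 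This route needs three things, none of which follows from Assumptions~\ref{AS1}--\ref{AS3}:
\begin{itemize}
\item a Lemma~\ref{LMB6}-type score bound for $\mathbf v_t$, rather than Theorem~\ref{TM.Main1};
\item $\gamma\|\pmb\beta^*_{\mathfrak p}\|_1\to 0$;
\item $\lambda_{\min}(\pmb\Sigma_{\mathbf B})$ bounded away from zero.
\end{itemize}

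The paper's Case~2 has the same hole. Its ``$\asymp$'' invokes the proof of Theorem~\ref{TM.Main2}.1 and applies Assumption~\ref{AS3}.1 to $\pmb\beta_0-\hat{\pmb\beta}_{\mathfrak p}$ without checking the cone. You located the right weak point, but your sketch does not fill it.
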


In order to select the optimal number of lags practically, we make two comments:

\begin{enumerate}[leftmargin=24pt, parsep=2pt, topsep=2pt]
    \item An intuitive choice of $\xi$ is to let $\xi\asymp \gamma$, so the condition $\frac{\mu_h^{2+2/J} d_{\mathscr{A}} \log N}{T\xi}\to 0$ reduces to $\frac{\mu_h^{1+1/J} d_{\mathscr{A}} \sqrt{\log N}}{\sqrt{T} }\to 0$, which is required by Theorem \ref{TM.Main2}.1 already. Therefore, such a choice does not create any additional restriction.
    
    \item In fact, one can further simplify the selection process. It is worth pointing out that \cite{Oscar_2005} defines the LP approach as a series of regressions such as those in \eqref{def.xth}, where the number of lags is independent of the horizon $h$. Theorem \ref{TM.Main3} establishes a result that holds for $\forall h$. Therefore, from a practical standpoint, we suggest that using a small horizon (e.g., $h=1$ or $2$) is preferable for selecting the optimal lag, as this significantly simplifies the required restrictions. For instance, the term $\mu_h$ no longer plays a role in such cases. In the simulation, we shall further examine this point.
\end{enumerate}

\subsection{Asymptotic Distribution}\label{Sec2.3} 

It is worth mentioning that we have not imposed too many restrictions on $\mathbf{u}_{t,h}$ so far. However, to achieve the HD covariance estimation and establish the asymptotic normality in what follows, we need to add more structures to the residuals (i.e., $\mathbf{u}_{t,h}$'s).

\medskip

Firstly, we consider the estimation of the covariance matrix. To be precise, our goal is to estimate $\pmb{\Omega}_h \coloneqq E[\tilde{\pmb{\Omega}}_h]$, where

\begin{eqnarray}\label{def.Cov}
    \tilde{\pmb{\Omega}}_h \coloneqq \frac{1}{T-h}\sum_{|t-k|< h}^{T-h} (\mathbf{X}_t  \otimes \mathbf{I}_N)\mathbf{u}_{t,h} \mathbf{u}_{k,h}^\top(\mathbf{X}_k^\top \otimes \mathbf{I}_N).
\end{eqnarray}

Note that if $\tilde{\mathbf{A}}$ is sparse, it will naturally pass the sparsity to $\pmb{\Sigma}_h$, and consequently pass the sparsity to $\pmb{\Omega}_h$. To see this point, consider an extreme case using Example \ref{Ex1}. 

\medskip

\noindent \textbf{Example 1 (Cont.)} If $\mathbf{a}_1=\mathbf{0}$, then $\pmb{\Sigma}_h=\mathbf{I}_N$ and $\pmb{\Omega}_h =\mathbf{I}_{N^2}$ for $\forall h\ge 1$. 

\medskip

\noindent The justification of the above statement should be obvious, so we omit the details. Therefore, without loss of generality, we suppose that $\pmb{\Omega}_h\in \mathbb{U}(c_a, c_N, \bar{c})$, where

\begin{eqnarray}\label{def.dense}
    \mathbb{U}(c_a, c_N, \bar{c}) =\big\{\pmb{\Omega}=\{\omega_{ij}\} \mid \omega_{ii}\le \bar{c}\text{ and } \sum_{j} |\omega_{ij}|^{c_a}\le c_N \text{ for all }i\big\},
\end{eqnarray}
$0<c_a<1$, and $\bar{c}$ is a fixed number. The restriction \eqref{def.dense} is the so-called densest sparse condition. We refer interested readers to \cite{BL2008} and extensive extensions since then for  discussion on \eqref{def.dense}.

We now move to estimate $\pmb{\Omega}_h$. Recall the the operator $\mathscr{G}_\eta(\cdot)$ defined in Section \ref{Sec1}, and let 

\begin{eqnarray}\label{def.Omegahat}
    \hat{\pmb{\Omega}}_h \coloneqq \frac{1}{T-h}\sum_{|t-k|< h}^{T-h} (\mathbf{X}_t  \otimes \mathbf{I}_N)\hat{\mathbf{u}}_{t,h} \hat{\mathbf{u}}_{k,h}^\top(\mathbf{X}_k^\top \otimes \mathbf{I}_N),
\end{eqnarray}
where $\hat{\mathbf{u}}_{t,h}\coloneqq \mathbf{x}_{t+h}-(\mathbf{X}_t^\top \otimes \mathbf{I}_N) \VEC(\hat{\tilde{\mathbf{a}}}_{\phi})$. Putting these together, the estimator of $\pmb{\Omega}_h$ is given by 

\begin{eqnarray}\label{def.Omegahat_Eta}
    \mathscr{G}_\eta(\hat{\pmb{\Omega}}_h),
\end{eqnarray}
where $\eta  \asymp \sqrt{h\log N /T}$. 

In order to investigate \eqref{def.Omegahat_Eta}, we impose some extra conditions in Assumption \ref{AS4}.

\begin{assumption}\label{AS4}
\item 
\begin{enumerate}[leftmargin=24pt, parsep=2pt, topsep=2pt]
\item There exists a constant $c_\delta>2$ such that for $l=1,\ldots, h-1$ with $h>1$
\begin{eqnarray*}
    \max_i\|\mathbf{e}_i^\top (\mathbf{u}_{t,h} - \mathbf{u}_{t,h,l}^*)\|_{J} \eqqcolon \max_{i}\delta_{J,i}(h,l) =O(l^{-c_\delta}),
\end{eqnarray*}
where $\mathbf{u}_{t,h,l}^*$ is the coupled version of $\mathbf{u}_{t,h}$ by replacing $\pmb{\varepsilon}_{t+h-l}$ with an independent copy $\pmb{\varepsilon}_{t+h-l}^*$.
\item There exists a constant $c_b>2$, such that $\max_j\sum_{\ell=m+1}^{\infty} \|\pmb{\beta}_{\ell,j}\|_2^{2}=O (m^{1-2c_b})$  as $m\to \infty$. 
\item Suppose that for $J>4$, 
\begin{enumerate}
    \item $\mu_{h}^{2+2/J}d_{\mathscr{A}}^2\sqrt{h\log N}/\sqrt{T}\rightarrow 0$,
    \item $N^4T^{1-J/4}(\log N)^{-J/4}(\log T) ( T^{(-J/4)/c_0} + h^{- J/4-1} )\rightarrow 0,$
\end{enumerate}
where $c_0 = \big(\frac{2c_b-1}{2}\big)\big(\frac{c_\delta\wedge c_b-1}{c_\delta\wedge c_b}\big)$.
\end{enumerate}
\end{assumption}

Note that $h$ is involved in a few places of Assumption \ref{AS4}. The first condition essentially regulates the decay rate of the time series dependence of $\{\mathbf{u}_{t,h}\}$, while the second condition imposes a further restriction on the matrices involved in \eqref{def.xt}. To ensure the third condition holds, $J$ has to be larger than 4.

\begin{theorem}\label{TM.Main4}
Under Assumptions \ref{AS1}-\ref{AS4}, 

\begin{enumerate}[leftmargin=24pt, parsep=2pt, topsep=2pt]
    \item $\max_{i,j}|\hat{\Omega}_{ij}-\Omega_{ij}|=O_P (\sqrt{h\log N /T} )$,
    \item $ \|\mathscr{G}_\eta(\hat{\pmb{\Omega}}_h)- \pmb{\Omega}_h \|_2=O_P((h\log N /T)^{(1-c_a)/2} c_{N})$,
\end{enumerate}
where $\hat{\Omega}_{ij}$ and $\Omega_{ij}$ are the $(i,j)^{th}$ elements of $\hat{\pmb{\Omega}}_h$ and $\pmb{\Omega}_h$, respectively.
\end{theorem}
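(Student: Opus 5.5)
The proof has two parts. Part 1 is a uniform entrywise bound on $\hat{\pmb{\Omega}}_h-\pmb{\Omega}_h$. Part 2 follows from Part 1 by the thresholding argument of \cite{BL2008}.

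For Part 1 I decompose
\begin{eqnarray*}
\hat{\pmb{\Omega}}_h-\pmb{\Omega}_h=(\hat{\pmb{\Omega}}_h-\tilde{\pmb{\Omega}}_h)+(\tilde{\pmb{\Omega}}_h-\pmb{\Omega}_h).
\end{eqnarray*}
For the first piece, write $\hat{\mathbf{u}}_{t,h}=\mathbf{u}_{t,h}-(\mathbf{X}_t^\top\otimes\mathbf{I}_N)\VEC(\hat{\tilde{\mathbf{a}}}_\phi-\tilde{\mathbf{A}})$ and expand the product $\hat{\mathbf{u}}_{t,h}\hat{\mathbf{u}}_{k,h}^\top$ into three cross terms.
\begin{itemize}
\item Each entry of $\hat{\pmb{\Omega}}_h-\tilde{\pmb{\Omega}}_h$ is a sum over the $O(h)$ lags $|t-k|<h$ of terms of the form $x\,x\,u\,(\mathbf{X}^\top\VEC(\Delta))$ or $x\,x\,(\mathbf{X}^\top\VEC(\Delta))^2$, where $\Delta=\hat{\tilde{\mathbf{a}}}_\phi-\tilde{\mathbf{A}}$.
\item By sign consistency (Theorem \ref{TM.Main2}.2), $\Delta$ is supported on $\mathbf{S}_{\mathscr{A}}$ w.p.a.1.
\item The $\ell_1$ rate of Theorem \ref{TM.Main2}.1 (which transfers to the adaptive step on the true support by the same basic-inequality argument) gives $\|\VEC(\Delta)\|_1=O_P(\mu_h^{1+1/J}d_{\mathscr{A}}\sqrt{\log N/T})$.
\item Bounding each term by $\max_{t,i}|x_{it}|$-type quantities times $\|\VEC(\Delta)\|_1$ gives an error of order $h\,\mu_h^{1+1/J}d_{\mathscr{A}}\sqrt{\log N/T}$ times polylogarithmic moment maxima. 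More carefully, I handle the averages $\frac{1}{T}\sum_t x x u$ via Theorem \ref{TM.Main1}-type concentration to remove the crude maxima.
\end{itemize}
Assumption \ref{AS4}.3(a), which contains $\mu_h^{2+2/J}d_{\mathscr{A}}^2\sqrt{h\log N}$, is designed exactly so that this piece is $o_P(\sqrt{h\log N/T})$.

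The second piece is the main obstacle: a maximal inequality over the $O(N^4)$ entries of the centred sum
\begin{eqnarray*}
\frac{1}{T-h}\sum_{|t-k|<h}\big(x_{a,t-\ell}u_{it,h}u_{jk,h}x_{b,k-\ell'}-E[\cdot]\big).
\end{eqnarray*}
I plan to treat this with functional dependence measures in the spirit of Wu, in four steps.
\begin{enumerate}
\item Truncate the MA($\infty$) representation of $\mathbf{x}_t$ at lag $m$. By Assumption \ref{AS4}.2 the remainder is $O(m^{1/2-c_b})$ in $L^2$.
\item Use Assumption \ref{AS4}.1 to control the physical dependence of $\mathbf{u}_{t,h}$ at rate $l^{-c_\delta}$. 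Together with step 1 this makes the product process have summable dependence with exponent governed by $c_\delta\wedge c_b$, which produces $c_0$.
\item Apply a Nagaev-type inequality for dependent sums: a Gaussian tail $\exp(-c\,\delta^2T/h)$ plus a polynomial tail from the $J/4$ moments of the fourfold product. This tail is why $J>4$ is needed.
\item Set $\delta\asymp\sqrt{h\log N/T}$ and take a union bound over $N^4$ entries. The polynomial tail contributes $N^4T^{1-J/4}(\log N)^{-J/4}(\log T)(T^{-(J/4)/c_0}+h^{-J/4-1})$, which vanishes by Assumption \ref{AS4}.3(b). The Gaussian part is controlled by choosing the constant in $\delta$ large.
\end{enumerate}
The factor $h$ in the variance comes from the $2h-1$ lag terms in each row of the sum. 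Getting the blocking and truncation bookkeeping to yield exactly this rate is where I expect the difficulty.

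For Part 2, on the event $\max_{ij}|\hat\Omega_{ij}-\Omega_{ij}|\le\eta/2$, which has probability approaching one by Part 1 with $\eta$ a large multiple of $\sqrt{h\log N/T}$, I follow Theorem 1 of \cite{BL2008}. Bound the spectral norm by the maximal row $\ell_1$ norm of $\mathscr{G}_\eta(\hat{\pmb{\Omega}}_h)-\pmb{\Omega}_h$. Then split each row into entries with $|\Omega_{ij}|\ge\eta/2$, whose number is at most $c_N(\eta/2)^{-c_a}$ and each of which has error $\lesssim\eta$, and small entries, whose total contribution is $\lesssim\sum_j|\Omega_{ij}|^{c_a}\eta^{1-c_a}$. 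Both pieces are $O(c_N\eta^{1-c_a})=O_P((h\log N/T)^{(1-c_a)/2}c_N)$.
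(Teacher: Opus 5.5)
Your structure follows the paper's. You split $\hat{\pmb{\Omega}}_h$ into the infeasible $\tilde{\pmb{\Omega}}_h$ (the paper's $\pmb{\mathcal{B}}_1$) plus the estimation-error terms $\pmb{\mathcal{B}}_2$--$\pmb{\mathcal{B}}_4$. The centred $\pmb{\mathcal{B}}_1$ is handled by $m_T$-dependent approximation, blocking, Nagaev's inequality and a union bound over $N^4$ index tuples. Part 2 is the Bickel--Levina row-sum argument; your event $\max_{ij}|\hat{\Omega}_{ij}-\Omega_{ij}|\le\eta/2$ does the job of the paper's three-indicator split.

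\textbf{Gap: the estimation-error piece.} Your own bound for the terms linear in $\Delta=\hat{\tilde{\mathbf{a}}}_{\phi}-\tilde{\mathbf{A}}$ is $h\,\mu_h^{1+1/J}d_{\mathscr{A}}\sqrt{\log N/T}$. Divided by $\sqrt{h\log N/T}$, this leaves $\sqrt{h}\,\mu_h^{1+1/J}d_{\mathscr{A}}$, which does not tend to zero; it grows with $h$ and $d_{\mathscr{A}}$. Given the $\ell_1$ rate, Assumption \ref{AS4}.3(a) is exactly the condition that makes $h\|\VEC(\Delta)\|_1^2=o_P(\sqrt{h\log N/T})$. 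So it controls only the \emph{quadratic} term, which is how the paper uses it (for $\pmb{\mathcal{B}}_2$). For the cross terms $\pmb{\mathcal{B}}_3,\pmb{\mathcal{B}}_4$ the paper only writes ``analogously'' and does not supply the argument either. Your Part 2 runs on the event above, so it inherits this gap.

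\textbf{Why concentration does not rescue it.} Theorem \ref{TM.Main1}-type concentration does not help, because the relevant averages are not centred. Write $X_{a,t}$ for the $a^{th}$ entry of $\mathbf{X}_t$. Up to sign, the $((a,r),(b,s))$ entry of $\pmb{\mathcal{B}}_3$ is $\sum_{c}\Delta_{rc}\cdot\frac{1}{T-h}\sum_{|t-k|<h}X_{a,t}X_{c,t}X_{b,k}u_{sk,h}$. That is three $x$'s and one $u$, not the $x\,x\,u$ averages you propose to concentrate. For $0<t-k<h$, the residual $\mathbf{u}_{k,h}$ shares $\pmb{\varepsilon}_{k+1},\ldots,\pmb{\varepsilon}_{t}$ with $\mathbf{X}_t$. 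In Example \ref{Ex1} with $h=2$ and $t=k+1$, the mean is $(\mathbf{a}_1\pmb{\Gamma})_{ab}(\mathbf{a}_1)_{sc}+(\mathbf{a}_1\pmb{\Gamma})_{cb}(\mathbf{a}_1)_{sa}$, where $\pmb{\Gamma}\coloneqq E[\mathbf{x}_t\mathbf{x}_t^\top]$; this is generically nonzero.

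So for fixed $h$ the cross terms are only $O_P(\max_r\|\Delta_{r\cdot}\|_1)$. The LASSO objective separates across the rows of $\tilde{\mathbf{a}}$, so this quantity is governed by row sparsity rather than by $d_{\mathscr{A}}$. Making it $O_P(\sqrt{h\log N/T})$ needs something the stated assumptions do not give, for example:
\begin{itemize}
\item a bound on the maximal row sparsity of $\tilde{\mathbf{A}}$ relative to $h$;
\item a quantitative decay in $h$ of those means;
\item or a weaker rate in Part 1, with $\eta$ enlarged accordingly.
\end{itemize}

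\textbf{A smaller slip.} Up to $\log T$ and lower-order factors, the tail in Assumption \ref{AS4}.3(b) has the form $N^4T\,\epsilon_{NT}^{-J/2}$ with $\epsilon_{NT}\asymp\sqrt{Th\log N}$. Matching it therefore needs $J/2$ moments of $u_{it,h}x_{jt}u_{i'k,h}x_{j'k}$. These follow from the independence of $\mathbf{u}_{t,h}$ and $\mathbf{X}_t$, since each factor $u_{it,h}x_{jt}$ is then in $L^J$. The $J/4$ moments you state come from a plain H\"older count; they would give a weaker tail and would require $J>8$ rather than $J>4$.
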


Based on Theorem \ref{TM.Main4}, we can then derive a central limit theory for the purpose of inference, which further extends the long--horizon inference of \cite{montiel2021local} to a HD setting. 

Specifically, we adopt the node-wise LASSO method to construct the debiased estimator. For notational simplicity, define $\mathbf{Z}_t\coloneqq \mathbf{X}_t\otimes \mathbf{I}_N$. Accordingly,  for $i\in[N^2p]$, let $\mathbf{Z}_{i,t}$ and $\mathbf{Z}_{-i,t}$ be the $i^{th}$ row of $\mathbf{Z}_{t}$ and the submatrix of $\mathbf{Z}_{t}$ constructed by removing its $i^{th}$ row, respectively. Additionally, define $\pmb{\rho}_{\mathbf{a}}=(\rho_1,\cdots, \rho_{N^2p})^\top$ as the selection vector such that $\|\pmb{\rho}_{\mathbf{a}}\|_1<\infty$. For each $i$, let  $s_{\mathscr{A},i} \coloneqq \|\{I(|\Sigma_{s,ij}|>0)\}_{Np\times 1}\|^2$ denote the sparsity of the $i^{th}$ row of $\pmb{\Sigma}_{\mathbf{B}}^{-1} $, where $\Sigma_{s,ij}$ denotes its $(i,j)^{th}$ element of  $\pmb{\Sigma}_{\mathbf{B}}^{-1} $.

Then, we conduct the node-wise LASSO estimation:

\begin{eqnarray}\label{def.node}
    \hat{\mathbf{b}}_i=\argmin_{\mathbf{b}_i}\Big\{\frac{1}{T-h}\sum_{t=1}^{T-h}\|\mathbf{Z}_{i,t}^\top-\mathbf{Z}_{-i,t}^\top\mathbf{b}_i \|_2^2+2\tilde{\gamma}_i\|\mathbf{b}_i\|_1\Big\},
\end{eqnarray}
where $\tilde{\gamma}_i\asymp  \frac{\mu_{h}^{1+1/J}\sqrt{\log (N)}}{\sqrt{T}}$. Let $\hat{\tau}_i^2=\frac{1}{T-h}\sum_{t=1}^{T-h}\|\mathbf{Z}_{i,t}^\top-\mathbf{Z}_{-i,t}^\top\hat{\mathbf{b}}_i \|_2^2+\tilde{\gamma}_i\|\hat{\mathbf{b}}_i\|_1$, and we then define the debiased estimator \(\hat{\tilde{\mathbf{a}}}_{\mathrm{bc}}\) as

\begin{eqnarray}\label{def.node2}
    \VEC(\hat{\tilde{\mathbf{a}}}_{\mathrm{bc}})
    &=& \VEC(\hat{\tilde{\mathbf{a}}})
    +\frac{1}{T-h}\,\hat{\pmb{\Omega}}_z\,\sum_{t=1}^{T-h}\mathbf{Z}_t\big(\mathbf{x}_{t+h}-\mathbf{Z}_t^\top \VEC(\hat{\tilde{\mathbf{a}}})\big),
\end{eqnarray}
where \(\hat{\pmb{\Omega}}_z=\hat{\mathbf{T}}_z^{-1}\hat{\mathbf{C}}_z\), \(\hat{\mathbf{T}}_z=\mathrm{diag}(\hat{\tau}_1^2,\ldots,\hat{\tau}_{N^2p}^2)\), and \(\hat{\mathbf{C}}_z=(\hat{\mathbf{C}}_{z,1},\ldots,\hat{\mathbf{C}}_{z,N^2p})^\top\) is an \(N^2p\times N^2p\) matrix.  Each row \(\hat{\mathbf{C}}_{z,i}^\top\) is constructed by placing a 1 in the \(i^{th}\) position and filling the remaining \(N^2p-1\) entries with the elements of \(-\hat{\mathbf{b}}_i\) in their natural order.  

Finally, it is noteworthy that under the HD--LP framework, the node-wise LASSO estimation can be much simplified numerically in practical analysis. For the sake of space, we provide the details in Appendix \ref{AP.A.5}.

We are now ready to establish the following CLT for the debiased estimator.

\begin{theorem}\label{TM.Main5}
Let Assumptions \ref{AS1}-\ref{AS4} hold and $ (d_{\mathscr{A}}+\max_i s_{\mathscr{A},i})(\log N) T^{-1/2}\mu_{h}^{2+2/J}\rightarrow0 $. Then 

\begin{eqnarray*}
    \sqrt{T}\,(\pmb{\rho}_{\mathbf{a}}^\top\hat{\pmb{\Omega}}_z\mathscr{G}_\eta(\hat{\pmb{\Omega}}_h)\hat{\pmb{\Omega}}_z\pmb{\rho}_{\mathbf{a}})^{-1/2}\pmb{\rho}_{\mathbf{a}}^\top\VEC(\hat{\tilde{\mathbf{a}}}_{\rm bc}-\tilde{\mathbf{A}})\xrightarrow{~D~}\mathcal{N}(0,1).
\end{eqnarray*}
\end{theorem}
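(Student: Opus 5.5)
We follow the standard desparsified-LASSO route (van de Geer et al., 2014), adapted to the $h$-step ahead structure. Write $\hat{\pmb{\Sigma}}_z \coloneqq \frac{1}{T-h}\sum_{t=1}^{T-h}\mathbf{Z}_t\mathbf{Z}_t^\top$. By \eqref{def.xth}, $\mathbf{x}_{t+h}=\mathbf{Z}_t^\top\VEC(\tilde{\mathbf{A}})+\mathbf{u}_{t,h}$, so \eqref{def.node2} gives
\begin{eqnarray*}
\VEC(\hat{\tilde{\mathbf{a}}}_{\rm bc}-\tilde{\mathbf{A}}) = \frac{1}{T-h}\hat{\pmb{\Omega}}_z\sum_{t=1}^{T-h}\mathbf{Z}_t\mathbf{u}_{t,h} - (\hat{\pmb{\Omega}}_z\hat{\pmb{\Sigma}}_z-\mathbf{I})\VEC(\hat{\tilde{\mathbf{a}}}-\tilde{\mathbf{A}}) \eqqcolon \mathbf{I}_{1}+\mathbf{I}_2 .
\end{eqnarray*}
The proof then has three steps. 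First, show $\sqrt{T}\pmb{\rho}_{\mathbf{a}}^\top\mathbf{I}_2=o_P(1)$. Second, replace $\hat{\pmb{\Omega}}_z$ in $\mathbf{I}_1$ by its population counterpart $\pmb{\Omega}_z\coloneqq(\pmb{\Sigma}_{\mathbf{B}}\otimes\mathbf{I}_N)^{-1}$ and prove a CLT for $\sqrt{T}\pmb{\rho}_{\mathbf{a}}^\top\pmb{\Omega}_z\frac{1}{T-h}\sum_t\mathbf{Z}_t\mathbf{u}_{t,h}$. Third, show the studentisation is consistent and conclude by Slutsky.

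\textbf{Step 1 (remainder).} Use the KKT conditions of the node-wise LASSO \eqref{def.node}. They give $\|\hat{\pmb{\Omega}}_z\hat{\pmb{\Sigma}}_z-\mathbf{I}\|_{\max}\le \max_i\tilde{\gamma}_i/\hat{\tau}_i^2$. With $\|\pmb{\rho}_{\mathbf{a}}\|_1<\infty$ and Theorem \ref{TM.Main2}.1(b),
\begin{eqnarray*}
|\sqrt{T}\pmb{\rho}_{\mathbf{a}}^\top\mathbf{I}_2|\lesssim \sqrt{T}\,\frac{\mu_h^{1+1/J}\sqrt{\log N}}{\sqrt T}\cdot\frac{\mu_h^{1+1/J}d_{\mathscr{A}}\sqrt{\log N}}{\sqrt T}=\frac{\mu_h^{2+2/J}d_{\mathscr{A}}\log N}{\sqrt T}\to0,
\end{eqnarray*}
which holds by the rate condition of the theorem. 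This requires $\hat{\tau}_i^2$ to be bounded away from zero uniformly in $i$. That property follows from the node-wise analogue of Theorem \ref{TM.Main2}: treat each row regression as a LASSO with restricted eigenvalue condition Assumption \ref{AS3}.1 and sparsity $s_{\mathscr{A},i}$. Using Theorem \ref{TM.Main1}, now applied to products $x_{j,s-\ell}x_{k,s-\ell'}$ centred, we obtain $\|\hat{\mathbf{b}}_i-\mathbf{b}_i\|_1=O_P(s_{\mathscr{A},i}\mu_h^{1+1/J}\sqrt{\log N/T})$ and $|\hat\tau_i^2-\tau_i^2|=o_P(1)$ uniformly. We then also get $\|\pmb{\rho}_{\mathbf{a}}^\top(\hat{\pmb{\Omega}}_z-\pmb{\Omega}_z)\|_1=O_P(\max_i s_{\mathscr{A},i}\mu_h^{1+1/J}\sqrt{\log N/T})$.

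\textbf{Step 2 (CLT).} First, the error from replacing $\hat{\pmb{\Omega}}_z$ by $\pmb{\Omega}_z$ in $\mathbf{I}_1$ is bounded by $\|\pmb{\rho}_{\mathbf{a}}^\top(\hat{\pmb{\Omega}}_z-\pmb{\Omega}_z)\|_1\cdot\|\frac{1}{\sqrt T}\sum_t\mathbf{Z}_t\mathbf{u}_{t,h}\|_{\max}$. The second factor is $O_P(\mu_h^{1+1/J}\sqrt{\log N})$ by Theorem \ref{TM.Main1} and a union bound. The product is therefore $o_P(1)$ under the stated rate condition.

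For the main term, $\xi_t\coloneqq\pmb{\rho}_{\mathbf{a}}^\top\pmb{\Omega}_z\mathbf{Z}_t\mathbf{u}_{t,h}$ is a martingale-difference-like array only at lag $h$. The reason is that $\mathbf{u}_{t,h}$ depends on $\pmb{\varepsilon}_{t+1},\ldots,\pmb{\varepsilon}_{t+h}$ and is independent of $\mathbf{X}_t$, so $E[\xi_t\mid\mathcal{F}_t]=0$ but $\xi_t$ and $\xi_k$ are correlated for $|t-k|<h$. This is exactly why $\pmb{\Omega}_h$ is summed over $|t-k|<h$.

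I would handle this by a big-block/small-block argument with blocks of length $\gg h$ (possible since $h/T\to0$). Truncate the MA($\infty$) in $\mathbf{X}_t$ at lag $m$ using Assumption \ref{AS4}.2, and use the coupling decay Assumption \ref{AS4}.1 on $\mathbf{u}_{t,h}$. Together these make the block sums nearly independent. Then apply the Lindeberg CLT, with Lyapunov moments controlled by $J>4$.

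Equivalently, one can use the martingale decomposition $\sum_t\xi_t=\sum_s D_s$, where $D_s$ collects the projections onto $\pmb{\varepsilon}_s$, combined with the physical-dependence measures. Either way, the variance normaliser is $\pmb{\rho}_{\mathbf{a}}^\top\pmb{\Omega}_z\pmb{\Omega}_h\pmb{\Omega}_z\pmb{\rho}_{\mathbf{a}}$, which is bounded below because $\pmb{\Omega}_z$ has eigenvalues bounded away from zero.

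\textbf{Step 3 (variance).} Bound
\begin{eqnarray*}
|\pmb{\rho}_{\mathbf{a}}^\top\hat{\pmb{\Omega}}_z\mathscr{G}_\eta(\hat{\pmb{\Omega}}_h)\hat{\pmb{\Omega}}_z\pmb{\rho}_{\mathbf{a}}-\pmb{\rho}_{\mathbf{a}}^\top\pmb{\Omega}_z\pmb{\Omega}_h\pmb{\Omega}_z\pmb{\rho}_{\mathbf{a}}|
\end{eqnarray*}
by splitting into terms involving $\|\mathscr{G}_\eta(\hat{\pmb{\Omega}}_h)-\pmb{\Omega}_h\|_2$, which is $o_P(1)$ by Theorem \ref{TM.Main4}.2, and terms involving $\hat{\pmb{\Omega}}_z-\pmb{\Omega}_z$ from Step 1. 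The latter need an $\ell_2$ bound on $\pmb{\rho}_{\mathbf{a}}^\top(\hat{\pmb{\Omega}}_z-\pmb{\Omega}_z)$ together with bounded spectral norm of $\pmb{\Omega}_h$. Slutsky's theorem then completes the proof.

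The main obstacle is Step 2: establishing the CLT for the $(h-1)$-dependent products with a diverging $h$ and HD weights. The variance must be shown non-degenerate, and the blocking error must be $o(1)$ uniformly as $h\to\infty$. Assumption \ref{AS4}.3(b) seems designed precisely to control the resulting truncation and Nagaev-type tail terms.
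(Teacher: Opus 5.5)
Your proposal is correct and follows essentially the same route as the paper. The paper uses the same split of $\VEC(\hat{\tilde{\mathbf{a}}}_{\rm bc}-\tilde{\mathbf{A}})$: a node-wise KKT remainder bounded by $\max_i\tilde{\gamma}_i/\hat{\tau}_i^2$ times the $\ell_1$ rate of Theorem~\ref{TM.Main2}; the replacement of $\hat{\pmb{\Omega}}_z$ by $\pmb{\Omega}_z=\pmb{\Sigma}_{\mathbf{B}}^{-1}\otimes\mathbf{I}_N$ via the node-wise $\ell_1$ rate and \eqref{basis.ineq1}; a CLT obtained exactly through your ``equivalent'' projection/martingale variant (it truncates $\sum_l\mathscr{P}_{t+h-l}(U(t,h))$ at $m$ and regroups into the MDS $\sum_{l=0}^{m}\mathscr{P}_{t+h}(U(t+l,h))$); and variance consistency from Theorem~\ref{TM.Main4}.2. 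One small correction: bounding $\pmb{\rho}_{\mathbf{a}}^\top\pmb{\Omega}_z\pmb{\Omega}_h\pmb{\Omega}_z\pmb{\rho}_{\mathbf{a}}$ away from zero needs $\lambda_{\min}(\pmb{\Omega}_h)$ and $\|\pmb{\rho}_{\mathbf{a}}\|_2$ bounded away from zero, not only the eigenvalues of $\pmb{\Omega}_z$; the paper also uses this condition only implicitly.
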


We have emphasized that the preceding framework is not limited to the large-sample $N\to \infty$ regime; it also encompasses such settings associated with fixed $N$ as special cases. A corollary detailing the corresponding asymptotic distribution for the fixed---$N$ case is provided in Appendix \ref{AP.A.0} below.

Up to this point, we have completed our theoretical investigation.  It is worth mentioning again that some additional secondary results are also presented in Appendix \ref{AP.A} for the sake of space. In what follows, we examine these theoretical findings via numerical analyses.

\section{Simulation}\label{Sec3}

Without loss of generality, let the true DGP be as follows: 

\begin{eqnarray}\label{sim.1}
    \mathbf{x}_t = \mathbf{a}_1\mathbf{x}_{t-1} + \mathbf{a}_2\mathbf{x}_{t-2}+\pmb{\varepsilon}_t,
\end{eqnarray}
where $\pmb{\varepsilon}_t\sim N(\mathbf{0},\mathbf{I}_N)$. By \eqref{sim.1}, it is obvious that $p=2$. The coefficient matrices $\mathbf{a}_1=\{a_{1,ij}\}_{N\times N}$ and $\mathbf{a}_2=\{a_{2,ij}\}_{N\times N}$ are designed as follows:

\begin{eqnarray*}
    a_{1, ij}=\left\{ \begin{array}{ll}
        0.25 & \text{if $i=j$ and $i\le N/2$} \\
        0.35 & \text{if $i-j=1$} \\
        0 & \text{otherwise}
    \end{array} \right. \quad\text{and}\quad a_{2, ij}=\left\{ \begin{array}{ll}
        0.35 & \text{if $i=j$ and $i> N/2$} \\
        -0.25 & \text{if $j-i=1$} \\
        0 & \text{otherwise}
    \end{array} \right. .
\end{eqnarray*}
Also, we let $t\in \{-T_B, -T_B+1,\ldots, T \}$, in which $T_B$ is a sufficiently large positive number corresponding to a burn-in period. According to Example \ref{Ex2}, the model \eqref{sim.1} admits an HDMA($\infty$) representation $\mathbf{x}_t = \sum_{\ell =0}^{\infty}\mathbf{B}_\ell \pmb{\varepsilon}_{t-\ell},$ in which 

\begin{eqnarray*}
    \mathbf{B}_\ell=\mathbf{S} \mathbf{C}^\ell\mathbf{S}^\top,\quad \mathbf{S} =(\mathbf{I}_N, \mathbf{0}_N),\quad\text{and}\quad \mathbf{C}=\begin{pmatrix}
    \mathbf{a}_1 & \mathbf{a}_2 \\
    \mathbf{I}_n & \mathbf{0}
\end{pmatrix}.
\end{eqnarray*}

For each generated dataset, we fit the observations $\{\mathbf{x}_t\mid, t=2-p, 3-p,\ldots, T\}$ to the following set of $h$-step ahead forecasting models:

\begin{eqnarray}\label{sim.2}
    \mathbf{x}_{t+h} = \mathbf{A}_1\mathbf{x}_{t} + \mathbf{A}_2\mathbf{x}_{t-1}+ \mathbf{u}_{t,h},
\end{eqnarray}
where $h\ge 1$. We suppress the subindex $h$ in $\mathbf{A}_1$ and $\mathbf{A}_2$ when no misunderstanding arises. According to Proposition \ref{PP.Main1}, we have $\mathbf{A}_1=\mathbf{B}_h$. Additionally, for $h=1$, \eqref{sim.1} and \eqref{sim.2} infer that $\mathbf{a}_1=\mathbf{A}_1=\mathbf{B}_1$.

In what follows, we consider the cases with $N\in \{20, 30, 40\}$ and $T\in \{300, 400, 500\}$, so the number of parameters involved ranges from $800$ to $3200$. Based on the above settings, we firstly present Figure \ref{Fig.Sparse} to demonstrate the sparsity involved in $\mathbf{B}_h$. It is clear that the sparsity exists in all $\mathbf{B}_h$'s, and also the pattern varies with respect to $h$. For example, when $h=10$, $\mathbf{B}_h$ has more non-zero elements, but many of them are close to zero and may become negligible. 

\begin{figure}[htb!]
    \centering 
    \caption{Sparsity of $\mathbf{B}_h$}\label{Fig.Sparse}
    \includegraphics[scale = 0.68]{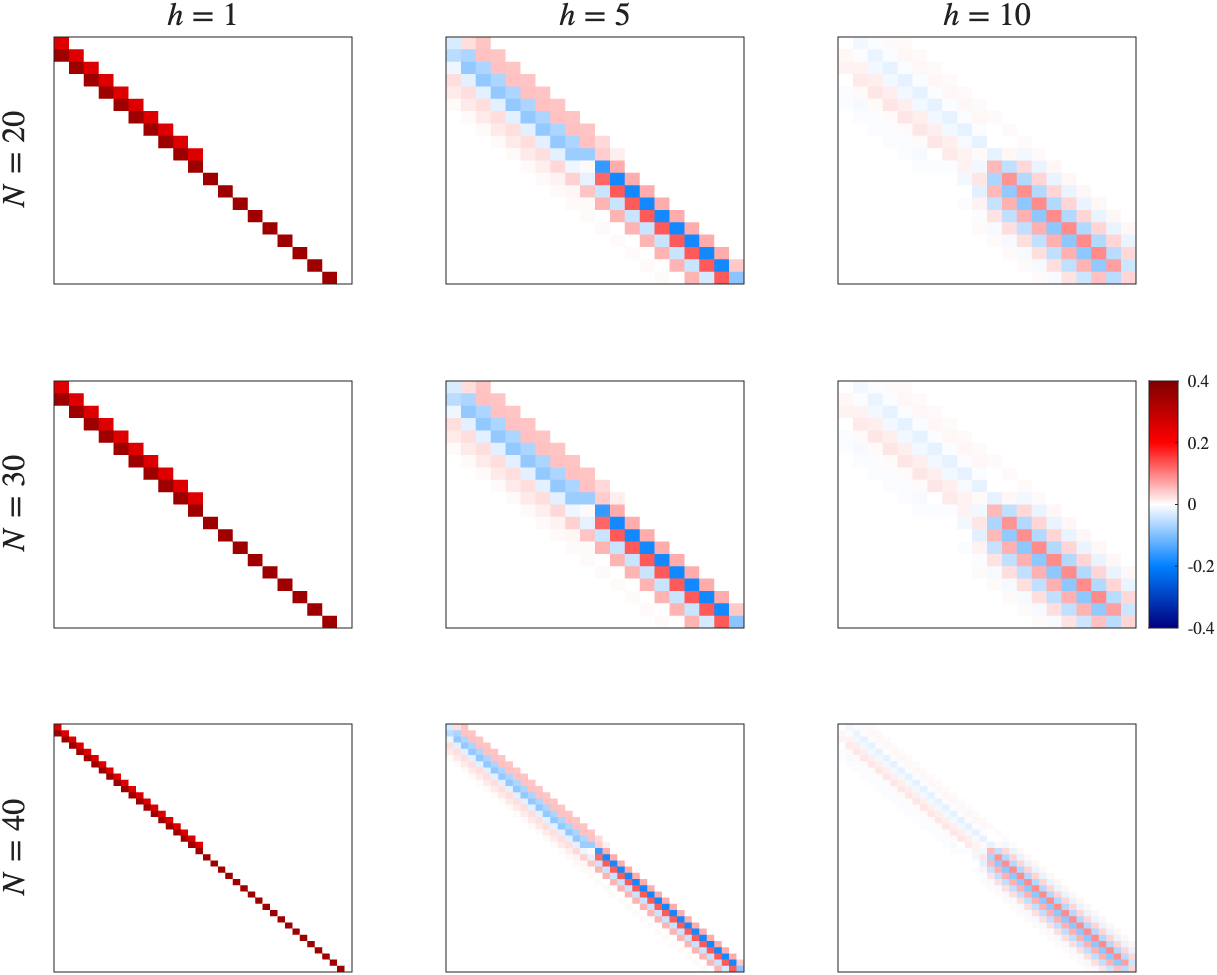}
\end{figure}
 
Based on the specified DGP, we perform $R$ replications to evaluate our proposed estimation procedure. Without loss of generality, we let $R=500$ throughout. The complete numerical algorithm is detailed in Appendix \ref{AP.A.5}. We assess the estimation accuracy of $\mathbf{B}_h =\{ b_{h,ij}\}$ after identifying (1). $\hat{p}$, and (2). the sparsity. 

We start from examining the selection of the optimal lag, and emphasize that two facts:

\begin{enumerate}[leftmargin=24pt, parsep=2pt, topsep=2pt]
    \item Over-selection is generally permissible, as it still yields reasonably reliable (consistent) estimates, though it may result in a decrease in estimation efficiency;
    \item As explained under Theorem \ref{TM.Main3}, we only need to select the optimal $p$ once for the group of regressions such as those in \eqref{def.xth} practically. To minimize the restrictions required, we therefore consider $h=1,2$ only.
\end{enumerate}
To quantify performance, we introduce the following measures: 
\begin{equation*} 
    S_{\ell}^- =\frac{1}{R}\sum_{r=1}^R I(\hat{p}_{\ell, r}<p), \quad S_{\ell} =\frac{1}{R}\sum_{r=1}^R I(\hat{p}_{\ell, r}=p),\quad S_{\ell}^+ =\frac{1}{R}\sum_{r=1}^R I(\hat{p}_{\ell, r}>p),
\end{equation*} 
where $\ell\in \{1,2\}$ corresponds to the lag orders $h\in \{1,2\}$ when implementing \eqref{def.phat}. These indices represent the empirical frequencies of under-selection ($S_{\ell}^-$), correct selection ($S_{\ell}$), and over-selection ($S_{\ell}^+$), where $\hat{p}_{\ell, r}$ denotes the estimated lag in the $r^{th}$ replication for $\forall \ell\in \{1,2\}$.

The selection results are reported in Panel A of Table \ref{Tb.1}.   For $\forall \ell\in \{1,2 \}$, we observe that the values of $S_\ell$ are close to 1, which aligns with theoretical expectations. When $\ell =2$, we have a small proportion over-section (i.e., $S_{\ell}^+ >0$), and $S_{\ell}$ moves towards 1 as $T$ increases. As explained above, over-section is not ideal but permissible. It is not surprising that $S_2$ decreases slightly as $N$ goes up. As we have seen in Figure \ref{Fig.Sparse}, the coefficients have the strongest signal for $h=1$ when implementing LASSO procedure. Hence, it is reasonable to expect $h=1$ offers better finite sample performance from different perspectives such as those presented by Panel A of Table \ref{Tb.1}.

For each $\hat{p}_\ell$ with $\ell\in \{ 1,2\}$, we carry on the second step of the numerical algorithm of Appendix \ref{AP.A.5}. To measure the performance, we quantify the accuracy of sparsity detection as:

\begin{eqnarray*}
    \text{SL}_{\mid \hat{p}_\ell} &=& \frac{1}{R}\sum_{r=1}^R\frac{1}{N^2}\sum_{i,j}|I(\hat{a}_{a,ij,r}=0) - I(b_{h,ij}=0) |, 
\end{eqnarray*}
where SL stands for the selection, the subscript $\mid \hat{p}_\ell$ indicates the selection results are calculated based on the estimated lag order $\hat{p}_\ell$, $\hat{\mathbf{A}}_{a} =\{ \hat{a}_{a,ij}\}$ defines the adaptive LASSO estimate, and $r$ indexes the $r^{th}$ replication. The SL measure focuses on the overall accuracy of zero/non-zero element classification rather than distinguishing between false negative (incorrectly identified zero elements) and false positive (incorrectly identified non-zero elements). 

We observe in Panel A of Table \ref{Tb.1} that the SL value tends to become relatively large as the horizon $h$ increases. This phenomenon is primarily attributable to a source of false negative, as illustrated in Figure \ref{Fig.Sparse}. Specifically, as $h$ increases, the number of non-zero elements in the true matrix $\mathbf{B}_h$ may rise; however, many of these elements become negligible (i.e., close to zero) due to weak signal propagation. The adaptive LASSO procedure may incorrectly detect these negligible elements as zero, thereby leading to a higher overall SL score.

Finally, based on $\hat{p}_\ell$ and the estimated sparsity, we  introduce the following measure:

\begin{eqnarray*}
    \text{AD}_{\star\mid\hat{p}_\ell} &=& \frac{1}{R}\sum_{r=1}^R\|\hat{\mathbf{A}}_{\star,r} \circ \hat{\mathbf{S}}_r-\mathbf{B}_h\|_2 ,
\end{eqnarray*}
where AD stands for the averaged distance, $\star \in \{a,d\}$, the subscript $\mid \hat{p}_\ell$ again indicates the results are calculated based on the estimated lag order $\hat{p}_\ell$,  and $\hat{\mathbf{S}}_r\coloneqq\{ I( \hat{a}_{a,ij, r}\ne 0)\}$. The definition of $\text{AD}_{\mathbf{B}_h}$ infers that for the debiased LASSO estimator, we only consider those elements which are not identified as 0 by the adaptive LASSO. 

Panel B of Table \ref{Tb.1} summarizes the relevant results. The results conditional on $\hat{p}_1$ and $\hat{p}_2$ are very similar, which should be expected. Despite the relatively large number of parameters, which ranges from 800 to 3200, both the adaptive LASSO estimation error ($\text{AD}_{a}$) and the debiased LASSO estimation error ($\text{AD}_{d}$) are observed to be small. Also, both $\text{AD}_{a}$ and $\text{AD}_{d}$ move towards 0 as $T$ increases for all $(N,h)$. Overall, the debiased LASSO exhibits superior finite sample performance, which is consistent with the theoretical expectation that debiasing mitigates the shrinkage bias inherent in the standard LASSO procedure.

\begin{table}[htb!]
\centering \footnotesize
\caption{Results of Selection and Estimation}\label{Tb.1}
\setlength{\tabcolsep}{4pt}
\renewcommand{\arraystretch}{1.0}
\begin{tabular}{llcccccclcccccc}
\toprule
 &  & \multicolumn{6}{c}{Panel A} &  & \multicolumn{6}{c}{Panel B} \\ \cline{3-8} \cline{10-15} 
 &  & \multirow{2}{*}{$S_1^-$} & \multirow{2}{*}{$S_{\ell} $} & \multirow{2}{*}{$S_1^+$} & \multicolumn{3}{c}{$\text{SL}_{\mid \hat{p}_1}$} &  & \multicolumn{3}{c}{$\text{AD}_{a \mid \hat{p}_1}$} & \multicolumn{3}{c}{$\text{AD}_{d \mid \hat{p}_1}$} \\
$N$ & $T$ &  &  &  & $h=1$ & $h=5$ & $h=10$ &  & $h=1$ & $h=5$ & $h=10$ & $h=1$ & $h=5$ & $h=10$ \\
20 & 300 & 0.000 & 1.000 & \multicolumn{1}{c|}{0.000} & 0.019 & 0.308 & 0.620 &  & 0.465 & 0.410 & \multicolumn{1}{c|}{0.376} & 0.406 & 0.395 & 0.368 \\
 & 400 & 0.000 & 1.000 & \multicolumn{1}{c|}{0.000} & 0.013 & 0.304 & 0.619 &  & 0.408 & 0.393 & \multicolumn{1}{c|}{0.373} & 0.342 & 0.375 & 0.367 \\
 & 500 & 0.000 & 1.000 & \multicolumn{1}{c|}{0.000} & 0.009 & 0.301 & 0.618 &  & 0.364 & 0.379 & \multicolumn{1}{c|}{0.370} & 0.307 & 0.356 & 0.363 \\
30 & 300 & 0.000 & 1.000 & \multicolumn{1}{c|}{0.000} & 0.023 & 0.216 & 0.448 &  & 0.563 & 0.453 & \multicolumn{1}{c|}{0.411} & 0.542 & 0.446 & 0.410 \\
 & 400 & 0.000 & 1.000 & \multicolumn{1}{c|}{0.000} & 0.018 & 0.214 & 0.448 &  & 0.536 & 0.445 & \multicolumn{1}{c|}{0.412} & 0.499 & 0.435 & 0.410 \\
 & 500 & 0.000 & 1.000 & \multicolumn{1}{c|}{0.000} & 0.015 & 0.213 & 0.448 &  & 0.497 & 0.437 & \multicolumn{1}{c|}{0.411} & 0.426 & 0.424 & 0.409 \\
40 & 300 & 0.000 & 1.000 & \multicolumn{1}{c|}{0.000} & 0.024 & 0.166 & 0.349 &  & 0.591 & 0.471 & \multicolumn{1}{c|}{0.428} & 0.587 & 0.468 & 0.427 \\
 & 400 & 0.000 & 1.000 & \multicolumn{1}{c|}{0.000} & 0.020 & 0.165 & 0.349 &  & 0.583 & 0.467 & \multicolumn{1}{c|}{0.428} & 0.573 & 0.463 & 0.427 \\
 & 500 & 0.000 & 1.000 & \multicolumn{1}{c|}{0.000} & 0.017 & 0.165 & 0.349 &  & 0.568 & 0.465 & \multicolumn{1}{c|}{0.428} & 0.546 & 0.458 & 0.427 \\
 &  &  &  &  &  &  &  &  &  &  &  &  &  &  \\
 &  & \multirow{2}{*}{$S_2^-$} & \multirow{2}{*}{$S_2$} & \multirow{2}{*}{$S_2^+$} & \multicolumn{3}{c}{$\text{SL}_{\mid \hat{p}_2}$} &  & \multicolumn{3}{c}{$\text{AD}_{a \mid \hat{p}_2}$} & \multicolumn{3}{c}{$\text{AD}_{d \mid \hat{p}_2}$} \\
 &  &  &  &  & $h=1$ & $h=5$ & $h=10$ &  & $h=1$ & $h=5$ & $h=10$ & $h=1$ & $h=5$ & $h=10$ \\
20 & 300 & 0.000 & 0.932 & \multicolumn{1}{c|}{0.068} & 0.019 & 0.308 & 0.620 &  & 0.465 & 0.410 & \multicolumn{1}{c|}{0.376} & 0.406 & 0.396 & 0.369 \\
 & 400 & 0.000 & 0.978 & \multicolumn{1}{c|}{0.022} & 0.013 & 0.304 & 0.619 &  & 0.408 & 0.393 & \multicolumn{1}{c|}{0.373} & 0.343 & 0.375 & 0.367 \\
 & 500 & 0.000 & 0.994 & \multicolumn{1}{c|}{0.006} & 0.009 & 0.301 & 0.618 &  & 0.364 & 0.379 & \multicolumn{1}{c|}{0.370} & 0.307 & 0.357 & 0.363 \\
30 & 300 & 0.000 & 0.966 & \multicolumn{1}{c|}{0.034} & 0.023 & 0.216 & 0.448 &  & 0.563 & 0.453 & \multicolumn{1}{c|}{0.412} & 0.543 & 0.446 & 0.410 \\
 & 400 & 0.000 & 0.992 & \multicolumn{1}{c|}{0.008} & 0.018 & 0.214 & 0.448 &  & 0.536 & 0.445 & \multicolumn{1}{c|}{0.412} & 0.499 & 0.435 & 0.410 \\
 & 500 & 0.000 & 0.998 & \multicolumn{1}{c|}{0.002} & 0.015 & 0.213 & 0.448 &  & 0.497 & 0.437 & \multicolumn{1}{c|}{0.411} & 0.426 & 0.424 & 0.409 \\
40 & 300 & 0.000 & 0.974 & \multicolumn{1}{c|}{0.026} & 0.024 & 0.166 & 0.349 &  & 0.591 & 0.471 & \multicolumn{1}{c|}{0.428} & 0.587 & 0.468 & 0.427 \\
 & 400 & 0.000 & 0.996 & \multicolumn{1}{c|}{0.004} & 0.020 & 0.165 & 0.349 &  & 0.583 & 0.467 & \multicolumn{1}{c|}{0.428} & 0.573 & 0.463 & 0.427\\
 & 500 & 0.000 & 0.992 & \multicolumn{1}{c|}{0.002} & 0.017 & 0.165 & 0.349 &  & 0.568 & 0.465 & \multicolumn{1}{c|}{0.428} & 0.546 & 0.458 & 0.427 \\
 \bottomrule
\end{tabular}
\end{table}


\section{Empirical Study}\label{Sec4}

In the era of big data, textual sources such as news and policy announcements are increasingly used in economic research. For example, \cite{baker2016measuring} develop an economic policy uncertainty index from newspaper articles and find that higher news-based uncertainty is associated with greater stock price volatility. Similarly, \cite{bybee2024business} construct 180 news attention indices from thousands of Wall Street Journal (WSJ) articles and show that attention topics like recession news have economically large predictive power for future output variables. Their news-attention measures closely track a wide range of macroeconomic and financial series. In particular, it is shown that shifts in WSJ coverage of specific themes coincide closely with observed industry-level  market volatility. 

In this paper, we utilize the proposed high-dimensional local projection framework to revisit the impact of business news attention on U.S. industry-level stock volatility. Additionally, we explicitly include lagged volatilities from all industries as regressors to model cross-sector spillovers.  The spillover of volatility across assets and sectors has long been a focus in finance \citep[e.g.,][]{diebold2009measuring, diebold2014network}. Though the structural VAR models, as a powerful toolkit, have been widely used in the relevant literature to trace how volatility shocks propagate across markets or industries, such methods impose rigid assumptions on system dynamics. Our method, on the other hand, allows for more flexibility in capturing the effect of news shocks and the volatility spillover effects among industries. Finally, we compute impulse-response functions at different horizons to trace the short-, medium-, and long-run effects of a news-attention shock on industry volatility. By deriving these responses, we summarize how news-driven volatility shocks evolve over time.

\subsection{Data and Variables}

To construct U.S. industry-level volatility measures, we collect daily industry portfolio returns for 37 industry categories from Kenneth French's data library.\footnote{The data are available at \url{https://mba.tuck.dartmouth.edu/pages/faculty/ken.french/data_library.html}. We exclude the sector labeled ``Other'' in the original dataset, as it lacks a clear economic interpretation.} Monthly industry volatility is computed as the standard deviation of daily returns within each calendar month, which is a standard approach in the empirical asset pricing and volatility literature. Details on the industry classifications are provided in Table \ref{t_Emp_1} and the descriptive statistics for the resulting industry-level volatility series are reported in Table \ref{t_Emp_2}. To assess the time-series properties of the data, we conduct augmented Dickey--Fuller (ADF) unit root tests. As shown in Table \ref{t_Emp_2}, all industry volatility series are found to be stationary at conventional significance levels. 

For the news attention data, we rely on the monthly topic-level attention indices constructed by \citet{bybee2024business}, which cover 180 distinct news topics over the period 1984-2017.\footnote{The data are available at \url{www.structureofnews.com}.} From this universe, we select 23 topics that are most closely related to economic growth and financial markets. Prior evidence in \citet{bybee2024business} highlights the {\it Recession}  topic (REC) as particularly informative for macro-financial dynamics. Accordingly, in our empirical analysis, we progressively incorporate the recession-related news attention series, followed by the remaining seven topics in the economic growth category and fifteen topics in the financial market category, into the local projection framework to examine their effects on industry-level volatilities. A detailed description of the selected topics is provided in Table \ref{t_Emp_1}. 

By merging the industry volatility dataset with the news attention indices, we obtain a multivariate time-series dataset spanning January 1984 to June 2017 ($T=402$), comprising a total of 60 variables: 37 industry-level volatility series and 23 news topic attention measures. All volatility series and news attention indices are subsequently standardized to have zero mean and a variance of one. We consider three local projection specifications that differ in the set of regressors included. These models are summarized as follows:
\begin{itemize}
    \item \textbf{Model 1}: 37 industry-level volatility variables and the recession-related news attention index;
    \item \textbf{Model 2}: 37 industry-level volatility variables and 8 news attention indices related to economic growth;
    \item \textbf{Model 3}: 37 industry-level volatility variables, 8 news attention indices related to economic growth, and 15 news attention indices related to financial markets.
\end{itemize}

\begin{table}
\centering\footnotesize
\caption{Industry Classification and News Topic Information}
\label{t_Emp_1}
\renewcommand{\arraystretch}{1}
\setlength{\tabcolsep}{2pt}
\begin{tabular}{lllll}
\toprule
\multicolumn{2}{c}{\textbf{Panel A: Industry Classification}} 
& \multicolumn{3}{c}{\textbf{Panel B: News Topics}} \\
\cmidrule(lr){1-2} \cmidrule(lr){3-5} 
 & Industry                                   &  & News Topic              & Category         \\
AG           & Agriculture, forestry, and fishing         & REC         & Recession               & Economic Growth  \\
MN           & Mining                                     & RHI         & Record High             & Economic Growth  \\
OG           & Oil and Gas Extraction                     & EGR         & Economic Growth         & Economic Growth  \\
ST           & Nonmetallic Minerals Except Fuels          & FED         & Federal Reserve         & Economic Growth  \\
CN           & Construction                               & ESD         & European Sovereign Debt & Economic Growth  \\
FD           & Food and Kindred Products                  & PPR         & Product Prices          & Economic Growth  \\
TB           & Tobacco Products                           & OPT         & Optimism                & Economic Growth  \\
TX           & Textile Mill Products                      & MAC         & Macroeconomic Data      & Economic Growth  \\
AP           & Apparel and other Textile Products         & IPO         & IPOs                    & Financial Market \\
WD           & Lumber and Wood Products                   & BYD         & Bond Yields             & Financial Market \\
FU           & Furniture and Fixtures                     & SRT         & Short Sales             & Financial Market \\
PA           & Paper and Allied Products                  & SML         & Small Caps              & Financial Market \\
PR           & Printing and Publishing                    & TBY         & Treasury Bonds          & Financial Market \\
CH           & Chemicals and Allied Products              & INT         & International Exchanges & Financial Market \\
PE           & Petroleum and Coal Products                & CMP         & Exchanges Composites    & Financial Market \\
RB           & Rubber and Miscellaneous Plastics Products & FXM         & Currencies Metals       & Financial Market \\
LE           & Leather and Leather Products               & PAY         & Share Payouts           & Financial Market \\
GL           & Stone, Clay and Glass Products             & BBM         & Bear Bull Market        & Financial Market \\
MT           & Primary Metal Industries                   & CMD         & Commodities             & Financial Market \\
FM           & Fabricated Metal Products                  & TRD         & Trading Activity        & Financial Market \\
MC           & Machinery, Except Electrical               & VIX         & Options VIX             & Financial Market \\
EL           & Electrical and Electronic Equipment        &             &                         &                  \\
CA           & Transportation Equipment                   &             &                         &                  \\
IN           & Instruments and Related Products           &             &                         &                  \\
MF           & Miscellaneous Manufacturing Industries     &             &                         &                  \\
TS           & Transportation                             &             &                         &                  \\
PH           & Telephone and Telegraph Communication      &             &                         &                  \\
TV           & Radio and Television Broadcasting          &             &                         &                  \\
UT           & Electric, Gas, and Water Supply            &             &                         &                  \\
GB           & Sanitary Services                          &             &                         &                  \\
SM           & Steam Supply                               &             &                         &                  \\
WT           & Irrigation Systems                         &             &                         &                  \\
WS           & Wholesale                                  &             &                         &                  \\
RT           & Retail Stores                              &             &                         &                  \\
FI           & Finance, Insurance, and Real Estate        &             &                         &                  \\
SV           & Services                                   &             &                         &                  \\
GV           & Public Administration                      &             &                         &                  \\
\bottomrule
\end{tabular}
\begin{minipage}{1\linewidth}
\footnotesize
\textit{Notes:} Panel A reports the industry classification and corresponding abbreviations used in the empirical analysis. Panel B lists news topic indices used in our study and their broad  categories according to \cite{bybee2024business}'s classification. To distinguish between these two categories of variables, we use two-letter codes for industries and three-letter codes for news-attention indices.
\end{minipage}
\end{table}

\begin{table}
\centering
\caption{Descriptive Statistics}
\label{t_Emp_2}
\footnotesize 
\renewcommand{\arraystretch}{1.1}
\begin{tabular}{lllll lllll}
\toprule
Industry & Mean & StD  & $t$-stats & $p$ value & Industry & Mean & StD  & $t$-stats & $p$ value \\
\midrule
AG       & 1.34 & 0.64 & -2.40     & 0.016     & FM       & 1.00 & 0.58 & -2.92     & 0.004     \\
MN       & 1.81 & 0.94 & -2.52     & 0.012     & MC       & 1.22 & 0.72 & -2.83     & 0.005     \\
OG       & 1.45 & 0.94 & -2.57     & 0.010     & EL       & 1.27 & 0.75 & -2.66     & 0.008     \\
ST       & 1.65 & 0.70 & -2.70     & 0.007     & CA       & 1.07 & 0.61 & -3.03     & 0.003     \\
CN       & 1.37 & 0.86 & -2.66     & 0.008     & IN       & 0.92 & 0.49 & -3.19     & 0.002     \\
FD       & 0.80 & 0.40 & -3.04     & 0.003     & MF       & 1.11 & 0.57 & -2.71     & 0.007     \\
TB       & 1.43 & 0.71 & -2.82     & 0.005     & TS       & 1.07 & 0.58 & -2.89     & 0.004     \\
TX       & 1.19 & 0.71 & -2.70     & 0.007     & PH       & 1.16 & 0.64 & -2.80     & 0.005     \\
AP       & 1.19 & 0.66 & -2.70     & 0.007     & TV       & 1.17 & 0.70 & -3.02     & 0.003     \\
WD       & 1.42 & 0.71 & -2.32     & 0.020     & UT       & 0.74 & 0.50 & -3.61     & 0.001     \\
FU       & 1.13 & 0.68 & -2.61     & 0.009     & GB       & 1.29 & 0.58 & -2.88     & 0.004     \\
PA       & 1.16 & 0.67 & -2.91     & 0.004     & SM       & 1.52 & 3.81 & -7.32     & 0.001     \\
PR       & 1.03 & 0.70 & -2.82     & 0.005     & WT       & 0.15 & 0.88 & -4.37     & 0.001     \\
CH       & 1.00 & 0.53 & -3.10     & 0.003     & WS       & 0.89 & 0.50 & -3.03     & 0.003     \\
PE       & 1.24 & 0.71 & -3.00     & 0.003     & RT       & 0.99 & 0.55 & -2.80     & 0.005     \\
RB       & 1.04 & 0.59 & -2.87     & 0.005     & FI       & 0.92 & 0.71 & -2.69     & 0.007     \\
LE       & 1.38 & 0.64 & -2.35     & 0.018     & SV       & 1.02 & 0.61 & -3.14     & 0.002     \\
GL       & 1.25 & 0.70 & -2.62     & 0.009     & GV       & 2.09 & 4.34 & -7.06     & 0.001     \\
MT       & 1.39 & 0.92 & -2.76     & 0.006     &          &      &      &           &          \\
\bottomrule
\end{tabular}
\end{table}

\subsection{Estimation Results}


The numerical implementation is the same as that documented in Appendix \ref{AP.A.5}. In constructing the asymptotic covariance estimator, we set the threshold parameter to \(\eta = 2\sqrt{h \log N / T}\). To determine the optimal lag order in the local projection regressions, we employ the information criterion proposed in Section \ref{Sec2}. As discussed earlier, it is sufficient to consider relatively small forecasting horizons when selecting the lag length \(p\). Accordingly, we set \(h=1,2,3\) and, for each horizon, compute the values of \(\text{IC}(p)\) for lag orders up to a maximum of ten.  The resulting information criterion values  are reported in Table \ref{t_Emp_3}. The results indicate that the criterion favors \(\widehat{p}=5\) when \(h=1\), and \(\widehat{p}=3\) for both \(h=2\) and \(h=3\). Based on these findings, we adopt \(\widehat{p}=3\) as the common lag order in the subsequent analysis.

The estimated local projection coefficients for different horizons ($h=1,6,12,24$) using Model 1 are reported in Figure \ref{f_Emp_1}. In the short run, the recession-attention news index exerts a broadly positive effect on industry-level return volatility.  This result coincides with the expectation that recession-related news can raise perceived macroeconomic uncertainty in the market, which can in turn increase the cross-sectional dispersion of firms' economic activities (e.g., investment) and eventually increases asset-return volatility. This pattern is consistent with a large literature confirming  that positive contribution of the market uncertainty to the asset return volatility \citep[see][for example]{Bloom2009}.  
The effect turns weak at medium horizons (e.g., $h=6$) and becomes negligible at longer horizons (e.g., $h=24$), indicating that the volatility impulse triggered by recession news is predominantly transitory.  Similar patterns are observed in Figures \ref{f_Emp_2} and \ref{f_Emp_3}, which present the estimated coefficients for Models 2 and 3. This indicates that the positive impact of recession-related news attention on volatility remains robust to the inclusion of additional news indices.
To document cross-industry heterogeneity more precisely, Table \ref{t_Emp_4} reports the estimated short-run ($h=1$) impulse responses and their standard errors for each industry; several industries exhibit estimates exceeding 0.2, whereas many others remain indistinguishable from zero.  These heterogeneous patterns are economically plausible: the short-run responses are strongest in cyclical and finance-sensitive sectors (for example, construction, transportation equipment, and finance sectors), whereas defensive sectors  (such as mining, food or tobacco products) display statistically insignificant responses.  Table \ref{t_Emp_4} also presents the short-run impulse-response estimates  from Models 2 and 3. Even after adding attention series for other news topics that are related to economic growth or financial markets, the recession topic maintains positive impacts on volatility across most industries, though several estimated effects decline in magnitude.

To assess the temporal evolution of recession-driven volatility shocks, we compute impulse responses at horizons spanning one month to two years for each industry. Owing to page constraints, Figure \ref{f_Emp_4} displays the estimated impulse responses and their 90\% confidence intervals for six representative industries (AG, OG, ST, CN, MF, and FI). The results confirm positive, statistically significant short-run effects of recession-related news attention. For example, the value for ST industry remains significantly positive through horizon $h=4$, while the responses for CN and OG industries are significantly positive at horizons up to 3.  At medium and long horizons, on the other hand,  the effects become statistically insignificant as confidence bands widen. These findings corroborate those reported in Figure \ref{f_Emp_1}.

In addition to analyzing the effects of recession-related news on volatility, Models 2 and 3 allow us to examine the influence of other economics- and finance-related news topics. As shown in Figures \ref{f_Emp_2} and \ref{f_Emp_3}, these topics exert divisive effects on industry-level volatilities. In particular, the topic {\it European Sovereign Debt} (ESD) exhibits a strong short-run effect in increasing volatility, and this effect is broadly uniform across industries. This finding is not surprising, as news related to sovereign debt crises, which is similar to the recession-related news, tends to heighten market perceptions of macroeconomic and financial uncertainty, thereby amplifying volatility. By contrast, the {\it Federal Reserve} (FED) topic displays a pervasive volatility-reducing effect across industries. One possible explanation is that Federal Reserve–related news often conveys policy guidance and information about the future path of monetary policy, which can reduce uncertainty by anchoring market expectations.

We then turn to examine volatility spillovers across industries. In our estimates, the Finance, Insurance, and Real Estate (FI) sector emerges as a prominent source of volatility spillovers to other industries at both short and medium horizons, as shown in Figure \ref{f_Emp_1}. This result is consistent with the broader literature, which finds that financial sectors often act as net transmitters of volatility due to their central role in credit intermediation and risk sharing \citep[e.g.,][]{diebold2015financial}. It also extends the existing literature that documents risk spillovers among financial institutions themselves \citep[e.g.,][]{tobias2016covar}, and further underscores the pivotal role of the financial sector in propagating volatility shocks across the entire economic network. By contrast, the transportation industry appears to exert a restraining effect on volatility spillovers, which indicates sectors that are more closely tied to final demand and production may absorb volatility rather than amplify it.

As a robustness check, we replace historical volatility with realized volatility. Specifically, monthly realized volatility is constructed as the square root of the sum of squared daily returns for each industry. We then re-estimate the model using the same local projection framework. Due to page constraints, the corresponding results are reported in Figures \ref{f_Emp_1_RV}, \ref{f_Emp_2_RV}, and \ref{f_Emp_3_RV} in Appendix \ref{AP.A.4}.
The main empirical findings remain largely unchanged. Recession-related news attention continues to exert a positive effect on volatility across most industries, and the presence of volatility spillovers from the financial sector to other sectors is again confirmed.

\begin{table}
\centering
\caption{IC Values for Lag Order Selection}
\label{t_Emp_3}
\footnotesize 
\begin{tabular}{lllllllllll}
\toprule
$h/p$ & 1      & 2      & 3               & 4      & 5               & 6      & 7      & 8      & 9      & 10     \\
\midrule
1     & 17.085 & 17.235 & 17.073          & 16.829 & \textbf{16.772} & 16.914 & 17.207 & 17.925 & 17.986 & 18.050 \\
2     & 19.611 & 19.946 & \textbf{19.793} & 20.103 & 20.694          & 21.180 & 21.989 & 20.446 & 21.711 & 21.753 \\
3     & 24.085 & 23.431 & \textbf{23.178} & 23.650 & 23.689          & 24.039 & 24.610 & 24.461 & 25.006 & 24.878\\
\bottomrule
\end{tabular}
\end{table}

\begin{table}
\centering\footnotesize
\caption{Estimated Impulse Responses of Volatilities to Recession Attention}
\label{t_Emp_4}
\setlength{\tabcolsep}{2pt}
\renewcommand{\arraystretch}{1.2}
\begin{tabular}{lrrlrrlrrlrrlrrlrr}
 \toprule
    \multicolumn{6}{c}{\textbf{Model 1}}                                 & \multicolumn{6}{c}{\textbf{Model 2}} &           \multicolumn{6}{c}{\textbf{Model 3}}       \\
    \cmidrule(lr){1-6} \cmidrule(lr){7-12} \cmidrule(lr){13-18}
Ind. & IR           & StD          & Ind. & IR    & StD   & Ind. & IR           & StD          & Ind. & IR    & StD   & Ind. & IR           & StD          & Ind. & IR    & StD   \\
AG       & 0.167        & 0.081        & FM       &       &       & FM       & 0.188        & 0.096        & FM       & 0.102 & 0.095 & AG       & 0.203        & 0.104        & FM       & 0.124 & 0.125 \\
MN       &              &              & MC       & 0.191 & 0.101 & MC       &              &              & MC       & 0.198 & 0.097 & MN       &              &              & MC       & 0.195 & 0.120 \\
OG       & 0.191        & 0.075        & EL       & 0.184 & 0.108 & EL       & 0.222        & 0.088        & EL       &       &       & OG       & 0.234        & 0.133        & EL       & 0.151 & 0.119 \\
ST       & 0.224        & 0.067        & CA       & 0.211 & 0.079 & CA       & 0.188        & 0.046        & CA       & 0.194 & 0.043 & ST       & 0.217        & 0.093        & CA       & 0.197 & 0.115 \\
CN       & 0.212        & 0.072        & IN       &       &       & IN       & 0.188        & 0.024        & IN       &       &       & CN       & 0.265        & 0.055        & IN       &       &       \\
FD       &              &              & MF       & 0.154 & 0.084 & MF       & 0.098        & 0.104        & MF       & 0.120 & 0.074 & FD       & 0.092        & 0.134        & MF       &       &       \\
TB       &              &              & TS       & 0.169 & 0.091 & TS       &              &              & TS       & 0.150 & 0.071 & TB       &              &              & TS       & 0.143 & 0.050 \\
TX       & 0.193        & 0.078        & PH       & 0.173 & 0.081 & PH       & 0.182        & 0.067        & PH       & 0.148 & 0.065 & TX       & 0.189        & 0.109        & PH       & 0.126 & 0.115 \\
AP       & 0.164        & 0.086        & TV       & 0.192 & 0.079 & TV       & 0.132        & 0.076        & TV       & 0.150 & 0.068 & AP       &              &              & TV       & 0.146 & 0.136 \\
WD       & 0.108        & 0.072        & UT       & 0.227 & 0.090 & UT       & 0.074        & 0.065        & UT       & 0.216 & 0.094 & WD       & 0.127        & 0.109        & UT       & 0.224 & 0.149 \\
FU       & 0.092        & 0.067        & GB       &       &       & GB       & 0.083        & 0.074        & GB       &       &       & FU       & 0.106        & 0.141        & GB       &       &       \\
PA       & 0.089        & 0.077        & SM       &       &       & SM       &              &              & SM       &       &       & PA       & 0.123        & 0.123        & SM       &       &       \\
PR       & 0.136        & 0.063        & WT       &       &       & WT       & 0.115        & 0.059        & WT       &       &       & PR       & 0.152        & 0.106        & WT       &       &       \\
CH       & 0.166        & 0.096        & WS       &       &       & WS       & 0.174        & 0.092        & WS       & 0.102 & 0.074 & CH       & 0.143        & 0.086        & WS       &       &       \\
PE       &              &              & RT       & 0.235 & 0.070 & RT       &              &              & RT       & 0.185 & 0.064 & PE       &              &              & RT       & 0.235 & 0.141 \\
RB       & 0.162        & 0.091        & FI       & 0.183 & 0.072 & FI       & 0.112        & 0.089        & FI       & 0.167 & 0.077 & RB       & 0.128        & 0.139        & FI       & 0.162 & 0.116 \\
LE       & 0.231        & 0.111        & SV       & 0.173 & 0.100 & SV       & 0.213        & 0.125        & SV       &       &       & LE       & 0.186        & 0.165        & SV       & 0.149 & 0.103 \\
GL       & 0.152        & 0.075        & GV       &       &       & GV       & 0.126        & 0.055        & GV       &       &       & GL       & 0.121        & 0.106        & GV       &       &       \\
MT       & 0.162        & 0.068        &          &       &       & MT       & 0.161        & 0.061        &          &       &       & MT       & 0.165        & 0.104        &          &       &    \\  
\bottomrule
\end{tabular}
\begin{minipage}[t]{1\textwidth}
\footnotesize 
\textit{Notes:} This table reports estimated impulse responses (IR) of industry-level volatilities to a recession attention shock, together with their corresponding standard deviations (StD).  All reported results correspond to horizon $h=1$.  Entries left blank indicate coefficients that are not selected by the LASSO procedure and are therefore statistically insignificant.
\end{minipage}

\end{table}

\begin{sidewaysfigure}[htp!]
	\centering
	\subfloat[$h=1$ ]
	{\includegraphics[width=0.4\textwidth]{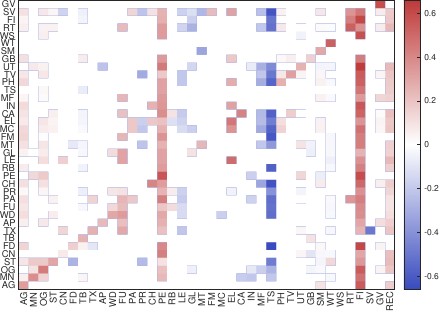}}
	\subfloat[$h=6$ ]
    {\includegraphics[width=0.4\textwidth]{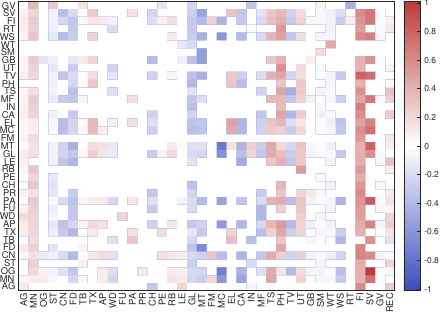}}\\
	\subfloat[$h=12$ ]
	{\includegraphics[width=0.4\textwidth]{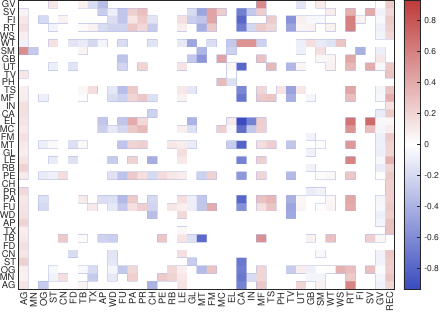}}
	\subfloat[$h=24$ ]
	{\includegraphics[width=0.4\textwidth]{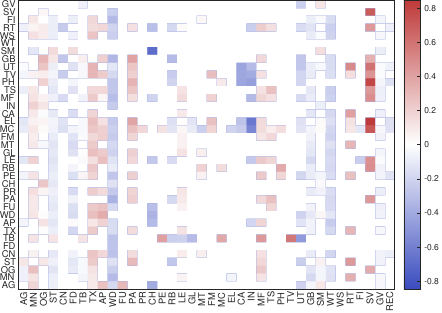}}	\\
	
    \caption{\textbf{Estimated High-Dimensional Local Projection Coefficients for Model 1.}
  \textit{Notes:} Vertical axis shows the local projection response of the dependent variable at horizon \(t+h\); horizontal axis lists the predictor variable.  Responses are shown separately for horizons \(h\in\{1,6,12,24\}\).
    }
	\label{f_Emp_1}
\end{sidewaysfigure}

\begin{sidewaysfigure}[htp!]
	\centering
	\subfloat[$h=1$ ]
	{\includegraphics[width=0.4\textwidth]{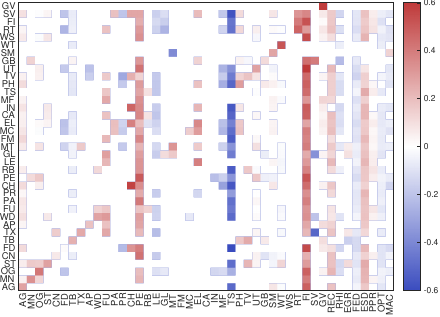}}
	\subfloat[$h=6$ ]
    {\includegraphics[width=0.4\textwidth]{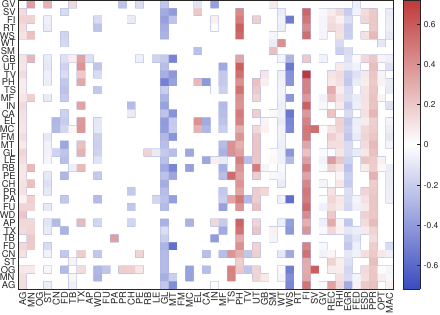}}\\
	\subfloat[$h=12$ ]
	{\includegraphics[width=0.4\textwidth]{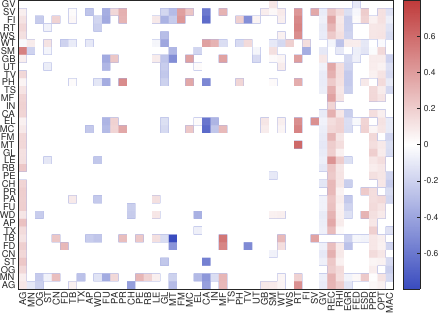}}
	\subfloat[$h=24$ ]
	{\includegraphics[width=0.4\textwidth]{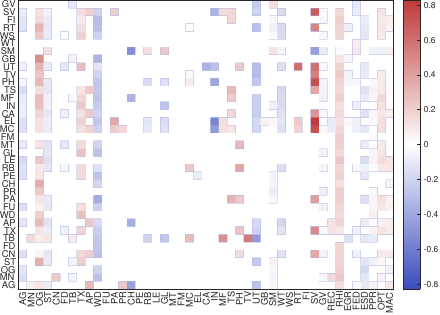}}	\\
	
    \caption{\textbf{Estimated High-Dimensional Local Projection Coefficients for Model 2.}
  \textit{Notes:} Vertical axis shows the local projection response of the dependent variable at horizon \(t+h\); horizontal axis lists the predictor variable.  Responses are shown separately for horizons \(h\in\{1,6,12,24\}\).
    }
	\label{f_Emp_2}
\end{sidewaysfigure}

\begin{sidewaysfigure}[htp!]
	\centering
	\subfloat[$h=1$ ]
	{\includegraphics[width=0.4\textwidth]{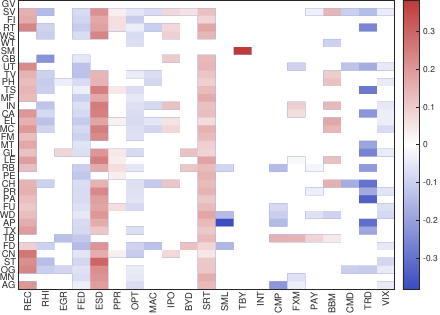}}
	\subfloat[$h=6$ ]
    {\includegraphics[width=0.4\textwidth]{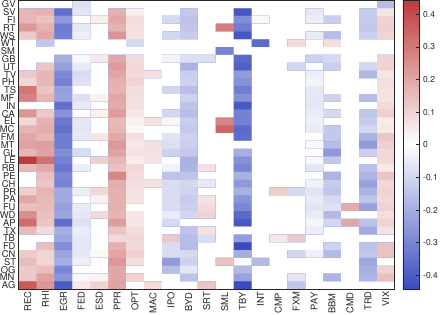}}\\
	\subfloat[$h=12$ ]
	{\includegraphics[width=0.4\textwidth]{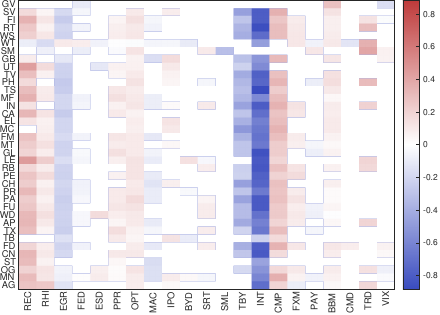}}
	\subfloat[$h=24$ ]
	{\includegraphics[width=0.4\textwidth]{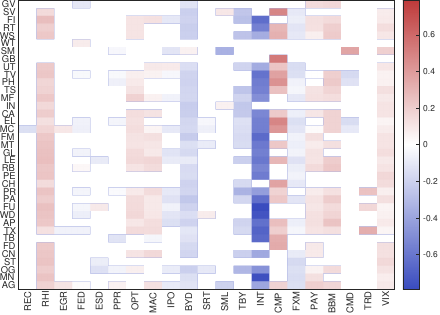}}	\\
	
    \caption{\textbf{Estimated High-Dimensional Local Projection Coefficients for Model 3.}
  \textit{Notes:} Vertical axis shows the local projection response of the dependent variable at horizon \(t+h\); horizontal axis lists the predictor variable.  Responses are shown separately for horizons \(h\in\{1,6,12,24\}\). For brevity, estimates for spillovers within the volatility series are omitted; only the responses of volatility to news-attention indices are reported.
    }
	\label{f_Emp_3}
\end{sidewaysfigure}

\begin{figure}[htp!]
	\centering
	\subfloat[AG ]
	{\includegraphics[width=0.4\textwidth]{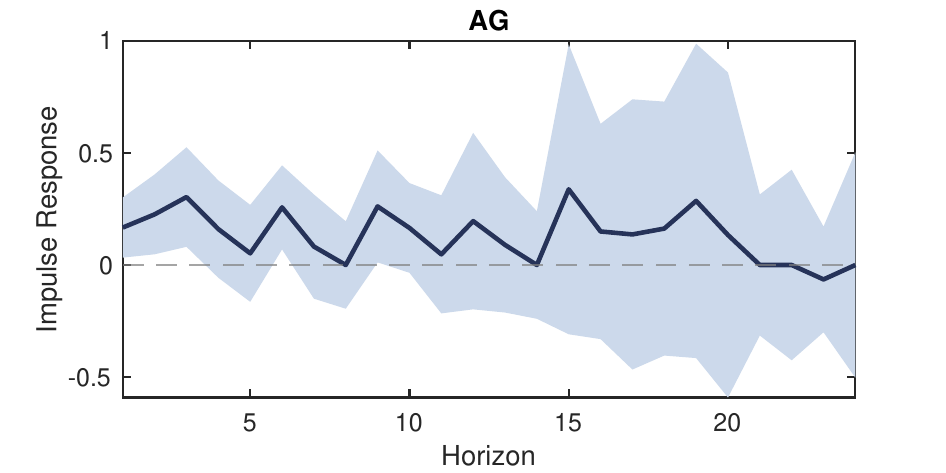}}
	\subfloat[OG]
    {\includegraphics[width=0.4\textwidth]{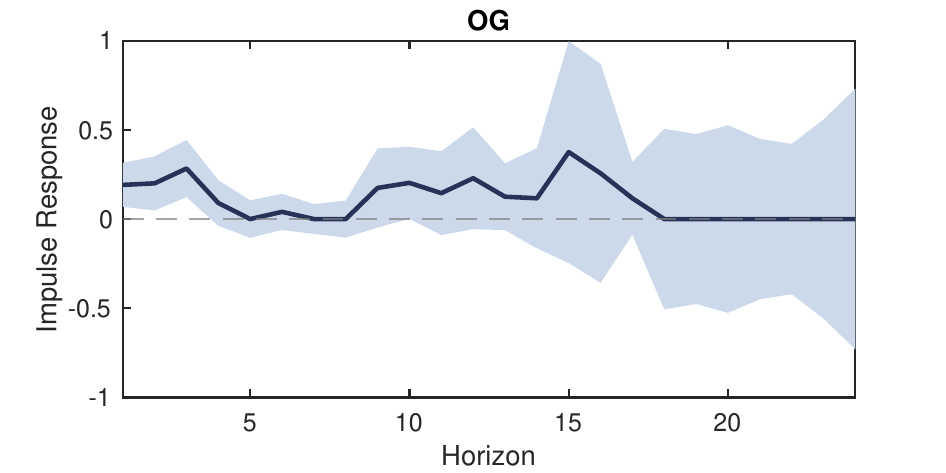}}\\
	\subfloat[ST]
	{\includegraphics[width=0.4\textwidth]{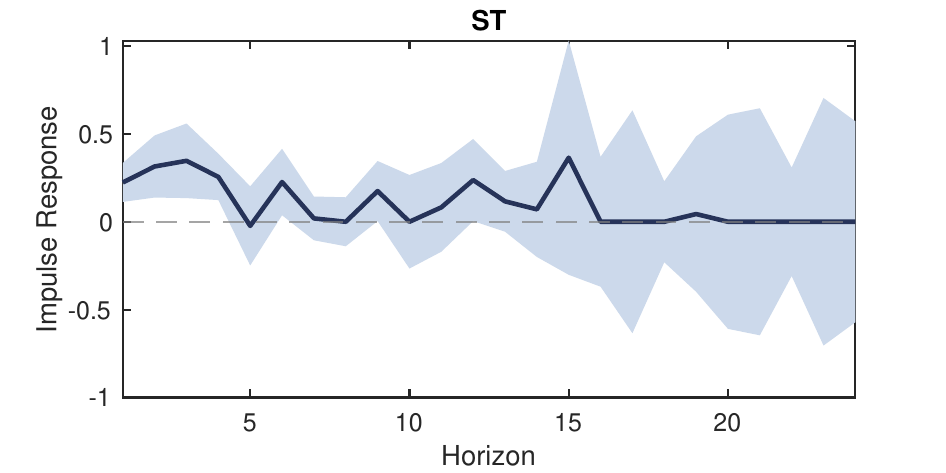}}
	\subfloat[CN ]
	{\includegraphics[width=0.4\textwidth]{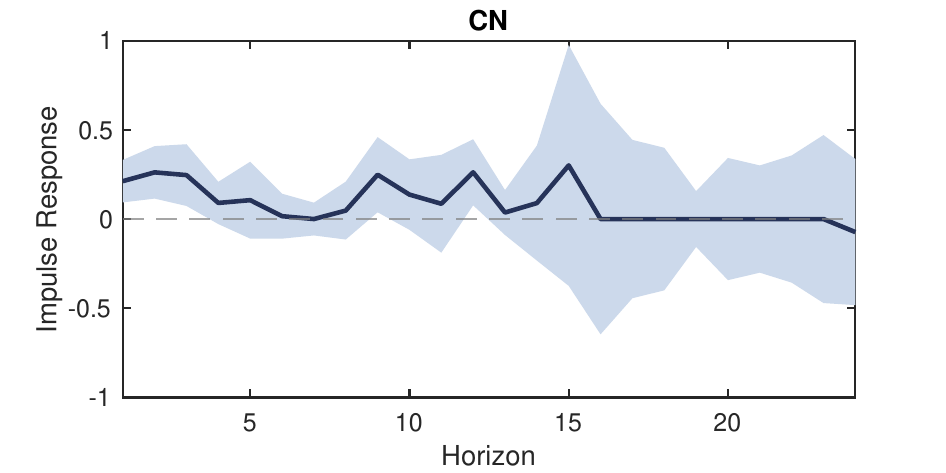}}	\\
    \subfloat[MF ]
    {\includegraphics[width=0.4\textwidth]{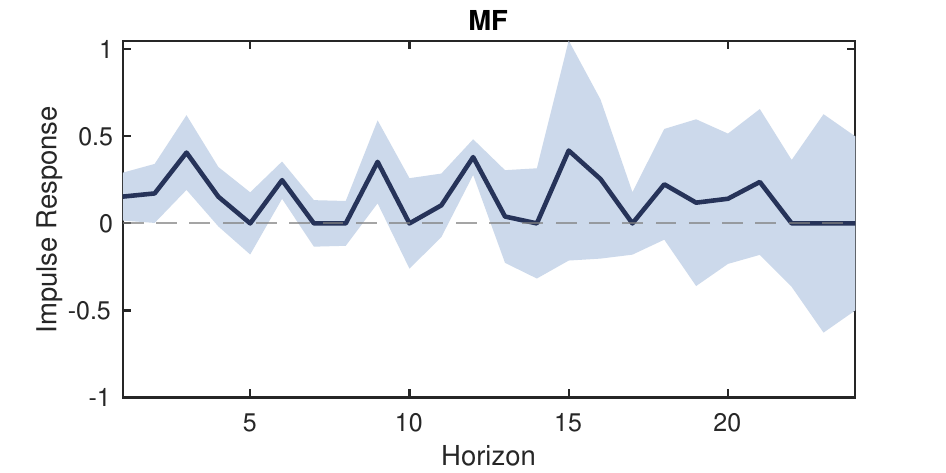}}
	\subfloat[FI]
	{\includegraphics[width=0.4\textwidth]{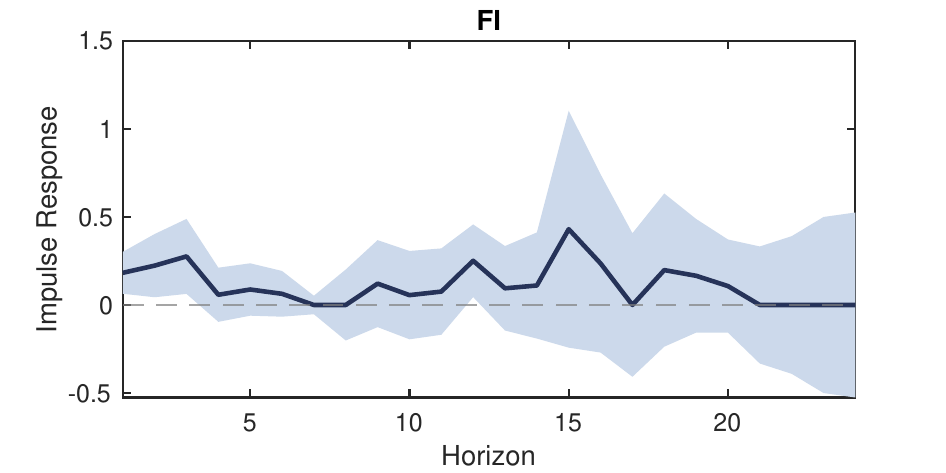}}	\\
    \caption{\textbf{Estimated Impulse Responses of Industry Volatilities to Recession News Attention.} \textit{Notes:} The solid black line represents the estimated impulse response of volatility to the recession-related news attention index, while the light blue shaded area denotes the associated 90\% confidence intervals.
    }
	\label{f_Emp_4}
\end{figure}

\section{Conclusion}\label{Sec5}

In this paper, we rigorously analyze the properties of the LP methodology within a HD framework, with a central focus on achieving robust long-horizon inference.  We integrate a general dependence structure into $h$-step ahead forecasting models via a flexible specification of the residual terms. Additionally, we study the corresponding HD covariance matrix estimation, explicitly addressing the complexity arising from the long-horizon setting. Extensive Monte Carlo simulations are conducted to substantiate the derived theoretical findings. 

In the empirical study, we utilize the proposed HD LP framework to revisit the impact of business news attention on U.S. industry-level stock volatility. Though the structural VAR models, as a powerful toolkit, have been widely used in the relevant literature \citep[e.g.,][]{diebold2009measuring, diebold2014network} to trace how volatility shocks propagate across markets or industries, such methods impose rigid assumptions on system dynamics. Our method, on the other hand, allows for more flexibility in capturing the effect of news shocks and the volatility spillover effects among industries. Finally, we compute impulse-response functions at different horizons, and summarize how news-driven volatility shocks evolve over time. 

{\footnotesize
\bibliography{Ref.bib}
}

{\small

\begin{appendices}

\section{Additional Results and Information}\label{AP.A}

\renewcommand{\thelemma}{A\arabic{lemma}}
\setcounter{lemma}{0} 
\renewcommand{\theproposition}{A\arabic{proposition}}
\setcounter{proposition}{0} 
\renewcommand{\theassumption}{A\arabic{assumption}}
\setcounter{assumption}{0} 

Appendix \ref{AP.A.0} provides a corollary to state the asymptotic distribution for the case with $N$ being fixed. In Appendix \ref{AP.A.1}, we briefly answer that how far we can go in approximating \eqref{def.xt} without knowing the underlying DGP. In Appendix  \ref{AP.A.2}, we explain how our study is connected with some of the existing literature. We justify Example \ref{Ex2} in Appendix \ref{AP.A.3}, provide additional empirical results in Appendix \ref{AP.A.4},  and document the detailed numerical implementation in Appendix \ref{AP.A.5}.

\subsection{Finite-Dimensional Setting}\label{AP.A.0}

In this appendix, we establish an asymptotic theory for the bias--corrected estimator $\hat{\tilde{\mathbf{a}}}_{\rm bc}$ under a finite-dimensional (fixed $N$) setting. In this case, the optimization problem in \eqref{def.argmin0} reduces to a classical LASSO estimation problem with a fixed number of parameters. Consequently, the restricted eigenvalue condition imposed in Assumption \ref{AS3}, which is essential in high-dimensional settings, is no longer required. Instead, standard regularity conditions that ensure the positive definiteness of the population covariance matrix \(\pmb{\Sigma}_{\mathbf{B}}\) are sufficient to establish the asymptotic normality of \(\hat{\tilde{\mathbf{a}}}_{\rm bc}\).

Moreover, in the finite--dimensional framework, the estimation of the asymptotic covariance matrix does not require the thresholding step introduced in \eqref{Def.GM}. Considering all together, these simplifications allow us to derive an asymptotic normality for \(\hat{\tilde{\mathbf{a}}}_{\rm bc}\) as a counterpart of the high--dimensional result in Theorem \ref{TM.Main5}. 

\begin{corollary}\label{TM.Main5_B}
Let Assumptions \ref{AS1}, \ref{AS2}, and \ref{AS4} hold. In addition, suppose that 
\(\lambda_{\min}(\pmb{\Sigma}_{\mathbf{B}}) > 0\) and 
\((d_{\pmb{\beta}}^{2} + \mu_{h}^{2 + 2/J})T^{-1/2} \to 0\). Then,
\begin{eqnarray*}
    \sqrt{T}\,
    (\hat{\pmb{\Omega}}_z \hat{\pmb{\Omega}}_h \hat{\pmb{\Omega}}_z)^{-1/2}
    \VEC(\hat{\tilde{\mathbf{a}}}_{\rm bc} - \tilde{\mathbf{A}})
    \xrightarrow{~D~} \mathcal{N}(\mathbf{0},\mathbf{I}),
\end{eqnarray*}
where $\mathbf{I}$ and $\mathbf{0}$ denote conformable identity matrix and zero vector, respectively.
\end{corollary}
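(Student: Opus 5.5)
The plan is to mimic the proof of Theorem \ref{TM.Main5}, using the fact that $N$ is fixed to remove the high-dimensional machinery. First I would write the debiased estimator in its standard decomposition. Using $\mathbf{x}_{t+h}=\mathbf{Z}_t^\top\VEC(\tilde{\mathbf{A}})+\mathbf{u}_{t,h}$ in \eqref{def.node2}, and setting $\hat{\pmb{\Sigma}}_z\coloneqq\frac{1}{T-h}\sum_{t=1}^{T-h}\mathbf{Z}_t\mathbf{Z}_t^\top$, we get
\begin{eqnarray*}
\sqrt{T}\,\VEC(\hat{\tilde{\mathbf{a}}}_{\rm bc}-\tilde{\mathbf{A}}) = \hat{\pmb{\Omega}}_z\frac{\sqrt{T}}{T-h}\sum_{t=1}^{T-h}\mathbf{Z}_t\mathbf{u}_{t,h} + \sqrt{T}\,(\mathbf{I}-\hat{\pmb{\Omega}}_z\hat{\pmb{\Sigma}}_z)\VEC(\hat{\tilde{\mathbf{a}}}-\tilde{\mathbf{A}}).
\end{eqnarray*}
With $N$ and $p$ fixed, $E[\hat{\pmb{\Sigma}}_z]=\pmb{\Sigma}_{\mathbf{B}}\otimes\mathbf{I}_N$. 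The elementwise LLN argument behind Theorem \ref{TM.Main1}, applied with $\log N$ replaced by a constant, gives $\|\hat{\pmb{\Sigma}}_z-\pmb{\Sigma}_{\mathbf{B}}\otimes\mathbf{I}_N\|=O_P(d_{\pmb{\beta}}^2T^{-1/2})=o_P(1)$. Because $\lambda_{\min}(\pmb{\Sigma}_{\mathbf{B}})>0$, the node-wise regressions \eqref{def.node} are finite-dimensional penalized least squares with $\tilde\gamma_i\to0$. Hence $\hat{\mathbf{b}}_i$ and $\hat\tau_i^2$ are consistent for the population regression coefficients and residual variances, and $\hat{\pmb{\Omega}}_z\to_P(\pmb{\Sigma}_{\mathbf{B}}\otimes\mathbf{I}_N)^{-1}$.

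Second, I would show the remainder is $o_P(1)$. By the KKT conditions of \eqref{def.node}, $\|\mathbf{I}-\hat{\pmb{\Omega}}_z\hat{\pmb{\Sigma}}_z\|_{\max}\lesssim\max_i\tilde\gamma_i/\hat\tau_i^2$. Theorem \ref{TM.Main2}.1(b), with $d_{\mathscr A}$ bounded, gives $\|\VEC(\hat{\tilde{\mathbf{a}}}-\tilde{\mathbf{A}})\|_1=O_P(\mu_h^{1+1/J}T^{-1/2})$. The remainder is therefore $O_P(\sqrt T\,\mu_h^{2+2/J}T^{-1})=O_P(\mu_h^{2+2/J}T^{-1/2})=o_P(1)$ by the stated rate condition.

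Third, the leading term requires a CLT for $T^{-1/2}\sum_t\mathbf{Z}_t\mathbf{u}_{t,h}$, a fixed-dimensional sum whose summands are an $(h-1)$-dependent-type sequence with $h$ possibly diverging. By Cramér--Wold it suffices to treat $\mathbf{c}^\top\sum_t\mathbf{Z}_t\mathbf{u}_{t,h}$. I would use $m$-dependent approximation. Truncate $\mathbf{X}_t$ to $\sum_{\ell\le m}\mathbf{B}_\ell\pmb\varepsilon_{t-\ell}$; by Assumption \ref{AS4}.2 the error is $O(m^{1/2-c_b})$ in $L^2$. The product with $\mathbf{u}_{t,h}$ is then $(h+m)$-dependent, and Assumption \ref{AS4}.1 controls the dependence of $\mathbf{u}_{t,h}$ on distant innovations. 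A blocking (big-block/small-block) argument with block length much larger than $h+m$ but $o(T)$, which is possible since $h/T\to0$, then gives asymptotic normality via Lyapunov. The required moments come from the $J$-th moment bounds in Assumptions \ref{AS1}--\ref{AS2}, and the limiting variance is $\pmb{\Omega}_h$ from \eqref{def.Cov}. The normalization uses $\pmb\Omega_h$, so we must verify $\lambda_{\min}(\pmb\Omega_h)$ is bounded away from zero relative to its scale. I expect this CLT with a diverging dependence range, normalized by a possibly growing $\pmb\Omega_h$, to be the main obstacle. I would first try a martingale decomposition along the innovation filtration ($\mathbf{Z}_t\mathbf{u}_{t,h}=\sum_{k}\mathcal P_{t+k}(\cdot)$), which sidesteps blocking.

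Finally, Theorem \ref{TM.Main4}.1 with $N$ fixed gives $\|\hat{\pmb\Omega}_h-\pmb\Omega_h\|=O_P(\sqrt{h/T})=o_P(1)$, so no thresholding is needed. Combining this with $\hat{\pmb\Omega}_z\to_P(\pmb\Sigma_{\mathbf B}\otimes\mathbf I_N)^{-1}$ and the continuous mapping theorem lets us replace the population sandwich by $\hat{\pmb\Omega}_z\hat{\pmb\Omega}_h\hat{\pmb\Omega}_z$. Slutsky's theorem then delivers $\mathcal N(\mathbf 0,\mathbf I)$.
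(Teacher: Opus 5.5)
Your proposal is correct and follows the route the paper intends. The paper omits the proof as the fixed-$N$ specialization of the proof of Theorem~\ref{TM.Main5}, and your outline is that specialization: the remainder $(\mathbf I-\hat{\pmb\Omega}_z\frac{1}{T-h}\sum_t\mathbf Z_t\mathbf Z_t^\top)\VEC(\hat{\tilde{\mathbf a}}-\tilde{\mathbf A})$ is removed by the node-wise KKT bound times Theorem~\ref{TM.Main2}.1(b), the score $T^{-1/2}\sum_t\mathbf Z_t\mathbf u_{t,h}$ gets a projection/martingale CLT, and Theorem~\ref{TM.Main4}.1 with Slutsky handles the studentization. The extra ingredients are Cram\'er--Wold, which you supply, to pass from the scalar $\pmb\rho_{\mathbf a}$ statement to the vector one, and the immediate check that Assumption~\ref{AS3} holds for fixed $N$ under $\lambda_{\min}(\pmb\Sigma_{\mathbf B})>0$ and $d_{\pmb\beta}^2T^{-1/2}\to0$; the nondegeneracy of $\pmb\Omega_h$ you flag is implicitly presumed by the statement itself, as in the paper.
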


Since this corollary follows directly from Theorem~\ref{TM.Main5} under the finite-dimensional specialization, its proof is omitted.

\subsection{Approximation of \texorpdfstring{HDMA($\infty$)}{HDMA}}\label{AP.A.1}

In this section, we seek to answer that how far we can go in approximating \eqref{def.xt} without knowing the underlying DGP such as \eqref{def.xth}. 

As in \cite{DZ2023}, one may adopt an autoregressive (AR) model to approximate \eqref{def.xt}. Without loss of generality, we examine the suitability of AR($1$) model, i.e.,  finding $\{\tilde{\mathbf{x}}_t\}$ which are close to $\{\mathbf{x}_t \}$ in some sense. Specifically, we consider

\begin{eqnarray}\label{def.regression1}
    \tilde{\mathbf{x}}_t = \mathbf{A} \tilde{\mathbf{x}}_{t-1} + \pmb{\varepsilon}_t,
\end{eqnarray}
where $\pmb{\varepsilon}_t$ is the same as that defined in \eqref{def.xt}, and $\mathbf{A}$ is an $N\times N$ matrix.

Therefore, the question becomes whether there exists an $\mathbf{A}$ such that the difference between $\{\mathbf{x}_t\}$ and $\{\tilde{\mathbf{x}}_t\}$ can be relatively small. We answer this question in the following proposition. Firstly, we introduce an additional assumption.

\begin{assumption}\label{AS.A1}
    Suppose that 
    \begin{enumerate}[leftmargin=24pt, parsep=2pt, topsep=2pt]
    \item $\limsup_{N}\sum_{\ell =0}^{\infty}\ell \cdot\|\mathbf{B}_\ell\|_2<\infty$,
    \item $ \liminf_{N}\lambda_{\min}(\bar{\mathbf{B}}\bar{\mathbf{B}}^\top) >0$, where $\bar{\mathbf{B}} = \sum_{\ell=0}^\infty \mathbf{B}_\ell$. 
    \end{enumerate}
\end{assumption}

This fundamental assumption helps us establish some basic results such as the HD BN decomposition in Lemma \ref{LMB4}, which might be used elsewhere.

\begin{proposition}\label{AP.Prop1}

Suppose Assumptions \ref{AS1} and \ref{AS.A1} hold. There exists a matrix $\mathbf{A}$ satisfying that

\begin{enumerate}[leftmargin=24pt, parsep=2pt, topsep=2pt]
    \item $\limsup_{N}\|\mathbf{A}\|_2<1$,
    \item $\liminf_{N}\lambda_{\min}[(\mathbf{I}-\mathbf{A})(\mathbf{I}-\mathbf{A})^\top]>0$,
\end{enumerate}
such that $\max_{t\in [T]}  \| \sum_{s=1}^t( \mathbf{x}_s - \tilde{\mathbf{x}}_s ) \|_2 =O_P(\sqrt{N}).$
\end{proposition}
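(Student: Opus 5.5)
The plan is to compare the two processes through a high-dimensional Beveridge--Nelson (BN) decomposition (Lemma \ref{LMB4}). The main obstacle, flagged below, is showing that the forced choice of $\mathbf{A}$ satisfies $\limsup_N\|\mathbf{A}\|_2<1$.

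\emph{BN decomposition of $\mathbf{x}_t$.} Let $\tilde{\mathbf{B}}_\ell\coloneqq\sum_{k>\ell}\mathbf{B}_k$ and $\mathbf{w}_t\coloneqq\sum_{\ell\ge0}\tilde{\mathbf{B}}_\ell\pmb{\varepsilon}_{t-\ell}$. Then
\begin{eqnarray*}
\mathbf{x}_t=\bar{\mathbf{B}}\pmb{\varepsilon}_t+\mathbf{w}_{t-1}-\mathbf{w}_t ,
\end{eqnarray*}
so that $\sum_{s=1}^t\mathbf{x}_s=\bar{\mathbf{B}}\sum_{s=1}^t\pmb{\varepsilon}_s+\mathbf{w}_0-\mathbf{w}_t$. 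Assumption \ref{AS.A1}.1 gives $\sum_\ell\|\tilde{\mathbf{B}}_\ell\|_2\le\sum_\ell \ell\|\mathbf{B}_\ell\|_2<\infty$. Combined with Assumption \ref{AS1}, this yields $E\|\mathbf{w}_t\|_2^2=\sum_\ell\|\tilde{\mathbf{B}}_\ell\|^2\le N\sum_\ell\|\tilde{\mathbf{B}}_\ell\|_2^2=O(N)$.

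\emph{BN decomposition of $\tilde{\mathbf{x}}_t$.} Take the stationary solution of \eqref{def.regression1}, $\tilde{\mathbf{x}}_t=\sum_{\ell\ge0}\mathbf{A}^\ell\pmb{\varepsilon}_{t-\ell}$. Its long-run matrix is $(\mathbf{I}-\mathbf{A})^{-1}$, and
\begin{eqnarray*}
\tilde{\mathbf{x}}_t=(\mathbf{I}-\mathbf{A})^{-1}\pmb{\varepsilon}_t+\tilde{\mathbf{w}}_{t-1}-\tilde{\mathbf{w}}_t ,\qquad \tilde{\mathbf{w}}_t=\sum_\ell \Big(\sum_{k>\ell}\mathbf{A}^k\Big)\pmb{\varepsilon}_{t-\ell}=\mathbf{A}(\mathbf{I}-\mathbf{A})^{-1}\tilde{\mathbf{x}}_t .
\end{eqnarray*}

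\emph{Choice of $\mathbf{A}$.} Because $\sum_{s\le t}\pmb{\varepsilon}_s$ has $\ell_2$-norm of order $\sqrt{tN}$, the leading terms must cancel exactly. This forces $(\mathbf{I}-\mathbf{A})^{-1}=\bar{\mathbf{B}}$, i.e. $\mathbf{A}\coloneqq\mathbf{I}-\bar{\mathbf{B}}^{-1}$. This choice is well defined, since Assumption \ref{AS.A1}.2 makes $\bar{\mathbf{B}}$ invertible with $\|\bar{\mathbf{B}}^{-1}\|_2$ bounded uniformly in $N$.

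\emph{Property 2.} Here $(\mathbf{I}-\mathbf{A})(\mathbf{I}-\mathbf{A})^\top=\bar{\mathbf{B}}^{-1}\bar{\mathbf{B}}^{-\top}$. Its smallest eigenvalue is $\|\bar{\mathbf{B}}\|_2^{-2}$, and $\|\bar{\mathbf{B}}\|_2\le\sum_\ell\|\mathbf{B}_\ell\|_2<\infty$ by Assumption \ref{AS.A1}.1 (with $\ell=0$ handled separately). Hence the smallest eigenvalue is bounded away from zero.

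\emph{Property 1 (main obstacle).} I expect $\limsup_N\|\mathbf{A}\|_2<1$ to be the hardest step. It is needed both for the statement and for the geometric bound $\|\tilde{\mathbf{w}}_t\|_2=O_P(\sqrt N)$, obtained via $\sum_k\|\mathbf{A}^k\|_2\le(1-\|\mathbf{A}\|_2)^{-1}$. The approach I would try first is to write $\mathbf{A}=\bar{\mathbf{B}}^{-1}(\bar{\mathbf{B}}-\mathbf{I})=\bar{\mathbf{B}}^{-1}\sum_{\ell\ge1}\mathbf{B}_\ell$, and bound this using the normalisation $\mathbf{B}_0=\mathbf{I}$ together with Assumption \ref{AS.A1}. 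If that does not suffice, the fallback is to work with the spectral radius: use the identification freedom to rescale into an equivalent norm in which $\mathbf{A}$ is a contraction. I would also check whether the identification freedom in $\pmb{\varepsilon}_t$ discussed after \eqref{def.xt} permits replacing $\bar{\mathbf{B}}$ by a suitably rotated version. Either way, all that is required is uniform summability of $\|\mathbf{A}^k\|_2$.

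\emph{Conclusion.} With the leading terms cancelled,
\begin{eqnarray*}
\sum_{s=1}^t(\mathbf{x}_s-\tilde{\mathbf{x}}_s)=(\mathbf{w}_0-\tilde{\mathbf{w}}_0)-(\mathbf{w}_t-\tilde{\mathbf{w}}_t).
\end{eqnarray*}
The first bracket is $O_P(\sqrt N)$ by the second-moment bounds above. For the maximum over $t\in[T]$, a plain union bound over $t$ would lose a factor. Instead I would bound $\max_t\|\mathbf{w}_t\|_2$ through the tail decay of $\tilde{\mathbf{B}}_\ell$ implied by Assumption \ref{AS.A1}.1 and the moments in Assumption \ref{AS1}, and treat $\tilde{\mathbf{w}}_t$ analogously using the geometric decay of $\mathbf{A}^k$. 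If the resulting bound is only $o_P(\sqrt{NT})$ rather than $O_P(\sqrt N)$, I would sharpen it by coupling the two remainders term by term, writing them as the single linear process $\sum_\ell(\tilde{\mathbf{B}}_\ell-\sum_{k>\ell}\mathbf{A}^k)\pmb{\varepsilon}_{t-\ell}$.
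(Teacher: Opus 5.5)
\textbf{Gap 1: Property 1 is false under the stated hypotheses.} Your route is the paper's: BN-decompose both $\mathbf{x}_t$ and $\tilde{\mathbf{x}}_t$, take $\mathbf{A}=\mathbf{I}-\bar{\mathbf{B}}^{-1}$ so that the $\sum_{s\le t}\pmb{\varepsilon}_s$ terms cancel, and bound the stationary remainders. But the step you flag as the main obstacle cannot be proved from Assumptions \ref{AS1} and \ref{AS.A1}, so none of your fallbacks can close it.

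Take $\mathbf{B}_0=\mathbf{I}_N$, $\mathbf{B}_1=-0.9\,\mathbf{I}_N$ and $\mathbf{B}_\ell=\mathbf{0}$ for $\ell\ge 2$. Then $\sum_\ell \ell\|\mathbf{B}_\ell\|_2=0.9$ and $\bar{\mathbf{B}}=0.1\,\mathbf{I}_N$, so Assumption \ref{AS.A1} holds, even with $\mathbf{B}_0=\mathbf{I}$. Yet the forced choice is $\mathbf{A}=-9\,\mathbf{I}_N$.
\begin{itemize}
\item Its spectral radius is $9$, so no equivalent norm makes it a contraction.
\item $\mathbf{x}_t$ and $\tilde{\mathbf{x}}_t$ share the same $\pmb{\varepsilon}_t$, so there is no rotation freedom either.
\end{itemize}
Your own forcing argument then shows that no admissible $\mathbf{A}$ exists at all. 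If $\|\mathbf{A}\|_2\le\rho<1$, then $\sigma_{\min}\{(\mathbf{I}-\mathbf{A})^{-1}\}=1/\|\mathbf{I}-\mathbf{A}\|_2\ge(1+\rho)^{-1}>1/2$. Hence in this example $\|\{\bar{\mathbf{B}}-(\mathbf{I}-\mathbf{A})^{-1}\}\sum_{s\le T}\pmb{\varepsilon}_s\|_2\ge 0.4\,\|\sum_{s\le T}\pmb{\varepsilon}_s\|_2$. That is of order $\sqrt{NT}$, while the remainders at $t=T$ are $O_P(\sqrt N)$. The proposition is therefore false as stated.

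The paper's proof does not resolve this; it simply proceeds ``provided the conditions $\|\mathbf{A}\|_2<1$ and $\lambda_{\min}(\mathbf{I}-\mathbf{A})>0$''. What is missing is an extra hypothesis such as $\limsup_N\|\mathbf{I}-\bar{\mathbf{B}}^{-1}\|_2<1$. Under it, Property 2 follows at once from $\sigma_{\min}(\mathbf{I}-\mathbf{A})\ge 1-\|\mathbf{A}\|_2$. That also removes your unjustified ``$\ell=0$ handled separately'': Assumption \ref{AS.A1}.1 gives no bound on $\|\mathbf{B}_0\|_2$, and $\mathbf{B}_0=\mathbf{I}$ is not a hypothesis of this proposition.

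\textbf{Gap 2: the maximum over $t$.} Merging the two remainders into one linear process does not close this. After exact cancellation, $\sum_{s\le t}(\mathbf{x}_s-\tilde{\mathbf{x}}_s)=\mathbf{r}_0-\mathbf{r}_t$, where $\mathbf{r}_t$ is in general a non-degenerate stationary process. For fixed $N$ its running maximum grows with $T$.

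For instance, take $\mathbf{x}_t=\pmb{\varepsilon}_t+0.5\,\pmb{\varepsilon}_{t-1}$. This gives $\mathbf{A}=\tfrac13\mathbf{I}_N$, which satisfies both properties. A direct computation gives $\sum_{s\le t}(\mathbf{x}_s-\tilde{\mathbf{x}}_s)=\tfrac16(\tilde{\mathbf{x}}_{t-1}-\tilde{\mathbf{x}}_{-1})$. With $N=1$ and Gaussian errors, its maximum over $t\in[T]$ is of order $\sqrt{\log T}$. Under only the $J$ moments of Assumption \ref{AS1} it can be polynomially large in $T$.

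So the $O_P(\sqrt N)$ bound for the maximum needs a rate condition linking $N$ and $T$, or the claim must be restricted to fixed $t$. An example of such a condition is $\log T=O(N)$ together with sub-exponential tails. That is what the paper's Bernstein step through Lemma \ref{LMB3} tacitly assumes, although Assumption \ref{AS1} does not supply it.
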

The condition $\limsup_{N}\|\mathbf{A}\|_2<1$ is consistent with Assumption A.1.vi of \cite{MIAO2023155}. We provide some remarks below.

\medskip

\noindent\textbf{Remark}:

(a). As shown in the proof, we have  

\begin{eqnarray}\label{def.AB}
    \bar{\mathbf{B}}=(\mathbf{I}-\mathbf{A})^{-1}\quad \text{or}\quad\mathbf{A} = \mathbf{I} - \bar{\mathbf{B}}^{-1}.
\end{eqnarray}
In view of \eqref{def.AB}, if the true DGP of $\{\mathbf{x}_t\}$ is indeed a HD VAR(1) process such as

\begin{eqnarray*}
    \mathbf{x}_t = \mathbf{a} \mathbf{x}_{t-1} + \pmb{\varepsilon}_t,
\end{eqnarray*}
we can further obtain that

\begin{eqnarray}\label{def.Aa}
    \mathbf{A} = \mathbf{I} - \bar{\mathbf{B}}^{-1} = \mathbf{I} - \left(\sum_{\ell=1}^\infty\mathbf{a}^\ell\right)^{-1}= \mathbf{I} - \left((\mathbf{I}-\mathbf{a})^{-1}\right)^{-1} =\mathbf{a},
\end{eqnarray}
which is exactly what we are hoping for when the model is correctly specified. It is worth mentioning that in \eqref{def.Aa}, it is possible to allow for  $\mathbf{A} =\mathbf{a}=\mathbf{0}_{N}$.

(b). For the case with $N$ being fixed, Proposition \eqref{AP.Prop1} yields

\begin{eqnarray*}
    \frac{1}{T}\max_{t\in [T]} \big\| \sum_{s=1}^t( \mathbf{x}_s - \tilde{\mathbf{x}}_s ) \big\|_2 =O_P\left(\frac{1}{T}\right).
\end{eqnarray*}
It then infers that, for the typical fixed dimensional data, one does not need to worry about model misspecification too much.  

(c). For the high dimensional case without knowing the DGP of \eqref{def.xt}, the rate given in Proposition \ref{AP.Prop1} may not be further reduced without additional information. Moreover, the non-negligible bias will generate more accumulated errors when one tries to implement regression with $\{\mathbf{x}_t\}$ practically.

\subsection{Connection with the Existing Literature}\label{AP.A.2}

To close our theoretical investigation, we discuss how our study connects with the existing literature. It is  recognized in the LP literature that under certain conditions the HAC estimator of the covariance matrix can suffer from  small-sample bias \citep[][]{herbst2024bias,xu2023local}, and may even perform worse than the simpler Eicker-Huber-White estimator, which ignores autocorrelation in the errors. To understand this issue, we rewrite the LP regression as follows:
\begin{eqnarray}
    \mathbf{x}_{t+h}&=&\mathbf{A}_1\mathbf{x}_{t} + \cdots + \mathbf{A}_{p}\mathbf{x}_{t-p+1} +\mathbf{u}_{t,h}\notag \\
    &=& (\mathbf{x}_{t}^\top  \otimes \mathbf{I}_N) \VEC(\mathbf{A}_1)+(\mathbf{X}_{t-1}^{\ast\top} \otimes \mathbf{I}_N) \VEC(\tilde{\mathbf{A}}^\ast)+\mathbf{u}_{t,h},
\end{eqnarray}
where $\tilde{\mathbf{A}}^\ast\coloneqq (\mathbf{A}_2,\ldots, \mathbf{A}_{p})$ and $\mathbf{X}_{t-1}^{\ast\top}=(\mathbf{x}_{t-1}^\top, \cdots, \mathbf{x}_{t-p+1}^\top)^\top$. Since we are primarily interested in inference for $\VEC(\mathbf{A}_1)$, standard multivariate regression theory implies that the relevant covariance is obtained by projecting out $\mathbf{X}_{t-1}^{\ast}$ from $\mathbf{x}_{t}$: $\pmb{\varepsilon}^\ast_t=\mathbf{x}_{t}-E[\mathbf{x}_{t}|\mathbf{X}_{t-1}^{\ast}] =\pmb{\varepsilon}_t+ \mathbf{A}_p\mathbf{x}_{t-p}$. Eventually, it requires estimation of the following covariance matrix:
\begin{eqnarray}
    \pmb{\Omega}_1 &\coloneqq&  \frac{1}{T-h}\sum_{|t-k|\le h}^{T-h} E\!\left[(\pmb{\varepsilon}^\ast_t  \otimes \mathbf{I}_N)\mathbf{u}_{t,h} \mathbf{u}_{k,h}^\top(\pmb{\varepsilon}_k^{\ast\top} \otimes \mathbf{I}_N)\right].
\end{eqnarray}
Notably, if the second term in $\pmb{\varepsilon}^\ast_t$ were absent, we would obtain
\begin{eqnarray}\label{discuss1}
    \pmb{\Omega}_1 &=&  \frac{1}{T-h}\sum_{|t-k|\le h}^{T-h} E\!\left[(\pmb{\varepsilon}_t  \otimes \mathbf{I}_N)\mathbf{u}_{t,h} \mathbf{u}_{k,h}^\top(\pmb{\varepsilon}_k^\top \otimes \mathbf{I}_N)\right] \notag \\
    &=&E\!\left[(\pmb{\varepsilon}_1  \otimes \mathbf{I}_N)\mathbf{u}_{1,h} \mathbf{u}_{1,h}^\top(\pmb{\varepsilon}_1^\top \otimes \mathbf{I}_N)\right],
\end{eqnarray}
which implies that no autocorrelation adjustment would be required for estimating $\pmb{\Omega}_1$.

To address this issue in practice, the lag-augmented LP regression proposed by \cite{montiel2021local} is often employed. In this specification, we include one additional lag:

\begin{eqnarray}\label{def.xth_new}
    \mathbf{x}_{t+h}&=&\mathbf{A}_1\mathbf{x}_{t} + \cdots + \mathbf{A}_{p}\mathbf{x}_{t-p+1} + \mathbf{A}_{p+1}\mathbf{x}_{t-p} +\mathbf{u}_{t,h}\notag \\
    &=& (\mathbf{x}_{t}^\top  \otimes \mathbf{I}_N) \VEC(\mathbf{A}_1)+(\mathbf{X}_{t-1}^\top \otimes \mathbf{I}_N) \VEC(\bar{\mathbf{A}}^\ast)+\mathbf{u}_{t,h},
\end{eqnarray}
where $\bar{\mathbf{A}}^\ast\coloneqq (\mathbf{A}_2,\ldots, \mathbf{A}_{p+1})$. Then the inference of $\VEC(\mathbf{A}_1)$ only requires the estimation of \eqref{discuss1} in which the autocorrelation is not involved. While this approach works reasonably well for low dimensional models, in the HD setting, such a strategy may fail. Loosely speaking, adding one additional lag term will automatically include $N^2$ unknown parameters in the regression, as a result it may lead to significant efficiency loss. Not to mention additional restrictions need to be imposed in this scenario.

\subsection{Justification of Example \ref{Ex2}}\label{AP.A.3}

\begin{proof}[Justification of Example \ref{Ex2}]
    \item 
    
Similar to Example \ref{Ex1}, we have 

\begin{eqnarray}\label{Eq.B1}
    \mathbf{X}_{t+h}=\tilde{\mathbf{A}}_p^{h}\mathbf{X}_{t} +\mathbf{U}_{t,h},
\end{eqnarray}
where $\mathbf{U}_{t,h} \coloneqq \tilde{\mathbf{A}}_p^{h-1}\pmb{\mathcal{E}}_{t+1}+ \cdots+\tilde{\mathbf{A}}_p\pmb{\mathcal{E}}_{t+h-1}+\pmb{\mathcal{E}}_{t+h}.$

Note that $\mathbf{S}_p$ satisfies $\mathbf{S}_{p}^\top \mathbf{S}_{p} =\diag\{\mathbf{I}_N,  \mathbf{0}_{N(p-1)} \}$ and $ \mathbf{S}_{p}\mathbf{S}_{p}^\top = \mathbf{I}_N$. Using $\mathbf{S}_p$ and \eqref{Eq.B1}, we write further

\begin{eqnarray*} 
     \mathbf{x}_{t+h} &=& \mathbf{S}_{p}\tilde{\mathbf{A}}_p^{h}\mathbf{X}_t + \mathbf{S}_{p}(\tilde{\mathbf{A}}_p^{h-1}\pmb{\mathcal{E}}_{t+1}+ \cdots+\tilde{\mathbf{A}}_p\pmb{\mathcal{E}}_{t+h-1}+\pmb{\mathcal{E}}_{t+h})\notag \\
     &=& \mathbf{S}_{p}\tilde{\mathbf{A}}_p^{h}\mathbf{X}_t + \mathbf{S}_{p}(\tilde{\mathbf{A}}_p^{h-1}\mathbf{S}_{p}^\top \mathbf{S}_{p}\pmb{\mathcal{E}}_{t+1}+ \cdots+\tilde{\mathbf{A}}_p\mathbf{S}_{p}^\top \mathbf{S}_{p}\pmb{\mathcal{E}}_{t+h-1}+\mathbf{S}_{p}^\top \mathbf{S}_{p}\pmb{\mathcal{E}}_{t+h})\notag \\
     &=& \mathbf{S}_{p}\tilde{\mathbf{A}}_p^{h}\mathbf{X}_t + \mathbf{S}_{p}(\tilde{\mathbf{A}}_p^{h-1}\mathbf{S}_{p}^\top \pmb{\varepsilon}_{t+1}+ \cdots+\tilde{\mathbf{A}}_p\mathbf{S}_{p}^\top \pmb{\varepsilon}_{t+h-1}+\mathbf{S}_{p}^\top  \pmb{\varepsilon}_{t+h})\notag \\
     &=& \mathbf{S}_{p}\tilde{\mathbf{A}}_p^{h}\mathbf{X}_t + \mathbf{u}_{t,h},
\end{eqnarray*}
where $\mathbf{u}_{t,h} \coloneqq \tilde{\mathbf{a}}_p^{h-1}\pmb{\varepsilon}_{t+1}+ \cdots+ \tilde{\mathbf{a}}_p\pmb{\varepsilon}_{t+h-1}+\pmb{\varepsilon}_{t+h}$ and $\tilde{\mathbf{a}}_p^\ell\coloneqq\mathbf{S}_{p} \tilde{\mathbf{A}}_p^{\ell}\mathbf{S}_{p}^\top$. Then the statements provided in Example \ref{Ex2} become obvious.
\end{proof}

\subsection{Additional Empirical Results}\label{AP.A.4}

In this appendix, we report robustness checks for the empirical results. Specifically, we replace historical volatility with realized volatility. The re-estimated local projection coefficients for Models 1, 2, and 3 are presented in Figures \ref{f_Emp_1_RV}, \ref{f_Emp_2_RV}, and \ref{f_Emp_3_RV}, respectively.

\begin{sidewaysfigure}[htp!]
	\centering
	\subfloat[$h=1$ ]
	{\includegraphics[width=0.4\textwidth]{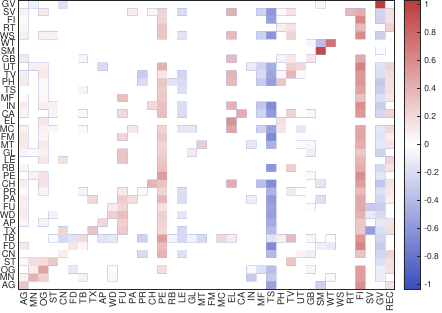}}
	\subfloat[$h=6$ ]
    {\includegraphics[width=0.4\textwidth]{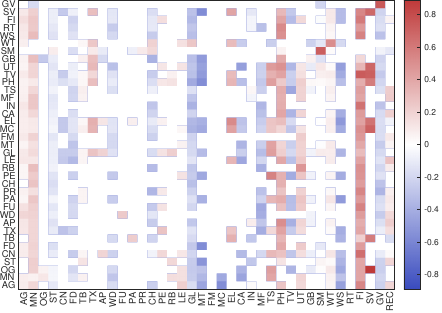}}\\
	\subfloat[$h=12$ ]
	{\includegraphics[width=0.4\textwidth]{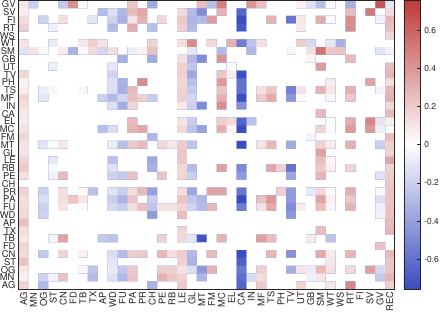}}
	\subfloat[$h=24$ ]
	{\includegraphics[width=0.4\textwidth]{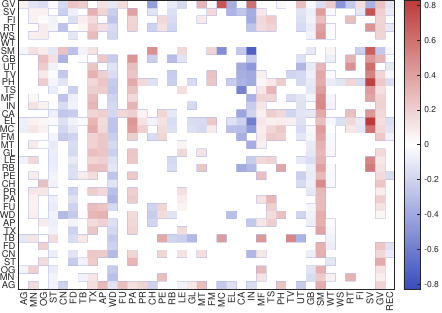}}	\\
	
    \caption{\textbf{Estimated High-Dimensional Local Projection Coefficients for Model 1 (Realized Volatility).}
  \textit{Notes:} Vertical axis shows the local projection response of the dependent variable at horizon \(t+h\); horizontal axis lists the predictor variable.  Responses are shown separately for horizons \(h\in\{1,6,12,24\}\).
    }
	\label{f_Emp_1_RV}
\end{sidewaysfigure}

\begin{sidewaysfigure}[htp!]
	\centering
	\subfloat[$h=1$ ]
	{\includegraphics[width=0.4\textwidth]{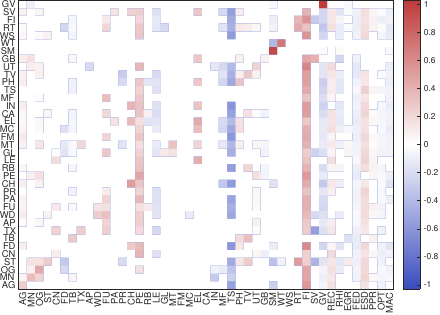}}
	\subfloat[$h=6$ ]
    {\includegraphics[width=0.4\textwidth]{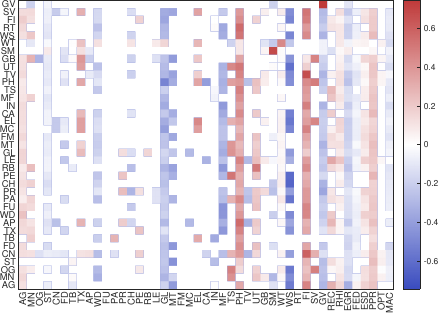}}\\
	\subfloat[$h=12$ ]
	{\includegraphics[width=0.4\textwidth]{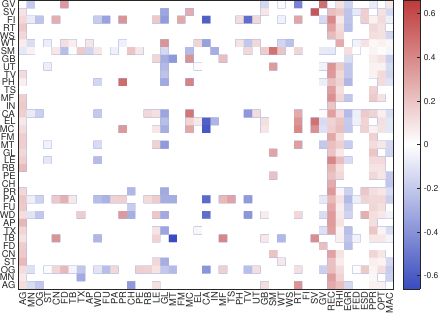}}
	\subfloat[$h=24$ ]
	{\includegraphics[width=0.4\textwidth]{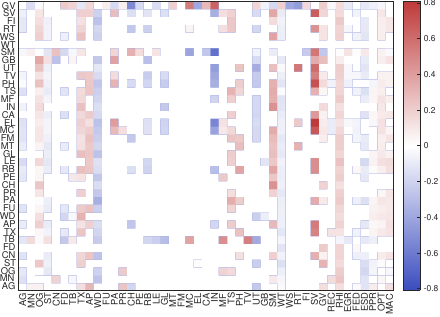}}	\\
	
    \caption{\textbf{Estimated High-Dimensional Local Projection Coefficients for Model 2 (Realized Volatility).}
  \textit{Notes:} Vertical axis shows the local projection response of the dependent variable at horizon \(t+h\); horizontal axis lists the predictor variable.  Responses are shown separately for horizons \(h\in\{1,6,12,24\}\).
    }
	\label{f_Emp_2_RV}
\end{sidewaysfigure}

\begin{sidewaysfigure}[htp!]
	\centering
	\subfloat[$h=1$ ]
	{\includegraphics[width=0.4\textwidth]{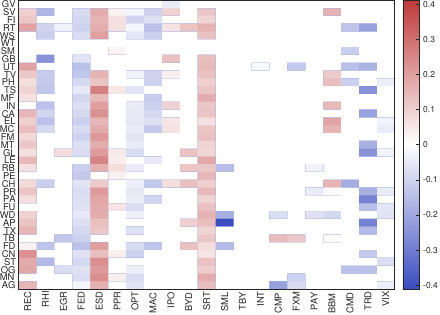}}
	\subfloat[$h=6$ ]
    {\includegraphics[width=0.4\textwidth]{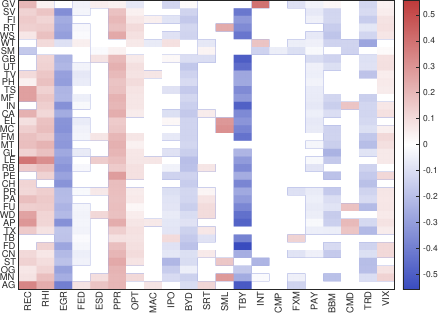}}\\
	\subfloat[$h=12$ ]
	{\includegraphics[width=0.4\textwidth]{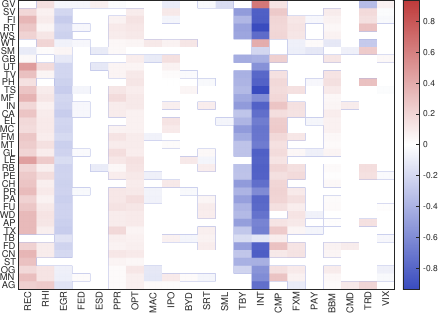}}
	\subfloat[$h=24$ ]
	{\includegraphics[width=0.4\textwidth]{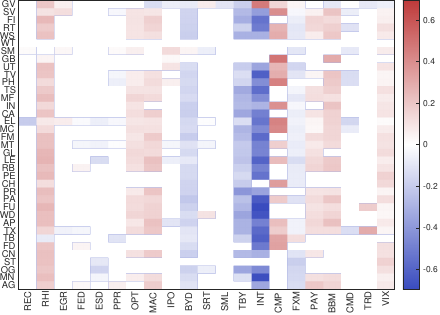}}	\\
	
    \caption{\textbf{Estimated High-Dimensional Local Projection Coefficients for Model 3 (Realized Volatility).}
  \textit{Notes:} Vertical axis shows the local projection response of the dependent variable at horizon \(t+h\); horizontal axis lists the predictor variable.  Responses are shown separately for horizons \(h\in\{1,6,12,24\}\). For brevity, estimates for spillovers within the volatility series are omitted; only the responses of volatility to news-attention indices are reported.
    }
	\label{f_Emp_3_RV}
\end{sidewaysfigure}

\subsection{Numerical Algorithm}\label{AP.A.5}

Before providing the numerical algorithm, we firstly simplify the node-wise LASSO numerically for the LP approach in the HD environment. 

Firstly, we introduce the following notations to facilitate the development. For $\forall \mathbf{b}_i$ which is a $(N^2p-1)\times 1 $ vector, let $\tilde{\mathbf{b}}_i$ be defined by placing a 1 in the \(i^{th}\) position and filling the remaining \(N^2p-1\) entries with the elements of \(- \mathbf{b}_i\) in their natural order.

We can then rewrite the objective function of \eqref{def.node} as follows:

\begin{eqnarray}\label{def.node3}
    && \frac{1}{T-h}\sum_{t=1}^{T-h}\|\mathbf{Z}_{i,t}^\top-\mathbf{Z}_{-i,t}^\top\mathbf{b}_i \|_2^2+2\tilde{\gamma}_i\|\mathbf{b}_i\|_1  \notag \\
    &=& \frac{1}{T-h}\sum_{t=1}^{T-h}\|(\mathbf{X}_{t}^\top\otimes \mathbf{I}_N)\tilde{\mathbf{b}}_i \|_2^2+2\tilde{\gamma}_i\|\mathbf{b}_i\|_1 \notag \\
    &=& \frac{1}{T-h}\sum_{t=1}^{T-h}\|\mathbf{X}_{t}^\top \tilde{\mathbf{b}}_i^*\|_2^2+\frac{1}{T-h}\sum_{t=1}^{T-h}\| (\mathbf{X}_{t}^\top\otimes \mathbf{I}_{N-1})^\top \tilde{\mathbf{b}}_i^+\|_2^2+2\tilde{\gamma}_i(\|\tilde{\mathbf{b}}_i^*\|_1+ \|\tilde{\mathbf{b}}_i^+\|_1 ),
\end{eqnarray}
where $\tilde{\mathbf{b}}_i^*$ is $Np\times 1$, and it includes the 1 and the elements of $\mathbf{b}_i$ corresponding to the non-zero elements of the $i^{th}$ row of $(\mathbf{X}_{t}^\top\otimes \mathbf{I}_N)$, and $\tilde{\mathbf{b}}_i^+$ includes the rest elements of  $\mathbf{b}_i $. Note that the interaction term of $\mathbf{X}_{t}^\top \tilde{\mathbf{b}}_i^*$ and $(\mathbf{X}_{t}^\top\otimes \mathbf{I}_{N-1})^\top \tilde{\mathbf{b}}_i^+$ is 0 due to the structure of Kronecker product, so it automatically disappears from the second equality of \eqref{def.node3}.
 
Note that 

\begin{eqnarray}
    \frac{1}{T-h}\sum_{t=1}^{T-h}\| (\mathbf{X}_{t}^\top\otimes \mathbf{I}_{N-1})^\top \tilde{\mathbf{b}}_i^+\|_2^2+2\tilde{\gamma}_i  \|\tilde{\mathbf{b}}_i^+\|_1 
\end{eqnarray}
reaches the minimum (i.e., 0) at $\tilde{\mathbf{b}}_i^+=\mathbf{0}$ due to the quadratic form. Therefore, minimizing \eqref{def.node3} is equivalent to minimizing 

\begin{eqnarray}\label{def.node4}
    \frac{1}{T-h}\sum_{t=1}^{T-h}\|\mathbf{X}_{t}^\top \tilde{\mathbf{b}}_i^*\|_2^2+2\tilde{\gamma}_i \|\tilde{\mathbf{b}}_i^*\|_1
\end{eqnarray}
with respect to $\tilde{\mathbf{b}}_i^*$ only. 

We can now conclude that the node-wise LASSO only needs to be implemented $Np$ times before constructing \eqref{def.node2} accordingly.

\paragraph{Algorithm} We perform the numerical implementation using the \textit{lasso} function\footnote{\url{https://www.mathworks.com/help/stats/lasso.html}} in MATLAB. For further technical details regarding the (adaptive) LASSO algorithm, we refer interested readers to \cite{mcilhagga2016penalized}. Based on the results in Section \ref{Sec2}, we set $J=5$ and $\gamma \asymp h^{1/5}(\log(N)/T)^{1/2}$. Following the discussion of Theorem \ref{TM.Main3}, we restrict our analysis to $h=1,2$ to minimize the constraints involved in selecting the optimal lag order; for simplicity, we then let $\xi \asymp  (\log(N)/T)^{1/2}$.

That said, the detailed algorithm is as follows:

\begin{enumerate}[leftmargin=48pt, parsep=2pt, topsep=2pt]
    \item[\it Step 1] Select the optimal lag using \eqref{def.phat} and \eqref{def.argmin0} by letting $h$ be a small integer such as $h=1$ or 2. 
    \item[\it Step 2] With $\hat{p}$ in hand, implement \eqref{def.argmin0} and \eqref{def.argmin}, and identify the positions of 0's.
    \item[\it Step 3] Conduct the node-wise LASSO estimation using \eqref{def.node} and \eqref{def.node2}, and set those positions identified in \textit{Step 2} to 0's.
\end{enumerate}

\newpage

\section{Proofs}\label{AP.B}

The following notations will be used throughout the proofs. 

Firstly, recall that we have defined $\mathbf{Z}_t\coloneqq \mathbf{X}_t\otimes \mathbf{I}_N$, $\pmb{\Sigma}_{\mathbf{B}}$, $\pmb{\phi}$ and $\phi_\ell$ in the main text. 

Secondly, let $\pmb{\beta}_0\coloneqq\VEC(\tilde{\mathbf{A}})$, and let $\pmb{\beta}_{0,{\mathscr{A}}}$ include the elements of $\pmb{\beta}_{0}$ corresponding to the non-zero elements of $\VEC(\mathbf{S}_{\mathscr{A}})$, wherein the $\ell^{th}$ element of $\pmb{\beta}_{0,\mathscr{A}}$ is denoted as $\beta_{0,\mathscr{A}, \ell}$. Similarly, let $\hat{\pmb{\beta}}_{\phi}\coloneqq \VEC(\hat{\tilde{\mathbf{a}}}_{\phi})$, and let $\hat{\pmb{\beta}}_{\phi,\mathscr{A}}$ include the elements of $\hat{\pmb{\beta}}_{\phi}$ corresponding to the non-zero elements of $\VEC(\mathbf{S}_{\mathscr{A}})$. 

In the same fashion, we define $\hat{\pmb{\beta}}_{\phi,\bar{\mathscr{A}}}$, $\pmb{\beta}_{0,\bar{\mathscr{A}}}$, $\pmb{\phi}_{\mathscr{A}}$, $\pmb{\phi}_{\bar{\mathscr{A}}}$, $\mathbf{Z}_{\mathscr{A},t}$, and $\mathbf{Z}_{\bar{\mathscr{A}},t}$. Let  $\phi_{\mathscr{A},\ell}$, $\phi_{\bar{\mathscr{A}},\ell}$, $\hat{\beta}_{\phi,\ell}$, and $\hat{\beta}_{\phi,\mathscr{A},\ell}$ be the $\ell^{th}$ elements of  $\pmb{\phi}_{\mathscr{A}}$, $\pmb{\phi}_{\bar{\mathscr{A}}}$, $\hat{\pmb{\beta}}_{\phi}$, and $\hat{\pmb{\beta}}_{\phi,\mathscr{A}}$ respectively. Finally, let $ \mathbf{g}_{\mathscr{A}} \coloneqq \sgn (\pmb{\beta}_{0,{\mathscr{A}}})\circ\pmb{\phi}_{\mathscr{A}}$,  $\pmb{\Sigma}_{\mathbf{Z}} \coloneqq \pmb{\Sigma}_{\mathbf{B}}\otimes \mathbf{I}_N$, and let $\pmb{\Sigma}_{\mathbf{Z},\mathscr{A}}$ be the covariance matrix associated with $\mathbf{Z}_{\mathscr{A},t}$.

\renewcommand{\thelemma}{B\arabic{lemma}}
\setcounter{lemma}{0} 
\renewcommand{\theproposition}{B\arabic{proposition}}
\setcounter{proposition}{0} 

\subsection{Proofs of the Main Results}

\begin{proof}[Proof of Proposition \ref{AP.Prop1}]
\item

Firstly, it is easy to see that \eqref{def.regression1} also admits a HDMA($\infty$) representation:

\begin{eqnarray}\label{def.xtilde} 
    \tilde{\mathbf{x}}_t = \sum_{\ell = 0}^{\infty}\mathbf{A}^\ell \pmb{\varepsilon}_{t-\ell}.
\end{eqnarray}

Secondly, we note that 

\begin{eqnarray*}
\left(\sum_{\ell=0}^L \mathbf{A}^\ell\right)  (\mathbf{I}-\mathbf{A})= (\mathbf{I}-\mathbf{A})\sum_{\ell=0}^L \mathbf{A}^\ell =\mathbf{I} - \mathbf{A}^{L+1}.
\end{eqnarray*}
Provided the conditions $\|\mathbf{A}\|_2<1$ and $\lambda_{\min}(\mathbf{I}-\mathbf{A})>0$, we can write further

\begin{eqnarray*}
\mathbf{A}^{L+1}&=&(\mathbf{I}-\mathbf{A})^{-1}(\mathbf{I}-\mathbf{A}) -\left(\sum_{\ell=0}^L \mathbf{A}^\ell\right)  (\mathbf{I}-\mathbf{A}) \notag\\
&=&\left[(\mathbf{I}-\mathbf{A})^{-1} -\left(\sum_{\ell=0}^L \mathbf{A}^\ell\right)\right] (\mathbf{I}-\mathbf{A}).
\end{eqnarray*}
Thus, as $L\to \infty$,

\begin{eqnarray*}
\left\|(\mathbf{I}-\mathbf{A})^{-1} - \sum_{\ell=0}^L \mathbf{A}^\ell \right\|_2\asymp \|\mathbf{A}^{L+1}\|_2\le \|\mathbf{A}\|_2^{L+1}\to 0.
\end{eqnarray*}

\medskip

We are not ready to proceed. Note that by Lemma \ref{LMB4},

\begin{eqnarray*}
\sum_{s=1}^t( \mathbf{x}_s - \tilde{\mathbf{x}}_s ) =(\bar{\mathbf{B}} - \bar{\mathbf{A}}) \sum_{s=1}^t\pmb{\varepsilon}_{s}-\tilde{\mathbf{B}}(L)\pmb{\varepsilon}_{t}+\tilde{\mathbf{B}}(L)\pmb{\varepsilon}_{0} + \tilde{\mathbf{A}}(L)\pmb{\varepsilon}_{t}-\tilde{\mathbf{A}}(L)\pmb{\varepsilon}_{0},
\end{eqnarray*}
where $\bar{\mathbf{A}} = \sum_{\ell=0}^\infty \mathbf{A}^\ell$ and $\bar{\mathbf{B}} = \sum_{\ell=0}^\infty \mathbf{B}_\ell$. Note further that for every $\bar{\mathbf{B}}$, we can find a $\mathbf{A}$ so that $\bar{\mathbf{B}} = (\mathbf{I}-\mathbf{A})^{-1}$. In other words, $\mathbf{A} = \mathbf{I}-\bar{\mathbf{B}}^{-1}$. Therefore, we can now write

\begin{eqnarray*}
\|\bar{\mathbf{B}} -\bar{\mathbf{A}}\|_2 &=& \left\|(\mathbf{I}-\mathbf{A})^{-1} - \sum_{\ell=0}^L \mathbf{A}^\ell \right\|_2+ \left\|\sum_{\ell=L+1}^\infty \mathbf{A}^\ell\right\|_2\notag \\
&\le &O(1)\left(\|\mathbf{A}\|_2^{L+1} + \sum_{\ell=L+1}^\infty \|\mathbf{A} \|_2^\ell\right)\notag\\
&=&O(1)\|\mathbf{A}\|_2^{L+1}\to 0.
\end{eqnarray*}

We then investigate

\begin{eqnarray*}
    \max_{t\in [T]}\|\tilde{\mathbf{B}}(L)\pmb{\varepsilon}_{t}\|_2 & \le & \|\tilde{\mathbf{B}}(L)\|_2\cdot \max_{t\in [T]}\|\pmb{\varepsilon}_{t}\|_2 \le  \|\tilde{\mathbf{B}}(L)\|_2\cdot \left(\max_{t\in [T]}\sum_{i=1}^N \varepsilon_{it}^2 \right)^{1/2}.
\end{eqnarray*}
Let $z_{it}\coloneqq \varepsilon_{it}^2-1$. Using Lemma \ref{LMB3}, we obtain that

\begin{eqnarray*}
    \Pr\left(\max_{t\in [T]}\left|\frac{1}{\sqrt{\sum_{i=1}^N E[z_{it}^2]}}\sum_{i=1}^Nz_{it}\right|\ge \epsilon\right)&\le &\sum_{t=1}^T \Pr\left(\left|\frac{1}{\sqrt{\sum_{i=1}^N E[z_{it}^2]}}\sum_{i=1}^Nz_{it}\right|\ge \epsilon\right)\notag \\
    &\le & 2T \exp(-\epsilon^2/4)= 2T \exp(\log(1/T^c))\notag \\
    &=&2T/T^c\to 0.
\end{eqnarray*}
where the first equality follows from letting $\epsilon^2/4 = \log(T^c)$  (i.e., $\epsilon =\sqrt{4c\log(T)}$) with $c>1$. In connection with the fact that $\sum_{i=1}^N E[z_{it}^2]\asymp N$, we conclude that 

\begin{eqnarray*}
    \max_{t\in [T]}  \sum_{i=1}^Nz_{it} =O_P (\sqrt{N \log(T)} ).
\end{eqnarray*}
Thus, $ \max_{t\in [T]}\|\tilde{\mathbf{B}}(L)\pmb{\varepsilon}_{t}\|_2\le O_P(1)\sqrt{N}\|\tilde{\mathbf{B}}(L)\|_2\cdot$. Consequently, in connection with Lemma \ref{LMB4}, we conclude that

\begin{eqnarray*}
    \max_{t\in [T]}\|\tilde{\mathbf{B}}(L)\pmb{\varepsilon}_{t}-\tilde{\mathbf{B}}(L)\pmb{\varepsilon}_{0} - \tilde{\mathbf{A}}(L)\pmb{\varepsilon}_{t}+\tilde{\mathbf{A}}(L)\pmb{\varepsilon}_{0}\|_2=O_P (\sqrt{N}).
\end{eqnarray*}

In what follows, we just need to focus on $(\bar{\mathbf{B}} - \bar{\mathbf{A}}) \sum_{s=1}^t\pmb{\varepsilon}_{s}$. Simple algebra shows that

\begin{eqnarray*}
    \max_{t\in [T]}\left\|(\bar{\mathbf{B}} - \bar{\mathbf{A}}) \sum_{s=1}^t\pmb{\varepsilon}_{s}\right\|_2 &\le & \|\bar{\mathbf{B}} - \bar{\mathbf{A}}\|_2 \cdot \sqrt{N}\max_{i\in [N]}\sum_{t=1}^T |\varepsilon_{it}| \notag \\
    &=&O(1)\|\mathbf{A}\|_2^{L+1} \cdot \sqrt{N}\max_{i\in [N]}\sum_{t=1}^T |\varepsilon_{it}|\notag \\
    & =&o_P(\sqrt{N}),
\end{eqnarray*}
where the last line follows from the fact that $\|\mathbf{A}\|_2^{L+1}$ shrinks to 0 exponentially.

Putting the above results together, the proof is completed.
\end{proof}

\medskip

\begin{proof}[Proof of Proposition \ref{PP.Main1}]
\item 

By \eqref{def.xth}, we now have

\begin{eqnarray*}
    \textbf{IR}(h,j) &=&E[\mathbf{x}_{t+h}\mid  \pmb{\varepsilon}_{t\mid j}(1), \pmb{\varepsilon}_{t-1},\ldots]  - E[\mathbf{x}_{t+h}\mid  \pmb{\varepsilon}_{t\mid j}(0), \pmb{\varepsilon}_{t-1},\ldots]\notag \\
    &=& \mathbf{A}_1  (E[\mathbf{x}_{t} \mid  \pmb{\varepsilon}_{t\mid j}(1), \pmb{\varepsilon}_{t-1},\ldots] - E[\mathbf{x}_{t}\mid  \pmb{\varepsilon}_{t\mid j}(0), \pmb{\varepsilon}_{t-1},\ldots] ).
\end{eqnarray*}
Since $\mathbf{B}_0 =\mathbf{I}$, we obtain that

\begin{eqnarray}\label{def.Ahj}
    \textbf{IR}(h,j)  = \mathbf{A}_{1,j},
\end{eqnarray}
where $\mathbf{A}_1=(\mathbf{A}_{1,1},\ldots, \mathbf{A}_{1,N})$. 

According to \eqref{def.Bh} and \eqref{def.Ahj}, we conclude  $\mathbf{A}_{1,j} = \mathbf{B}_{h,j}.$ The proof is completed.
\end{proof}

\medskip

\begin{proof}[Proof of Theorem \ref{TM.Main1}]
\item 

Without loss of generality, we consider $\sum_{t=1}^{T-h}u_{it,h} x_{jt}$ only for example. Before proceeding, we define a few notations which will be repeatedly used throughout the proof.  Let $\{\lambda_1,\ldots, \lambda_T\}$ be a positive sequence such that $\sum_{t=h}^T\lambda_t\le 1$, where the first $h-1$ periods are discarded in the summation for some obvious reason in view of the following development. We note that $c_\ell$ with $\ell$ being fixed positive integer always stands for positive constant, and may vary at each appearance. 

By DGP, we write

\begin{eqnarray*}
u_{it,h} x_{jt} =\mathbf{e}_i^\top\mathbf{g}(\pmb{\varepsilon}_{t+h},\ldots, \pmb{\varepsilon}_{t+1}) \sum_{\ell=0}^\infty \pmb{\beta}_{\ell, j}^\top \pmb{\varepsilon}_{t-\ell} \eqqcolon U (t,h),
\end{eqnarray*}
where we have suppressed $(i,j)$ when defining $U (t,h)$ for notational simplicity. Let further 

\begin{eqnarray*}
U (t,h,m)\coloneqq E[U (t,h)\mid \mathscr{F}_{t+h,t+h-m}],
\end{eqnarray*}
where $\mathscr{F}_{t,s}=\sigma(\pmb{\varepsilon}_t, \pmb{\varepsilon}_{t-1},\ldots, \pmb{\varepsilon}_s)$ for $t\ge s$. By the definition, $U(t,h,m)$ and $U(s,h,m)$ are independent of each other as long as $|t-s|>m$. 

For $t\in[T-h]$, we decompose $U(t,h)$ as follows:

\begin{eqnarray*}
U(t,h) = U(t,h)- U(t,h,T)+\sum_{m=1}^T[U (t,h,m)-U(t,h,m-1)]+U(t,h,0),
\end{eqnarray*}
wherein, for $m<h$,

\begin{eqnarray*}
U (t,h,m) &=& E\big[\mathbf{e}_i^\top\mathbf{g}(\pmb{\varepsilon}_{t+h},\ldots, \pmb{\varepsilon}_{t+1})  \sum_{\ell=0}^\infty\pmb{\beta}_{\ell, j}^\top \pmb{\varepsilon}_{t-\ell} \mid \mathscr{F}_{t+h,t+h-m} \big]  =0,
\end{eqnarray*}
and for $m\ge h$,

\begin{eqnarray}\label{U_mom1}
&&\|U (t,h,m)-U (t,h,m-1)\|_J \notag\\
&=&\|\mathbf{e}_i^\top\mathbf{g}(\pmb{\varepsilon}_{t+h},\ldots, \pmb{\varepsilon}_{t+1})  \pmb{\beta}_{m-h, j}^\top \pmb{\varepsilon}_{t-(m-h)} \|_J \notag \\
&\le &O(1) \| \pmb{\beta}_{m-h, j}^\top \pmb{\varepsilon}_{t-(m-h)} \|_J\notag \\
&\le & O(1) \big\| \big\{\sum_{i=1}^N  (b_{m-h,ji}\varepsilon_{i,t-(m-h)} ) ^2\big\}^{\frac{1}{2}}\big\|_J\notag \\
&\le &O(1)\big\{\sum_{i=1}^N   (E | b_{m-h,ji}\varepsilon_{i,t-(m-h)}|^J )^{\frac{2}{J}}\big\}^{\frac{1}{2}} \notag \\
&=&O(1) \|\pmb{\beta}_{m-h, j} \|_2,
\end{eqnarray}
where the first inequality follows from Assumption \ref{AS2}.1, the second inequality follows from Burkholder's inequality, the third inequality follows from Minkowski inequality, and the last line follows from the existence of $E|\varepsilon_{it}|^J$ as mentioned in Assumption \ref{AS1}. It is noteworthy that the right hand side of \eqref{U_mom1} is independent of $t$.

\medskip

Based on the above setting, for $S(t)\coloneqq \sum_{s=1}^{t}u_{is,h} x_{js}$  with $t\in[T-h]$, we can write

\begin{eqnarray*}
S(t) &=& \sum_{s=1}^{t}[U (s,h)- U (s,h,T)] +\sum_{s=1}^{t}\sum_{m=h}^{T}[U (s,h,m)-U (s,h,m-1)] \notag \\
&\eqqcolon& S_1(t)+\sum_{m=h}^TS_{2,m}(t) \eqqcolon S_1(t)+S_2(t).
\end{eqnarray*}
Again, for notational simplicity, we have suppressed $(i,j,h)$ in $S(t)$, $S_1(t)$, and $S_2(t)$. In what follows, we consider the following probability:

\begin{eqnarray*}
\Pr\big(\max_{t\in [T-h]} | S(t) |\ge 4\delta \big) \le \Pr\big(\max_{t\in [T-h]} \left| S_1(t)\right|\ge  \delta\big)+\Pr\big(\max_{t\in [T-h]} \left| S_2(t)\right|\ge  3\delta\big).
\end{eqnarray*}

Firstly, we consider $S_2(t)$. To proceed, we let  $b =\lfloor \frac{T}{m} \rfloor+1$, and for $l\in [b]$ let

\begin{eqnarray*}
\nu_{m,l}\coloneqq \sum_{s=1+(l-1)m}^{lm \wedge T} [U (s,h,m)-U (s,h,m-1)].
\end{eqnarray*}
By the construction, we obtain that

\begin{eqnarray}\label{nu_ml}
\|\nu_{m,l}\|_J  &\le & O(1) \big\| \big\{\sum_{s=1+(l-1)m}^{lm \wedge T}  [U (s,h,m)-U (s,h,m-1)] ^2\big\}^{\frac{1}{2}}\big\|_J\notag \\
&\le &O(1)\big\{\sum_{s=1+(l-1)m}^{lm \wedge T}   (E | U (s,h,m)-U (s,h,m-1)|^J )^{\frac{2}{J}}\big\}^{\frac{1}{2}} \notag \\
&\le &O(\sqrt{m } \|\pmb{\beta}_{m-h, j} \|_2),
\end{eqnarray}
where the first inequality follows from Burkholder's inequality,  the second inequality follows from Minkowski inequality, and the third inequality follows from \eqref{U_mom1}. 

We then denote that 

\begin{eqnarray*}
V_{1,s,m}\coloneqq \sum_{l=1}^s\frac{1-(-1)^l}{2}\nu_{m,l}  \quad\text{and}\quad V_{2,s,m}\coloneqq \sum_{l=1}^s\frac{1+(-1)^l}{2}\nu_{m,l},
\end{eqnarray*}
where $V_{1,s,m}$ and $V_{2,s,m}$ include odd and even indexed terms respectively. Additionally, note that 

\begin{eqnarray*}
S_{2,m}(t) =\sum_{l=1}^{\lfloor t/m \rfloor }\nu_{m,l} + \sum_{s=\lfloor t/m \rfloor m + 1}^{t} [U (s,h,m)-U (s,h,m-1)],
\end{eqnarray*}
where 

\begin{eqnarray*}
\sum_{s=\lfloor t/m \rfloor m + 1}^{t} [U (s,h,m)-U (s,h,m-1)]=_D \sum_{s=  1}^{t - \lfloor t/m \rfloor m} [U (s,h,m)-U (s,h,m-1)].
\end{eqnarray*}
Thus, we further write

\begin{eqnarray}\label{prob.decom}
&&\Pr\left(\max_{t\in [T-h]} \left| S_2(t)\right|\ge  3\delta\right)\notag \\
&\le &\sum_{m=h}^T\Pr\left(\max_{t\in [T-h]} \left| S_{2,m}(t)\right|\ge 3\lambda_m \delta\right)\notag \\
&\le &\sum_{m=h}^T\Big[ \Pr\left(\max_{s\in [b]}|V_{1,s,m}| \ge \lambda_m \delta\right)+ \Pr\left(\max_{s\in [b]}|V_{2,s,m}| \ge \lambda_m \delta\right)\notag \\
&&+ \Pr\left(\max_{1\le t\le m}\left|\sum_{s=1}^t [U (s,h,m)-U (s,h,m-1)]\right|\ge \lambda_m \delta\right)\Big].
\end{eqnarray}

\medskip

We then study the terms on the right hand side of \eqref{prob.decom} one by one. Since $\nu_{1,l}, \nu_{3,l},\ldots$ are independent, we invoke Nagaev inequality of Lemma \ref{LMB1} and obtain that

\begin{eqnarray*}
&&\Pr\left(\max_{s\in [b]}|V_{1,s,m}| \ge \lambda_m \delta\right)\notag \\
&\le &c_1 \frac{\sum_{l=1}^{b}\|\nu_{m,l}\|_J^J}{(\lambda_m \delta)^J} + 2\exp\left(  \frac{- c_2 (\lambda_m \delta)^2}{\sum_{l=1}^{b}\|\nu_{m,l}\|_2^2}\right)\notag \\
&\le &c_1 \frac{b m^{\frac{J}{2}} \|\pmb{\beta}_{m-h, j} \|_2^J}{(\lambda_m \delta)^J} + 2\exp\left(  \frac{- c_2 (\lambda_m \delta)^2}{b m \|\pmb{\beta}_{m-h, j} \|_2^2}\right) \notag \\
&= &c_1 \frac{T m^{\frac{J}{2}-1} \|\pmb{\beta}_{m-h, j} \|_2^J}{(\lambda_m \delta)^J} + 2\exp\left(  \frac{- c_2 (\lambda_m \delta)^2}{T \|\pmb{\beta}_{m-h, j} \|_2^2}\right),
\end{eqnarray*}
where the second inequality follows from \eqref{nu_ml}. Similarly, we obtain that 

\begin{eqnarray*}
\Pr\left(\max_{s\in [b]}|V_{2,s,m}| \ge \lambda_m \delta\right)\le  c_1 \frac{T m^{\frac{J}{2}-1} \|\pmb{\beta}_{m-h, j} \|_2^J}{(\lambda_m \delta)^J} + 2\exp\left(  \frac{- c_2 (\lambda_m \delta)^2}{T \|\pmb{\beta}_{m-h, j} \|_2^2}\right).
\end{eqnarray*}

In order to study the third term on the right hand side of \eqref{prob.decom}, we note that  

\begin{eqnarray*}
    \{U (T-h-t,h,m)-U (T-h-t,h,m-1)\}_{t=0}^{T-h-1}
\end{eqnarray*}
are martingale differences with respect to $\mathcal{F}_{t}\coloneqq\sigma(\pmb{\varepsilon}_{T-t-m}, \pmb{\varepsilon}_{T-t-m+1},\ldots)$, thus,

\begin{eqnarray*}
\left\{\left| \sum_{s=0}^{t} [U (T-h-s,h,m)-U (T-h-s,h,m-1)]\right| \right\}_{t=0}^{T-h-1}
\end{eqnarray*}
a positive submartingale with respect to $\mathcal{F}_{t}$. Thus, by Doob $L_p$ maximal inequality, we write

\begin{eqnarray}\label{rate.maxt}
&&E\left[\max_{1\le t\le m}\left|\sum_{s=1}^t [U (s,h,m)-U (s,h,m-1)]\right|^J \right]\notag \\
&=&E\left[\max_{1\le t\le m}\left|\sum_{s=t}^m [U (s,h,m)-U (s,h,m-1)]\right|^J \right]\notag \\
&\le &\left(\frac{J}{J-1}\right)^J E\left|\sum_{s=1}^m [U (s,h,m)-U (s,h,m-1)]\right|^J\notag \\
&\le &O(1)m^{\frac{J}{2}} \|\pmb{\beta}_{m-h, j} \|_2^J,
\end{eqnarray}
where the last step follows from \eqref{nu_ml}. Thus, 

\begin{eqnarray*}
\Pr\left(\max_{1\le t\le m}\left|\sum_{s=1}^t [U (s,h,m)-U (s,h,m-1)]\right|\ge \lambda_m \delta\right)\le  c_1 \frac{m^{\frac{J}{2}} \|\pmb{\beta}_{m-h, j} \|_2^J}{(\lambda_m \delta)^J}.
\end{eqnarray*}
 
Up to this point, we have investigated all the terms on the right hand side of \eqref{prob.decom}. In order to put everything together, we let $\lambda_m = \mu_m /\bar{\mu}$ in which $\bar{\mu}=\sum_{m=h}^{\infty}\mu_m$ and $\mu_m = (m^{\frac{J}{2}-1} \|\pmb{\beta}_{m-h, j} \|_2^J)^{\frac{1}{J+1}}$, and write further

\begin{eqnarray*}
&&\Pr\left(\max_{t\in [T-h]} \left| S_2(t)\right|\ge  3\delta\right) \notag \\
&\le &c_1 \frac{T}{\delta^J}\sum_{m=h}^{T}\frac{ m^{\frac{J}{2}-1} \|\pmb{\beta}_{m-h, j} \|_2^J}{\lambda_m ^J} +\sum_{m=h}^{T} 4\exp\left(  \frac{- c_2 (\lambda_m \delta)^2}{T \|\pmb{\beta}_{m-h, j} \|_2^2}\right) \notag \\
&= &c_1 \frac{T}{\delta^J} \bar{\mu}^{J+1}  +\sum_{m=h}^{T} 4\exp\left(  \frac{- c_2 (\mu_m \delta)^2}{T\bar{\mu}^2 \|\pmb{\beta}_{m-h, j} \|_2^2}\right),
\end{eqnarray*}
where the last step follows from the direct calculation. It is noteworthy that we have suppressed the index $h$ in $\bar{\mu}$ for notational simplicity.

\medskip

We now consider $S_1(t)$, and write

\begin{eqnarray*}
\big\| \max_{t\in [T-h]} |S_1(t)|\big\|_J &=&\left\| \max_{t\in [T-h]} \left|\sum_{s=1}^{t}  \sum_{\ell=T-h+1}^\infty \mathbf{e}_i^\top\mathbf{u}_{s,h}\pmb{\beta}_{\ell, j}^\top \pmb{\varepsilon}_{s-\ell}\right| \right\|_J\notag \\
&\le &O(1)\left\| \sum_{s=1}^{T-h}  \sum_{\ell=T-h+1}^\infty \mathbf{e}_i^\top\mathbf{u}_{s,h}\pmb{\beta}_{\ell, j}^\top \pmb{\varepsilon}_{s-\ell} \right\|_J\notag \\
&\le &O(1)\sqrt{T-h}\left\| \sum_{\ell=T-h+1}^\infty \mathbf{e}_i^\top\mathbf{u}_{s,h}\pmb{\beta}_{\ell, j}^\top \pmb{\varepsilon}_{s-\ell}\right\|_J\notag \\
&\le &O(1)\sqrt{T-h}\sum_{\ell=T-h+1}^\infty  \| \pmb{\beta}_{\ell, j}^\top \pmb{\varepsilon}_{s-\ell} \|_J\notag \\
&\le &O(1) \sqrt{T-h}\sum_{\ell=T-h+1}^\infty\|\pmb{\beta}_{\ell, j}\|_2,
\end{eqnarray*}
where the first inequality follows from a development similar to \eqref{rate.maxt}, the second inequality follows from a development similar to  \eqref{nu_ml}, and the last line follows from the development of \eqref{U_mom1}. Additionally, we note that

\begin{eqnarray}\label{rate.betaT}
\left(\sum_{\ell=T-h+1}^\infty\|\pmb{\beta}_{\ell, j}\|_2\right)^J &\le&  \left(\sum_{\ell=T-h+1}^\infty\|\pmb{\beta}_{\ell, j}\|_2^{J/(J+1)}\right)^{J+1}\notag \\
&= &\left(\sum_{\ell=T-h+1}^\infty\frac{(\ell+h)^{(J/2-1)/(J+1)}}{(\ell+h)^{(J/2-1)/(J+1)}}\|\pmb{\beta}_{\ell, j}\|_2^{J/(J+1)}\right)^{J+1}\notag \\
&\le  &\frac{1}{(T+1)^{J/2-1}}\left(\sum_{\ell=T-h+1}^\infty\mu_{\ell+h} \right)^{J+1}\notag \\
&=  &\frac{1}{(T+1)^{J/2-1}}\left(\sum_{\ell=T+1}^\infty\mu_\ell \right)^{J+1},
\end{eqnarray}
where the first inequality follows from the norm monotonicity. Thus, we write

\begin{eqnarray*}
\Pr\left(\max_{t\in [T-h]} |S_1(t)|\ge \delta\right)&\le &c_1 \frac{(\sqrt{T-h}\sum_{\ell=T-h+1}^\infty\|\pmb{\beta}_{\ell, j}\|_2)^J}{\delta^J}\le c_1\frac{T}{\delta^J} \left(\sum_{\ell=T+1}^\infty\mu_\ell \right)^{J+1},
\end{eqnarray*}
where the second inequality follows from \eqref{rate.betaT}.

Based on the development for $S_1(t)$ and $S_2(t)$, we can conclude that 

\begin{eqnarray*}
&&\Pr\left(\max_{t\in [T-h]} \left| S(t)\right|\ge  \delta\right) \le c_1 \frac{T}{\delta^J}  \bar{\mu}^{J+1}+\sum_{m=h}^{T} 4\exp\left(  \frac{- c_2 (\mu_m \delta)^2}{T\bar{\mu}^2 \|\pmb{\beta}_{m-h, j} \|_2^2}\right).
\end{eqnarray*}
Letting  $\delta =\sqrt{T}\bar{\mu}^{1+1/J} \zeta$ with $\zeta>0$ yields that

\begin{eqnarray*}
\frac{  (\mu_m \delta)^2}{T\bar{\mu}^2 \|\pmb{\beta}_{m-h, j} \|_2^2} = \frac{  (\mu_m \sqrt{T}\bar{\mu}^{1+1/J} \zeta)^2}{T\bar{\mu}^2 \|\pmb{\beta}_{m-h, j} \|_2^2} =\zeta^2 \bar{\mu}^{\frac{2}{J}}\cdot \frac{m^{\frac{J-2}{J+1}}  }{\|\pmb{\beta}_{m-h, j} \|_2^{\frac{2}{J+1}}}=\zeta^2 m^{1-\frac{2}{J}} \frac{\bar{\mu}^{\frac{2}{J}}}{ \mu_m^{\frac{2}{J}}} \ge \zeta^2 m^{1-\frac{2}{J}}.
\end{eqnarray*}
Thus,

\begin{eqnarray*}
\sum_{m=h}^{T} \exp\left(  \frac{- c_2 (\mu_m \delta)^2}{T\bar{\mu}^2 \|\pmb{\beta}_{m-h, j} \|_2^2}\right)&\le & \sum_{m=1}^{\infty}\exp (-\zeta^2 m^{1-\frac{2}{J}} )\notag \\
&= &\frac{\sum_{m=1}^{\infty}\exp (-\zeta^2  m^{1-\frac{2}{J}}  )\exp(\zeta^2)}{\exp(\zeta^2)}\notag \\
&\le &\frac{\sum_{m=1}^{\infty}\exp (-m^{1-\frac{2}{J}}  )\exp(1)}{\exp(\zeta^2)}\notag \\
&=& c\exp(-c^*\zeta^2),
\end{eqnarray*}
where $c$ and $c^*$ are two constants, and the second inequality follows from the fact that, for any $s>0$,

\begin{eqnarray*}
\sup_{\zeta\ge 1} \sum_{m=1}^{\infty}\exp (-m^s \zeta^2)\exp(\zeta^2) = \sum_{m=1}^{\infty}\exp (-m^s )\exp(1),
\end{eqnarray*}
which can be easily verified by showing that $\sum_{m=1}^{\infty}\exp (-m^s \zeta^2)\exp(\zeta^2)$ is monotonically decreasing on $\zeta\in [1,\infty)$. Finally, in view of the fact that $\zeta=\delta /(\sqrt{T}\bar{\mu}^{1+1/J})$, we obtain that 

\begin{eqnarray*}
    \Pr\left(\max_{t\in [T-h]}\left|\sum_{s=1}^{t}u_{is,h} x_{j,s-\ell}\right|\ge \delta\right) \le c_1 \frac{T}{\delta^J} \mu_h^{J+1}+c_2\exp\left(-\frac{c_3\delta^2}{T \mu_h^{2+2/J}}\right),
\end{eqnarray*}
where $\displaystyle\mu_h\coloneqq \max_{j}\mu_{h,j}$ and $\mu_{h,j}\coloneqq \sum_{\ell=0}^\infty  [(\ell+h)^{\frac{J}{2}-1} \|\pmb{\beta}_{\ell, j} \|_2^J]^{\frac{1}{J+1}}$.

Note further that 

\begin{eqnarray*}
    &&\sum_{\ell=0}^\infty  [(\ell+h)^{\frac{J}{2}-1} \|\pmb{\beta}_{\ell, j} \|_2^J]^{\frac{1}{J+1}} \notag \\
    &=&\sum_{\ell=0}^h  [(\ell+h)^{\frac{J}{2}-1} \|\pmb{\beta}_{\ell, j} \|_2^J]^{\frac{1}{J+1}}+\sum_{\ell=h+1}^{\infty}  [(\ell+h)^{\frac{J}{2}-1} \|\pmb{\beta}_{\ell, j} \|_2^J]^{\frac{1}{J+1}}\notag \\
    &\le &O(1)h^{\frac{J-2}{2(J+1)}}\sum_{\ell=0}^h \|\pmb{\beta}_{\ell, j} \|_2^{\frac{J}{J+1}}+\sum_{\ell=h+1}^{\infty}  [(2\ell)^{\frac{J}{2}-1} \|\pmb{\beta}_{\ell, j} \|_2^J]^{\frac{1}{J+1}}\notag \\
    &\le & O(1)h^{\frac{J-2}{2(J+1)}}\sum_{\ell=0}^{\infty}  (\ell^{\frac{J}{2}-1} \|\pmb{\beta}_{\ell, j} \|_2^J)^{\frac{1}{J+1}}\notag \\
    &=&O(h^{\frac{J-2}{2(J+1)}}),
\end{eqnarray*}
where the last line follows from Lemma \ref{LMB5}. We conclude that $\mu_h=O(h^{\frac{J-2}{2(J+1)}})$. The proof is now completed.
\end{proof}

\medskip

\begin{proof}[Proof of Theorem \ref{TM.Main2}]
\item 
 
(1).  Firstly, by Theorem \ref{TM.Main1}, Lemma \ref{LMB6} and the condition $\frac{N^2 }{T^{J-1}\log N}$ of Assumption \ref{AS3}, the following results hold with probability approaching 1:

\begin{eqnarray}\label{basis.ineq1}
    &&\left\|\frac{1}{T-h}\sum_{t=1}^{T-h}\mathbf{u}_{t,h}^\top (\mathbf{X}_t^\top \otimes \mathbf{I}_N)\right\|_\infty \le\frac{\gamma}{4},\notag \\
    &&\left\|\frac{1}{T-h} \sum_{t=1}^{T-h} \mathbf{X}_t\mathbf{X}_t^\top - \pmb{\Sigma}_{\mathbf{B}} \right\|_{\max}\le d_{\pmb{\beta}}^2\mu_{h}^{-1-1/J}\gamma,
\end{eqnarray}
where $\gamma \asymp \frac{\mu_{h}^{1+1/J}\sqrt{\log (N)}}{\sqrt{T}}$.

\medskip

Let $Q(\tilde{\mathbf{a}})=Q_0(\tilde{\mathbf{a}})+\gamma\|\VEC(\tilde{\mathbf{a}})\|_1$. Since $Q(\hat{\tilde{\mathbf{a}}}) \le Q(\tilde{\mathbf{A}})$ by \eqref{def.argmin}, we obtain that

\begin{eqnarray}\label{eq.lasso0}
    &&\frac{1}{T-h}\sum_{t=1}^{T-h}\|\mathbf{x}_{t+h}-(\mathbf{X}_t^\top \otimes \mathbf{I}_N) \VEC(\hat{\tilde{\mathbf{a}}})\|_2^2+\gamma\|\VEC(\hat{\tilde{\mathbf{a}}})\|_1 \notag \\
    &\le &\frac{1}{T-h}\sum_{t=1}^{T-h}\|\mathbf{x}_{t+h}-(\mathbf{X}_t^\top \otimes \mathbf{I}_N) \VEC(\tilde{\mathbf{A}})\|_2^2+\gamma\|\VEC(\tilde{\mathbf{A}})\|_1\notag \\
    &= &\frac{1}{T-h}\sum_{t=1}^{T-h}\|\mathbf{u}_{t,h}\|_2^2+\gamma\|\VEC(\tilde{\mathbf{A}})\|_1.
\end{eqnarray}
 
Due to the first inequality of \eqref{basis.ineq1},

\begin{eqnarray*}
    &&\frac{1}{T-h}\left|\sum_{t=1}^{T-h}\mathbf{u}_{t,h}^\top (\mathbf{X}_t^\top \otimes \mathbf{I}_N) \VEC(\tilde{\mathbf{A}}-\hat{\tilde{\mathbf{a}}})\right| \notag \\
    &\le &\frac{1}{T-h} \left\|\sum_{t=1}^{T-h}\mathbf{u}_{t,h}^\top (\mathbf{X}_t^\top \otimes \mathbf{I}_N)\right\|_\infty \cdot \|\VEC(\tilde{\mathbf{A}}-\hat{\tilde{\mathbf{a}}})\|_1\le\frac{\gamma}{4} \|\VEC(\tilde{\mathbf{A}}-\hat{\tilde{\mathbf{a}}})\|_1,
\end{eqnarray*}
so \eqref{eq.lasso0} reduces to 

\begin{eqnarray}\label{eq.lasso1}
    &&\frac{1}{T-h} \sum_{t=1}^{T-h}\|(\mathbf{X}_t^\top \otimes \mathbf{I}_N) \VEC(\tilde{\mathbf{A}}-\hat{\tilde{\mathbf{a}}})\|_2^2+\gamma\|\VEC(\hat{\tilde{\mathbf{a}}})\|_1\notag \\
    &\le &\frac{\gamma}{2} \|\VEC(\tilde{\mathbf{A}}-\hat{\tilde{\mathbf{a}}})\|_1  +\gamma\|\VEC(\tilde{\mathbf{A}})\|_1.
\end{eqnarray}

Note that 

\begin{eqnarray*}
    &&\|\VEC(\tilde{\mathbf{A}}-\hat{\tilde{\mathbf{a}}})\|_1+\|\VEC(\tilde{\mathbf{A}})\|_1 -\|\VEC(\hat{\tilde{\mathbf{a}}})\|_1\notag \\
    &=&\|\VEC(\mathbf{S}_{\mathscr{A}}\circ(\tilde{\mathbf{A}}-\hat{\tilde{\mathbf{a}}}))\|_1+\|\VEC(\mathbf{S}_{\mathscr{A}}\circ\tilde{\mathbf{A}})\|_1-\|\VEC(\mathbf{S}_{\mathscr{A}}\circ\hat{\tilde{\mathbf{a}}})\|_1\notag \\
    &\le &2\|\VEC(\mathbf{S}_{\mathscr{A}}\circ(\tilde{\mathbf{A}}-\hat{\tilde{\mathbf{a}}}))\|_1,
\end{eqnarray*}
so we can rewrite \eqref{eq.lasso1} as 

\begin{eqnarray}\label{eq.lasso2}
    &&\frac{1}{T-h} \sum_{t=1}^{T-h}\|(\mathbf{X}_t^\top \otimes \mathbf{I}_N) \VEC(\tilde{\mathbf{A}}-\hat{\tilde{\mathbf{a}}})\|_2^2 +\frac{\gamma}{2} \|\VEC(\tilde{\mathbf{A}}-\hat{\tilde{\mathbf{a}}})\|_1 \notag \\
    &\le &\gamma(\|\VEC(\tilde{\mathbf{A}}-\hat{\tilde{\mathbf{a}}})\|_1+\|\VEC(\tilde{\mathbf{A}})\|_1-\|\VEC(\hat{\tilde{\mathbf{a}}})\|_1)\notag \\
    &\le &2\gamma\|\VEC(\mathbf{S}_{\mathscr{A}}\circ(\tilde{\mathbf{A}}-\hat{\tilde{\mathbf{a}}}))\|_1,
\end{eqnarray}
which further yields that

\begin{eqnarray}\label{eq.lasso2_1}
    &&\frac{1}{T-h} \sum_{t=1}^{T-h}\|(\mathbf{X}_t^\top \otimes \mathbf{I}_N) \VEC(\tilde{\mathbf{A}}-\hat{\tilde{\mathbf{a}}})\|_2^2 +\frac{\gamma}{2} \|\VEC(\mathbf{S}_{\bar{\mathscr{A}}} \circ(\tilde{\mathbf{A}}-\hat{\tilde{\mathbf{a}}}))\|_1 \notag \\
    &\le & \frac{3}{2}\gamma\|\VEC(\mathbf{S}_{\mathscr{A}}\circ(\tilde{\mathbf{A}}-\hat{\tilde{\mathbf{a}}}))\|_1.
\end{eqnarray}

\medskip

We now focus on $\frac{1}{T-h} \sum_{t=1}^{T-h}\|(\mathbf{X}_t^\top \otimes \mathbf{I}_N) \VEC(\tilde{\mathbf{A}}-\hat{\tilde{\mathbf{a}}})\|_2^2$ of \eqref{eq.lasso2_1}. For $\forall \tilde{\mathbf{a}}\in \mathbb{A}(\mathbf{S}_{\mathscr{A}} )$, Assumption \ref{AS3} yields that 

\begin{eqnarray}\label{eq.lasso3}
    \|\VEC(\tilde{\mathbf{a}}) \|_1 &=& \|\VEC(\mathbf{S}_{\mathscr{A}}\circ\tilde{\mathbf{a}}) \|_1 +\|\VEC(\mathbf{S}_{\bar{\mathscr{A}}}\circ\tilde{\mathbf{a}}) \|_1 \notag \\
    &\le & 4\|\VEC(\mathbf{S}_{\mathscr{A}}\circ\tilde{\mathbf{a}}) \|_1\le 4\sqrt{d_{\mathscr{A}}}\|\VEC(\mathbf{S}_{\mathscr{A}}\circ\tilde{\mathbf{a}}) \|_2.
\end{eqnarray}
Thus, with probability approaching 1,

\begin{eqnarray*}
    &&\frac{1}{T-h} \sum_{t=1}^{T-h}\|(\mathbf{X}_t^\top \otimes \mathbf{I}_N) \VEC(\tilde{\mathbf{A}}-\hat{\tilde{\mathbf{a}}})\|_2^2\notag \\
    & \ge &\VEC(\tilde{\mathbf{A}}-\hat{\tilde{\mathbf{a}}})^\top  (\pmb{\Sigma}_{\mathbf{B}}\otimes \mathbf{I}_N ) \VEC(\tilde{\mathbf{A}}-\hat{\tilde{\mathbf{a}}}) \notag \\
    &&-  \left\|\frac{1}{T-h} \sum_{t=1}^{T-h} \mathbf{X}_t\mathbf{X}_t^\top - \pmb{\Sigma}_{\mathbf{B}} \right\|_{\max}\cdot \|\VEC(\tilde{\mathbf{A}}-\hat{\tilde{\mathbf{a}}}) \|_1^2\notag \\
    &\ge &\VEC(\tilde{\mathbf{A}}-\hat{\tilde{\mathbf{a}}})^\top  (\pmb{\Sigma}_{\mathbf{B}} \otimes \mathbf{I}_N ) \VEC(\tilde{\mathbf{A}}-\hat{\tilde{\mathbf{a}}}) \notag \\
    &&-  4 d_{\mathscr{A}}\left\|\frac{1}{T-h} \sum_{t=1}^{T-h} \mathbf{X}_t\mathbf{X}_t^\top - \pmb{\Sigma}_{\mathbf{B}} \right\|_{\max}\cdot\|\VEC(\tilde{\mathbf{A}}-\hat{\tilde{\mathbf{a}}}) \|_2^2\notag \\
    &\ge &\frac{1}{2}\VEC(\tilde{\mathbf{A}}-\hat{\tilde{\mathbf{a}}})^\top  (\pmb{\Sigma}_{\mathbf{B}} \otimes \mathbf{I}_N ) \VEC(\tilde{\mathbf{A}}-\hat{\tilde{\mathbf{a}}}) ,
\end{eqnarray*}
where the second inequality follows from \eqref{eq.lasso3}, and the third inequality follows from the second inequality of \eqref{basis.ineq1} and $d_{\mathscr{A}} d_{\pmb{\beta}}^2\frac{\sqrt{\log N}}{\sqrt{T}}\to 0$.

Thus, by Assumption \ref{AS3},

\begin{eqnarray}\label{eq.lasso4}
    \frac{1}{2}\alpha \|\VEC(\tilde{\mathbf{A}}-\hat{\tilde{\mathbf{a}}}) \|_2^2&\le & \frac{3}{2}\gamma \|\VEC(\mathbf{S}_{\mathscr{A}}\circ(\tilde{\mathbf{A}}-\hat{\tilde{\mathbf{a}}}))\|_1\notag \\
    &\le & \frac{3}{2}\gamma  \sqrt{d_{\mathscr{A}}}\|\VEC(\mathbf{S}_{\mathscr{A}}\circ (\tilde{\mathbf{A}}-\hat{\tilde{\mathbf{a}}})) \|_2\notag \\
    &\le & \frac{3}{2}\gamma  \sqrt{d_{\mathscr{A}}}\|\VEC( \tilde{\mathbf{A}}-\hat{\tilde{\mathbf{a}}}) \|_2
\end{eqnarray}
where the first inequality follows from \eqref{eq.lasso2_1}. We can conclude that with probability approaching 1,

\begin{eqnarray*}
    \|\VEC(\tilde{\mathbf{A}}-\hat{\tilde{\mathbf{a}}}) \|_2\le \frac{3\gamma \sqrt{d_{\mathscr{A}}}}{\alpha}.
\end{eqnarray*}
Additionally, we can obtain that 

\begin{eqnarray*}
    \|\VEC( \tilde{\mathbf{A}}-\hat{\tilde{\mathbf{a}}})\|_1\le 4 \sqrt{d_{\mathscr{A}}}\|\VEC( \mathbf{S}_{\mathscr{A}}\circ(\tilde{\mathbf{A}}-\hat{\tilde{\mathbf{a}}}))\|_2\le  \frac{12\gamma d_{\mathscr{A}}}{\alpha}
\end{eqnarray*}
with probability approaching 1, where the first inequality follows from  \eqref{eq.lasso3}, and the second inequality follows from $\|\VEC(\tilde{\mathbf{A}}-\hat{\tilde{\mathbf{a}}}) \|_2\le \frac{3\gamma \sqrt{d_{\mathscr{A}}}}{\alpha}$. The proof of Theorem \ref{TM.Main2}.(1) is now completed.

\medskip

(2). Firstly, we emphasize that \eqref{def.argmin0} and \eqref{def.argmin} share the exactly same $\gamma$. We are now ready to prove that

\begin{eqnarray}\label{KKT9}
    \sgn(\hat{\tilde{\mathbf{a}}}_{\phi}) =\sgn (\tilde{\mathbf{A}})\quad (\text{i.e.,} \quad \sgn(\hat{\pmb{\beta}}_{\phi}) =\sgn (\pmb{\beta}_0))
\end{eqnarray}
with probability approaching one. Due to the property of convex optimization, $\hat{\tilde{\mathbf{a}}}_{\phi}$ is a solution of \eqref{def.argmin} if and only if there exits a subgradient

\begin{eqnarray*}
    \tilde{\mathbf{g}}\in \left\{\mathbf{z}\in \mathbb{R}^{N^2p} \mid  z_\ell = \sgn (\hat{\beta}_{\phi,\ell} )\phi_\ell  \text{ for } \sgn (\hat{\beta}_{\phi,\ell} )\ne 0,\text{ and }|z_\ell|\le \phi_\ell \text{ elsewhere}, \text{ for } \ell \in[N^2p]\right\}
\end{eqnarray*} 
such that 
\begin{eqnarray*}
    \mathbf{0}=\frac{1}{T-h}\sum_{t=1}^{T-h}\mathbf{Z}_t \mathbf{Z}_t^\top \hat{\pmb{\beta}}_\phi - \frac{1}{T-h}\sum_{t=1}^{T-h}\mathbf{Z}_t\mathbf{x}_{t+h}+\frac{\gamma}{2} \, \tilde{\mathbf{g}}.
\end{eqnarray*}

Note that $\sgn(\hat{\pmb{\beta}}_\phi) =\sgn (\pmb{\beta}_0)$ holds if and only if the following Karush-Kuhn-Tucker conditions hold:

\begin{eqnarray}
    \mathbf{0}=&&\frac{1}{T-h}\sum_{t=1}^{T-h}\mathbf{Z}_{\mathscr{A},t} \mathbf{Z}_{\mathscr{A},t}^\top (\hat{\pmb{\beta}}_{\phi,\mathscr{A}} - \pmb{\beta}_{0,{\mathscr{A}}})-\frac{1}{T-h}\sum_{t=1}^{T-h}\mathbf{Z}_{\mathscr{A},t}\mathbf{u}_{t,h} +\frac{\gamma}{2}\,\tilde{\mathbf{g}}_{\mathscr{A}},\label{KKT1} \\
    \mathbf{0}=&&\frac{1}{T-h}\sum_{t=1}^{T-h}\mathbf{Z}_{\bar{\mathscr{A}},t} \mathbf{Z}_{\mathscr{A},t}^\top (\hat{\pmb{\beta}}_{\phi,\mathscr{A}} - \pmb{\beta}_{0,{\mathscr{A}}})-\frac{1}{T-h}\sum_{t=1}^{T-h}\mathbf{Z}_{\bar{\mathscr{A}},t}\mathbf{u}_{t,h} +\frac{\gamma}{2}\,\tilde{\mathbf{g}}_{\bar{\mathscr{A}}}\label{KKT2},
\end{eqnarray}
and

\begin{eqnarray}\label{KKT3}
    \mathrm{sgn}(\beta_{0,\mathscr{A}, \ell})(\hat{\beta}_{\phi,\mathscr{A},\ell}-\beta_{0,\mathscr{A}, \ell})&>&-|\beta_{0,\mathscr{A}, \ell}|,\quad \text{for}\, \ell \in [d_{\mathscr{A}} ],
\end{eqnarray}
where $\tilde{\mathbf{g}}_{\mathscr{A}}$ and $\tilde{\mathbf{g}}_{\bar{\mathscr{A}}}$ contain respectively the elements of $ \tilde{\mathbf{g}}$ that are associated with the non-zero and zero elements of $\pmb{\beta}_0$.

We now investigate $\frac{1}{T-h}\sum_{t=1}^{T-h}\mathbf{Z}_{\mathscr{A},t} \mathbf{Z}_{\mathscr{A},t}^\top$. By standard eigenvalue perturbation bounds, 

\begin{eqnarray*}
    \left|\lambda_{\rm min}\left(\frac{1}{T-h}\sum_{t=1}^{T-h}\mathbf{Z}_{\mathscr{A},t} \mathbf{Z}_{\mathscr{A},t}^\top\right)-\lambda_{\rm min}\left(\pmb{\Sigma}_{\mathbf{Z}, \mathscr{A}}\right)\right|&\leq&  \left\|\frac{1}{T-h}\sum_{t=1}^{T-h}\mathbf{Z}_{\mathscr{A},t} \mathbf{Z}_{\mathscr{A},t}^\top- \pmb{\Sigma}_{\mathbf{Z}, \mathscr{A}}\right\|_2
    \notag\\
    &\leq&d_\mathscr{A} \left\|\frac{1}{T-h}\sum_{t=1}^{T-h}\mathbf{Z}_{\mathscr{A},t} \mathbf{Z}_{\mathscr{A},t}^\top- \pmb{\Sigma}_{\mathbf{Z}, \mathscr{A}}\right\|_{\max}
    \notag\\
    &\leq& d_\mathscr{A}d_{\pmb{\beta}}^2\mu_{h}^{-1-1/J}\gamma,
\end{eqnarray*}
where the last inequality follows from the second result in \eqref{basis.ineq1} and $\gamma \asymp \frac{\mu_{h}^{1+1/J}\sqrt{\log (N)}}{\sqrt{T}}$. Under the condition $d_{\mathscr{A}} d_{\pmb{\beta}}^2\frac{\sqrt{\log N}}{\sqrt{T}}\rightarrow0$ of Assumption \ref{AS3}, we establish the invertibility of $\frac{1}{T-h}\sum_{t=1}^{T-h}\mathbf{Z}_{\mathscr{A},t} \mathbf{Z}_{\mathscr{A},t}^\top$. Therefore, we can write

\begin{eqnarray}\label{KKT4}
    \left\|\left(\frac{1}{T-h}\sum_{t=1}^{T-h}\mathbf{Z}_{\mathscr{A},t} \mathbf{Z}_{\mathscr{A},t}^\top \right)^{-1}\right\|_{\infty}&\leq& \sqrt{d_\mathscr{A}}\left\|\left(\frac{1}{T-h}\sum_{t=1}^{T-h}\mathbf{Z}_{\mathscr{A},t} \mathbf{Z}_{\mathscr{A},t}^\top \right)^{-1}\right\|_{2}
    \notag\\
    &\leq& \frac{\sqrt{d_\mathscr{A}}}{\lambda_{\rm min}\left(\frac{1}{T-h}\sum_{t=1}^{T-h}\mathbf{Z}_{\mathscr{A},t} \mathbf{Z}_{\mathscr{A},t}^\top\right)}
    \notag\\
    &\leq&\frac{2\sqrt{d_{\mathscr{A}}}}{\alpha},
\end{eqnarray}
where $\alpha$ is defined in Assumption \ref{AS3} already.
  
From \eqref{KKT1}, we obtain the following expression for $\hat{\pmb{\beta}}_{\phi,\mathscr{A}} - \pmb{\beta}_{0,{\mathscr{A}}}$:

\begin{eqnarray}\label{KKT5}
    \hat{\pmb{\beta}}_{\phi,\mathscr{A}} - \pmb{\beta}_{0,{\mathscr{A}}}&=&\left(\frac{1}{T-h}\sum_{t=1}^{T-h}\mathbf{Z}_{\mathscr{A},t} \mathbf{Z}_{\mathscr{A},t}^\top\right)^{-1}\left(\frac{1}{T-h}\sum_{t=1}^{T-h}\mathbf{Z}_{\mathscr{A},t}\mathbf{u}_{t,h} -\frac{\gamma}{2}\,\tilde{\mathbf{g}}_{\mathscr{A}}\right).
\end{eqnarray}
In light of \eqref{KKT3}, to establish both \eqref{KKT1} and \eqref{KKT3}, it suffices to verify that 
\begin{eqnarray*}
    \left\|\left(\frac{1}{T-h}\sum_{t=1}^{T-h}\mathbf{Z}_{\mathscr{A},t} \mathbf{Z}_{\mathscr{A},t}^\top\right)^{-1}\left(\frac{1}{T-h}\sum_{t=1}^{T-h}\mathbf{Z}_{\mathscr{A},t}\mathbf{u}_{t,h} -\frac{\gamma}{2}\,\mathbf{g}_{\mathscr{A}}\right)\right\|_{\infty}<\min_{\ell \in [d_\mathscr{A}]}|\beta_{0,\mathscr{A},\ell}|.
\end{eqnarray*}
Using \eqref{KKT4} together with the first bound in \eqref{basis.ineq1}, we obtain
\begin{eqnarray}\label{KKT7}
    &&\left\|\left(\frac{1}{T-h}\sum_{t=1}^{T-h}\mathbf{Z}_{\mathscr{A},t} \mathbf{Z}_{\mathscr{A},t}^\top\right)^{-1}\left(\frac{1}{T-h}\sum_{t=1}^{T-h}\mathbf{Z}_{\mathscr{A},t}\mathbf{u}_{t,h} -\frac{\gamma}{2}\,\mathbf{g}_{\mathscr{A}}\right)\right\|_{\infty}
    \notag\\
    &\leq&\left\|\left(\frac{1}{T-h}\sum_{t=1}^{T-h}\mathbf{Z}_{\mathscr{A},t} \mathbf{Z}_{\mathscr{A},t}^\top\right)^{-1}\right\|_{\infty}\cdot \left(\left\|\frac{1}{T-h}\sum_{t=1}^{T-h}\mathbf{Z}_{\mathscr{A},t}\mathbf{u}_{t,h}\right\|_{\infty} +\frac{\gamma}{2}\max_{\ell\in[d_{\mathscr{A}}]}\phi_{\mathscr{A},\ell}\right)
     \notag\\
    &\leq&\frac{\sqrt{d_{\mathscr{A}}}\gamma}{2\alpha}\left(1+2 \max_{\ell\in[d_{\mathscr{A}}]}\phi_{\mathscr{A},\ell}\right).
\end{eqnarray}
Therefore,  \eqref{KKT1} and \eqref{KKT3} can be satisfied provided

\begin{eqnarray*}
    \min_{\ell \in [d_\mathscr{A}]}|\beta_{0,\mathscr{A},\ell}|>\frac{\sqrt{d_{\mathscr{A}}}\gamma}{2\alpha}\left(1+2 \max_{\ell\in[d_{\mathscr{A}}]}\phi_{\mathscr{A},\ell}\right),
\end{eqnarray*}
which is stated in the body of this lemma by noting that $\gamma \asymp \frac{\mu_{h}^{1+1/J}\sqrt{\log (N)}}{\sqrt{T}}$.

We now turn to \eqref{KKT2}, and let $d_{\bar{\mathscr{A}}} \coloneqq \|\mathbf{S}_{\bar{\mathscr{A}}}\|^2$. Combining \eqref{KKT2} and \eqref{KKT5}, it suffices to show that for any $\ell\in[d_{\bar{\mathscr{A}}}]$
  
\begin{eqnarray}\label{KKT6}
   \frac{\gamma}{2}\,\phi_{\bar{\mathscr{A}},\ell}&\geq& \left|\left(\frac{1}{T-h}\sum_{t=1}^{T-h}\mathbf{Z}_{\bar{\mathscr{A}},t,\ell} \mathbf{Z}_{\mathscr{A},t}^\top\right) \left(\frac{1}{T-h}\sum_{t=1}^{T-h}\mathbf{Z}_{\mathscr{A},t} \mathbf{Z}_{\mathscr{A},t}^\top\right)^{-1}\left(\frac{1}{T-h}\sum_{t=1}^{T-h}\mathbf{Z}_{\mathscr{A},t}\mathbf{u}_{t,h} -\frac{\gamma}{2}\,\mathbf{g}_{\mathscr{A}}\right)\right|
    \notag\\
    &&+\left|\frac{1}{T-h}\sum_{t=1}^{T-h}\mathbf{Z}_{\bar{\mathscr{A}},t,\ell}\mathbf{u}_{t,h} \right|,
\end{eqnarray}
where  $\mathbf{Z}_{\bar{\mathscr{A}},t,\ell}$ is the  $\ell^{th}$ row of $\mathbf{Z}_{\bar{\mathscr{A}},t}$ .

For the first term on the right-hand side of  \eqref{KKT6}, by \eqref{KKT7} and  \eqref{basis.ineq1}, 
\begin{eqnarray}\label{KKT8}
     &&\max_{\ell\in [d_{\bar{\mathscr{A}}}]}\left|\left(\frac{1}{T-h}\sum_{t=1}^{T-h}\mathbf{Z}_{\bar{\mathscr{A}},t,\ell} \mathbf{Z}_{\mathscr{A},t}^\top\right) \left(\frac{1}{T-h}\sum_{t=1}^{T-h}\mathbf{Z}_{\mathscr{A},t} \mathbf{Z}_{\mathscr{A},t}^\top\right)^{-1}\left(\frac{1}{T-h}\sum_{t=1}^{T-h}\mathbf{Z}_{\mathscr{A},t}\mathbf{u}_{t,h} -\frac{\gamma}{2}\,\mathbf{g}_{\mathscr{A}}\right)\right|
      \notag\\
     &\leq&d_{\mathscr{A}}\max_{\ell\in [d_{\bar{\mathscr{A}}}]}\left\|\frac{1}{T-h}\sum_{t=1}^{T-h} \mathbf{Z}_{\mathscr{A},t}\mathbf{Z}_{\bar{\mathscr{A}},t,\ell}^\top\right\|_{\infty} \notag \\
     &&\cdot \left\| \left(\frac{1}{T-h}\sum_{t=1}^{T-h}\mathbf{Z}_{\mathscr{A},t} \mathbf{Z}_{\mathscr{A},t}^\top\right)^{-1}\right\|_2\left\|\frac{1}{T-h}\sum_{t=1}^{T-h}\mathbf{Z}_{\mathscr{A},t}\mathbf{u}_{t,h} -\frac{\gamma}{2}\,\mathbf{g}_{\mathscr{A}}\right\|_\infty
     \notag\\
     &\leq&\frac{d_{\mathscr{A}}\gamma}{2\alpha}\left(\left\|\pmb{\Sigma}_{\mathbf{B}} \right\|_{\rm max}+ d_{\pmb{\beta}}^2\mu_{h}^{-1-1/J}\gamma\right)\left(1+2 \max_{\ell\in[d_{\mathscr{A}}]}\phi_{\mathscr{A},\ell}\right)
     \notag\\
     &\leq&\left(\frac{d_{\mathscr{A}}\gamma}{2\alpha}\left\|\pmb{\Sigma}_{\mathbf{B}} \right\|_{\rm max}+ \frac{\gamma}{4}\right)\left(1+2 \max_{\ell\in[d_{\mathscr{A}}]}\phi_{\mathscr{A},\ell}\right),
\end{eqnarray}
where the first inequality follows from sequentially applying the submultiplicativity of operator norms $\|\mathsf{A}\mathsf{B}\|_2\leq\|\mathsf{A}\|_2\|\mathsf{B}\|_2$ and the fact  $\|\mathsf{A}\|_2\leq \sqrt{{\rm nrows(\mathsf{A})}}\|\mathsf{A}\|_\infty$, where ${\rm nrows(\cdot)}$ computes the number of rows for a matrix. The third inequality holds by the condition $d_\mathscr{A}d_{\pmb{\beta}}^2\mu_{h}^{-1-1/J}\gamma\leq \frac{\alpha}{2}$.

It follows from \eqref{basis.ineq1} that the second term on the right-hand side of \eqref{KKT6} is bounded by $\frac{\gamma}{4}$. Combining this bound with \eqref{KKT8} yields the desired result in \eqref{KKT6}, provided that $ (1+\frac{2d_{\mathscr{A}}}{\alpha}\left\|\pmb{\Sigma}_{\mathbf{B}} \right\|_{\rm max} ) (1+2 \max_{\ell\in[d_{\mathscr{A}}]}\phi_{\mathscr{A},\ell} )\leq 2\min_{\ell\in[d_{\bar{\mathscr{A}}}]}\phi_{\bar{\mathscr{A}},\ell}-1$. Therefore,  \eqref{KKT2} is established, which completes the proof. 
\end{proof}

\medskip

\begin{proof}[Proof of Theorem \ref{TM.Main3}]

\item 
In what follows, we show that $$\text{IC}(\mathfrak{p})>\text{IC}(p) $$ for both cases $\mathfrak{p}>p$ and $\mathfrak{p}<p$ with probability approaching 1, so the proof is then completed.

Recall the notations defined in the proof of Theorem \ref{TM.Main2}, such as $\mathbf{Z}_t\coloneqq \mathbf{X}_t\otimes \mathbf{I}_N$ and $\pmb{\beta}_0\coloneqq\VEC(\tilde{\mathbf{A}})$. Accordingly, we suppose that $\hat{\pmb{\beta}}_{\mathfrak{p}}$ is obtained via \eqref{def.argmin0} with given $\mathfrak{p}$ lags. Therefore, $\text{IC}(\mathfrak{p})$ can be expanded as follows:

\begin{eqnarray*}
    \text{IC}(\mathfrak{p}) &=&\frac{1}{T-h}\sum_{t=1}^{T-h} \| (\mathbf{X}_t^\top \otimes \mathbf{I}_N) \VEC(\tilde{\mathbf{A}}-\hat{\tilde{\mathbf{a}}}_{\mathfrak{p}}) \|_2^2+\frac{1}{T-h}\sum_{t=1}^{T-h} \| \mathbf{u}_{t,h}\|_2^2\notag \\
    &&+\frac{2}{T-h}\sum_{t=1}^{T-h}  \mathbf{u}_{t,h}^\top (\mathbf{X}_t^\top \otimes \mathbf{I}_N) \VEC(\tilde{\mathbf{A}}-\hat{\tilde{\mathbf{a}}}_{\mathfrak{p}}) + \mathfrak{p}\cdot \xi\notag \\
    &=&\frac{1}{T-h}\sum_{t=1}^{T-h} \|\mathbf{Z}_t^\top (\pmb{\beta}_{0} - \hat{\pmb{\beta}}_{\mathfrak{p}}) \|_2^2+\frac{1}{T-h}\sum_{t=1}^{T-h} \| \mathbf{u}_{t,h} \|_2^2 \notag \\
    &&+\frac{2}{T-h}\sum_{t=1}^{T-h} \mathbf{u}_{t,h}^\top \mathbf{Z}_t^\top (\pmb{\beta}_{0} - \hat{\pmb{\beta}}_{\mathfrak{p}}) + \mathfrak{p}\cdot \xi.
\end{eqnarray*}

\medskip

Case 1 --- We consider the case with $\mathfrak{p} > p$. By the proof for (1) of Theorem \ref{TM.Main2}, we know that

\begin{eqnarray*}
    &&\frac{1}{T-h}\sum_{t=1}^{T-h} \| (\mathbf{X}_t^\top \otimes \mathbf{I}_N) \VEC(\tilde{\mathbf{A}}-\hat{\tilde{\mathbf{a}}}_{\mathfrak{p}}) \|_2^2 \le  2\gamma\|\VEC(\mathbf{S}_{\mathscr{A}}\circ(\tilde{\mathbf{A}}-\hat{\tilde{\mathbf{a}}}_{\mathfrak{p}}))\|_1
\end{eqnarray*}
and 

\begin{eqnarray*}
    &&\left|\frac{1}{T-h}\sum_{t=1}^{T-h}  \mathbf{u}_{t,h}^\top (\mathbf{X}_t^\top \otimes \mathbf{I}_N) \VEC(\tilde{\mathbf{A}}-\hat{\tilde{\mathbf{a}}}_{\mathfrak{p}})\right|\le \frac{\gamma}{4}\|\VEC(\tilde{\mathbf{A}}-\hat{\tilde{\mathbf{a}}}_{\mathfrak{p}}) \|_1,
\end{eqnarray*}
where $\gamma \asymp \frac{\mu_{h}^{1+1/J}\sqrt{\log (N)}}{\sqrt{T}}$, the first inequality follows from \eqref{eq.lasso2}, and the second inequality follows from \eqref{basis.ineq1}. Therefore, as long as

\begin{eqnarray}\label{cond.1}
    \xi/  (\gamma\|\VEC(\tilde{\mathbf{A}}-\hat{\tilde{\mathbf{a}}}_{\mathfrak{p}}) \|_1)  \to \infty\quad\text{and}\quad\xi\to 0,
\end{eqnarray}
we have $\text{IC}(\mathfrak{p}) > \text{IC}(p)$ for $\mathfrak{p} > p$. In connection with Lemma \ref{TM.Main2}, simple algebra shows that the condition \eqref{cond.1} is equivalent to requiring 

\begin{eqnarray*}
    \frac{\mu_h^{2+2/J} d_{\mathscr{A}} \log N}{T\xi}\to 0\quad\text{and}\quad\xi\to 0.
\end{eqnarray*}

\medskip

Case 2 --- We now consider the case with $\mathfrak{p}<p$. Without loss of generality, suppose that $\mathfrak{p}=p - 1$. Additionally, we still let $\hat{\pmb{\beta}}_{\mathfrak{p}}$ have the same dimension as $\pmb{\beta}_0$, but let the elements corresponding to $\mathbf{A}_p$ be 0. Accordingly, suppose that $\mathbf{S}_p$ be a selection matrix such that $\mathbf{S}_p \pmb{\beta}_{0}$ yields a column vector including all the non-zeros elements of $\mathbf{A}_p$.

Firstly, we pay attention to the term $\frac{1}{T-h}\sum_{t=1}^{T-h} \|\mathbf{Z}_t^\top (\pmb{\beta}_{0} - \hat{\pmb{\beta}}_{\mathfrak{p}}) \|_2^2$. 

\begin{eqnarray*}
    \frac{1}{T-h}\sum_{t=1}^{T-h} \|\mathbf{Z}_t^\top (\pmb{\beta}_{0} - \hat{\pmb{\beta}}_{\mathfrak{p}}) \|_2^2 &\asymp &(\pmb{\beta}_0-\hat{\pmb{\beta}}_{\mathfrak{p}})^\top \pmb{\Sigma}_{\mathbf{B}}(\pmb{\beta}_0-\hat{\pmb{\beta}}_{\mathfrak{p}})\notag \\
    &\ge  & \alpha \cdot \|\mathbf{S}_p \pmb{\beta}_{0} \|_2^2>c>0,
\end{eqnarray*}
where the first step follows from the proof for (1) of Theorem \ref{TM.Main2}, the second step follows from Assumption \ref{AS3} and the facts that $\mathfrak{p}=p-1$ and the definition of $\hat{\pmb{\beta}}_{\mathfrak{p}}$, and $c$ is a non-negligible positive number. In connection with the proof for (1) of Theorem \ref{TM.Main2}, it is then easy to see $\text{IC}(\mathfrak{p})>\text{IC}(p)$ for $\mathfrak{p}<p$. 

\medskip

Based on the above development, the proof is now completed.
\end{proof}

\medskip

\begin{proof}[Proof of Theorme \ref{TM.Main4}]
\item 

(1). Recall the notations defined in the proof of Theorem \ref{TM.Main1} and in the beginning of Appendix \ref{AP.B}. For example, $\mathscr{F}_{t,s}=\sigma(\pmb{\varepsilon}_t,\ldots,\pmb{\varepsilon}_s)$ for $t\ge s$; we also let $U_{ij} (t,h,m) \coloneqq E[u_{it,h} x_{jt} \mid \mathscr{F}_{t+h,t+h-m}]$. Additionally, we denote the projection operator $\mathscr{P}_{l}(\cdot) = E[\cdot \mid \mathscr{F}_{l}] -E[\cdot \mid \mathscr{F}_{l-1}]$, where $\mathscr{F}_l=\sigma(\pmb{\varepsilon}_l,\pmb{\varepsilon}_{l-1},\ldots)$.

We then write

\begin{eqnarray}\label{mm40}
    \hat{\pmb{\Omega}}_h &=& \frac{1}{T-h}\sum_{|t- k|< h}^{T-h} (\mathbf{X}_t  \otimes \mathbf{I}_N)\hat{\mathbf{u}}_{t,h} \hat{\mathbf{u}}_{k,h}^\top(\mathbf{X}_k^\top \otimes \mathbf{I}_N)\notag \\
    &=& \frac{1}{T-h}\sum_{|t-k|< h}^{T-h}\mathbf{Z}_t\mathbf{u}_{t,h} \mathbf{u}_{k,h}^\top\mathbf{Z}_k^\top + \frac{1}{T-h}\sum_{|t-k|< h}^{T-h} \mathbf{Z}_t\mathbf{Z}_t^\top  (\pmb{\beta}_0-\hat{\pmb{\beta}}_\phi) (\pmb{\beta}_0-\hat{\pmb{\beta}}_\phi)^\top \mathbf{Z}_k\mathbf{Z}_k^\top\notag \\
    &&+ \frac{1}{T-h}\sum_{|t-k|< h}^{T-h} \mathbf{Z}_t\mathbf{Z}_t^\top (\pmb{\beta}_0-\hat{\pmb{\beta}}_\phi) \mathbf{u}_{k,h}^\top \mathbf{Z}_k^\top\notag \\
    &&+ \frac{1}{T-h}\sum_{|t-k|< h}^{T-h} \mathbf{Z}_t \mathbf{u}_{t,h} (\pmb{\beta}_0-\hat{\pmb{\beta}}_\phi)^\top \mathbf{Z}_k\mathbf{Z}_k^\top
    \notag\\
    &\coloneqq&\pmb{\mathcal{B}}_1+\cdots+\pmb{\mathcal{B}}_4,
\end{eqnarray}
where $\mathbf{Z}_t$, $\pmb{\beta}_0$ and $\hat{\pmb{\beta}}_\phi$ are defined in the beginning of Appendix \ref{AP.B}.

To study the order of the first term (i.e., $\pmb{\mathcal{B}}_1$), it is sufficient to consider $ \hat{\omega}-\omega$, where

\begin{eqnarray*}
    \hat{\omega} = \sum_{|t-k|< h}^{T-h}x_{j_1t}u_{i_1t,h} u_{i_2k,h} x_{j_2k} \quad\text{and}\quad
    \omega  = \sum_{|t-k|< h}^{T-h}E[x_{j_1t}u_{i_1t,h} u_{i_2k,h} x_{j_2k}].
\end{eqnarray*}
Note that 

\begin{eqnarray*}
    \hat{\omega}  &=& \sum_{|t-k|< h}^{T-h}x_{j_1t}u_{i_1t,h} u_{i_2k,h} x_{j_2k}\notag \\
    &=& \sum_{t=1}^{T-h} u_{i_1t,h} u_{i_2t,h}x_{j_1t} x_{j_2t} +  \sum_{0<t-k< h } u_{i_1t,h} u_{i_2k,h}x_{j_1t} x_{j_2k} + \sum_{0<k-t< h } u_{i_1t,h} u_{i_2k,h}x_{j_1t} x_{j_2k} \notag\\
    &\eqqcolon&\hat{\omega}_1+\hat{\omega}_2+\hat{\omega}_3.
\end{eqnarray*}
Accordingly, we can define $\omega_1$, $\omega_2$, and $\omega_3$ as the expectation counterparts. 
We now study the term $\hat{\omega}_2-\omega_2$. The investigation about $\hat{\omega}_1-\omega_1$ and $\hat{\omega}_3-\omega_3$ are similar. We drop the indices $(i_1,i_2,j_1, j_2)$ for simplicity whenever possible if no misunderstanding arises.

Note that

\begin{eqnarray*}
    \hat{\omega}_2 =\sum_{t=2}^{T-h}\sum_{t_{h}< k< t } U_{i_1j_1} (t,h)U_{i_2j_2} (k,h) ,
\end{eqnarray*}
where $t_{h}\coloneqq (t-h)\vee 0$ and $U_{ij} (t,h)=u_{it,h} x_{jt}$.  
 Without loss of generality, we split $\{2,\ldots, T-h\}$ into $b_T$ blocks (denoted as $B_n$'s) with sample size $2m_T$ that satisfies $m_T=\lfloor T^{\theta}\rfloor$, where $0<\theta<1$. In other words, $\cup_{n=1}^{b_T}B_n =\{2,\ldots, T-h\}$. Denote

\begin{eqnarray*}
    Q_{T}\coloneqq \sum_{n=1}^{b_T} Q_n \quad \text{and}\quad Q_n\coloneqq\sum_{t\in B_n}\sum_{t_h< k<t } U_{i_1j_1} (t,h,m_T)U_{i_2j_2} (k,h,m_T).
\end{eqnarray*}
Moreover, given $m_T > h$, $Q_1, Q_3,\ldots $ are independent blocks, and $Q_2, Q_4,\ldots $ are also independent blocks. For notational simplicity, we further define

\begin{eqnarray}\label{def.SSQ}
     S_{ij, h}(t)&=&\sum_{t_h< k< t }U_{ij} (k,h),\quad S_{ij, h}(t,m)=\sum_{t_h< k< t }U_{ij} (k,h,m),
     \notag\\
     Q^\ast_{T}&=&\sum_{t=2}^{T-h}\sum_{t_h< k<t } U_{i_1j_1} (t,h)U_{i_2j_2} (k,h,m_T).
\end{eqnarray}

Then we write

\begin{eqnarray}\label{m9}
    &&\Pr\left(|\hat{\omega}_2 - \omega_2|>\epsilon_{NT}\right)\notag \\
    &\le &\Pr\left(|\hat{\omega}_2 - \omega_2-Q_{T}+E[Q_{T}]|>\frac{\epsilon_{NT}}{2}\right) + \Pr\left(|Q_{T} - E[Q_{T}]|>\frac{\epsilon_{NT}}{2}\right),
\end{eqnarray}
where $\epsilon_{NT}=c_\epsilon \sqrt{M Th\log N}$ with constants $c_\epsilon>0$ and a sufficiently large $M$ that satisfies that $M>4$ and $N^4T^{-M}\rightarrow0$. In what follows, we investigate these probabilities on the right hand side of \eqref{m9} one by one. 

For the first term, it is clear to see that 

\begin{eqnarray*}
   \|\hat{\omega}_2 - \omega_2-Q_{T}+E[Q_{T}]\|_{J/2} &\leq& \|\hat{\omega}_2 - \omega_2-Q^\ast_{T}+E[Q^\ast_{T}]\|_{J/2}
   \notag\\
   &&+\|Q_{T} - E[Q_{T}]-Q^\ast_{T}+E[Q^\ast_{T}]\|_{J/2}.
\end{eqnarray*}

Since $\mathscr{P}_l( \hat{\omega}_2-Q^\ast_{T})$ is a MDS, for $J>4$, by applying the Burkholder inequality and Minkowski inequality sequentially as in \eqref{nu_ml}, we obtain

\begin{eqnarray}\label{m8}
    \|\hat{\omega}_2 - \omega_2-Q^\ast_{T}+E[Q^\ast_{T}]\|^2_{J/2}&=&\Big\|\sum_{l=-\infty}^{T} \mathscr{P}_l( \hat{\omega}_2-Q^\ast_{T})\Big\|^2_{J/2}
    \notag\\
    &\leq&O(1)\sum_{l=-\infty}^{T}\| \mathscr{P}_l( \hat{\omega}_2-Q^\ast_{T})\|^2_{J/2}.
\end{eqnarray}

From now on, we suppress the indices $(i_1,i_2,j_1,j_2)$ in $U_{i_1j_1} (t,h)$, $U_{i_1j_1} (t,h, m)$,  $S_{i_2j_2, h}(t)$ and $S_{i_2j_2, h}(t,m_T)$ for notational simplicity. Let $U_l^*(t,h)$, $U_l^*(t,h,m)$,  $ S_{h,l}^*(t)$, and $ S_{h,l}^*(t,m)$ be coupled versions of $U(t,h) $, $U(t,h,m)$, $S_{h}(t)$,  and $S_{h}(t,m)$, respectively,  by replacing $\pmb{\varepsilon}_l$ with an independent copy of $\pmb{\varepsilon}_l^*$. For each $l$,  by Jensen inequality,

\begin{eqnarray*}
    &&\|\mathscr{P}_l( \hat{\omega}_2-Q^\ast_{T})\|_{J/2} \notag \\
    &=& \Big\|\sum_{t=2\vee (l-h)}^{T-h}E\big[\big( U (t,h)\big(S_h (t)-S_h(t,m_T)\big)-U_l^\ast (t,h)\big(S_{h,l}^\ast (t)-S_{h,l}^\ast(t,m_T)\big)\big)\mid \mathscr{F}_{l} \big]\Big\|_{J/2}
    \notag\\
    &\leq&\Big\|\sum_{t=2\vee (l-h)}^{T-h}\big( U (t,h)\big(S_h (t)-S_h(t,m_T)\big)-U_l^\ast (t,h)\big(S_{h,l}^\ast (t)-S_{h,l}^\ast(t,m_T)\big)\big)\Big\|_{J/2}
    \notag\\
    &\leq&\Big\| \sum_{t=2\vee (l-h)}^{T-h}U_l^\ast (t,h)\big(S_h (t)-S_h(t,m_T)-S_{h,l}^\ast (t)+S_{h,l}^\ast(t,m_T)\big) \Big\|_{J/2}
    \notag\\
    &&+\Big\|\sum_{t=2\vee (l-h)}^{T-h} \big(U (t,h)-U_l^\ast (t,h)\big) \big(S_h (t)-S_h(t,m_T)\big) \Big\|_{J/2}
    \notag\\
    &\eqqcolon&\mathcal{A}_{l,1}+\mathcal{A}_{l,2}.
\end{eqnarray*}
For the first term on the right-hand side, by Cauchy-Schwarz inequality and Lemma \ref{LMB7}, 

\begin{eqnarray*}
     \mathcal{A}_{l,1} &\leq& \sum_{k=1\vee (l-h)}^{T-h-1}\Big\|\sum_{t=k+1}^{k+h-1} U_l^\ast (t,h)\Big\|_{J}\cdot \Big\|U(k,h)-U(k,h,m_T)-U_l^\ast (k,h)+U_l^\ast(k,h,m_T)\big\|_{J}
    \notag\\
    &=& O(\sqrt{h})\sum_{k=1\vee (l-h)}^{T-h-1}  \|U(k,h)-U(k,h,m_T) \|_{J} \wedge  \|U(k,h)-U_l^\ast (k,h)\|_{J} 
    \notag\\
    &=& O(\sqrt{h})\sum_{k=1\vee (l-h)}^{T-h-1}  d_{J,i_2j_2}(h,m_T)\wedge c_{J,i_2j_2}(h,k+h-l) .
\end{eqnarray*}
Finally, it yields that 

\begin{eqnarray}\label{m6}
    \sum_{l=-\infty}^{T}  \mathcal{A}_{l,1} ^2&=&O(h)\sum_{l=-\infty}^{T}\left(\sum_{k=1\vee (l-h)}^{T-h-1}  d_{J,i_2j_2}(h,m_T)\wedge c_{J,i_2j_2}(h,k+h-l) \right)^2
    \notag\\
    &=&O(h)\sum_{l=-\infty}^{T}\left(\sum_{k=0\vee(h-l+1)}^{T-l-1}  d_{J,i_2j_2}(h,m_T)\wedge c_{J,i_2j_2}(h,k) \right)^2
    \notag\\
    &=&O(Th)d_{J,i_2j_2}^{\ast2}(h,m_T),
\end{eqnarray}
where $d_{J,ij}^\ast(h,m)=\sum_{k=0}^{\infty} d_{J,ij}(h,m) \wedge c_{J,ij}(h,k)$. 
      
For $\mathcal{A}_{l,2}$, using similar arguments and Lemma \ref{LMB7}, we obtain
\begin{eqnarray*}
    \mathcal{A}_{l,2} &\leq& \sum_{t=2\vee (l-h)}^{T-h} \big\|U (t,h)-U_l^\ast (t,h)\big\|_{J} \big\|S_h (t)-S_h(t,m_T)\big\|_{J} 
    \notag\\
    &=&  O(\sqrt{h})d_{J,i_1j_1}(h,m_T)\sum_{t=2\vee (l-h)}^{T-h} c_{J,i_2j_2}(h,t+h-l),  
\end{eqnarray*}
which immediately gives 
\begin{eqnarray}\label{m7}
   \sum_{l=-\infty}^{T}   \mathcal{A}_{l,2}^2 &\leq&  O(h)d^2_{J,i_1j_1}(h,m_T)\sum_{l=-\infty}^{T}\left(\sum_{t=2\vee (l-h)}^{T-h} c_{J,i_2j_2}(h,,t+h-l)\right)^2
    \notag\\
    &\leq&O(Th)d_{J,i_2j_2}^{\ast2}(h,m_T).
\end{eqnarray}

By \eqref{m8}, \eqref{m6} and \eqref{m7}, we have
\begin{eqnarray*}
    \|\hat{\omega}_2 - \omega_2-Q^\ast_{T}+E[Q^\ast_{T}]\|_{J/2} &\leq&  O(\sqrt{Th})d_{J,i_2j_2}^{\ast}(h,m_T).
\end{eqnarray*}
Using analogous arguments, we can obtain 
 \begin{eqnarray}\label{mm15}
    \|Q_{T} - E[Q_{T}]-Q^\ast_{T}+E[Q^\ast_{T}]\|_{J/2} &\leq&  O(\sqrt{Th})d_{J,i_1j_1}^{\ast}(h,m_T).
\end{eqnarray}
Combining these results together with Assumption \ref{AS4} gives
\begin{eqnarray}\label{mm17}
    \|\hat{\omega}_2 - \omega_2-Q_{T} + E[Q_{T}]\|_{J/2} &\leq&  O(\sqrt{Th})\big(d_{J,i_1j_1}^{\ast}(h,m_T)+d_{J,i_2j_2}^{\ast}(h,m_T)\big)
    \notag\\
    &=&O\left(h^{1/2}T^{1/2-c_0\theta}\right),
\end{eqnarray}
where $c_0 =\big(\frac{2c_b-1}{2}\big)\big(\frac{c_\delta\wedge c_b-1}{c_\delta\wedge c_b}\big)$.

By Markov inequality and \eqref{mm17}, we are then able to obtain that  
\begin{eqnarray}\label{mm27}
    &&\Pr\left(|\hat{\omega}_2 - \omega_2-Q_{T}+E[Q_{T}]|>\frac{\epsilon_{NT}}{2}\right) \notag \\
    &\leq& \big(\frac{\epsilon_{NT}}{2}\big)^{-J/2}\|\hat{\omega}_2 - \omega_2-Q_{T} + E[Q_{T}]\|_{J/2}^{J/2}
    \notag\\
    &\leq&O\left(\epsilon_{NT}^{-J/2}T^{J/4}h^{J/4}\right) \big(d_{J,i_1j_1}^{\ast}(h,m_T)+d_{J,i_2j_2}^{\ast}(h,m_T)\big)^{J/2} 
    \notag\\
    &=&O\left(\epsilon_{NT}^{-J/2}h^{J/4}T^{(1/2-c_0\theta) J/2}\right).
\end{eqnarray}
 
\medskip

We now proceed to study the second probability on the right-hand side of \eqref{m9}.
Without loss of generality, we suppose that $b_T$ is an even number, and by Lemma \ref{LMB2}, we then write

\begin{eqnarray}\label{mm13}
    &&\Pr(|Q_{T}-E[Q_{T}]|\ge \epsilon_x)\notag \\
    &\le &\Pr\left(\left|\sum_{n=1}^{b_T/2}Q_{2n-1}-\sum_{n=1}^{b_T/2}E[Q_{2n-1}]\right|\ge \epsilon_x/2\right)+\Pr\left(\left|\sum_{n=1}^{b_T/2}Q_{2n}-\sum_{n=1}^{b_T/2}E[Q_{2n}]\right|\ge \epsilon_x/2\right)\notag \\
    &\le &\sum_{j=1}^{b_T} \Pr(|Q_n-E[Q_n]|\ge \epsilon_y) \notag \\
    &&+2 \exp\left(\frac{-\alpha^2 (\epsilon_x/2)^2}{2e^{J/2}\sum_{n=1}^{b_T/2}\|Q_{2n-1}-E[Q_{2n-1}]\|_2^2 }\right) \notag \\
    &&+ 2\left(\frac{ \sum_{n=1}^{b_T/2}\|Q_{2n-1}-E[Q_{2n-1}]\|_{J/2}^{J/2}}{\beta (\epsilon_x/2)\epsilon_y^{J/2-1}} \right)^{\beta \epsilon_x/(2\epsilon_y)}\notag \\
    &&+2 \exp\left(\frac{-\alpha^2 (\epsilon_x/2)^2}{2e^{J/2}\sum_{n=1}^{b_T/2}\|Q_{2n}-E[Q_{2n}]\|_2^2 }\right) \notag \\
    &&+ 2\left(\frac{ \sum_{n=1}^{b_T/2}\|Q_{2n}-E[Q_{2n}]\|_{J/2}^{J/2}}{\beta (\epsilon_x/2)\epsilon_y^{J/2-1}} \right)^{\beta \epsilon_x/(2\epsilon_y)}
\end{eqnarray}
for $J\geq 4$, where $\beta = J/(J+4)$, $\alpha = 1-\beta$, $\epsilon_x=\frac{\epsilon_{NT}}{2}$ and $\epsilon_y=\frac{\beta\epsilon_x}{2c_{M}}$ with the constant $c_{M}>\beta$. Without loss of generality, we only consider the first three terms on the right-hand side of this inequality.

Recall that 

\begin{eqnarray*}
    Q_n &=&\sum_{t\in B_n}\sum_{t_h< k< t } U_{i_1j_1} (t,h,m_T)U_{i_2j_2} (k,h,m_T) \notag \\
    &= & \sum_{t\in B_n} U_{i_1j_1} (t,h,m_T) S_{i_2j_2,h}(t,m_T),
\end{eqnarray*}
where the second line follows from the definition of \eqref{def.SSQ}.

From here, we suppress the indices $(i_1,i_2,j_1, j_2)$ again for notational simplicity. Without loss of generality, we consider the order of $\|Q_1 -E[ Q_1 ]\|_J$. As in \eqref{m8}, we know that  

\begin{eqnarray}\label{mm10}
    \|Q_1 -E[ Q_1 ]\|^2_{J/2}=O(1)\sum_{l=1-m_T}^{2m_T}\big\| \mathscr{P}_l(Q_1)\big\|^2_{J/2}.
\end{eqnarray}
For each $l$, we can further write
\begin{eqnarray*}
    \big\| \mathscr{P}_l(Q_1)\big\|_{J/2}&=&\left\|\sum_{t=2}^{2m_T} \mathscr{P}_{l}(U (t,h,m_T) S_{h}(t,m_T) )\right\|_{J/2}  \notag \\
    &= & \left\|\sum_{t=2}^{2m_T}(E[U (t,h,m_T) S_{h}(t,m_T)\mid \mathscr{F}_{l}]-E[U (t,h,m_T) S_{h}(t,m_T)\mid \mathscr{F}_{l-1}])\right\|_{J/2}  \notag \\
    &= & \left\|\sum_{t=2}^{2m_T}E[(U (t,h,m_T) S_{h}(t,m_T)-U_{l}^* (t,h,m_T) S_{h,l}^*(t,m_T))\mid \mathscr{F}_{l}]\right\|_{J/2}  \notag \\
    &\le & \left\|\sum_{t=2}^{2m_T}U_{l}^* (t,h,m_T)( S_{h}(t, m_T)-S_{h,l}^*(t,m_T) )\right\|_{J/2}  \notag \\
    & &+ \left\|\sum_{t=2}^{2m_T}(U (t,h,m_T)-U_{l}^* (t,h,m_T) )S_{h}(t,m_T)\right\|_{J/2}
    \notag\\
    &\eqqcolon&\mathcal{A}_{l,3}+\mathcal{A}_{l,4}.  
\end{eqnarray*}
For $\mathcal{A}_{l,3}$, by Cauchy-Schwarz inequality and Lemma \ref{LMB7},
\begin{eqnarray*}
    \mathcal{A}_{l,3}&\leq&\sum_{k= 1 }^{2m_T-1} \left\|\big(U (k,h,m_T)-U_{l}^\ast (k,h,m_T )\big)\right\|_{J}\Big\|\sum_{t=k+1}^{k+h}U_{l}^* (t,h,m_T)\Big\|_{J}
    \notag\\
    &\leq&O(\sqrt{h})\sum_{k= 1\vee(l-h) }^{2m_T-1}  c_{J,i_2j_2}(h,k+h-l),
\end{eqnarray*}
which gives that 
\begin{eqnarray}\label{mm11}
    \sum_{l=1-m_T}^{2m_T} \mathcal{A}_{l,3}^2\leq O(m_Th)\left(\sum_{k= 0 }^{\infty} c_{J,i_2j_2}(h,k)\right)^2.
\end{eqnarray}
Analogously, we can obtain 
\begin{eqnarray}\label{mm12}
     \sum_{l=1-m_T}^{2m_T} \mathcal{A}_{l,4}^2&\leq&O(m_Th)\left(\sum_{k= 0 }^{\infty} c_{J,i_1j_1}(h,k)\right)^2.
\end{eqnarray}

By \eqref{mm10}, \eqref{mm11}, and \eqref{mm12}, we obtain 
\begin{eqnarray}\label{mm16}
     \|Q_1 -E[ Q_1 ]\|_{J/2}&\leq&O(\sqrt{m_Th})\left( \sum_{k= 0 }^{\infty} c_{J,i_1j_1}(h,k)+\sum_{k= 0 }^{\infty} c_{J,i_2j_2}(h,k) \right) =O(\sqrt{m_Th}).
\end{eqnarray}
Therefore, we have established that 
\begin{eqnarray*}
    &&\sum_{n=1}^{b_T/2}\|Q_{2n-1}-E[Q_{2n-1}]\|_2^2 =O(Th),\notag\\
    &&\sum_{n=1}^{b_T/2}\|Q_{2n-1}-E[Q_{2n-1}]\|_{J/2}^{J/2}=O\left(b_Tm_T^{J/4}h^{J/4}\right).
\end{eqnarray*} 

For the second term on the right-hand side of \eqref{mm13},  
\begin{eqnarray*}
    \exp\left(\frac{-\alpha^2 (\epsilon_x/2)^2}{2e^{J/2}\sum_{n=1}^{b_T/2}\|Q_{2n-1}-E[Q_{2n-1}]\|_2^2 }\right)&=&O\left(\exp\left(-C_{\alpha,q}\frac{\epsilon_x^2}{Th}\right)\right)
    \notag\\
    &=&O\left(\exp\left(-\frac{c_\epsilon^2}{4}C_{\alpha,q}M\log N\right)\right)
    \notag\\
    &=&O\left(N^{-M}\right),
\end{eqnarray*}
where $C_{\alpha,J/2}=\frac{1}{2}\alpha^2e^{-J/2}$ and the last equality holds for some suitable choice of $c_\epsilon$.

For the third term on the right-hand side of \eqref{mm13}, by setting $c_M>M\left(J/4-1\right)^{-1}(1-\theta)^{-1}$, we obtain
\begin{eqnarray*}
    \left(\frac{ \sum_{n=1}^{b_T/2}\|Q_{2n-1}-E[Q_{2n-1}]\|_{J/2}^{J/2}}{\beta (\epsilon_x/2)\epsilon_y^{J/2-1}} \right)^{\beta \epsilon_x/(2\epsilon_y)}&=&O\left(\left(\frac{ b_Tm_T^{J/4}h^{J/4}}{\beta (\epsilon_x/2)\epsilon_y^{J/2-1}} \right)^{\beta \epsilon_x/(2\epsilon_y)}\right)
    \notag\\
    &=&O\left(\left(\frac{ b_Tm_T^{J/4}h^{J/4}}{c_M\epsilon_y^{J/2}} \right)^{c_M}\right)
    \notag\\
    &=&O\left(T^{-\left(J/4-1\right)(1-\theta)c_M}(\log N)^{-J/4}\right)
    \notag\\
    &=&O\left(T^{-M}\right).
\end{eqnarray*}
In summary of these results and applying Lemma \ref{LMB8}.(2) to $\Pr(|Q_n-E[Q_n]|\ge \epsilon_y)$ with $T$ in the lemma replaced by $m_T$, we have
\begin{eqnarray}\label{mm28}
    &&\Pr(|Q_{T}-E[Q_{T}]|\ge \epsilon_x)
    \notag\\
    &\leq &\sum_{j=1}^{b_T} \Pr(|Q_n-E[Q_n]|\ge \epsilon_y) +O\left((N\wedge T)^{-M}\right)
    +O\left((N\wedge T)^{-M}\right)
    \notag\\
    &=& O\left(\epsilon_y^{-J/2} T \log T\left(  h^{J/4}m_T^{(1/2-c_0\theta) J/2-1} 
    + h^{(1-c_0\theta)J/2-1}+1\right)\right)
    \notag\\
    &&+O\left((N\wedge T)^{-M}\right). 
\end{eqnarray}

By \eqref{m9}, \eqref{mm27}, and \eqref{mm28}, 
\begin{eqnarray}
    &&\Pr\left(|\hat{\omega}_2 - \omega_2|>\epsilon_{NT}\right)\notag \\
    &= &O\left(\epsilon_{NT}^{-J/2} T\log T\left(  h^{J/4} m_T^{(1/2-c_0\theta) J/2-1}
    + h^{(1-c_0\theta)J/2-1}+1\right)\right)
    \notag\\
    &&+O\left((N\wedge T)^{-M}\right).
\end{eqnarray}
Using analogous arguments, we can show that $\Pr\left(|\hat{\omega}_1-\omega_1|>\epsilon_{NT}\right)$ and $Pr\left(|\hat{\omega}_3-\omega_3|>\epsilon_{NT}\right)$ are also bounded by the same order. Therefore, we can set $\theta=1/c_0$ and utilize Assumption \ref{AS4} to show that
\begin{eqnarray}
    &&P(\max_{i_1,i_2,j_1,j_2\in[N]}|\hat{\omega} - \omega|>3\epsilon_{NT})
    \notag\\
    &\leq& \sum_{i_1,i_2,j_1,j_2=1}^NP(|\hat{\omega} - \omega|>3\epsilon_{NT})
    \notag\\
    &= &O\left(\epsilon_{NT}^{-J/2} N^4\log T\left( T h^{J/4} m_T^{(1/2-c_0\theta) J/2-1}
    + Th^{(1-c_0\theta)J/2-1}+T\right)\right)+O\left(N^{4}(N\wedge T)^{-M}\right)
    \notag\\
    &=&O\left( N^4T^{1-J/4}(\log N)^{-J/4}\log T\left( m_T^{(1/2-c_0\theta) J/2-1} 
    + h^{(1/2-c_0\theta) J/2-1}+h^{-J/4}\right)\right)\notag \\
    &&+O\left(N^{4}(N\wedge T)^{-M}\right)
    \notag\\
    &=&o(1),
\end{eqnarray}
which immediately yields the following result: 
\begin{eqnarray*}
    \max_{i_1,i_2,j_1,j_2\in[N]}|\hat{\omega} - \omega|=O_P (\sqrt{Th\log N} ).
\end{eqnarray*}
Therefore, we have 
\begin{eqnarray*}
    \max_{i,j\in[N^2p]}| \mathcal{B}_{1,ij}|&=&O_P (\sqrt{h\log N/T} ), 
\end{eqnarray*}
where $ \mathcal{B}_{1,ij}$ is the $(i,j)$-th element of $\pmb{\mathcal{B}}_1$ in \eqref{mm40}. Define $ \mathcal{B}_{2,ij}$, $ \mathcal{B}_{3,ij}$ and $ \mathcal{B}_{4,ij}$ analogously.

For $ \mathcal{B}_{2,ij}$, we can use  Cauchy-Schwarz inequality and Theorem \ref{TM.Main2}  to obtain
\begin{eqnarray*}
   \max_{i,j\in[N^2p]}| \mathcal{B}_{2,ij}|&\leq& O(h)\left\|\pmb{\beta}_0-\hat{\pmb{\beta}}_\phi\right\|_1^2
   \notag\\
   &=&O_P \big( \mu_{h}^{2+2/J}T^{-1}h d_{\mathscr{A}}^2\log N\big)
   \notag\\
   &=&O_P (\sqrt{h\log N/T} ),
\end{eqnarray*} 
under the condition $\mu_{h}^{2+2/J}d_{\mathscr{A}}^2\sqrt{h\log N}/\sqrt{T}\rightarrow 0$.  

Analogously, we can show $\max_{i,j\in[N^2p]}| \mathcal{B}_{3,ij}|$ and  $\max_{i,j\in[N^2p]}| \mathcal{B}_{4,ij}|$ are also bounded in probability in the desired order. It completes the proof of Theorem \ref{TM.Main4}.(1).

\medskip

(2). For $\hat{\pmb{\Omega}}_h$, we write
\begin{eqnarray}\label{mm33}
     \|\mathscr{G}_\eta(\hat{\pmb{\Omega}}_h)-E[\tilde{\pmb{\Omega}}_h] \|_2&\leq& \|\mathscr{G}_\eta(E[\tilde{\pmb{\Omega}}_h])-E[\tilde{\pmb{\Omega}}_h] \|_2+ \|\mathscr{G}_\eta(\hat{\pmb{\Omega}}_h)-\mathscr{G}_\eta(E[\tilde{\pmb{\Omega}}_h]) \|_2.
\end{eqnarray}
For the first term,

\begin{eqnarray}\label{mm39}
    \|\mathscr{G}_\eta(E[\tilde{\pmb{\Omega}}_h])-E[\tilde{\pmb{\Omega}}_h] \|_2&\leq& \|\mathscr{G}_\eta(E[\tilde{\pmb{\Omega}}_h])-E[\tilde{\pmb{\Omega}}_h] \|_\infty
    \notag\\
    &\leq& \max_{i\in[N^2p]} \sum_{j=1}^{N^2p}|\Omega_{ij}|I(|\Omega_{ij}| < \eta)
    \notag\\
    &\leq&\eta^{1-c_a}\max_{i\in[N^2p]} \sum_{j=1}^{N^2p}|\Omega_{ij}|^{c_a}
    \notag\\
    &=&\eta^{1-c_a}c_{N}, 
\end{eqnarray}
where the first inequality follows from the facts that $\|\mathbf{A}\|_2\le \sqrt{\|\mathbf{A}\|_1\|\mathbf{A}\|_\infty}$ and $\mathscr{G}_\eta(E[\tilde{\pmb{\Omega}}_h])-E[\tilde{\pmb{\Omega}}_h]$ is symmetric, $\Omega_{ij}$ is the $(i,j)^{th}$ element of $E[\tilde{\pmb{\Omega}}_h]$, and $c_{N}=\max_{i\in[N^2p]} \sum_{j=1}^{N^2p}|\Omega_{ij}|^{c_a}$ is the same as that defined in \eqref{def.dense}. 

We now study the second term on the right-hand side of \eqref{mm33}. Let $\hat{\Omega}_{ij}$ be the $(i,j)$-th element of $\hat{\pmb{\Omega}}_h$. We write
\begin{eqnarray}\label{mm36}
   \|\mathscr{G}_\eta(\hat{\pmb{\Omega}}_h)-\mathscr{G}_\eta(E[\tilde{\pmb{\Omega}}_h])\|_2 &\leq& \|\mathscr{G}_\eta(\hat{\pmb{\Omega}}_h)-\mathscr{G}_\eta(E[\tilde{\pmb{\Omega}}_h])\|_\infty
   \notag\\
   &\leq&\max_{i\in[N^2p]} \sum_{j=1}^{N^2p}|\hat{\Omega}_{ij}-\Omega_{ij}|I(|\hat{\Omega}_{ij}| \geq  \eta, |\Omega_{ij}| \geq \eta) 
   \notag\\
   &&+\max_{i\in[N^2p]} \sum_{j=1}^{N^2p}|\hat{\Omega}_{ij}|I(|\hat{\Omega}_{ij}| \geq  \eta, |\Omega_{ij}| < \eta)
   \notag\\
   &&+\max_{i\in[N^2p]} \sum_{j=1}^{N^2p}|\Omega_{ij}|I(|\hat{\Omega}_{ij}| <  \eta, |\Omega_{ij}| \geq \eta),
\end{eqnarray}
where the first inequality follows from the facts that $\|\mathbf{A}\|_2\le \sqrt{\|\mathbf{A}\|_1\|\mathbf{A}\|_\infty}$ and $\mathscr{G}_\eta(\hat{\pmb{\Omega}}_h)-\mathscr{G}_\eta(E[\tilde{\pmb{\Omega}}_h])$ is symmetric.

For the first term, since $I(|\hat{\Omega}_{ij}| \geq  \eta, |\Omega_{ij}| \geq \eta)\leq |\Omega_{ij}|^{c_a}\eta^{-c_a}$,
\begin{eqnarray}\label{mm35}
   \max_{i\in[N^2p]} \sum_{j=1}^{N^2p}|\hat{\Omega}_{ij}-\Omega_{ij}|I(|\hat{\Omega}_{ij}| \geq  \eta, |\Omega_{ij}| \geq \eta) &\leq& \max_{i,j\in[N^2p]}|\hat{\Omega}_{ij}-\Omega_{ij}|\sum_{j=1}^{N^2p}|\Omega_{ij}|^{c_a}\eta^{-c_a}
   \notag\\
   &=&O_P(\eta^{1-c_a}c_N),
\end{eqnarray}
where the equality holds by Theorem \ref{TM.Main4}.1.

For the second term in \eqref{mm36},

\begin{eqnarray}\label{mm34}
    &&\max_{i\in[N^2p]} \sum_{j=1}^{N^2p}|\hat{\Omega}_{ij}|I(|\hat{\Omega}_{ij}| \geq  \eta, |\Omega_{ij}| < \eta)
    \notag\\
    &\leq& \max_{i\in[N^2p]} \sum_{j=1}^{N^2p}|\hat{\Omega}_{ij}-\Omega_{ij}|I(|\hat{\Omega}_{ij}| \geq  \eta, |\Omega_{ij}| < \eta)+\eta^{1-c_a}c_{N}
    \notag\\
    &\leq& \max_{i\in[N^2p]} \sum_{j=1}^{N^2p}|\hat{\Omega}_{ij}-\Omega_{ij}|I(|\hat{\Omega}_{ij}| \geq  \eta, |\Omega_{ij}| < c_\eta\eta)
    \notag\\
    &&+ \max_{i\in[N^2p]} \sum_{j=1}^{N^2p}|\hat{\Omega}_{ij}-\Omega_{ij}|I(|\hat{\Omega}_{ij}| \geq  \eta, c_\eta\eta<|\Omega_{ij}| < \eta)+\eta^{1-c_a}c_{N},
\end{eqnarray}
where $0<c_\eta<1$ is a constant.

Additionally, we can further write
\begin{eqnarray*}
   &&\max_{i\in[N^2p]} \sum_{j=1}^{N^2p}|\hat{\Omega}_{ij}-\Omega_{ij}|I(|\hat{\Omega}_{ij}| \geq  \eta, |\Omega_{ij}| < c_\eta\eta) 
   \notag\\
   &\leq&  \max_{i,j\in[N^2p]}|\hat{\Omega}_{ij}-\Omega_{ij}| \max_{i\in[N^2p]} \sum_{j=1}^{N^2p} I(|\hat{\Omega}_{ij}-\Omega_{ij}| \geq  (1-c_\eta)\eta),
\end{eqnarray*}
which implies that this term is asymptotically negligible.

For the second term in \eqref{mm34}, similarly to \eqref{mm35}, 
\begin{eqnarray*}
&&\max_{i\in[N^2p]} \sum_{j=1}^{N^2p}|\hat{\Omega}_{ij}-\Omega_{ij}|I(|\hat{\Omega}_{ij}| \geq  \eta, c_\eta\eta<|\Omega_{ij}| < \eta) \notag \\
&\leq& \max_{i,j\in[N^2p]}|\hat{\Omega}_{ij}-\Omega_{ij}|\sum_{j=1}^{N^2p}|\Omega_{ij}|^{c_a}\eta^{-c_a}c_\eta^{-c_a} =O_P (\eta^{1-c_a}c_N ).
\end{eqnarray*}

Therefore, we can establish the following result for the second term in \eqref{mm36}
\begin{eqnarray}\label{mm37}
    \max_{i\in[N^2p]} \sum_{j=1}^{N^2p}|\hat{\Omega}_{ij}|I(|\hat{\Omega}_{ij}| \geq  \eta, |\Omega_{ij}| < \eta)=O_P (\eta^{1-c_a}c_N ).
\end{eqnarray}

We now proceed to study the third term in \eqref{mm36} and write
\begin{eqnarray}\label{mm38}
   &&\max_{i\in[N^2p]} \sum_{j=1}^{N^2p}|\Omega_{ij}|I(|\hat{\Omega}_{ij}| <  \eta, |\Omega_{ij}| \geq \eta) \notag \\
   &\leq&  \max_{i\in[N^2p]} \sum_{j=1}^{N^2p}|\hat{\Omega}_{ij}-\Omega_{ij}|I(|\hat{\Omega}_{ij}| <  \eta, |\Omega_{ij}| \geq \eta)
   \notag\\
   &&+ \max_{i\in[N^2p]} \sum_{j=1}^{N^2p}|\hat{\Omega}_{ij}|I(|\hat{\Omega}_{ij}| <  \eta, |\Omega_{ij}| \geq \eta)
   \notag\\
   &\leq&  \Big(\max_{i,j\in[N^2p]}|\hat{\Omega}_{ij}-\Omega_{ij}|+\eta \Big) \max_{i\in[N^2p]} \sum_{j=1}^{N^2p}|\Omega_{ij}|^{c_a}\eta^{-c_a}
   \notag\\
   &=&O_P (\eta^{1-c_a}c_N ).
\end{eqnarray}

Combining the results in \eqref{mm35}, \eqref{mm37}, and \eqref{mm38}, we have
\begin{eqnarray*}
   \|\mathscr{G}_\eta(\hat{\pmb{\Omega}}_h)-\mathscr{G}_\eta(E[\tilde{\pmb{\Omega}}_h]) \|_2 &=& O_P\left(\eta^{1-c_a}c_N\right).
\end{eqnarray*}
Together with \eqref{mm39}, it leads to the desired result in Theorem \ref{TM.Main4}.(2).
\end{proof}

\medskip

\begin{proof}[Proof of Theorem \ref{TM.Main5}]
\item 

Let $\pmb{\Omega}_z=\pmb{\Sigma}_{\mathbf{B}}^{-1} \otimes \mathbf{I}_N$. We first show that 
\begin{eqnarray}\label{node8}
    \sqrt{T}\,\pmb{\rho}_{\mathbf{a}}^\top\VEC(\hat{\tilde{\mathbf{a}}}_{\rm bc}-\tilde{\mathbf{A}})&\xrightarrow{~D~}&\mathcal{N}(0,\pmb{\rho}_{\mathbf{a}}^\top\pmb{\Omega}_z\pmb{\Omega}_h\pmb{\Omega}_z\pmb{\rho}_{\mathbf{a}}).
\end{eqnarray}
We can observe that the debiased estimator admits the following expansion

\begin{eqnarray*}
\pmb{\rho}_{\mathbf{a}}^\top\VEC(\hat{\tilde{\mathbf{a}}}_{\rm bc}-\tilde{\mathbf{A}})&=&\pmb{\rho}_{\mathbf{a}}^\top\Big(\mathbf{I}-\frac{1}{T-h}\hat{\pmb{\Omega}}_z\sum_{t=1}^{T-h}\mathbf{Z}_t\mathbf{Z}_t^\top \Big)\VEC(\hat{\tilde{\mathbf{a}}}-\tilde{\mathbf{A}})+\frac{1}{T-h}\pmb{\rho}_{\mathbf{a}}^\top\hat{\pmb{\Omega}}_z\sum_{t=1}^{T-h}\mathbf{Z}_t\mathbf{u}_{t,h}
\notag\\
&=&\mathcal{G}_1+\mathcal{G}_2.
\end{eqnarray*}

For $\mathcal{G}_1$, we first investigate $\big\|\frac{1}{T-h}\sum_{t=1}^{T-h}\mathbf{Z}_t\mathbf{Z}_t^\top\hat{\pmb{\Omega}}_{z,i}-\mathbf{e}_i \big\|_\infty$, where $\mathbf{e}_i$  is the $i$-th column of the identity matrix and  $\hat{\pmb{\Omega}}_{z,i}=\hat{\tau}_i^{-2}\widehat{\mathbf{C}}_{z,i}$ is the $i$-th column of $\hat{\pmb{\Omega}}_{z}^\top$. 

Using simple algebra, we can show that
\begin{eqnarray*}
    \frac{1}{T-h}\sum_{t=1}^{T-h}\mathbf{Z}_{j,t}\mathbf{Z}_t^\top\hat{\pmb{\Omega}}_{z,i}&=&\frac{1}{(T-h)\hat{\tau}_i^2}\sum_{t=1}^{T-h}\mathbf{Z}_{j,t}\big(\mathbf{Z}_{i,t}^\top-\mathbf{Z}_{-i,t}^\top\hat{\mathbf{b}}_i\big),
\end{eqnarray*}
for $j\in[N^2p]$.

For notational simplicity, let $\hat{\mathbf{r}}_{i,t}=\mathbf{Z}_{i,t}-\hat{\mathbf{b}}_i^\top\mathbf{Z}_{-i,t}$. For $j=i$, we write
\begin{eqnarray}\label{node1}
   \frac{1}{T-h} \sum_{t=1}^{T-h}\mathbf{Z}_{i,t}\big(\mathbf{Z}_{i,t}^\top-\mathbf{Z}_{-i,t}^\top\hat{\mathbf{b}}_i\big)&=&
   \frac{1}{T-h}\sum_{t=1}^{T-h}\hat{\mathbf{r}}_{it}\hat{\mathbf{r}}_{it}^\top + \frac{1}{T-h}\sum_{t=1}^{T-h}\hat{\mathbf{b}}_i^\top\mathbf{Z}_{-i,t}\hat{\mathbf{r}}_{it}^\top.
\end{eqnarray}
Noteworthily, $\hat{\mathbf{b}}_i$ is a solution of \eqref{def.node} if and only if there exits a subgradient
\begin{eqnarray*}
    \tilde{\mathbf{g}}_{\mathbf{b},i} &\in &\big\{\mathbf{z}\in \mathbb{R}^{N^2p-1} \mid  z_j = \sgn (\hat{\mathbf{b}}_{i,j} )  \text{ for } \sgn (\hat{\mathbf{b}}_{i,j} )\ne 0,\text{ and }\notag \\
    &&|z_j|\le 1 \text{ elsewhere}, \text{ for } j \in[N^2p-1]\big\}
\end{eqnarray*} 
such that 
\begin{eqnarray}\label{node2}
    \mathbf{0}=\tilde{\gamma}_i \tilde{\mathbf{g}}_{\mathbf{b},i}- \frac{1}{T-h}\sum_{t=1}^{T-h}\mathbf{Z}_{-i,t}\hat{\mathbf{r}}_{i,t}^\top.
\end{eqnarray}
By \eqref{node1} and \eqref{node2}, we obtain
\begin{eqnarray}\label{node3}
   \frac{1}{T-h} \sum_{t=1}^{T-h}\mathbf{Z}_{i,t}\big(\mathbf{Z}_{i,t}^\top-\mathbf{Z}_{-i,t}^\top\hat{\mathbf{b}}_i\big)
   &=&\frac{1}{T-h}\sum_{t=1}^{T-h}\|\hat{\mathbf{r}}_{it}\|^2+\tilde{\gamma}_i \|\hat{\mathbf{b}}_i\|_1
   =\hat{\tau}_i^2,
\end{eqnarray}
which immediately yields that $\frac{1}{T-h}\sum_{t=1}^{T-h}\mathbf{Z}_{i,t}\mathbf{Z}_t^\top\hat{\pmb{\Omega}}_{z,i}=1$. For $j\neq i$, using \eqref{node2},
\begin{eqnarray}\label{node4}
    \left|\frac{1}{T-h}\sum_{t=1}^{T-h}\mathbf{Z}_{j,t}\mathbf{Z}_t^\top\hat{\pmb{\Omega}}_{z,i}\right|&=& \left|\frac{1}{(T-h)\hat{\tau}_i^2}\sum_{t=1}^{T-h}\mathbf{Z}_{j,t}\hat{\mathbf{r}}_{i,t}^\top\right|
    \leq\frac{\tilde{\gamma}_i }{\hat{\tau}_i^2}.
\end{eqnarray}
In summary of \eqref{node3} and \eqref{node4}, we have
\begin{eqnarray*}
    \max_i\left\|\frac{1}{T-h}\sum_{t=1}^{T-h}\mathbf{Z}_t\mathbf{Z}_t^\top\hat{\pmb{\Omega}}_{z,i}-\mathbf{e}_i \right\|_\infty=O_P\left( \frac{\mu_{h}^{1+1/J}\sqrt{\log (N)}}{\sqrt{T}}\right).
\end{eqnarray*}

Together with Theorem \ref{TM.Main2}, it gives 
\begin{eqnarray*}
    |\mathcal{G}_1|&=&\left|\pmb{\rho}_{\mathbf{a}}^\top\Big(\mathbf{I}-\frac{1}{T-h}\hat{\pmb{\Omega}}_z\sum_{t=1}^{T-h}\mathbf{Z}_t\mathbf{Z}_t^\top \Big)\VEC(\hat{\tilde{\mathbf{a}}}-\tilde{\mathbf{A}})\right|
    \notag\\
    &\leq&\|\pmb{\rho}_{\mathbf{a}}\|_1\, \left\|\VEC(\hat{\tilde{\mathbf{a}}}-\tilde{\mathbf{A}})\right\|_1\max_i\left\|\frac{1}{T-h}\sum_{t=1}^{T-h}\mathbf{Z}_t\mathbf{Z}_t^\top\hat{\pmb{\Omega}}_{z,i}-\mathbf{e}_i \right\|_\infty
    \notag\\
    &=&O_P\left( \frac{\mu_{h}^{2+2/J}d_{\mathscr{A}}\log N}{T} \right).
\end{eqnarray*}

\medskip

We now study $\mathcal{G}_2$. We can further write

\begin{eqnarray*}
    \mathcal{G}_2&=&\frac{1}{T-h}\pmb{\rho}_{\mathbf{a}}^\top\pmb{\Omega}_z\sum_{t=1}^{T-h}\mathbf{Z}_t\mathbf{u}_{t,h}+\frac{1}{T-h}\pmb{\rho}_{\mathbf{a}}^\top(\widehat{\pmb{\Omega}}_z-\pmb{\Omega}_z)\sum_{t=1}^{T-h}\mathbf{Z}_t\mathbf{u}_{t,h}
    \notag\\
    &\eqqcolon&\mathcal{G}_{2,1}+\mathcal{G}_{2,2}.
\end{eqnarray*}

For $\mathcal{G}_{2,1}$, it is sufficient to investigate $\frac{1}{T-h}\sum_{t=1}^{T-h}u_{it,h}x_{jt}$ to formulate its leading-order term. Recall that we have defined  $U_{ij} (t,h)=u_{it,h} x_{jt}$ and $\mathscr{P}_{l}(\cdot) = E[\cdot \mid \mathscr{F}_{l}] -E[\cdot \mid \mathscr{F}_{l-1}]$, where $\mathscr{F}_l=\sigma(\pmb{\varepsilon}_l,\pmb{\varepsilon}_{l-1},\ldots)$. For notational simplicity, we suppress the indices in $U_{ij} (t,h)$ and write

\begin{eqnarray}\label{node5}
    \frac{1}{T-h}\sum_{t=1}^{T-h}U(t,h) &=& \frac{1}{T-h}\sum_{t=1}^{T-h}\sum_{l=0}^{m}\mathscr{P}_{t+h-l}(U(t,h))\notag \\
    &&+\frac{1}{T-h}\sum_{t=1}^{T-h}\sum_{l=m+1}^{\infty}\mathscr{P}_{t+h-l}(U(t,h)),
\end{eqnarray}
where $m$ satisfies that $m\rightarrow\infty$ and $m/T\rightarrow0$.

For the second term on the right-hand side of \eqref{node5}, 
\begin{eqnarray*}
    \mathscr{P}_{t+h-l}(U(t,h))&=& E[U(t,h) \mid \mathscr{F}_{t+h-l}] -E[U(t,h) \mid \mathscr{F}_{t+h-l-1}]
    \notag\\
    &=&E[U(t,h) -U_{t+h-l}^*(t,h)\mid \mathscr{F}_{t+h-l}],
\end{eqnarray*}
where $U_l^*(t,h)$ is a coupled versions of $U(t,h) $ by replacing $\pmb{\varepsilon}_l$ with an independent copy of $\pmb{\varepsilon}_l^*$. 

By  Jensen inequality, Burkholder inequality and Lemma \ref{LMB7}, 
\begin{eqnarray}\label{node7}
    &&\left\|\frac{1}{T-h}\sum_{t=1}^{T-h}\sum_{l=m+1}^{\infty}\mathscr{P}_{t+h-l}(U(t,h))\right\|_{2} \notag \\
    &\leq& \frac{1}{T-h}\sum_{l=m+1}^{\infty} \left\|\sum_{t=1}^{T-h}(U(t,h) -U_{t+h-l}^*(t,h))\right\|_{2}
    \notag\\
    &\leq& O\left(\frac{1}{T}\right)\sum_{l=m+1}^{\infty} \left(\sum_{t=1}^{T-h} \|U(t,h)-U_{t+h-l}^\ast (t,h)\|^{2}\right)^{1/2}
    \notag\\
    &=&O\left(T^{-1/2}\right)\sum_{l=m+1}^{\infty}c_{2,ij}(h,l)
    \notag\\
    &=&o\left(T^{-1/2}\right).
\end{eqnarray}

For the first term on the right-hand side of \eqref{node5}, it admits the following decomposition:
\begin{eqnarray}\label{node6}
    &&\frac{1}{T-h}\sum_{t=1}^{T-h}\sum_{l=0}^{m}\mathscr{P}_{t+h-l}(U(t,h))\notag \\
    &=&\frac{1}{T-h}\sum_{l=0}^{m}\sum_{t=1}^{T-h}\mathscr{P}_{t+h}(U(t+l,h))+\frac{1}{T-h}\sum_{l=0}^{m}\sum_{t=1-l}^{0}\mathscr{P}_{t+h}(U(t+l,h))
    \nonumber\\
    &&-\frac{1}{T-h}\sum_{l=0}^{m}\sum_{t=T-h-l+1}^{T-h}\mathscr{P}_{t+h}(U(t+l,h))
    \notag\\
    &=&\frac{1}{T-h}\sum_{l=0}^{m}\sum_{t=1}^{T-h}\mathscr{P}_{t+h}(U(t+l,h))+o_P(T^{-1/2}),
\end{eqnarray} 
where the second equality holds by directly applying  Lemma \ref{LMB7} and Burkholder inequality to the second and third terms on the right-hand side:
\begin{eqnarray*}
    &&\left\|\frac{1}{T-h}\sum_{l=0}^{m}\sum_{t=1-l}^{0}\mathscr{P}_{t+h}(U(t+l,h))\right\|_2\notag \\
    &\leq&O\left(\frac{1}{T}\right)\sum_{l=0}^{m}\left(\sum_{t=1-l}^{0}\|U(t+l,h)-U^\ast_{t+h}(t+l,h)\|^2\right)^{1/2}
    \notag\\
     &=&O\left(T^{-1}m^{1/2}\right)\sum_{l=0}^{m}c_{2,ij}(h,l)
      \notag\\
    &=&o\left(T^{-1/2}\right),
\end{eqnarray*}
and similarly, $\left\|\frac{1}{T-h}\sum_{l=0}^{m}\sum_{t=T-h-l+1}^{T-h}\mathscr{P}_{t+h}(U(t+l,h))\right\|_2=O\left(T^{-1}m^{1/2}\right)=o\left(T^{-1/2}\right)$.

For $ \mathcal{G}_{2,2}$, using basic matrix algebra, we can write $\pmb{\Omega}_{z}=(\pmb{\Omega}_{z,1},\cdots, \pmb{\Omega}_{z,N^2p})^\top$, where $\pmb{\Omega}_{z,i}=\tau_i^{-2}\mathbf{C}_{z,i}$ with \(\mathbf{C}_{z,i}^\top\) being constructed by inserting  1 in the \(i\)-th position and filling the remaining  entries with the elements of \(-\mathbf{b}_i\) in their  order, and  
\begin{eqnarray*}
    \tau_i^2=\Sigma_{z,i,i}-\pmb{\Sigma}_{z,i,-i}\pmb{\Sigma}_{z,-i,-i}^{-1}\pmb{\Sigma}_{z,-i,i}, \,\, \mathbf{b}_{i}=\pmb{\Sigma}_{z,-i,-i}^{-1}\pmb{\Sigma}_{z,-i,i}.
\end{eqnarray*}
Here $\Sigma_{z,i,i}$ is the $i$-th diagonal element of $\pmb{\Sigma}_{\mathbf{B}} \otimes \mathbf{I}_N$, $\pmb{\Sigma}_{z,-i,-i}$ is the submatrix constructed by removing the $i$-th row and column of $\pmb{\Sigma}_{\mathbf{B}} \otimes \mathbf{I}_N$. $\pmb{\Sigma}_{z,i,-i}$ and $\pmb{\Sigma}_{z,-i,i}$ denote the $i$-th row and  $i$-th column, respectively, in which the $i$-th element is removed.  
Using  arguments that are analogous to those in the proof of Theorem \ref{TM.Main2}.(1), we can show that  $\|\hat{\mathbf{b}}_{i}-\mathbf{b}_{i}\|_1=O_P \big( \frac{\mu_{h}^{1+1/J}s_{\mathscr{A},i}\sqrt{\log N}}{\sqrt{T}} \big)$, which implies that 
\begin{eqnarray}\label{node9}
    \left\|\sum_{i=1}^{N^2p}\rho_{i}(\widehat{\pmb{\Omega}}_{z,i}-\pmb{\Omega}_{z,i})^\top\right\|_1=O_P \left( \frac{\mu_{h}^{1+1/J}\sqrt{\log N}}{\sqrt{T}}\sum_{i=1}^{N^2p}\rho_i s_{\mathscr{A},i} \right).
\end{eqnarray} 
Together with \eqref{basis.ineq1}, it leads to the following result:
\begin{eqnarray*}
    |\mathcal{G}_{2,2}|&\leq&\left\|\sum_{i=1}^{N^2p}\rho_{i}(\widehat{\pmb{\Omega}}_{z,i}-\pmb{\Omega}_{z,i})^\top\right\|_1\left\|\frac{1}{T-h}\sum_{t=1}^{T-h}\mathbf{Z}_t\mathbf{u}_{t,h}\right\|_{\infty}
    \notag\\
    &=& o_P(T^{-1/2}).
\end{eqnarray*}
Together with \eqref{node5}, \eqref{node7} and \eqref{node6}, it gives 
\begin{eqnarray*}
    \mathcal{G}_2&=&\frac{1}{T-h}\pmb{\rho}_{\mathbf{a}}^\top\pmb{\Omega}_z\sum_{l=0}^{m}\sum_{t=1}^{T-h}\mathscr{P}_{t+h}(U(t+l,h))+o_P(T^{-1/2}).
\end{eqnarray*}
Since $\{\sum_{l=0}^{m}\mathscr{P}_{t+h}(U(t+l,h))\}_{t=1}^{T-h}$ is a martingale difference sequence adapted to $\mathscr{F}_{t+h}$, the standard CLT can be applied to this term, and it yields the desired result in \eqref{node8}.

By Theorem \ref{TM.Main4}.(2) and \eqref{node9}, we have $\pmb{\rho}_{\mathbf{a}}^\top\hat{\pmb{\Omega}}_z\mathscr{G}_\eta(\hat{\pmb{\Omega}}_h)\hat{\pmb{\Omega}}_z\pmb{\rho}_{\mathbf{a}}=\pmb{\rho}_{\mathbf{a}}^\top\pmb{\Omega}_z\pmb{\Omega}_h\pmb{\Omega}_z\pmb{\rho}_{\mathbf{a}}+o_P(1)$, which completes the proof of Theorem \ref{TM.Main5}. 
\end{proof}

\subsection{Proofs of the Preliminary Lemmas}

\begin{lemma}\label{LMB1}
Let $z_1,\ldots,z_T$ be mean zero independent random variables satisfying that $\max_{t\in [T]}\|z_t\|_J<\infty$ for some $J>2$. Then,

$$ 
\Pr\left(\max_{t\in [T]}\sum_{s=1}^{t}z_s \geq x \right) \leq \left(1 + \frac{2}{J} \right)^{J} \frac{\sum_{t=1}^{T}\|z_t\|_J^J}{x^J} + 2\exp\left(  \frac{- 2e^{-J}(J+2)^{-2} x^2}{\sum_{t=1}^{T}\|z_t\|_2^2}\right).
$$
\end{lemma}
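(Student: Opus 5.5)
The statement is a maximal form of the Fuk--Nagaev inequality. I would prove it by truncating the summands from above at a level proportional to $x$, bounding the truncated partial sums with an exponential supermartingale and Doob's maximal inequality, and charging the truncation to a Markov bound in $L_J$. The constants tell us where to truncate: set $y\coloneqq Jx/(J+2)$, so that $y^{-J}=(1+2/J)^J x^{-J}$. Then Markov's inequality and a union bound give
\[
\Pr\Big(\max_{t\in[T]} z_t>y\Big)\le \sum_{t=1}^T \frac{E|z_t|^J}{y^J}=\Big(1+\frac{2}{J}\Big)^J\frac{\sum_{t=1}^T\|z_t\|_J^J}{x^J},
\]
which is exactly the polynomial term. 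On the complementary event, $z_s=z_s'\coloneqq z_sI(z_s\le y)$ for every $s$, so it remains to bound $\Pr(\max_{t}\sum_{s\le t}z_s'\ge x)$ by the exponential term.

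For the truncated sums, note first that $E[z_s']=-E[z_sI(z_s>y)]\le 0$. Hence, for any $\lambda>0$, $M_t\coloneqq \exp(\lambda\sum_{s\le t}z_s')/\prod_{s\le t}E e^{\lambda z_s'}$ is a positive martingale. Doob's maximal inequality, or simply the submartingale property of $\exp(\lambda S_t')$ after normalising, then gives
\[
\Pr\Big(\max_{t}\sum_{s\le t}z_s'\ge x\Big)\le e^{-\lambda x}\prod_{t=1}^T E e^{\lambda z_t'}.
\]
To control the moment generating function I would use the elementary inequality $e^{u}\le 1+u+\tfrac12 u^2 e^{u^+}$. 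Since $z_t'\le y$ and $E z_t'\le 0$, this yields $E e^{\lambda z_t'}\le 1+\tfrac12\lambda^2 e^{\lambda y}\|z_t\|_2^2\le \exp(\tfrac12\lambda^2e^{\lambda y}\|z_t\|_2^2)$. Writing $B^2\coloneqq\sum_t\|z_t\|_2^2$, the bound becomes $\exp(-\lambda x+\tfrac12\lambda^2 e^{\lambda y}B^2)$.

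The main obstacle is choosing $\lambda$ so that this bound produces the exponent $-2e^{-J}(J+2)^{-2}x^2/B^2$ uniformly in $x$. I would first try $\lambda=\min\{x/(e^{J}B^2)\cdot c,\ J/y\}$ for a suitable constant $c$, treating the two cases separately.
\begin{itemize}
\item If $\lambda y\le J$, then $e^{\lambda y}\le e^J$, and optimising the quadratic gives roughly $\exp(-x^2/(2e^J B^2))$, which is stronger than the required bound.
\item If the cap $\lambda=J/y=(J+2)/x$ binds, then $x^2$ is large relative to $e^J B^2$. In this regime $\exp(-\lambda x+\dots)$ is at most $e^{-c'(J+2)}$-type quantities. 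Here I would either show directly that the required Gaussian term is then at most this bound, or absorb this regime into the polynomial term using $x^J\ge$ a multiple of $(e^JB^2)^{J/2}\ge$ a multiple of $\sum\|z_t\|_J^J$-type comparisons, as in Nagaev's original argument.
\end{itemize}
The factor $2$ in front of the exponential leaves room to apply the argument to $\pm z_t$ if needed. Matching the precise constant $2e^{-J}(J+2)^{-2}$ will likely require tuning $c$ and bookkeeping, but the structure of truncation, exponential martingale and Doob's inequality, and the two $\lambda$-regimes is the plan.
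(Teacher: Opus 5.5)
\paragraph{A genuine gap after the truncation step.}
The truncation step is fine. With $y=Jx/(J+2)$, the Markov bound on $\Pr(\max_t z_t>y)$ reproduces the polynomial term exactly. The gap is the claim you then need, namely $\Pr(\max_t\sum_{s\le t}z_s'\ge x)\le 2\exp\{-2e^{-J}(J+2)^{-2}x^2/B^2\}$ with $B^2=\sum_t\|z_t\|_2^2$. This claim is false, so no choice of $\lambda$ or tuning of $c$ can rescue it.

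A counterexample: take i.i.d.\ $z_t$ with $\Pr(z_t=y)=p$ and $\Pr(z_t=-py/(1-p))=1-p$, and set $\lambda_0=Tp$, which you take small.
\begin{itemize}
\item Truncation changes nothing here.
\item Since $J>2$, we have $2y-\lambda_0 y/(1-p)\ge x$ for small $\lambda_0$. So two hits at $y$ already push the partial sum past $x$, and the probability is at least about $\lambda_0^2/2$.
\item On the other hand $B^2=\lambda_0y^2/(1-p)$, so $x^2/B^2\asymp 1/\lambda_0$. The exponential term is then $O(e^{-c/\lambda_0})=o(\lambda_0^2)$.
\end{itemize}
The lemma itself survives this example: $\Pr(\max_t z_t>y)=0$, while $(1+2/J)^J\sum_t\|z_t\|_J^J/x^J\ge\lambda_0$. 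What breaks is your allocation, which spends the whole polynomial budget on an event of probability zero.

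Your fallback fails for the same reason. The polynomial term is already exhausted. A comparison like $(e^JB^2)^{J/2}\gtrsim\sum_t\|z_t\|_J^J$ is false in general. Enlarging $x$ only shrinks $\sum_t\|z_t\|_J^J/x^J$, so that term cannot absorb the large-$x$ regime. Your own capped case $\lambda=(J+2)/x$ shows the problem: it leaves a bound no better than $e^{-(J+2)}$, which does not decay as $x\to\infty$.

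\paragraph{The missing idea.}
You need to control the truncated sum with $J$-th moments and share the polynomial budget between the two pieces.
\begin{itemize}
\item On $\{J/\lambda<z\le y\}$ the map $z\mapsto(e^{\lambda z}-1-\lambda z)/z^J$ is increasing. So this range contributes a multiple of $\sum_tE[(z_t^+)^JI(z_t\le y)]/y^J$ to the MGF bound, rather than a multiple of $B^2$.
\item The remaining range gives the Gaussian part, using the share $\alpha x$ with $\alpha=2/(J+2)$. This yields the exponent $\alpha^2x^2/(2e^JB^2)=2e^{-J}(J+2)^{-2}x^2/B^2$, exactly the stated one.
\item Optimising in $\lambda$ produces a third, polynomial-type term. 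At $y=\beta x$ with $\beta=J/(J+2)$, you pair it with the sharper Markov bound $\Pr(z_t>y)\le E[(z_t^+)^JI(z_t>y)]/y^J$. The two pieces together then cost only $(1+2/J)^J\sum_tE(z_t^+)^J/x^J$.
\end{itemize}
This is essentially how Nagaev passes from his three-term bound (Corollary 1.7, the form of Lemma~\ref{LMB2}) to Corollary 1.8. Even so, a naive optimisation of the three-piece MGF bound can lose a constant factor, so reproducing the exact constants needs Nagaev's finer bookkeeping. The paper does not reprove any of this: it cites Nagaev (1979) for $\sum_{s\le T}z_s$ and Borovkov (1973) for the maximal form.

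Your exponential-supermartingale treatment of the maximum is a reasonable route once the bound is repaired, with one correction. Normalise by deterministic upper bounds $\exp(\sum_{s\le t}\phi_s)$ with $\phi_s\ge0$, not by $\prod_{s\le t}Ee^{\lambda z_s'}$. The latter need not be monotone in $t$ when $Ez_s'<0$, so your displayed Doob bound is not valid as written.
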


Corollary 1.8 in \cite{nagaev1979large} shows that the above inequality holds for $\sum_{s=1}^{T}z_s$, while \cite{borovkov1973notes} proves that the above inequality also holds for $\max_{t\in [T]}\sum_{s=1}^{t}z_s$.

\begin{lemma}\label{LMB2}
Let $z_1,\ldots,z_T$ be mean zero independent random variables, and let $x > y$ and $y \geq (4\sum_{t=1}^{T}\|z_t\|_J^J)^{1/q}$. Suppose $\max_{t\in [T]}\|z_t\|_J<\infty$ for some $q \geq 2$, $\beta = J/(J+2)$ and $\alpha = 1-\beta$. Then,

\begin{eqnarray*}
    \Pr\left(\sum_{t=1}^{T}z_t \geq x \right) \leq \sum_{t=1}^{T}\Pr\left(z_t \geq y \right) + \exp\left(\frac{-\alpha^2x^2}{2e^{J}\sum_{t=1}^{T}\|z_t\|_2^2 }\right)+ \left(\frac{ \sum_{t=1}^{T}\|z_t\|_J^J}{\beta xy^{J-1}} \right)^{\beta x/y}.
\end{eqnarray*}
\end{lemma}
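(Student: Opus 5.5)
The plan is the classical Fuk--Nagaev truncation and exponential-moment argument, as in \cite{nagaev1979large}. Write $B^2\coloneqq\sum_{t=1}^T\|z_t\|_2^2$ and $A\coloneqq\sum_{t=1}^T\|z_t\|_J^J$.

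\emph{Step 1 (truncation).} Set $z_t'\coloneqq z_t I(z_t<y)$. On the event $\{\max_t z_t<y\}$ we have $\sum_t z_t=\sum_t z_t'$. Hence
\begin{eqnarray*}
\Pr\Big(\sum_{t=1}^T z_t\ge x\Big)\le \sum_{t=1}^T\Pr(z_t\ge y)+\Pr\Big(\sum_{t=1}^T z_t'\ge x\Big).
\end{eqnarray*}
This produces the first term of the bound. Because $E z_t=0$ and truncation only removes the upper tail, $E z_t'\le 0$.

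\emph{Step 2 (moment generating function of a truncated summand).} Fix $\lambda>0$. Since $z_t'<y$, we have $\lambda z_t'\le\lambda y$. Split according to whether $z_t'\le y/J$ or $y/J<z_t'<y$ (or use any comparable cut adapted to $\beta=J/(J+2)$).
\begin{itemize}
\item On the lower region, including all negative values, use $e^u\le 1+u+\frac{u^2}{2}e^{u_+}$. This gives a contribution of at most $\frac{\lambda^2}{2}e^{\lambda y/J}E z_t^2$.
\item On the upper region, use that $(e^u-1-u)/u^J$ is increasing for $u>0$. This gives a contribution of at most $\frac{e^{\lambda y}-1-\lambda y}{y^J}E|z_t|^J$.
\end{itemize}
With $E z_t'\le0$ and $1+v\le e^v$, this yields
\begin{eqnarray*}
E e^{\lambda\sum_t z_t'}\le \exp\Big(\frac{\lambda^2}{2}e^{\lambda y/J}B^2+\frac{e^{\lambda y}-1}{y^J}A\Big).
\end{eqnarray*}

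\emph{Step 3 (Markov and choice of $\lambda$).} Markov's inequality gives $\Pr(\sum_t z_t'\ge x)\le \exp\big(-\lambda x+\frac{\lambda^2}{2}e^{\lambda y/J}B^2+\frac{e^{\lambda y}-1}{y^J}A\big)$. Take
\begin{eqnarray*}
\lambda=\min\Big\{\frac{\alpha x}{e^J B^2},\ \frac{\beta}{y}\log\Big(1+\frac{\beta x y^{J-1}}{A}\Big)\Big\}.
\end{eqnarray*}
The cases are as follows.
\begin{itemize}
\item If the first entry is the minimum, the quadratic term is at most $\frac{\alpha}{2}\lambda x$. The $A$-term is controlled by $\beta\lambda x$. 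The exponent is then at most $-\alpha^2x^2/(2e^JB^2)$, which is the Gaussian term.
\item If the second entry is the minimum, then $e^{\lambda y}\le(1+\beta xy^{J-1}/A)^{\beta}$. Elementary algebra, using $y\ge(4A)^{1/J}$ and $x>y$ (which together make $\beta xy^{J-1}/A$ large), gives the bound $(A/(\beta xy^{J-1}))^{\beta x/y}$.
\end{itemize}
In both cases we keep the constants $e^J$ in the Gaussian term, and $\alpha,\beta$ in front, generous enough to absorb the cross terms.

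\emph{Main obstacle.} The delicate part is Step 3: the bookkeeping showing that, for the chosen $\lambda$, the two unwanted terms in the exponent are each dominated by a fixed fraction of $\lambda x$. Here I expect the constraint $y\ge(4A)^{1/J}$ to be needed. My first attempt will follow Nagaev's Corollary 1.7 line by line, including his specific choice of $\lambda$ and the constant $e^J$. If the constants do not close, I will fall back on citing \cite{nagaev1979large} directly, as the paper does for Lemma \ref{LMB1}.
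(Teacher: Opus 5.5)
The paper does not prove this lemma. It only attributes it to Corollary 1.7 of \cite{nagaev1979large}, so your fallback is exactly the paper's proof. Your self-contained route (truncate at $y$, exponential Markov, $\lambda$ a minimum of two candidates) is the right Fuk--Nagaev architecture, but as written Step 2 is wrong. For $J>2$ the map $u\mapsto(e^u-1-u)/u^J$ is not increasing on $(0,\infty)$: it behaves like $u^{2-J}/2$ as $u\downarrow 0$. Since $u(e^u-1)-J(e^u-1-u)=(u-J)(e^u-1)+Ju$, it is increasing only on $[J,\infty)$. So your cut at $z_t'=y/J$ does not justify the upper-region bound when $\lambda y<J^2$. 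In the lower region it leaves the factor $e^{\lambda y/J}$, which exceeds $e^J$ once $\lambda y>J^2$, so the Step 3 claim that the quadratic term is at most $\frac{\alpha}{2}\lambda_1x$ is unjustified. The cut has to be at $\lambda z_t'=J$. The lower region then carries the factor $e^J$, monotonicity applies on $(J/\lambda,y)$, and that region is empty if $\lambda y\le J$, so
\begin{equation*}
\log E e^{\lambda\sum_t z_t'}\le \frac{\lambda^2e^J}{2}B^2+I(\lambda y>J)\,\frac{e^{\lambda y}-1-\lambda y}{y^J}\,A .
\end{equation*}

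Step 3 also fails as stated. Put $w\coloneqq\beta xy^{J-1}/A$. Your $\lambda_2=\frac{\beta}{y}\log(1+w)$ already gives $e^{-\lambda_2x}=(1+w)^{-\beta x/y}$, which is the target size, so there is no room left for the positive quadratic term. Concretely, even with the corrected bound above, take $e^JB^2=\alpha x/(c\lambda_2)$ with $c>J+2$. Then the quadratic term equals $\frac{\alpha}{2c}\lambda_2x$, the Gaussian target is below $e^{-\lambda_2x}$, and the third target is at most $e^{A/y^J}e^{-\lambda_2x}$. Your Chernoff bound therefore overshoots by a factor of order $e^{\alpha\lambda_2x/(2c)}$, which is unbounded even under $x>y$ and $y^J\ge 4A$.

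The fix is to drop the prefactor. Take $\lambda=\min\{\lambda_1,\lambda_2\}$ with $\lambda_1=\alpha x/(e^JB^2)$ and $\lambda_2=\frac1y\log w$; if $w\le1$ the third term is $\ge1$ and there is nothing to prove. Now $\lambda_2x-\frac{\beta x}{y}\log w=\alpha\lambda_2x$ is available as slack.

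\emph{Case $\lambda=\lambda_2$.} Half of the slack absorbs the quadratic term. If $\log w>J$, the $A$-term is at most $wA/y^J=\beta x/y\le\frac{\alpha}{2}\lambda_2x$, exactly because $\log w\ge J=2\beta/\alpha$. If $\log w\le J$, there is no $A$-term.

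\emph{Case $\lambda=\lambda_1<\lambda_2$.} If $\lambda_1y>J$, the $A$-term is below $e^{\lambda_2y}A/y^J=\beta x/y<\beta\lambda_1x$. The exponent is then at most $-\lambda_1x(1-\frac{\alpha}{2}-\beta)=-\frac{\alpha^2x^2}{2e^JB^2}$, and the same holds trivially if $\lambda_1y\le J$.

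This argument never uses $x>y$ or $y\ge(4A)^{1/J}$, so those hypotheses are not what makes the constants close.
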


Lemma \ref{LMB2} is Corollary 1.7 of \cite{nagaev1979large}.

\begin{lemma}\label{LMB3}
Let $\{z_1,\ldots, z_n\}$ be independent zero-mean random variables. Suppose that for some positive real $L$ and every integer $k\ge 2$, $E[|z_i^k|]\le \frac{1}{2}E[z_i^2] L^{k-2} k!$. Then for $0\le \epsilon\le \frac{1}{2L}\sqrt{\sum_{i=1}^nE[z_i^2] }$,

\begin{eqnarray*}
\Pr\left(\frac{\sum_{i=1}^nz_i}{\sqrt{\sum_{i=1}^nE[z_i^2] }}\ge 2\epsilon\right)  \le \exp(-\epsilon^2)  .
\end{eqnarray*}

\end{lemma}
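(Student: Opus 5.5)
The plan is the classical Chernoff (exponential Markov) argument, with the moment hypothesis used to control the moment generating function of each $z_i$. Write $\sigma_i^2\coloneqq E[z_i^2]$ and $S\coloneqq\sum_{i=1}^n\sigma_i^2$. We may assume $S>0$, since otherwise every $z_i=0$ almost surely and there is nothing to prove.

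\textbf{Step 1: bound each moment generating function.} Fix $\lambda\in[0,1/L)$. Expand $E[e^{\lambda z_i}]$ as a power series. The constant term is $1$ and the linear term vanishes because $E[z_i]=0$. For the remaining terms, the hypothesis $E|z_i|^k\le \frac12\sigma_i^2L^{k-2}k!$ for $k\ge 2$ gives
\begin{eqnarray*}
E[e^{\lambda z_i}] \le 1+\sum_{k\ge 2}\frac{\lambda^k E|z_i|^k}{k!} \le 1+\frac{\lambda^2\sigma_i^2}{2}\sum_{k\ge 2}(\lambda L)^{k-2} = 1+\frac{\lambda^2\sigma_i^2}{2(1-\lambda L)} \le \exp\Big(\frac{\lambda^2\sigma_i^2}{2(1-\lambda L)}\Big).
\end{eqnarray*}
Exchanging expectation and summation is justified by Tonelli's theorem applied to $\sum_k \lambda^k|z_i|^k/k!$. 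That series is finite in expectation precisely because $\lambda L<1$.

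\textbf{Step 2: apply independence and Markov's inequality.} By independence the moment generating function of $\sum_i z_i$ factorizes, so
\begin{eqnarray*}
\Pr\Big(\sum_{i=1}^n z_i\ge 2\epsilon\sqrt{S}\Big)\le \exp\Big(-2\lambda\epsilon\sqrt{S}+\frac{\lambda^2S}{2(1-\lambda L)}\Big).
\end{eqnarray*}

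\textbf{Step 3: choose $\lambda$.} Take $\lambda=\epsilon/\sqrt{S}$. The constraint $\epsilon\le \frac{1}{2L}\sqrt{S}$ in the statement gives $\lambda L\le 1/2$. Hence $\lambda<1/L$, so Step 1 applies, and $1-\lambda L\ge 1/2$. The exponent is then at most
\[
-2\epsilon^2+\epsilon^2=-\epsilon^2,
\]
which is exactly the claimed bound $\exp(-\epsilon^2)$.

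The only delicate point is the range restriction on $\epsilon$. It is needed to keep $\lambda L\le 1/2$, so that the geometric series in Step 1 converges and the factor $1/(1-\lambda L)$ is at most $2$. Everything else is routine.
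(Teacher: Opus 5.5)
Your proof is correct. The paper never proves this lemma: it imports it verbatim as Corollary 2.1 of \cite{JSW_2023}. What you give is the standard Chernoff--Bernstein derivation on which results of this type rest, and it has three steps.
\begin{itemize}
\item Use the moment condition to bound each moment generating function by $\exp\{\lambda^2\sigma_i^2/(2(1-\lambda L))\}$.
\item Multiply these bounds using independence and apply the exponential Markov inequality.
\item Take $\lambda=\epsilon/\sqrt{S}$.
\end{itemize}
The range restriction $\epsilon\le\sqrt{S}/(2L)$ is then exactly the condition $\lambda L\le 1/2$, which gives $1/(1-\lambda L)\le 2$. Compared with the paper's citation, your route makes the lemma self-contained and shows where the constants $2\epsilon$ and $\exp(-\epsilon^2)$ come from, with no loss of generality.

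One small point in Step~1 needs tidier phrasing. Tonelli gives $E[e^{\lambda|z_i|}]=\sum_k\lambda^kE|z_i|^k/k!<\infty$; the $k=1$ term is finite because $E|z_i|\le\sigma_i$. Dominated convergence, with dominating function $e^{\lambda|z_i|}$, is then what lets you take expectations termwise in the signed series $\sum_k\lambda^k z_i^k/k!$. The bound on the terms with $k\ge 2$ uses $E[z_i^k]\le E|z_i|^k$. With that phrasing the argument is complete.
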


This lemma is from Corollary 2.1 of \cite{JSW_2023}.

\medskip

The Beveridge and Nelson (BN) decomposition under the HD setting is as follows:

\begin{eqnarray}\label{def.BN1}
\mathbf{B}(L) = \bar{\mathbf{B}} -(1-L)\tilde{\mathbf{B}}(L) ,
\end{eqnarray}
where $\tilde{\mathbf{B}}(L)\coloneqq\sum_{\ell=0}^{\infty}\tilde{\mathbf{B}}_\ell L^\ell$ with $\tilde{\mathbf{B}}_\ell\coloneqq \sum_{k=\ell+1}^{\infty}\mathbf{B}_k$. The expression of \eqref{def.BN1} is the so-called BN decomposition from \cite{PS1992}.

\begin{lemma}\label{LMB4}
\item Under Assumptions \ref{AS1} and \ref{AS.A1},
\begin{enumerate}[leftmargin=24pt, parsep=2pt, topsep=2pt]
\item $\sum_{s=1}^t \mathbf{x}_{s} = \bar{\mathbf{B}} \sum_{s=1}^t\pmb{\varepsilon}_{s}-\tilde{\mathbf{B}}(L)\pmb{\varepsilon}_{t}+\tilde{\mathbf{B}}(L)\pmb{\varepsilon}_{0}$, where $\|\tilde{\mathbf{B}}(1)\|_2<\infty$ and $\|\tilde{\mathbf{B}}(L)\pmb{\varepsilon}_{t}\|_2 =O_P(\sqrt{N})$ for $\forall t\ge 1$.
\item $\left\|\sum_{t=1}^T \pmb{\varepsilon}_t\right\|_2=O_P(\sqrt{NT})$.
\end{enumerate}
\end{lemma}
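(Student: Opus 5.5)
The plan is to treat the two parts of Lemma \ref{LMB4} separately, with part 1 built on the BN decomposition \eqref{def.BN1}.

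\textbf{Part 1.} Writing $\mathbf{x}_s=\mathbf{B}(L)\pmb{\varepsilon}_s$ and substituting $\mathbf{B}(L)=\bar{\mathbf{B}}-(1-L)\tilde{\mathbf{B}}(L)$ gives
\[
\mathbf{x}_s=\bar{\mathbf{B}}\pmb{\varepsilon}_s-\tilde{\mathbf{B}}(L)\pmb{\varepsilon}_s+\tilde{\mathbf{B}}(L)\pmb{\varepsilon}_{s-1}.
\]
Summing over $s=1,\ldots,t$, the last two terms telescope to $-\tilde{\mathbf{B}}(L)\pmb{\varepsilon}_t+\tilde{\mathbf{B}}(L)\pmb{\varepsilon}_0$, which is the claimed identity. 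The manipulation must be legitimate, meaning the series defining $\tilde{\mathbf{B}}(L)\pmb{\varepsilon}_t$ converges in $L^2$. This follows from the summability bound below.

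\textbf{The bound $\|\tilde{\mathbf{B}}(1)\|_2<\infty$.} By the triangle inequality and interchanging sums,
\[
\|\tilde{\mathbf{B}}(1)\|_2\le\sum_{\ell\ge0}\sum_{k>\ell}\|\mathbf{B}_k\|_2=\sum_{k\ge1}k\|\mathbf{B}_k\|_2 .
\]
The right-hand side is bounded uniformly in $N$ by Assumption \ref{AS.A1}.1.

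\textbf{The bound $\|\tilde{\mathbf{B}}(L)\pmb{\varepsilon}_t\|_2=O_P(\sqrt N)$.} Set $\mathbf{v}_t=\sum_\ell\tilde{\mathbf{B}}_\ell\pmb{\varepsilon}_{t-\ell}$. Using independence of the $\pmb{\varepsilon}_s$ and $E[\pmb{\varepsilon}_s\pmb{\varepsilon}_s^\top]=\mathbf{I}_N$,
\[
E\|\mathbf{v}_t\|_2^2=\sum_\ell\operatorname{tr}(\tilde{\mathbf{B}}_\ell\tilde{\mathbf{B}}_\ell^\top)\le N\sum_\ell\|\tilde{\mathbf{B}}_\ell\|_2^2\le N\Big(\sum_\ell\|\tilde{\mathbf{B}}_\ell\|_2\Big)^2 .
\]
By the previous step this is $O(N)$, and Markov's inequality then gives $\|\mathbf{v}_t\|_2=O_P(\sqrt N)$ for every $t\ge1$.

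\textbf{Part 2.} Independence across $t$ together with $E[\pmb{\varepsilon}_t\pmb{\varepsilon}_t^\top]=\mathbf{I}_N$ gives
\[
E\Big\|\sum_{t=1}^T\pmb{\varepsilon}_t\Big\|_2^2=\sum_{t=1}^T E\|\pmb{\varepsilon}_t\|_2^2=NT .
\]
Markov's inequality yields the $O_P(\sqrt{NT})$ rate.

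The only delicate point is bookkeeping the $N$-uniformity: the constant in Assumption \ref{AS.A1}.1 must control $\sum_\ell\|\tilde{\mathbf{B}}_\ell\|_2$ independently of $N$, and the $\sqrt N$ arises solely from the trace bound $\operatorname{tr}(\mathbf{M})\le N\|\mathbf{M}\|_2$. Assumption \ref{AS.A1}.2 is not needed here.
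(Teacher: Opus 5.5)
Your proposal is correct and follows the paper's proof essentially step for step. The shared ingredients are the BN decomposition with telescoping, the bound $\sum_{\ell}\|\tilde{\mathbf{B}}_\ell\|_2\le\sum_{\ell}\ell\|\mathbf{B}_\ell\|_2$ from Assumption \ref{AS.A1}.1, and the direct computation $E\|\sum_{t=1}^T\pmb{\varepsilon}_t\|_2^2=NT$ for part 2. The only difference is that you spell out the second-moment and Markov argument behind $\|\tilde{\mathbf{B}}(L)\pmb{\varepsilon}_t\|_2=O_P(\sqrt N)$, using $\operatorname{tr}(\tilde{\mathbf{B}}_\ell\tilde{\mathbf{B}}_\ell^\top)\le N\|\tilde{\mathbf{B}}_\ell\|_2^2$, whereas the paper simply asserts this step with ``Thus''.
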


\begin{lemma}\label{LMB5}
    Suppose that $\max_{j\in [N]}\sum_{\ell = 0}^{\infty} \ell^{\frac{1}{2} - \frac{1}{2(J+1)}} \|\pmb{\beta}_{\ell, j} \|_2^{\frac{J}{J+1}}<\infty$ (i.e., Assumption \ref{AS2}.2). Then the following results hold:
    \begin{enumerate}[leftmargin=24pt, parsep=2pt, topsep=2pt]
    \item $\max_{j\in [N]}\sum_{\ell=0}^{\infty} (\ell^{\frac{J}{2}-1} \|\pmb{\beta}_{\ell, j} \|_2^J)^{\frac{1}{J+1}}<\infty$,
    \item  $\max_{j\in [N]} \sum_{\ell = 0}^{\infty}\ell \|\pmb{\beta}_{\ell, j} \|_2^2<\infty$.
    \end{enumerate}
\end{lemma}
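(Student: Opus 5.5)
The statement to prove is Lemma \ref{LMB5}. Both parts are direct consequences of the summability hypothesis $\max_j\sum_{\ell}\ell^{\frac{J}{2(J+1)}}\|\pmb{\beta}_{\ell,j}\|_2^{\frac{J}{J+1}}<\infty$; the exponent $\frac12-\frac{1}{2(J+1)}=\frac{J}{2(J+1)}$ is exactly that of Assumption \ref{AS2}.2.(i). The plan is to compare the two target series term by term with the summands $a_{\ell,j}\coloneqq \ell^{\frac{J}{2(J+1)}}\|\pmb{\beta}_{\ell,j}\|_2^{\frac{J}{J+1}}$. A by-product we will use is that $S\coloneqq\sup_j\sum_\ell a_{\ell,j}<\infty$ forces $\sup_{\ell,j} a_{\ell,j}\le S$.

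For part 1, write the summand as $(\ell^{\frac J2-1}\|\pmb{\beta}_{\ell,j}\|_2^J)^{\frac{1}{J+1}}=\ell^{\frac{J-2}{2(J+1)}}\|\pmb{\beta}_{\ell,j}\|_2^{\frac{J}{J+1}}$. Since $\frac{J-2}{2(J+1)}\le\frac{J}{2(J+1)}$ and $\ell\ge1$ (the $\ell=0$ term vanishes, or is bounded by $\|\pmb{\beta}_{0,j}\|_2^{J/(J+1)}$, which the hypothesis controls via its $\ell=0$ convention), this summand is at most $a_{\ell,j}$. Summing and taking $\max_j$ gives the claim immediately.

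For part 2, write $\ell\|\pmb{\beta}_{\ell,j}\|_2^2=a_{\ell,j}^{\,r}\,\ell^{\,1-\frac{rJ}{2(J+1)}}$ with $r\coloneqq\frac{2(J+1)}{J}$, since $\|\pmb{\beta}_{\ell,j}\|_2^2=(\|\pmb{\beta}_{\ell,j}\|_2^{J/(J+1)})^{r}$. Then $\frac{rJ}{2(J+1)}=1$, so the power of $\ell$ is $\ell^0=1$ and $\ell\|\pmb{\beta}_{\ell,j}\|_2^2=a_{\ell,j}^{\,r}$ exactly. Because $r>1$ and $a_{\ell,j}\le S$, we get $a_{\ell,j}^r\le S^{r-1}a_{\ell,j}$. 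Equivalently, $\|\cdot\|_{\ell^r}\le\|\cdot\|_{\ell^1}$ gives $\sum_\ell a_{\ell,j}^r\le(\sum_\ell a_{\ell,j})^r\le S^r$ uniformly in $j$, which finishes part 2.

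There is no genuine obstacle; both parts are elementary. The only point needing care is checking the exponent bookkeeping, namely that $r\cdot\frac{J}{2(J+1)}=1$ and $\frac{J-2}{2(J+1)}\le\frac{J}{2(J+1)}$, together with the treatment of the $\ell=0$ term.
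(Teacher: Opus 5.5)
Your proposal is correct and follows the paper's proof. Part 1 is the same termwise comparison of the exponents $\frac{J-2}{2(J+1)}\le\frac{J}{2(J+1)}$. Part 2 writes $\ell\|\pmb{\beta}_{\ell,j}\|_2^2$ as the $\frac{2(J+1)}{J}$-th power of the hypothesis summand and applies $\sum_\ell a_\ell^{r}\le(\sum_\ell a_\ell)^{r}$, which is the paper's ``norm monotonicity''. Your hedge about $\ell=0$ is unnecessary, since $J\ge4$ makes both $\ell=0$ terms vanish.
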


\begin{lemma}\label{LMB6}
Under Assumptions \ref{AS1}-\ref{AS2}, suppose that $\frac{N^2 }{T^{J-1}\log N}\to 0$. The following results hold:

\begin{enumerate}[leftmargin=24pt, parsep=2pt, topsep=2pt]
    \item $\left\|\frac{1}{T-h} \sum_{t=1}^{T-h} \sum_{\ell = 0}^{\infty}  \mathbf{B}_\ell \pmb{\varepsilon}_{t-\ell}\pmb{\varepsilon}_{t-\ell}^\top \mathbf{B}_\ell^\top  -\sum_{\ell = 0}^{\infty}  \mathbf{B}_\ell \mathbf{B}_\ell^\top \right\|_{\max} =O_P\left(\frac{d_{\pmb{\beta}}^2\sqrt{\log N}}{\sqrt{T}}\right)$;
    
    \item $\left\|\frac{1}{T-h}\sum_{\ell=0}^{\infty}\sum_{j=1}^\infty \mathbf{e}_m^\top\mathbf{B}_\ell\pmb{\varepsilon}_{t-\ell}\pmb{\varepsilon}_{t-\ell-j}^\top \mathbf{B}_{\ell+j}^\top \mathbf{e}_n\right\|_{\max} =O_P (\frac{d_{\pmb{\beta}}^2\sqrt{\log N}}{\sqrt{T}})$;
    
    \item $\left\|\frac{1}{T-h} \sum_{t=1}^{T-h} \mathbf{x}_t\mathbf{x}_t^\top - \sum_{\ell = 0}^{\infty}  \mathbf{B}_\ell \mathbf{B}_\ell^\top \right\|_{\max} =O_P (\frac{d_{\pmb{\beta}}^2\sqrt{\log N}}{\sqrt{T}})$;
    
    \item $\left\|\frac{1}{T-h} \sum_{t=1}^{T-h} \mathbf{x}_t\mathbf{x}_{t-j}^\top - \sum_{\ell = 0}^{\infty}  \mathbf{B}_{\ell+j} \mathbf{B}_\ell^\top \right\|_{\max} =O_P (\frac{d_{\pmb{\beta}}^2\sqrt{\log N}}{\sqrt{T}})$ for $\forall j\in [p]$.
\end{enumerate}
\end{lemma}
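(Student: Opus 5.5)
The plan is to write each entry of the sample second-moment matrices as a quadratic form in the i.i.d. innovations $\{\varepsilon_{it}\}$. The deviations are then controlled by a moment (Nagaev-type) tail bound combined with a union bound over the $N^2$ index pairs. Fix $(m,n)$. Since $\mathbf{x}_t=\sum_{\ell}\mathbf{B}_\ell\pmb{\varepsilon}_{t-\ell}$, the product $\mathbf{e}_m^\top\mathbf{x}_t\mathbf{x}_t^\top\mathbf{e}_n$ splits into a diagonal part $\sum_\ell \pmb{\beta}_{\ell,m}^\top\pmb{\varepsilon}_{t-\ell}\pmb{\varepsilon}_{t-\ell}^\top\pmb{\beta}_{\ell,n}$ and two off-diagonal parts $\sum_{\ell}\sum_{j\ge1}\pmb{\beta}_{\ell,m}^\top\pmb{\varepsilon}_{t-\ell}\pmb{\varepsilon}_{t-\ell-j}^\top\pmb{\beta}_{\ell+j,n}$ (and its transpose). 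Results 1 and 2 handle these two pieces separately. Result 3 then follows by the triangle inequality, because the off-diagonal parts have mean zero and the diagonal part has mean $\sum_\ell\mathbf{B}_\ell\mathbf{B}_\ell^\top$. Result 4 is identical with the lag shifted by $j\in[p]$, which is fixed, so its mean is $\sum_\ell\mathbf{B}_{\ell+j}\mathbf{B}_\ell^\top$.

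For result 1, I would write the diagonal part as $\sum_\ell\sum_{a,b} b_{\ell,ma}b_{\ell,nb}(\varepsilon_{a,t-\ell}\varepsilon_{b,t-\ell}-\delta_{ab})$. Interchanging sums gives, for each fixed $\ell$, an average over $t$ of i.i.d. mean-zero variables. For result 2, each summand is a martingale difference with respect to $\mathscr F_{t-\ell}$ in the time index, once $\ell$ and $j$ are fixed. In both cases I would use Burkholder's and Minkowski's inequalities as in \eqref{U_mom1}--\eqref{nu_ml} to get a $J/2$-th moment bound on the partial sum over $t$ of order $\sqrt{T}\,\|\pmb{\beta}_{\ell,m}\|_1\|\pmb{\beta}_{\ell',n}\|_1$ for the block indexed by $(\ell,\ell')$. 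The row sums $\|\pmb{\beta}_{\ell,m}\|_1\le\|\mathbf{B}_\ell\|_\infty$ sum over $\ell$ to at most $d_{\pmb\beta}$. That is where the factor $d_{\pmb{\beta}}^2$ comes from: the weights $\lambda_{\ell}\propto\|\mathbf{B}_\ell\|_\infty/d_{\pmb\beta}$ are used to split the threshold across $\ell$, as in the proof of Theorem \ref{TM.Main1}. Then I would apply Lemma \ref{LMB1} to each block of independent terms, with the martingale structure for result 2 handled through the odd/even blocking from the proof of Theorem \ref{TM.Main1}. This gives, for $\delta=c\,d_{\pmb\beta}^2\sqrt{T\log N}$,
\[
\Pr\Big(\Big|\sum_{t=1}^{T-h}(\cdot)\Big|\ge\delta\Big)\lesssim \frac{T\,d_{\pmb\beta}^{J}}{\delta^{J/2}}+\exp(-c'\log N),
\]
uniformly in $(m,n)$.

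The last step is the union bound over $N^2$ pairs. The exponential term contributes $N^2N^{-c'}\to0$ once $c$ is chosen large. The polynomial term contributes $N^2T\,(T\log N)^{-J/4}$. This is the step I expect to be the main obstacle. With only $J/2$ moments available for products, the Nagaev polynomial term is weaker than what the stated condition $N^2/(T^{J-1}\log N)\to0$ seems to anticipate. To match that condition, I would first try to truncate $\varepsilon_{it}$ at level $T^{1/2}$. I would then control the truncation remainder by Markov's inequality using $E|\varepsilon|^J$, and apply Bernstein's inequality (Lemma \ref{LMB3}) to the truncated bounded products. The hope is that the polynomial tail becomes $N^2T\cdot T^{-J/2}\cdot(\ldots)$, of the order permitted by the assumption. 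The tail of the infinite $\ell$-sum beyond $T$ is handled as the $S_1(t)$ term in Theorem \ref{TM.Main1}, via Lemma \ref{LMB5} and Assumption \ref{AS2}.2(i). Finally, $h/T\to0$ lets me replace $T-h$ by $T$ throughout.
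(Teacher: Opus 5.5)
Your per-entry argument works as far as it goes. With $\lambda_\ell\propto\|\mathbf{B}_\ell\|_\infty/d_{\pmb{\beta}}$ and Lemma \ref{LMB1} at moment order $J/2$ (so $J>4$), you get a tail of order $T(T\log N)^{-J/4}+N^{-c'}$ at $\delta\asymp d_{\pmb{\beta}}^2\sqrt{T\log N}$, uniformly in $(m,n)$. That proves the lemma under $N^2T^{1-J/4}(\log N)^{-J/4}\to0$.

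\textbf{The gap is the last step.} The repair you propose does not work. Lemma \ref{LMB3} applies only for $\epsilon\le\frac{1}{2L}\big(\sum_iE[z_i^2]\big)^{1/2}\asymp\sqrt{T}/L$, whereas you need $\epsilon\asymp\sqrt{\log N}$. After truncating $\varepsilon_{it}$ at $T^{1/2}$, the products are bounded only by $L\asymp T$. The admissible range is then $\epsilon\lesssim T^{-1/2}$, so you get no concentration at all. Bernstein bites only for a truncation level $\tau\lesssim(T/\log N)^{1/4}$. At that level the truncation event over the $\asymp NT$ innovations costs about $NT\tau^{-J}\asymp NT^{1-J/4}(\log N)^{J/4}$. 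This device therefore cannot do better than a condition of that order, which is nowhere near $N^2/(T^{J-1}\log N)\to0$. Even the bound $N^2T^{1-J/2}$ you hoped for diverges when $N^2\asymp T^{J-1}$.

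\textbf{The stated condition is insufficient.} No argument can close this step, because under that condition results 1 and 3 are false. Take the following example:
\begin{itemize}
\item $\mathbf{B}_0=\mathbf{I}_N$ and $\mathbf{B}_\ell=\mathbf{0}$ for $\ell\ge1$, so $d_{\pmb{\beta}}=1$, with $\mathbf{u}_{t,h}=\pmb{\varepsilon}_{t+h}$;
\item $N\asymp T^{(J-1)/2}$, so $N^2/(T^{J-1}\log N)\asymp1/\log N\to0$;
\item a symmetric, unit-variance $\varepsilon_{11}$ with $\Pr(|\varepsilon_{11}|>x)\asymp x^{-J}(\log x)^{-2}$ for large $x$, so $E|\varepsilon_{11}|^J<\infty$.
\end{itemize}
Let $\hat{\pmb{\Sigma}}=\frac{1}{T-h}\sum_{t=1}^{T-h}\pmb{\varepsilon}_t\pmb{\varepsilon}_t^\top$, which involves $N(T-h)\asymp T^{(J+1)/2}$ innovations. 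The expected number of these exceeding $T^{(J+1)/(2J)}/\log T$ is of order $(\log T)^{J-2}\to\infty$. So with probability tending to one, some $\varepsilon_{it}^2\ge T^{(J+1)/J}(\log T)^{-2}$. All summands of the diagonal entry $\hat\sigma_{ii}$ are nonnegative, hence $\|\hat{\pmb{\Sigma}}-\mathbf{I}_N\|_{\max}\ge T^{1/J}(\log T)^{-2}-1\to\infty$.

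\textbf{How the paper argues.} Its route is worth adopting, but it does not escape this either. Rather than bounding each $(m,n)$ entry, it uses a BN decomposition of the lag polynomial to replace every lag-shifted sample covariance by the common $\hat{\pmb{\Sigma}}$. The end effects are $O_P(T^{-1})$ via Lemma \ref{LMB5}. It then applies the deterministic bound $|\sum_\ell\pmb{\beta}_{\ell,m}^\top(\hat{\pmb{\Sigma}}-\mathbf{I}_N)\pmb{\beta}_{\ell,n}|\le d_{\pmb{\beta}}^2\|\hat{\pmb{\Sigma}}-\mathbf{I}_N\|_{\max}$. Result 2 is reduced the same way to max-norms of innovation cross-products. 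The $\mathbf{B}$-weights and $(m,n)$ thus leave the probabilistic part, and no threshold splitting is needed.

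The union bound then runs over innovation pairs $(i,j)$. For $i\ne j$ the product $\varepsilon_{is}\varepsilon_{js}$ has $J$ moments by independence; only the $N$ diagonal pairs are limited to $J/2$. Counted correctly, this gives $\|\hat{\pmb{\Sigma}}-\mathbf{I}_N\|_{\max}=O_P(\sqrt{\log N/T})$ under $N^2T^{1-J/2}(\log N)^{-J/2}+NT^{1-J/4}(\log N)^{-J/4}\to0$, which beats your $N^2T^{1-J/4}$.

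The paper obtains $N^2/(T^{J-1}\log N)$ only through two errors. It applies Lemma \ref{LMB1} with $\|\varepsilon_{is}^2-1\|_J$, which needs $2J$ moments. It then replaces $\delta^{J}$ by $\delta^{2J}$ in the display after \eqref{eq.prob1}.

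Your suspicion was therefore correct. Adopt the decoupling step, and state the lemma under a corrected rate condition, rather than looking for a sharper tail inequality.
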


\medskip

Recall that we have defined the following notations: 

\begin{eqnarray*}
    &&U_{ij} (t,h) \coloneqq u_{it,h} x_{jt},\notag\\
    && U_{ij} (t,h,m)\coloneqq E[u_{it,h} x_{jt} \mid \mathscr{F}_{t+h,t+h-m}], \notag\\
    &&S_{ij, h}(t,m)\coloneqq \sum_{t_h< k< t }U_{ij} (k,h,m).\notag
\end{eqnarray*}
We let $U_{ij,l}^*(t,h)$, $U_{ij,l}^*(t,h,m)$,  $ S_{ij,h,l}^*(t)$, and $ S_{ij,h,l}^*(t,m)$ be coupled versions of $U_{ij}(t,h) $, $U_{ij}(t,h,m)$, $S_{ij,h}(t)$,  and $S_{ij,h}(t,m)$, respectively,  by replacing $\pmb{\varepsilon}_l$ with an independent copy of $\pmb{\varepsilon}_l^*$.

\begin{lemma}\label{LMB7}
Under Assumptions \ref{AS1}-\ref{AS4}, the following results hold for $\forall h>1$

\begin{enumerate}[leftmargin=24pt, parsep=2pt, topsep=2pt]
    \item $\|U_{ij} (t,h)-U_{ij, t+h-l}^\ast (t,h)\|_J \lesssim  c_{J,ij}(h,l) $, where 
    \begin{eqnarray*}
        c_{J,ij}(h,l)=\left\{\begin{array}{ll}
            \delta_{J,i}(h,l)\left(\sum_{i=1}^N\sum_{\ell=0}^{\infty} |b_{\ell,ji}|^{2}\right)^{\frac{1}{2}} & \text{for }l =1,\ldots, h-1 \\
             \left(\sum_{i=1}^N |b_{l-h,ji}|^{2}\right)^{\frac{1}{2}}&\text{for } l\ge h
        \end{array} \right.;
    \end{eqnarray*} 
    \item    $\|U_{ij} (t,h)-U_{ij}(t,h,m)\|_J\lesssim d_{J,ji}(h,m)$, where $d_{J,ij}(h,m)=\left(\sum_{l=m+1}^{\infty}  c^{2}_{J,ij}(h,l)\right)^{\frac{1}{2}}$;
    \item $\|\sum_{t=1}^{T-h}(U_{ij} (t,h)-U_{ij}(t,h,m))\|_J \lesssim T^{1/2} a_{J,ji}(h,m)$ and $\|\sum_{t=1}^{T-h}U_{ij} (t,h)\|_J \lesssim T^{1/2} $, where  $a_{J,ji}(h,m) =\sum_{l=m+1}^{\infty}  c_{J,ij}(h,l)$.
\end{enumerate}
\end{lemma}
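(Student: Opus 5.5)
The plan is to exploit the product structure $U_{ij}(t,h)=u_{it,h}x_{jt}$. Here $u_{it,h}=\mathbf{e}_i^\top\mathbf{g}(\pmb{\varepsilon}_{t+h},\ldots,\pmb{\varepsilon}_{t+1})$ depends only on the \emph{future} innovations, and $x_{jt}=\sum_{\ell\ge0}\pmb{\beta}_{\ell,j}^\top\pmb{\varepsilon}_{t-\ell}$ depends only on the \emph{current and past} ones. Since the two factors are independent, every $L_J$ norm of a product factorises, and replacing a single $\pmb{\varepsilon}_{t+h-l}$ perturbs exactly one of the two factors.

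\textbf{Part 1.} I split according to where the index $t+h-l$ falls.
\begin{itemize}
\item If $1\le l\le h-1$, then $t+h-l\in\{t+1,\ldots,t+h-1\}$. Only $u$ changes, so
\[
U_{ij}(t,h)-U^\ast_{ij,t+h-l}(t,h)=(u_{it,h}-u^\ast_{it,h,l})\,x_{jt}.
\]
Independence gives $\|\cdot\|_J=\|u_{it,h}-u^\ast_{it,h,l}\|_J\,\|x_{jt}\|_J$. The first factor is $\delta_{J,i}(h,l)$ by Assumption \ref{AS4}.1.
\item If $l\ge h$, only $x_{jt}$ changes, and
\[
x_{jt}-x^\ast_{jt}=\pmb{\beta}_{l-h,j}^\top(\pmb{\varepsilon}_{t+h-l}-\pmb{\varepsilon}^\ast_{t+h-l}).
\]
Using $\max_{i,h}\|u_{it,h}\|_J<\infty$ from Assumption \ref{AS2}.1, the bound is $O(1)\|\pmb{\beta}_{l-h,j}\|_2$.
\end{itemize}
In both cases the norm of the linear form in i.i.d.\ innovations is handled by Burkholder followed by Minkowski, exactly as in \eqref{U_mom1}. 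This gives $\|x_{jt}\|_J\lesssim(\sum_{\ell}\|\pmb{\beta}_{\ell,j}\|_2^2)^{1/2}$ and $\|\pmb{\beta}_{l-h,j}^\top(\pmb{\varepsilon}-\pmb{\varepsilon}^\ast)\|_J\lesssim\|\pmb{\beta}_{l-h,j}\|_2$, which is the claimed $c_{J,ij}(h,l)$. The case $l=0$, i.e.\ the innovation $\pmb{\varepsilon}_{t+h}$, is handled identically with a bounded $\delta_{J,i}(h,0)$.

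\textbf{Part 2.} I write $U(t,h)-U(t,h,m)$ as a telescoping sum of projections onto the successive innovations $\pmb{\varepsilon}_{t+h-l}$, $l>m$. Concretely,
\[
U(t,h)-U(t,h,m)=\sum_{l>m}\bigl(E[U\mid\mathscr F_{t+h,t+h-l}]-E[U\mid\mathscr F_{t+h,t+h-l+1}]\bigr),
\]
which converges in $L_J$. The summands are martingale differences with respect to the reverse filtration indexed by $l$. By Jensen, each summand equals a conditional expectation of $U-U^\ast_{t+h-l}$, so its $L_J$ norm is at most $c_{J,ij}(h,l)$ by Part 1. Burkholder's inequality followed by Minkowski in $L_{J/2}$ (as in \eqref{nu_ml}) then gives the square-summed bound $d_{J}(h,m)=(\sum_{l>m}c_{J}^2(h,l))^{1/2}$.

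\textbf{Part 3.} I use the standard physical-dependence argument. Write $\sum_t(U(t,h)-U(t,h,m))=\sum_{l>m}\sum_t\mathscr P_{t+h-l}(\cdot)$, modulo the analogous projections on the truncated part. For each fixed $l$, the sequence in $t$ is a martingale difference sequence adapted to $\mathscr F_{t+h-l}$. Burkholder then gives $\|\sum_t\cdot\|_J\lesssim T^{1/2}c_{J,ij}(h,l)$, and the triangle inequality over $l$ yields $T^{1/2}a_{J}(h,m)$. The unconditional bound $\|\sum_tU_{ij}(t,h)\|_J\lesssim T^{1/2}$ is the same argument with all $l\ge0$, after noting that $E[U_{ij}(t,h)]=0$ since $E[u_{it,h}]=0$. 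It then requires $\sum_l c_{J,ij}(h,l)<\infty$ uniformly. This holds because $\delta_{J,i}(h,l)=O(l^{-c_\delta})$ with $c_\delta>2$ is summable, and $\sum_\ell\|\pmb{\beta}_{\ell,j}\|_2<\infty$ follows from Assumption \ref{AS2}.2(i) (or Assumption \ref{AS4}.2) via Lemma \ref{LMB5}.

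The step I expect to be the main obstacle is the bookkeeping in Parts 2 and 3. The truncated variable $U(t,h,m)$ conditions on a two-sided window $\mathscr F_{t+h,t+h-m}$ rather than on the natural filtration. So I must check carefully that the projections remove innovations only from the past end, and that the coupling bound of Part 1 transfers to each projected piece through Jensen's inequality. In particular, for $l<h$ with $m<h$, the conditional expectation of $u\,x$ vanishes, which is consistent with the decomposition used in the proof of Theorem \ref{TM.Main1}.
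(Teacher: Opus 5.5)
Your Parts 1 and 2 match the paper's proof. In Part 1 you factorise $u_{it,h}$ and $x_{jt}$ by independence and bound the linear form by Burkholder--Minkowski. In Part 2 you telescope $U-U(t,h,m)=\sum_{l>m}D_{t,h,l}$, with $D_{t,h,l}=U(t,h,l)-U(t,h,l-1)=E[U-U^\ast_{t+h-l}\mid\mathscr F_{t+h,t+h-l}]$, and apply Burkholder in $l$.

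The gap is in the first claim of Part 3. You use the forward projections $\mathscr P_k=E[\cdot\mid\mathscr F_k]-E[\cdot\mid\mathscr F_{k-1}]$, with $\mathscr F_k=\sigma(\pmb\varepsilon_k,\pmb\varepsilon_{k-1},\ldots)$. With these, the identity $\sum_t(U(t,h)-U(t,h,m))=\sum_{l>m}\sum_t\mathscr P_{t+h-l}(\cdot)$ is false. The reason is that these projections strip innovations from the \emph{recent} end, while $U(t,h,m)$ discards the \emph{oldest} ones. Concretely, for $m\ge h$,
\[
U_{ij}(t,h)-U_{ij}(t,h,m)=u_{it,h}\sum_{\ell>m-h}\pmb\beta_{\ell,j}^\top\pmb\varepsilon_{t-\ell},
\qquad
\sum_{l>m}\mathscr P_{t+h-l}U_{ij}(t,h)=E[U_{ij}(t,h)\mid\mathscr F_{t+h-m-1}]=0.
\]
The second expression vanishes because $u_{it,h}$ has mean zero and is independent of $\mathscr F_t$. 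So the pieces to which you apply Burkholder and the bound $T^{1/2}c_{J,ij}(h,l)$, $l>m$, are identically zero. The whole difference sits in the projections with $l<h$, which your bookkeeping never touches. This is exactly the obstacle you flagged, and as written it is not resolved.

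The repair is to reuse your Part 2 differences rather than switch to $\mathscr P$. For fixed $l$, $D_{t,h,l}$ is a function of $\pmb\varepsilon_{t+h},\ldots,\pmb\varepsilon_{t+h-l}$ with $E[D_{t,h,l}\mid\pmb\varepsilon_{t+h},\ldots,\pmb\varepsilon_{t+h-l+1}]=0$. Hence $\{D_{t,h,l}\}_t$ is a martingale difference sequence in \emph{reversed} time with respect to $\sigma(\pmb\varepsilon_k,\pmb\varepsilon_{k+1},\ldots)$. It is not adapted to $\mathscr F_{t+h-l}$, since it depends on $\pmb\varepsilon_{t+h}$. Burkholder then gives $\|\sum_tD_{t,h,l}\|_J\lesssim T^{1/2}c_{J,ij}(h,l)$. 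The triangle inequality over $l>m$ then yields $T^{1/2}\sum_{l>m}c_{J,ij}(h,l)$, which is the paper's route.

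Your argument for the second claim, $\|\sum_tU_{ij}(t,h)\|_J\lesssim T^{1/2}$, via the forward projections over all $l\ge0$ is valid. The paper instead writes $U=U(t,h,0)+\sum_{l\ge1}D_{t,h,l}$, and in fact $U(t,h,0)=0$. Your summability check for $\sum_lc_{J,ij}(h,l)$ is fine.
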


Define 
\begin{eqnarray*}
&&P_{h}(T, m,\epsilon)
\notag\\
&=&\sup_{|a_{s,t}|<1}\Pr\Big(\Big|\sum_{t=2}^{T-h}\sum_{1\leq s< t} a_{s,t} \big(U_{i_1j_1} (t,h,m) U_{i_2j_2}(s,h,m)-E[U_{i_1j_1} (t,h,m)  U_{i_2j_2}(s,h,m)]\big)\Big|\ge \epsilon \Big).  
\end{eqnarray*}

\begin{lemma}\label{LMB8}
Under Assumptions \ref{AS1}-\ref{AS4},  the following results hold.

\begin{enumerate}[leftmargin=24pt, parsep=2pt, topsep=2pt]
    \item  Suppose that $\max_{s,t}|a_{s,t}|=O(1)$, $m=\lfloor T^\theta \rfloor$,  $T^{1+c_\epsilon}/\epsilon_{NT}\rightarrow 0$,  $l_{T,\theta}=\lceil-\log\log T/\log\theta \rceil$, and $J>4$, where $0<\theta<1$ and $c_\epsilon>0$, we have  
    \[P_{h}(T, m,\epsilon_{NT})=O\left(\epsilon_{NT}^{-J}T\left(m^{(1-c_0\theta)J/2-1}+l_{T,\theta}^{J}\right)\right).\]
    
    \item  Suppose that $J>4$ and there exist constants $0<\theta<1$, $c_\epsilon>0$ and $c_h>0$ such that $T^{\theta^{c_h}}/h\rightarrow 0$ and $T^{\theta^{c_h}(1+c_\epsilon)}/\epsilon_{NT}\rightarrow 0$,  we have
\begin{eqnarray*}
  &&\Pr\Big(\Big|\sum_{t=2}^{T-h}\big(U_{i_1j_1} (t,h) S_{i_2j_2,h}(t)-E[U_{i_1j_1} (t,h)  S_{i_2j_2,h}(t)]\big)\Big|\ge \epsilon_{NT} \Big)
  \notag\\
&=&  O\left(\epsilon_{NT}^{-J/2}\log T\left(  T^{(1/2-c_0\theta) J/2}h^{J/4}
+ Th^{(1-c_0\theta)J/2-1}+T\right)\right).
\end{eqnarray*}
\end{enumerate}
\end{lemma}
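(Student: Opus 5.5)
The statement has two parts. Part 1 is a tail bound for the weighted quadratic form built from the $m$-truncated variables $U_{ij}(t,h,m)$. Part 2 is the analogous bound for the un-truncated, window-restricted sum $\sum_t U_{i_1j_1}(t,h)S_{i_2j_2,h}(t)$. I would prove Part 1 first by a multiscale $m$-dependence argument, in the spirit of the quadratic-form inequalities of Xiao and Wu for stationary processes, and then deduce Part 2 from Part 1 plus a truncation step.

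\textbf{Part 1.} Fix $m=\lfloor T^\theta\rfloor$ and set $m_k=\lfloor T^{\theta^k}\rfloor$ for $k=1,\ldots,l_{T,\theta}$. The choice of $l_{T,\theta}$ makes $m_{l_{T,\theta}}$ bounded. Write $L(m)\coloneqq\sum_{t}\sum_{s<t}a_{s,t}\{U_{i_1j_1}(t,h,m)U_{i_2j_2}(s,h,m)-E[\cdot]\}$ and telescope:
\begin{eqnarray*}
L(m)=\sum_{k=1}^{l_{T,\theta}-1}\big[L(m_k)-L(m_{k+1})\big]+L(m_{l_{T,\theta}}).
\end{eqnarray*}
Each difference $L(m_k)-L(m_{k+1})$ is a quadratic form in $m_k$-dependent variables. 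Its coefficients involve $U(\cdot,h,m_k)-U(\cdot,h,m_{k+1})$, whose $\|\cdot\|_J$ norm is $\lesssim d_{J}(h,m_{k+1})$ by Lemma \ref{LMB7}.2. Assumptions \ref{AS4}.1--2 turn this into a polynomial rate $m_{k+1}^{-c_0}$.

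For a fixed level, split the time index into blocks of length $2m_k$, as in the proof of Theorem \ref{TM.Main4}. Pairs $(s,t)$ with $t-s>2m_k$ are handled by conditioning, which makes the form linear with independent odd/even block sums. Near-diagonal pairs contribute block-local terms $Q_n$, and the odd-indexed $Q_n$ are mutually independent, as are the even-indexed ones. I would apply the Fuk--Nagaev inequality of Lemma \ref{LMB2} to each of these two independent families. The block moments $\|Q_n-EQ_n\|_{J/2}$ come from the projection/Burkholder bound \eqref{m8}, combined with Lemma \ref{LMB7}.1 and $\sup|a_{s,t}|<1$.

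Summing over the $l_{T,\theta}$ levels gives two contributions. The polynomial parts produce $\epsilon_{NT}^{-J}T\,m^{(1-c_0\theta)J/2-1}$. Splitting $\epsilon_{NT}$ equally across levels costs the factor $l_{T,\theta}^J$. The exponential parts are negligible because $T^{1+c_\epsilon}/\epsilon_{NT}\to0$.

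\textbf{Part 2.} Put $m=\lfloor T^{\theta^{c_h}}\rfloor$, so that $m=o(h)$. Replace $U(t,h)$ and $S_h(t)$ by their $m$-truncated versions. The replacement error is bounded in $\|\cdot\|_{J/2}$ exactly as in \eqref{m8}--\eqref{mm17}, namely $O(\sqrt{Th})\,d^\ast_J(h,m)$. Markov's inequality then yields the term $\epsilon_{NT}^{-J/2}T^{(1/2-c_0\theta)J/2}h^{J/4}$.

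For the truncated sum, apply Part 1 with weights $a_{s,t}=I(t-h<s<t)$, which satisfy $|a_{s,t}|\le 1$. Because $m<h$, the window constraint only removes lags beyond the dependence range. This means the $l_{T,\theta}$ iteration can stop at level $c_h$, which is why a $\log T$ factor appears instead of $l^J_{T,\theta}$, and why the $h$-dependent term $Th^{(1-c_0\theta)J/2-1}$ appears in place of $m^{(\cdot)}$. The moment level drops from $J$ to $J/2$ because $U\cdot S_h$ is a product of two $L^J$ factors. The residual $+T$ term collects the boundary blocks.

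\textbf{Main obstacle.} The hardest step is the level-by-level bookkeeping in Part 1. At each scale one must verify that the conditional linearisation of the far-off-diagonal part still satisfies the hypotheses of Lemma \ref{LMB2}, in particular the lower bound on $y$. One must also check that the geometric choice $m_k=T^{\theta^k}$ makes the polynomial rates summable, so that they are dominated by the first level. I would first try to allocate $\epsilon_{NT}/l_{T,\theta}$ to each level and check that this loses only the stated $l_{T,\theta}^J$ factor.
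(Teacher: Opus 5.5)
\textbf{Part 1.} The multiscale skeleton matches the paper: levels $m_k=\lfloor T^{\theta^k}\rfloor$, about $l_{T,\theta}$ of them, $\epsilon_{NT}$ split across levels, and near-diagonal terms handled by independent odd/even blocks and Lemma~\ref{LMB2}. The far-from-diagonal part, however, does not go through as you describe it. Let $V_2(t,h,m)=\sum_{k<t-3m}a_{k,t}U(k,h,m)$. The block sums $Q_{2,n}=\sum_{t\in B_n}U(t,h,m)V_2(t,h,m)$ are not independent, and no conditioning makes them so. $V_2$ is built from the very variables that act as multipliers in earlier blocks. Conditioning on them freezes those blocks, and pairs with both indices in the unconditioned blocks stay quadratic.

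What does hold is that the odd (and the even) subsequence of $Q_{2,n}$ is a martingale difference sequence for the block filtration $\mathscr{F}^\ast_n$. Lemma~\ref{LMB2} is then unavailable. Any martingale tail bound brings in the random conditional variance $\sum_n E[Q_{2,n}^2\mid\mathscr{F}^\ast_{n-2}]$. That variance is again a weighted quadratic form $\sum_{k_1,k_2}a^\ast_{k_1,k_2}U(k_1,h,m)U(k_2,h,m)$ with $|a^\ast_{k_1,k_2}|\lesssim T$, i.e.\ exactly the object being bounded. The missing idea is to exploit this self-reference. The paper applies Lemma 1 of \cite{haeusler1984exact}, bounds the variance term by $P_h(T,m,\epsilon_x^2T^{-1}(\log T)^{-2})$, and obtains the recursive inequality \eqref{mm23}. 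The hypothesis $T^{1+c_\epsilon}/\epsilon_{NT}\to0$ is what closes this recursion: the new threshold exceeds $\epsilon_{NT}$ by a polynomial factor. It is not there merely to make exponential terms negligible.

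\textbf{Part 2.} Here there is a rate problem. You truncate once at $m=\lfloor T^{\theta^{c_h}}\rfloor$ and apply Markov's inequality. With $d^\ast_J(h,m)\lesssim m^{-c_0}$ as in \eqref{mm17}, this gives only $\epsilon_{NT}^{-J/2}h^{J/4}T^{(1/2-c_0\theta^{c_h})J/2}$. That is strictly weaker than the claimed $T^{(1/2-c_0\theta)J/2}$ whenever $c_h>1$, and $c_h>1$ is forced whenever $h=o(T^{\theta})$. The stated term only arises from a first truncation at $m_1=\lfloor T^\theta\rfloor$.

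You then have to telescope $Q_{NT,l}-Q_{NT,l+1}$ through $m_l=\lfloor T^{\theta^l}\rfloor$ down to the first $m_l\le h/4$. There are at most $\lceil c_h\rceil$ such steps because $T^{\theta^{c_h}}/h\to0$. Each intermediate difference must be handled by splitting into blocks of length $m_l+h$, whose non-adjacent members are independent, and applying Lemma~\ref{LMB2}, not by Markov. Only at the final level, with blocks of length $2h$, is Part 1 applied inside each block. This is where the $Th^{(1-c_0\theta)J/2-1}$ and $T$ terms originate. Your remark that for $m<h$ the window only discards lags beyond the dependence range is correct. But the far-diagonal pairs inside the window remain, so Part 1, with its recursion, is still needed at that stage.
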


\begin{proof}[Proof of Lemma \ref{LMB4}]
\item 

(1). Using \eqref{def.BN1}, we write

\begin{eqnarray*}
\sum_{s=1}^t \mathbf{x}_{s} &=&  \sum_{s=1}^t [\bar{\mathbf{B}} -(1-L)\tilde{\mathbf{B}}(L)] \pmb{\varepsilon}_{s} \notag \\
&=&\bar{\mathbf{B}} \sum_{s=1}^t\pmb{\varepsilon}_{s} - \sum_{s=1}^t\tilde{\mathbf{B}}(L)\pmb{\varepsilon}_{s}+ \sum_{s=1}^t \tilde{\mathbf{B}}(L)\pmb{\varepsilon}_{s-1}\notag \\
&=&\bar{\mathbf{B}} \sum_{s=1}^t\pmb{\varepsilon}_{s}-\tilde{\mathbf{B}}(L)\pmb{\varepsilon}_{t}+\tilde{\mathbf{B}}(L)\pmb{\varepsilon}_{0}.
\end{eqnarray*}

By Assumption \ref{AS.A1},

\begin{eqnarray*} 
\sum_{\ell=0}^{\infty} \|\tilde{\mathbf{B}}_\ell\|_2\le \sum_{\ell=0}^{\infty}\sum_{k=\ell+1}^{\infty}\| \mathbf{B}_k\|_2=\sum_{\ell=1}^{\infty}\ell \| \mathbf{B}_\ell\|_2<\infty.
\end{eqnarray*}
Thus,

\begin{eqnarray*}
\|\tilde{\mathbf{B}}(L)\pmb{\varepsilon}_{t}\|_2 =O_P(\sqrt{N}).
\end{eqnarray*}

\medskip

(2). Write

\begin{eqnarray*}
    E\left\|\sum_{t=1}^T \pmb{\varepsilon}_t\right\|^2 = \sum_{i=1}^N \sum_{t,s=1}^T E[\varepsilon_{it}\varepsilon_{is}]= \sum_{i=1}^N \sum_{t=1}^T E[\varepsilon_{it}^2] =NT,
\end{eqnarray*}
where the last line follows from Assumption \ref{AS1}.
\end{proof}

\medskip

\begin{proof}[Proof of Lemma \ref{LMB5}]
\item 

(1). It is easy to see that 

\begin{eqnarray*}
    &&\max_{j\in [N]}\sum_{\ell=0}^{\infty} (\ell^{\frac{J}{2}-1} \|\pmb{\beta}_{\ell, j} \|_2^J)^{\frac{1}{J+1}} =\sum_{\ell = 0}^{\infty} \ell^{\frac{1}{2} - \frac{3}{2(J+1)}} \|\pmb{\beta}_{\ell, j} \|_2^{\frac{J}{J+1}}\le \sum_{\ell = 0}^{\infty} \ell^{\frac{1}{2} - \frac{1}{2(J+1)}} \|\pmb{\beta}_{\ell, j} \|_2^{\frac{J}{J+1}}<\infty.
\end{eqnarray*}

\medskip

\noindent (2). For the second results, we write

\begin{eqnarray*}
    \sum_{\ell = 0}^{\infty}\ell \|\pmb{\beta}_{\ell, j} \|_2^2\le \{\sum_{\ell = 0}^{\infty}(\ell \|\pmb{\beta}_{\ell, j} \|_2^2)^{\frac{J}{2(J+1)}}\}^{\frac{2(J+1)}{J}} = \{\sum_{\ell = 0}^{\infty} \ell^{\frac{1}{2} - \frac{1}{2(J+1)}} \|\pmb{\beta}_{\ell, j} \|_2^{\frac{J}{J+1}}\}^{\frac{2(J+1)}{J}} <\infty.
\end{eqnarray*}
The proof is now completed.
\end{proof}

\medskip

\begin{proof}[Proof of Lemma \ref{LMB6}]
\item 

Before proceeding, we define a few notations. Recall that $\mathbf{e}_n$ is a selection vector, and decompose $\mathbf{e}_m^\top\mathbf{x}_t\mathbf{x}_t^\top \mathbf{e}_n$ as follows:

\begin{eqnarray*} 
    \mathbf{e}_m^\top\mathbf{x}_t\mathbf{x}_t^\top \mathbf{e}_n &=& \sum_{\ell = 0}^{\infty} \mathbf{e}_m^\top\mathbf{B}_\ell \pmb{\varepsilon}_{t-\ell}\pmb{\varepsilon}_{t-\ell}^\top \mathbf{B}_\ell^\top \mathbf{e}_n + \sum_{\ell=0}^{\infty}\sum_{j=1}^\infty \mathbf{e}_m^\top\mathbf{B}_\ell\pmb{\varepsilon}_{t-\ell}\pmb{\varepsilon}_{t-\ell-j}^\top \mathbf{B}_{\ell+j}^\top \mathbf{e}_n \notag \\
    &&+ \sum_{\ell=1}^{\infty}\sum_{j=0}^\infty \mathbf{e}_m^\top\mathbf{B}_{\ell+j}\pmb{\varepsilon}_{t-j-\ell}\pmb{\varepsilon}_{t-j}^\top \mathbf{B}_j^\top \mathbf{e}_n\notag \\
    &=& \sum_{\ell = 0}^{\infty} [ (\mathbf{e}_n^\top\mathbf{B}_\ell) \otimes(\mathbf{e}_m^\top \mathbf{B}_\ell) ]\text{vec}(\pmb{\varepsilon}_{t-\ell}\pmb{\varepsilon}_{t-\ell}^\top )\notag \\
    &&+\sum_{\ell=0}^{\infty}\sum_{j=1}^\infty  [ (\mathbf{e}_n^\top\mathbf{B}_{\ell+j}) \otimes (\mathbf{e}_m^\top\mathbf{B}_\ell) ]\text{vec}(\pmb{\varepsilon}_{t-\ell}\pmb{\varepsilon}_{t-\ell-j}^\top)\notag \\
    &&+ \sum_{\ell=1}^{\infty}\sum_{j=0}^\infty [ (\mathbf{e}_n^\top\mathbf{B}_j) \otimes (\mathbf{e}_m^\top\mathbf{B}_{\ell+j}) ]\text{vec}(\pmb{\varepsilon}_{t-j-\ell}\pmb{\varepsilon}_{t-j}^\top)\notag \\
    &\eqqcolon& \mathbf{U}(L)\tilde{\pmb{\varepsilon}}_t + \sum_{j=1}^{\infty}\mathbf{U}_{j\centerdot}(L)\tilde{\pmb{\varepsilon}}_{t, \centerdot j} + \sum_{\ell=1}^{\infty}\mathbf{U}_{\centerdot\ell}(L) \tilde{\pmb{\varepsilon}}_{t,\ell\centerdot},
\end{eqnarray*}
where the definitions of the variables on the right hand side are obvious, and the indices $m$ and $n$ have been suppressed for notational simplicity. 

In what follows, we often use the following result:

\begin{eqnarray}\label{ineq.dbeta}
    \left(\sum_{\ell=0}^{\infty}\|\mathbf{B}_{\ell}\|_\infty^2\right)^{1/2}\le \sum_{\ell=0}^{\infty}\|\mathbf{B}_{\ell}\|_\infty= d_{\pmb{\beta}},
\end{eqnarray}
where the inequality follows from the norm monotonicity, and $d_{\pmb{\beta}}$ is defined in Assumption \ref{AS2} already. We are now ready to start the proof.

\medskip

(1). Similar to \eqref{def.BN1}, we have

\begin{eqnarray*}
    \mathbf{U}(L) = \bar{\mathbf{U}} -(1-L)\tilde{\mathbf{U}}(L)
\end{eqnarray*}
where $\bar{\mathbf{U}}\coloneqq \sum_{\ell = 0}^{\infty}\mathbf{U}_{\ell}$, and $\tilde{\mathbf{U}}(L)\coloneqq\sum_{\ell=0}^{\infty}\tilde{\mathbf{U}}_\ell L^\ell$ with $\tilde{\mathbf{U}}_\ell\coloneqq \sum_{k=\ell+1}^{\infty}\mathbf{U}_k$. Thus,

\begin{eqnarray*}
    \sum_{t=1}^{T-h}\mathbf{U}(L)\tilde{\pmb{\varepsilon}}_t = \bar{\mathbf{U}} \sum_{t=1}^{T-h}\tilde{\pmb{\varepsilon}}_t -\tilde{\mathbf{U}}(L)\tilde{\pmb{\varepsilon}}_{T-h} +\tilde{\mathbf{U}}(L)\tilde{\pmb{\varepsilon}}_0.
\end{eqnarray*}

Firstly, we study $\tilde{\mathbf{U}}(L)\tilde{\pmb{\varepsilon}}_{T-h}$ and $\tilde{\mathbf{U}}(L)\tilde{\pmb{\varepsilon}}_0$, and note that for $\forall t$

\begin{eqnarray*}
    &&\max_{m,n}\frac{1}{T-h}|\tilde{\mathbf{U}}(L)\tilde{\pmb{\varepsilon}}_t | \notag \\
    &\le &\max_{m,n}\frac{1}{T-h}\sum_{\ell = 0}^{\infty}\sum_{k=\ell+1}^{\infty}| \mathbf{e}_m^\top\mathbf{B}_k \pmb{\varepsilon}_{t-\ell}\pmb{\varepsilon}_{t-\ell}^\top \mathbf{B}_k^\top \mathbf{e}_n |\notag \\
    &\le & \frac{1}{T-h}\max_{m} \sum_{\ell = 0}^{\infty}\sum_{k=\ell+1}^{\infty} |\pmb{\beta}_{k, m}^\top \pmb{\varepsilon}_{t-\ell}|^2  \notag \\
    &\lesssim &\frac{1}{T-h}\max_{m} \sum_{\ell = 0}^{\infty}\ell \|\pmb{\beta}_{\ell, m} \|_2^2\notag \\
    &=&\frac{1}{T-h}, 
\end{eqnarray*}
where the second inequality follows from  Cauchy-Schwarz inequality, and  the last line follows from Lemma \ref{LMB5}. Therefore,

\begin{eqnarray}\label{rate.sum1}
    \max_{m,n}\frac{1}{T-h}\left|\tilde{\mathbf{U}}(L)\tilde{\pmb{\varepsilon}}_{T-h} -\tilde{\mathbf{U}}(L)\tilde{\pmb{\varepsilon}}_0 \right| =O_P(1)\frac{1}{T-h}.
\end{eqnarray}

\medskip

We now focus on 
 
\begin{eqnarray*}
    \frac{1}{T-h} \sum_{t=1}^{T-h} \bar{\mathbf{U}}\tilde{\pmb{\varepsilon}}_t  =\sum_{\ell=0}^\infty \pmb{\beta}_{\ell, m}^\top \hat{\pmb{\Sigma}} \pmb{\beta}_{\ell, n} ,
\end{eqnarray*}
where $\hat{\pmb{\Sigma}}=\{ \hat{\sigma}_{ij} \}\coloneqq\frac{1}{T-h} \sum_{t=1}^{T-h}  \pmb{\varepsilon}_{t} \pmb{\varepsilon}_{t}^\top .$ Note that 

\begin{eqnarray}\label{rate.sum2}
    \left|\sum_{\ell=0}^\infty \pmb{\beta}_{\ell, m}^\top (\hat{\pmb{\Sigma}}-\mathbf{I}_N) \pmb{\beta}_{\ell, n} \right| &\le &\|\hat{\pmb{\Sigma}}-\mathbf{I}_N\|_{\max}\sum_{\ell=0}^\infty \|\pmb{\beta}_{\ell, m}\|_1 \|\pmb{\beta}_{\ell, n}\|_1\notag \\
    &\le &\|\hat{\pmb{\Sigma}}-\mathbf{I}_N\|_{\max} \sum_{\ell=0}^\infty \|\mathbf{B}_\ell\|_{\infty}^2\notag\\
    &\le & \|\hat{\pmb{\Sigma}}-\mathbf{I}_N\|_{\max}d_{\pmb{\beta}}^2,
\end{eqnarray}
where the second inequality follows from Cauchy-Schwarz inequality, and the third inequality follows from \eqref{ineq.dbeta}. We study $\|\hat{\pmb{\Sigma}}-\mathbf{I}_N\|_{\max}$ in what follows. 

For notational simplicity, we let $1_{ij}\coloneqq I(i=j)$ and $\xi_{ij,s}\coloneqq \varepsilon_{is}\varepsilon_{js} - 1_{ij}$, and then write for $\forall i,j$

\begin{eqnarray*}
    &&\Pr \left(\max_{t\in [T-h]}  \sum_{s=1}^{t} \xi_{ij,s}\ge \delta \right) \notag \\
    &\leq &\left(1 + \frac{2}{J} \right)^{J} \frac{\sum_{t=1}^{T-h}\|\xi_{ij,s}\|_J^J}{\delta^J} + 2\exp\left(  \frac{- 2e^{-J}(J+2)^{-2} \delta^2}{\sum_{t=1}^{T-h}\|\xi_{ij,s}\|_2^2}\right)\notag \\
    &\leq &c_1\frac{T}{\delta^J} + 2\exp\left(  \frac{- c_2 \delta^2}{T}\right),
\end{eqnarray*}
where the first inequality follows from Lemma \ref{LMB1}, and we require $J$ of Assumption \ref{AS1} to satisfy that $J\ge 4$ and $2q =J$. Therefore,

\begin{eqnarray}\label{eq.prob1}
    \Pr(\max_{i,j}(T-h)|\hat{\sigma}_{ij} - 1_{ij}|\ge \delta)\le c_1\frac{N^2  T}{\delta^J} + 2N^2 \exp\left(  \frac{- c_2 \delta^2}{T}\right).
\end{eqnarray}
Consequently, it yields that 

\begin{eqnarray*}
    \Pr((T-h)\|\hat{\pmb{\Sigma}}-\mathbf{I}_N\|_{\max}\ge \delta)&\le &c_1\frac{N^2  T}{\delta^{2J}} + 2N^2 \exp\left(  \frac{- c_2 \delta^2}{T}\right)\notag \\
    &=& \frac{N^2 }{T^{J-1}\log N^c}+ \frac{N^2}{\exp\left( \log N^{cc_2}\right)} \to  0,
\end{eqnarray*}
where the first equality follows by letting $\delta= (T\log N^c)^{1/2}$, and the last step follows as long as $cc_2>2$. 

In connection with \eqref{rate.sum1} and \eqref{rate.sum2}, we now can conclude that 

\begin{eqnarray*}
    \max_{m,n}\left|\frac{1}{T-h} \sum_{t=1}^{T-h} \bar{\mathbf{U}}\tilde{\pmb{\varepsilon}}_t - \sum_{\ell=0}^\infty \pmb{\beta}_{\ell, m}^\top \pmb{\beta}_{\ell, n}  \right|=O_P\left(\frac{d_{\pmb{\beta}}^2\sqrt{\log N}}{\sqrt{T}} \right).
\end{eqnarray*}
The proof of the first result is completed.

\medskip

(2). Next, we study 

\begin{eqnarray*}
    \left|\sum_{t=1}^{T-h}\sum_{j=1}^{\infty}\mathbf{U}_{j\centerdot}(L)\tilde{\pmb{\varepsilon}}_{t, \centerdot j} \right| &=& \left|\sum_{t=1}^{T-h}\sum_{j=1}^\infty \sum_{\ell=0}^{\infty}\mathbf{e}_m^\top\mathbf{B}_\ell \pmb{\varepsilon}_{t-\ell}\pmb{\varepsilon}_{t-\ell-j}^\top \mathbf{B}_{\ell+j}^\top \mathbf{e}_n\right| \notag \\
    &\le & \sum_{j=1}^\infty \|\pmb{\beta}_{j, n}\|_1\sum_{\ell=0}^{\infty}\|\pmb{\beta}_{\ell, m}\|_1\left\|\sum_{t=1}^{T-h}\pmb{\varepsilon}_{t-\ell}\pmb{\varepsilon}_{t-\ell-j}^\top\right\|_{\max}.
\end{eqnarray*}

First, note that for $\forall j$, the elements of $\pmb{\varepsilon}_{t}\pmb{\varepsilon}_{t-j}^\top$ are martingale difference sequence adapt to the filtration $\mathscr{F}_t =\{\pmb{\varepsilon}_{t}, \pmb{\varepsilon}_{t-1},\ldots \}$. Therefore, in view of the proof of the first result and the discussion under Eq. (1.8) of \cite{Rio2017}, we know that, for $\forall \ell\ge 0, \forall j\ge 1$,

\begin{eqnarray*}
    \left\|\frac{1}{T-h}\sum_{t=1}^{T-h}\pmb{\varepsilon}_{t-\ell}\pmb{\varepsilon}_{t-\ell-j}^\top\right\|_{\max} =O_P\left(\frac{\sqrt{\log N}}{\sqrt{T}}\right).
\end{eqnarray*}
Thus, we can write further 

\begin{eqnarray*}
    \left|\sum_{t=1}^{T-h}\sum_{j=1}^{\infty}\mathbf{U}_{j\centerdot}(L)\tilde{\pmb{\varepsilon}}_{t, \centerdot j} \right| &\lesssim & \frac{\sqrt{\log N}}{\sqrt{T}}\left(\sum_{\ell=0}^{\infty}\|\mathbf{B}_{\ell}\|_\infty\right)^2 = \frac{d_{\pmb{\beta}}^2\sqrt{\log N}}{\sqrt{T}}.
\end{eqnarray*}
The proof of the second result is now completed.

\medskip

(3). By the first two results of Lemma \ref{LMB6}, we obtain that 

\begin{eqnarray*}
    \left\|\frac{1}{T-h} \sum_{t=1}^{T-h} \mathbf{x}_t\mathbf{x}_{t}^\top - \sum_{\ell = 0}^{\infty}  \mathbf{B}_{\ell} \mathbf{B}_\ell^\top \right\|_{\max} =O_P \left(\frac{d_{\pmb{\beta}}^2\sqrt{\log N}}{\sqrt{T}} \right).
\end{eqnarray*}

\medskip

(4). Similar to the third result, 

\begin{eqnarray*}
    \left\|\frac{1}{T-h} \sum_{t=1}^{T-h} \mathbf{x}_t\mathbf{x}_{t-j}^\top - \sum_{\ell = 0}^{\infty}  \mathbf{B}_{\ell+j} \mathbf{B}_\ell^\top \right\|_{\max} =O_P \left(\frac{d_{\pmb{\beta}}^2\sqrt{\log N}}{\sqrt{T}} \right)
\end{eqnarray*}
for $j=1,\ldots, p-1$. The proof is now complete.
\end{proof}

\medskip

\begin{proof}[Proof of Lemma \ref{LMB7}]
\item 
  
(1). For notational simplicity, we suppress the indices $(i,j)$ in $U_{ij} (t,h)$ and $U_{ij,l}^\ast (t,h)$. 

\medskip

\noindent Case 1. We replace $\pmb{\varepsilon}_{t+h-l}$ with $\pmb{\varepsilon}_{t+h-l}^*$ for $l=0,\ldots, h-1$. It is straightforward to obtain  
\begin{eqnarray}\label{mm3}
   &&\| U(t,h)-U_{t+h-l}^\ast (t,h)\|_J \notag \\
   &\leq&\|g_{i}(\pmb{\varepsilon}_{t+h}, \ldots,\pmb{\varepsilon}_{t+h-l} \ldots,\pmb{\varepsilon}_{t+1})-g_{i}(\pmb{\varepsilon}_{t+h}, \ldots, \pmb{\varepsilon}_{t+h-l}^\ast, \ldots,\pmb{\varepsilon}_{t+1})\|_J\Big\|\sum_{\ell=0}^{\infty}\pmb{\beta}_{\ell,j}^\top\pmb{\varepsilon}_{t-\ell}\Big\|_J
   \notag\\
   &\leq&\delta_{J,i}(h,l)\Big\|\sum_{\ell=0}^{\infty}\pmb{\beta}_{\ell,j}^\top\pmb{\varepsilon}_{t-\ell}\Big\|_J
   \notag\\
   &=&\delta_{J,i}(h,l)\Big\|\sum_{i=1}^N\sum_{\ell=0}^{\infty}b_{\ell,ji}\varepsilon_{i, t-\ell}\Big\|_J.
\end{eqnarray} 
For $J\geq2$,  by applying  Burkholder inequality and Minkowski inequality sequentially, 

\begin{eqnarray}\label{mm1}
    \Big\|\sum_{i=1}^N\sum_{\ell=0}^{\infty}b_{\ell,ji}\varepsilon_{i, t-\ell}\Big\|^2_J&\leq& O(1)\left\|\sum_{i=1}^N\sum_{\ell=0}^{\infty}b_{\ell,ji}^2\varepsilon_{i, t-\ell}^2\right\|_{\frac{J}{2}}
    \leq O(1)\sum_{i=1}^N\sum_{\ell=0}^{\infty}\|b_{\ell,ji}\varepsilon_{i, t-\ell}\|_J^{2}
    \notag\\
    &=&O(1)\sum_{i=1}^N\sum_{\ell=0}^{\infty}|b_{\ell,ji}|^2.
\end{eqnarray}
Thus, by \eqref{mm1}, 
\begin{eqnarray*}
    \Big\|\sum_{i=1}^N\sum_{\ell=0}^{\infty}b_{\ell,ji}\varepsilon_{i, t-\ell}\Big\|_J\leq O(1)\left(\sum_{i=1}^N\sum_{\ell=0}^{\infty} |b_{\ell,ji}|^{2}\right)^{\frac{1}{2}}.
\end{eqnarray*}
Together with \eqref{mm3}, it gives 
\begin{eqnarray*}
   \| U(t,h)-U_{l}^\ast (t,h)\|_J&\leq&O(1)\delta_{J,i}(h,l)\left(\sum_{i=1}^N\sum_{\ell=0}^{\infty} |b_{\ell,ji}|^{2}\right)^{\frac{1}{2}},
\end{eqnarray*}
for $l\in [h]$. 


\noindent Case 2. We replace $\pmb{\varepsilon}_{t+h-l}$ with $\pmb{\varepsilon}_{t+h-l}^*$ for $l=h,h+1,\ldots$ Using arguments that are analogous to \eqref{mm1},  we obtain
\begin{eqnarray*}
     \|U(t,h)-U_{t+h-l}^\ast (t,h)\|_J&=& \|g_{i}(\pmb{\varepsilon}_{t+h}, \ldots, \pmb{\varepsilon}_{t+1})\pmb{\beta}_{l-h,j}^\top(\pmb{\varepsilon}_{l}-\pmb{\varepsilon}_{l}^\ast) \|_J
     \notag\\
     &\leq&O(1)\left\|\sum_{i=1}^N b_{l-h,ji}\varepsilon_{i, t-\ell}\right\|_J
     \leq O(1)\left(\sum_{i=1}^N |b_{l-h,ji}|^{2}\right)^{\frac{1}{2}},
\end{eqnarray*}
which completes the proof of Lemma \ref{LMB7}.(1). 

\medskip

(2). For $U(t,h)-U(t,h,m)$, it admits the following MDS decomposition:
\begin{eqnarray*}
    U (t,h)-U(t,h,m)&=&\sum_{l=m+1}^{\infty}(U(t,h,l)-U(t,h,l-1))\eqqcolon \sum_{l=m+1}^{\infty}D_{t,h,l},\notag
\end{eqnarray*}
where $D_{t,h,l}$ is a MDS adapted to  $\mathscr{F}_{t+h,t+h-l}$ for each given $t$. 

For $J\geq2$, by Burkholder inequality,  Minkowski inequality and Jensen inequality, 

\begin{eqnarray}\label{mm4}
   \|U (t,h)-U(t,h,m)\|^2_J
   &\leq& O(1)\left\|\sum_{l=m+1}^{\infty}D_{t,h,l}^2\right\|_{\frac{J}{2}}
   \notag\\
   &\leq& O(1) \sum_{l=m+1}^{\infty}\|D_{t,h,l}\|_J^2
   \notag\\
   &=& O(1) \sum_{l=m+1}^{\infty}\|U(t,h,l)-U_{t+h-l}^\ast(t,h,l)\|_J^2
   \notag\\
   &= & O(1) \sum_{l=m+1}^{\infty} \|U(t,h)-U_{t+h-l}^\ast (t,h)\|_J^2.
\end{eqnarray}

In connection with Lemma \ref{LMB7}.(1), it suffices to establish Lemma \ref{LMB7}.(2).


\medskip

(3). Notably, $D_{t,h,l}$ is also a (reverse) MDS adapted to $\sigma(\pmb{\varepsilon}_t,\ldots,\pmb{\varepsilon}_{\infty})$ for each given $l$. Thus,  similarly to \eqref{mm4}, we can obtain 

\begin{eqnarray*}
     \left\|\sum_{t=1}^{T-h}D_{t,h,l}\right\|_J^{2}&\leq& O(1) \sum_{t=1}^{T-h} \|U(t,h)-U_{t+h-l}^\ast (t,h)\|^{2}_J.
\end{eqnarray*}
which yields
\begin{eqnarray}\label{mm41}
    \left\| \sum_{t=1}^{T-h}(U(t,h)-U (t,h,m))\right\|_J&\leq&  \sum_{l=m+1}^{\infty} \left\|\sum_{t=1}^{T-h}D_{t,h,l}\right\|_J
    \notag\\
    &\leq& O(1)  \sum_{l=m+1}^{\infty} \left(\sum_{t=1}^{T-h} \|U(t,h)-U_{t+h-l}^\ast (t,h)\|^{2}_J\right)^{\frac{1}{2}}
    \notag\\
    &=& O(1)T^{1/2} \sum_{l=m+1}^{\infty}\|U(t,h)-U_{t+h-l}^\ast (t,h)\|_J.
\end{eqnarray}
This completes the proof of the first result in Lemma \ref{LMB7}.(3). 

For the second result in Lemma~\ref{LMB7}.(3), observe that \(U(t,h)\) admits the expansion   $U(t,h)-U(t,h,0)=\sum_{l=1}^{\infty} D_{t,h,l}$, where \(U(t,h,0)=E[u_{it,h}x_{jt}\mid \mathscr{F}_{t+h,t+h}]\) is a MDS.  
Following similar steps as in \eqref{mm41} and applying Burkholder inequality, Minkowski inequality, and Lemma~\ref{LMB7}.(1) yields
\begin{eqnarray}\label{mm42}
    \left\| \sum_{t=1}^{T-h}(U(t,h)-U (t,h,0))\right\|_J
   = O\left(T^{1/2}\right) \sum_{l=1}^{\infty} \|U(t,h)-U^{\ast}_{t+h-l}(t,h)\|_J
   = O\left(T^{1/2}\right).
\end{eqnarray}
For \(U(t,h,0)\), standard arguments for MDS imply that $ \|\sum_{t=1}^{T-h} U(t,h,0)\|_J = O(T^{1/2}),$ which in connection with \eqref{mm42} establishes the second claim in Lemma \ref{LMB7}.(3).
\end{proof}

\medskip
   
\begin{proof}[Proof of Lemma \ref{LMB8}]

\item 

(1). Recall $U_{ij} (t,h,m)= E[u_{it,h} x_{jt} \mid \mathscr{F}_{t+h,t+h-m}]$. Let $V_{ij}(t,h, m)=\sum_{1 \le k< t }U_{ij} (k,h,m)$. We consider the following decomposition for $V_{ij}(t,h, m)$: 
\begin{eqnarray*}
    V_{ij}(t,h, m)&=&\sum_{t_m \le k< t } a_{k,t}U_{ij} (k,h,m)+\sum_{1\le k< t_m-1 } a_{k,t} U_{ij} (k,h,m)
    \notag\\
    &\eqqcolon &V_{ij, 1}(t,h,m)+V_{ij, 2}(t,h,m),
\end{eqnarray*}
where $t_m=t-3m$. 

We further define 

\begin{eqnarray*}
    Q_{i_1i_2j_1j_2}(T, h,m) &=& \sum_{t=2}^{T-h}  U_{i_1j_1} (t,h,m)V_{i_2j_2}(t,h,m),\notag \\
    \widetilde{U}_{ij} (t,h,m)&=& U_{ij} (t,h,\lfloor m^\theta \rfloor).
\end{eqnarray*}
By replacing $V_{i_2j_2}(t,h,m)$ in $Q_{i_1i_2j_1j_2}(T, h,m)$ with $V_{i_2j_2,1}(t,h,m)$ and $V_{i_2j_2, 2}(t,h, m)$, we define $Q_{i_1i_2j_1j_2,1}(T, h,m)$ and $Q_{i_1i_2j_1j_2,2}(T, h,m)$, respectively. Additionally, we define 

\begin{eqnarray*}
    &&\widetilde{V}_{ij}(t,h,m),\quad \widetilde{V}_{ij, 1}(t,h,m), \quad \widetilde{V}_{ij, 2}(t,h,m), \notag \\
    &&\widetilde{Q}_{i_1i_2j_1j_2}(T, h,m), \quad \widetilde{Q}_{i_1i_2j_1j_2,1}(T, h,m),\quad \widetilde{Q}_{i_1i_2j_1j_2,2}(T, h,m)
\end{eqnarray*}
by replacing $U_{ij} (t,h,m)$ with $\widetilde{U}_{ij} (t,h,m)$ in

\begin{eqnarray*}
    && V_{ij}(t,h,m),\quad V_{ij, 1}(t,h,m),\quad V_{ij, 2}(t,h,m), \notag \\
    && Q_{i_1i_2j_1j_2}(T, h,m),\quad Q_{i_1i_2j_1j_2,1}(T, h,m),\quad Q_{i_1i_2j_1j_2,2}(T, h,m),
\end{eqnarray*}
respectively. For notational simplicity, we suppress the indices $i_1,i_2,j_1$ and $j_2$ whenever possible. 

Using these notations, we write

\begin{eqnarray}\label{mm18}
    &&P_{h}(T, m,\epsilon_{NT})\notag \\
    &=&\Pr\big(|Q(T, h,m)-E[Q(T, h,m)]|\ge \epsilon_{NT} \big)
    \notag\\
    &\leq&\Pr\big(|Q_1(T, h,m)-\widetilde{Q}_1(T, h,m)-E[Q_1(T, h,m)]+E[\widetilde{Q}_1(T, h,m)]|\ge \epsilon_x\big)
    \notag\\
    &&+ \Pr\big(|Q_2(T, h,m)-\widetilde{Q}_2(T, h,m)-E[Q_2(T, h,m)]+E[\widetilde{Q}_2(T, h,m)]|\ge\epsilon_x\big)
    \notag\\
    &&+ P_{h}(T, \lfloor m^\theta\rfloor,\epsilon_{NT}-2\epsilon_x),
\end{eqnarray}
where $\epsilon_x=\epsilon_{NT}/(2l_{T,\theta})$ with $l_{T,\theta}=\lceil-\log\log T/\log\theta \rceil$. We study the first two probabilities on the right hand side of \eqref{mm18} one by one.

For the term with $Q_1(T,h,m)$ and $\widetilde{Q}_1(T,h,m)$, partition \([T]\) into consecutive blocks \(B_1,\dots,B_{b_T}\) of length \(4m\).
For each block set
\[
Q_{1,n}(T, h,m):=\sum_{t\in B_n}\big(U(t,h,m)V_{  1}(t,h,m)-\widetilde{U}(t,h,m)\widetilde{V}_{ 1}(t,h,m)\big).
\]
By construction, blocks that are not adjacent involve disjoint time windows for the underlying innovations, hence \(Q_{1,n}\) and \(Q_{1,n^\ast}\) are independent whenever \(|n-n^\ast|>1\).

Using similar arguments to those in the proof of \eqref{mm17}, we have
\begin{eqnarray*}
    \|Q_{1,n}(T, h,m)-E[Q_{1,n}(T, h,m)]\|_{J/2}&=&O (m^{1-c_0\theta} ),
\end{eqnarray*}
where $c_0 =\big(\frac{2c_b-1}{2}\big)\big(\frac{c_\delta\wedge c_b-1}{c_\delta\wedge c_b}\big)$. Together with Lemma \ref{LMB2} and the Markov inequality, it yields that there exists a constant $M>1$ such that 
\begin{eqnarray}\label{mm19}
    &&\Pr(|Q_1(T, h,m)-\widetilde{Q}_1(T, h,m)-E[Q_1(T, h,m)]+E[\widetilde{Q}_1(T, h,m)]|\ge \epsilon_x)\notag \\
    &\le &\sum_{n=1}^{b_T} \Pr(|Q_{1,n}(T, h,m)-E[Q_{1,n}(T, h,m)]|\ge \epsilon_y) +O\left((N\wedge T)^{-M}\right)
    \notag \\
    &\le &\sum_{n=1}^{b_T} \epsilon_y^{-J/2}   \|Q_{1,n}(T, h,m)-E[Q_{1,n}(T, h,m)]\|^{J/2}_{J/2} +O\left((N\wedge T)^{-M}\right)
    \notag \\
    &=& O\left(\epsilon_y^{-J/2}Tm^{(1-c_0\theta)J/2-1}\right) +O\left((N\wedge T)^{-M}\right),
\end{eqnarray}
where $\epsilon_y=\frac{\beta\epsilon_x}{2c_{M}}$,  with $\beta = J/(J+4)$, $\alpha = 1-\beta$, and a constant $c_{M}>\beta$.

We then turn to the investigation of the second probability on the right-hand side of \eqref{mm18}.  We partition \([T]\) into $b_T^*$ blocks \(B_1^*,\dots,B_{b_T^*}^*\) of length \(m\). For each block, we define
\[
Q_{2,n}(T, h,m):=\sum_{t\in B_n}\big(U(t,h,m)V_{ 2}(t,h,m)-\widetilde{U}(t,h,m)\widetilde{V}_{  2}(t,h,m)\big).
\]
Let \(l_n=\max B_n^*\) and the sigma field $\mathscr{F}^\ast_n:=\mathscr{F}_{l_n}$. We can then easily check that for every \(t\in B_n^*\) the corresponding $V_{  2}(t,h,m)$ and $\widetilde{V}_{  2}(t,h,m)$ are \(\mathscr{F}^\ast_{n-2}\)-measurable, and moreover, $U(t,h,m)$ and $\widetilde{U}(t,h,m)$ are  \(\mathscr{F}^\ast_{n-2}\) -conditionally mean zero:
\begin{eqnarray*}
 E\big[Q_{2,n}(T, h,m)\mid\mathscr{F}^\ast_{n-2}\big]=0.   
\end{eqnarray*}
Therefore,  $\{Q_{2,n}(T, h,m)\}_{n\ \text{odd}}$ is a MDS adapted to $\{\mathscr{F}^\ast_n\}_{n\ \text{odd}}$ and similarly for the even subsequence.  Additionally, we can observe that $U(t,h,m)$ and $\widetilde{U}(t,h,m)$ are independent with $V_{  2}(t,h,m)$ and $\widetilde{V}_{  2}(t,h,m)$ by definition.  Then,  we apply Lemma 1 of \cite{haeusler1984exact} to the odd and even subsequences to show that for any $M>1$, there exists a constant $C^\ast_{M}$  such that
\begin{eqnarray}\label{eq:Haeusler}
&&\Pr\Big(\Big|\sum_{n=1}^{b_T^*}Q_{2,n}(T, h,m)\Big|\ge \epsilon_x\Big)\notag \\
&\le& 4\Pr\Big(\Big|\sum_{n=1}^{b_T^*} E\big[Q^2_{2,n}(T, h,m)\mid\mathscr{F}^\ast_{n-2}\big]\Big|> \epsilon_x^2/(\log T)^{3/2}\Big)\notag\\
&&+\sum_{n=1}^{b_T^*}\Pr\big(|Q_{2,n}(T, h,m)|\ge \epsilon_x/\log T\big)+C^\ast_{M}\,\epsilon_x^{-M}.
\end{eqnarray}
For the first term on the right-hand side of \eqref{eq:Haeusler}, simple algebra gives

\begin{eqnarray*}
    &&\sum_{n=1}^{b_T^*} E\big[Q^2_{2,n}(T, h,m)\mid\mathscr{F}^\ast_{n-2}\big]\notag \\
    &\leq&2\sum_{n=1}^{b_T^*} \sum_{t,s\in B_n}a_{s,t} E\big[U(t,h,m)U(s,h,m)\big]V_{ 2}(t,h,m)V_{ 2}(s,h,m)
    \notag\\
    &&+2\sum_{n=1}^{b_T^*}\sum_{t,s\in B_n} a_{s,t}E\big[\widetilde{U}(t,h,m)\widetilde{U}(s,h,m)\big]\widetilde{V}_{ 2}(t,h,m)\widetilde{V}_{ 2}(s,h,m)
    \notag\\
    &\coloneqq&\mathcal{D}_1+\mathcal{D}_2.
\end{eqnarray*}
Let $b_{s,t} = E\big[U(t,h,m)U(s,h,m)\big]$ by reorganizing the summations, we have 
\begin{eqnarray*}
    \mathcal{D}_1&=&2\sum_{n=1}^{b_T^*} \sum_{t,s\in B_n}a_{s,t}b_{s,t}V_{ 2}(t,h,m)V_{2}(s,h,m) 
    \notag\\
   &=&2\sum_{n=1}^{b_T^*} \sum_{t,s\in B_n}a_{s,t}b_{s,t} \sum_{1\le k_1< t_m-1 }U (k_1,h,m)\sum_{1\le k_2< s_m-1 }U (k_2,h,m)
   \notag\\
   &=&\sum_{k_1=2}^{T-h}\sum_{k_2=2}^{T-h}a^\ast_{k_1,k_2}U (k_1,h,m)U (k_2,h,m),
\end{eqnarray*} 
where $a^\ast_{k_1,k_2}=2\sum_{n=1}^{b_T^*}\sum_{t\in B^\ast(n,k_1)}\sum_{s\in B^\ast(n,k_2)}a_{s,t}b_{s,t}$ with  $B^\ast(n,k)=\{t\in B_n: t>k+3m+1 \}.$ By Lemma \ref{LMB7}, we have  $\|\sum_{t=2}^{T-h}U (t,h,m)\|_2\leq O\left(T^{1/2}\right)$, which implies that $|a^\ast_{k_1,k_2}|\leq C_aT$, for some positive constant $C_a$. Similarly, we have
\begin{eqnarray*}
   \mathcal{D}_2=2\sum_{n=1}^{b_T^*} \sum_{t,s\in B_n}a_{s,t}\widetilde{b}_{s,t}\widetilde{V}_{2}(t,h,m)\widetilde{V}_{ 2}(s,h,m) 
   &=&\sum_{k_1=2}^{T-h}\sum_{k_2=2}^{T-h}\widetilde{a}^\ast_{k_1,k_2}\widetilde{U} (k_1,h,m)\widetilde{U} (k_2,h,m),
\end{eqnarray*} 
where $\widetilde{b}_{s,t} = E\big[\widetilde{U}(t,h,m)\widetilde{U}(s,h,m)\big]$ the coefficients also satisfy $|\widetilde{a}^\ast_{k_1,k_2}|\leq C_a T$. Since $E[\mathcal{D}_1]\epsilon_x^{-2}(\log T)^{2}$ and $E[\mathcal{D}_2]\epsilon_x^{-2}(\log T)^{2}$ are negligible for a sufficiently large $T$, in connection with the definition of $P_{h}(T, m,\epsilon)$, 
\begin{eqnarray}\label{mm20}
    &&4\Pr\Big(\Big|\sum_{n=1}^{b_T^*} E\big[Q^2_{2,n}(T, h,m)\mid\mathscr{F}^\ast_{n-2}\big]\Big|> \epsilon_x^2/(\log T)^{3/2}\Big)
    \notag\\
    &\leq&4\Pr\Big(|\mathcal{D}_1-E[\mathcal{D}_1]|\ge  \epsilon_x^2/(\log T)^{2}\Big)+4\Pr\Big(|\mathcal{D}_2-E[\mathcal{D}_2]|\ge  \epsilon_x^2/(\log T)^{2}\Big)
    \notag\\
    &\leq&O(1)(P_{h}(T, m,\epsilon_x^2T^{-1}(\log T)^{-2})+P_{h}(T, \lfloor m\rfloor^\theta,\epsilon_x^2T^{-1}(\log T)^{-2})).
\end{eqnarray}

For the second term on the right-hand side of \eqref{eq:Haeusler}, by Markov inequality and applying  arguments similar to those in the proof of \eqref{mm17}, we  obtain 
\begin{eqnarray}\label{mm21}
    \sum_{n=1}^{b_T^*}\Pr\big(|Q_{2,n}(T, h,m)|\ge \epsilon_x/\log T\big)&\leq& \sum_{n=1}^{b_T^*}\epsilon_x^{-J}(\log T)^J\|Q_{2,n}(T, h,m)\|_J^J
    \notag\\
    &=&O\left(\epsilon_x^{-J}(\log T)^JT^{J/2+1}m^{(1/2-c_0\theta)J-1}\right)
    .
\end{eqnarray}
Together with \eqref{mm20}, it gives 
\begin{eqnarray}\label{mm22}
   &&\Pr\Big(\Big|\sum_{n=1}^{b_T^*}Q_{2,n}(T, h,m)\Big|\ge \epsilon_x\Big) \notag \\
   &\leq&O(1)(P_{h}(T, m,\epsilon_x^2T^{-1}(\log T)^{-2})+P_{h}(T, \lfloor m\rfloor^\theta,\epsilon_x^2T^{-1}(\log T)^{-2}))
   \notag\\
   &&+O\left(\epsilon_x^{-J}(\log T)^JT^{J/2+1}m^{(1/2-c_0\theta)J-1}\right)+O\left(\epsilon_x^{-M}\right).
\end{eqnarray}

By \eqref{mm18}, \eqref{mm19}, and   \eqref{mm22}
\begin{eqnarray}\label{mm23}
     P_{h}(T, m,\epsilon_{NT})&\leq&P_{h}(T, \lfloor m^\theta\rfloor,\epsilon_{NT}-2\epsilon_x)+O(1)P_{h}(T, m,\epsilon_x^2T^{-1}(\log T)^{-2})
     \notag\\
     &&+O(1)P_{h}(T, \lfloor m\rfloor^\theta,\epsilon_x^2T^{-1}(\log T)^{-2}) 
     +O\left(\epsilon_x^{-J/2}Tm^{(1-c_0\theta)J/2-1}\right) 
    \notag\\
     && +O\left((N\wedge T)^{-M}\right)+O\left(\epsilon_x^{-M}\right).
\end{eqnarray}
Since $\epsilon_x^2T^{-1}(\log T)^{-2}>\epsilon_{NT} T^{c_{\delta}/2}$ for a sufficiently large $T$,  for the term $$P_{h}(T, m,\epsilon_x^2T^{-1}(\log T)^{-2}),$$ by repeating the expansion in \eqref{mm23} for a sufficiently large number of times, we can show that it is bounded by the rest of terms.  Therefore, 
\begin{eqnarray}\label{mm24}
     P_{h}(T, m,\epsilon_{NT})&\leq&O(1)P_{h}(T, \lfloor m^\theta\rfloor,\epsilon_{NT}-2\epsilon_x)+O\left(\epsilon_x^{-J}(\log T)^JT^{J/2+1}m^{(1/2-c_0\theta)J-1}\right)
     \notag\\
     &&
     +O\left((N\wedge T)^{-M}\right)+O\left(\epsilon_x^{-M}\right).
\end{eqnarray}

Starting from a sufficiently small $m$ (e.g., $m=1,2$), we can apply similar arguments and obtain
\begin{eqnarray}\label{mm25}
     P_{h}(T, m,2\epsilon_x)&=O\left(l_{T,\theta}^{J/2}\epsilon_x^{-J/2}T\right)
      +O\left((N\wedge T)^{-M}\right)+O\left(l_{T,\theta}^{M}\epsilon_x^{-M}\right). 
\end{eqnarray}
Since   $ T^{\theta^{l_{T,\theta}}}<e$,  iterating \eqref{mm24} for $l_{T,\theta}-1$ times from $ P_{h}(T, m,\epsilon_{NT})$ to $P_{h}\left(T, \lfloor T^{\theta^{l_{T,\theta}}}\rfloor ,2\epsilon_x\right)$, we obtain 
 \begin{eqnarray}\label{mm26}
    P_{h}(T, m,\epsilon_{NT})&=&O\left(\epsilon_x^{-J/2}T\left(m^{(1-c_0\theta)J/2-1}+l_{T,\theta}^{J/2}\right)\right)\notag \\
    &&+O\left((N\wedge T)^{-M}\right)+O\left(l_{T,\theta}^{M}\epsilon_x^{-M}\right). 
 \end{eqnarray}  
Substituting \eqref{mm26} back to  \eqref{mm20}, we can update the result in \eqref{mm23} as follows:
 \begin{eqnarray*}
     P_{h}(T, m,\epsilon_{NT})&\leq&P_{h}(T, \lfloor m^\theta\rfloor,\epsilon_{NT}-2\epsilon_x) +O\left(\epsilon_{NT}^{-J}T\left(m^{(1-c_0\theta)J/2-1}+l_{T,\theta}^{J}\right)\right),
\end{eqnarray*} 
and iterating it for   $l_{T,\theta}-1$ times, we can finally establish desired result in Lemma \ref{LMB8}.(1).

(2). We now use  Lemma \ref{LMB8}.(1)   to establish  Lemma \ref{LMB8}.(2).  Define $m_{l}=\lfloor T^{\theta^l}\rfloor$, for $l=1,\ldots, l_{T,\theta}$, where $l_{T,\theta}=\lceil-\log\log T/\log\theta \rceil$. Let $l_{T,\theta}^\ast$ be the smallest $l$ such that $m_l\leq h/4$. Define 
\begin{eqnarray*}
    Q_{NT,0}=\sum_{t=2}^{T-h}U_{i_1j_1} (t,h) S_{i_2j_2,h}(t),
\end{eqnarray*}
and 
\begin{eqnarray*}
    Q_{NT,l}=\sum_{t=2}^{T-h}U_{i_1j_1} (t,h,m_l) S_{i_2j_2,h}(t,m_l),
\end{eqnarray*}
for $l=1,\ldots, l^\ast_{T,\theta}$.

We write
\begin{eqnarray*}
&&\Pr\Big(\Big|\sum_{t=2}^{T-h}\big(U_{i_1j_1} (t,h) S_{i_2j_2,h}(t)-E[U_{i_1j_1} (t,h)  S_{i_2j_2,h}(t)]\big)\Big|\ge \epsilon_{NT} \Big)
\notag\\
&\leq&\sum_{l=0}^{l^\ast_{T,\theta}-1}\Pr\Big(\Big|Q_{NT,l}-Q_{NT,l+1}-E[(Q_{NT,l}-Q_{NT,l+1})]\Big|\ge \epsilon_{NT}/(l_{T,\theta}+1) \Big)
\notag\\
&&+\Pr\Big(\Big|Q_{NT,l^\ast_{T,\theta}}-E[Q_{NT,l^\ast_{T,\theta}}]\Big|\ge \epsilon_{NT}/(l_{T,\theta}+1) \Big).
\end{eqnarray*}
Using similar arguments to \eqref{mm27}, we have
\begin{eqnarray}\label{mm32}
   &&\Pr\Big(\Big|Q_{NT,0}-Q_{NT,1}-E[(Q_{NT,0}-Q_{NT,1})]\Big|\ge \epsilon_{NT}/(l_{T,\theta}+1) \Big) \notag \\
   &= &O\left(l_{T,\theta}^{J/2}\epsilon_{NT}^{-J/2}T^{\left(1/2-c_0\theta\right)J/2}h^{J/4}\right).
\end{eqnarray}

For $1\leq l\leq l_{T,\theta}^\ast-1$,  partition \([T]\) into $b_T^l$ consecutive blocks \(B^l_1,\dots,B^l_{b_T^l}\) of length \(m_l+h\). We further write
\begin{eqnarray*}
    Q_{n,l}=\sum_{t\in B_n^l} U_{i_1j_1} (t,h,m_l)S_{i_2j_2,h} (t,m_l),\,\, Q^\ast_{n,l}=\sum_{t\in B_n^l} U_{i_1j_1} (t,h,m_{l+1})S_{i_2j_2,h} (t,m_{l+1}).
\end{eqnarray*}
Similarly to \eqref{mm27},  for $\epsilon_y<\epsilon_{NT}/(l_{T,\theta}+1)$, we can obtain  
\begin{eqnarray}\label{mm29}
   \Pr\Big(\big|Q_{n,l}-Q_{n,l}^\ast-E[(Q_{n,l}-Q_{n,l}^\ast)]\big|\geq \epsilon_y\Big)
   &= &O\left(l_{T,\theta}^{J/2}\epsilon_{NT}^{-J/2}m_l^{J/4}h^{J/4} {m_{l+1}}^{-c_0\theta J/2}\right).
\end{eqnarray}
By \eqref{mm29}, Lemma \ref{LMB2} and using arguments that are analogous to \eqref{mm13}, we can show that  there exists a constant $M>1$ such that 
\begin{eqnarray}\label{mm31}
    &&\Pr\Big(\big|Q_{NT,l}-Q_{NT,l+1}-E[(Q_{NT,l}-Q_{NT,l+1})]\big|\ge \epsilon_{NT}/(l_{T,\theta}+1) \Big) 
    \notag\\
    &\leq&\sum_{n=1}^{b_T^l} \Pr\Big(\big|Q_{n,l}-Q_{n,l}^\ast-E[(Q_{n,l}-Q_{n,l}^\ast)]\big|\geq \epsilon_y\Big)+O\left((N\wedge T)^{-M}\right)
    \notag\\
    &=&O\left(l_{T,\theta}^{J/2}\epsilon_{NT}^{-J/2}Th^{J/4} {m_{l}}^{(1/2-c_0\theta)J/2-1}\right)+O\left((N\wedge T)^{-M}\right)
    \notag\\
    &=&O\left(l_{T,\theta}^{J/2}\epsilon_{NT}^{-J/2}T^{J/4} {h}^{(1-c_0\theta)J/2-1}\right)+O\left((N\wedge T)^{-M}\right).
\end{eqnarray}

Finally, we only need to study $Q_{NT,l^\ast_{T,\theta}}-E[Q_{NT,l^\ast_{T,\theta}}]$. We re-partition  \([T]\) into $\widetilde{b}_T$ blocks \(\widetilde{B}_1,\dots,\widetilde{B}_{b_T}\) of length \(2h\), and define $\widetilde{Q}_{n,l}=\sum_{t\in \widetilde{B}_n} U_{i_1j_1} (t,h,m_{l^\ast_{T,\theta}})S_{i_2j_2,h} (t,m_{l^\ast_{T,\theta}})$. By  Lemma \ref{LMB2} and Lemma \ref{LMB8}.(1), 
\begin{eqnarray}\label{mm30}
    &&\Pr\Big(\big|Q_{NT,l_{T,\theta}^\ast}-E[Q_{NT,l_{T,\theta}^\ast}]\big|\ge \epsilon_{NT}/(l_{T,\theta}+1) \Big) 
    \notag\\
    &\leq&\sum_{n=1}^{\widetilde{b}_T} \Pr\Big(\big|\widetilde{Q}_{n,l}-E[\widetilde{Q}_{n,l}]\big|\geq \epsilon_y\Big)+O\left((N\wedge T)^{-M}\right)
    \notag\\
    &=&O\left(l_{T,\theta}^J\epsilon_{NT}^{-J}T\left(h^{(1-c_0\theta)J/2-1}+l_{h,\theta}^{J}\right)\right)+O\left((N\wedge T)^{-M}\right).
\end{eqnarray}

By \eqref{mm32}, \eqref{mm31}, and \eqref{mm30}, we can readily establish the result in Lemma \ref{LMB8}.(2).
\end{proof}
\end{appendices}

}

\end{document}